\newtheorem{theorem}{Theorem}[section]
\newtheorem{lemma}[theorem]{Lemma}
\newtheorem{definition}[theorem]{Definition}
\newtheorem{proposition}[theorem]{Proposition}
\newtheorem{corollary}[theorem]{Corollary}
\newtheorem{assumption}[theorem]{Assumption}
\newcounter{prob}
\newtheorem{problem}[prob]{Problem}
\newcommand{\alert}[1]{{#1}}
\newcommand{\eq}[1]{\hyperref[eq:#1]{(\ref*{eq:#1})}}
\renewcommand{\sec}[1]{\hyperref[sec:#1]{Section~\ref*{sec:#1}}}
\newcommand{\thm}[1]{\hyperref[thm:#1]{Theorem~\ref*{thm:#1}}}
\newcommand{\lem}[1]{\hyperref[lem:#1]{Lemma~\ref*{lem:#1}}}
\newcommand{\cor}[1]{\hyperref[cor:#1]{Corollary~\ref*{cor:#1}}}
\newcommand{\prp}[1]{\hyperref[prp:#1]{Proposition~\ref*{prp:#1}}}
\newcommand{\itm}[1]{\hyperref[itm:#1]{\ref*{itm:#1}}}
\newcommand{\app}[1]{\hyperref[app:#1]{Appendix~\ref*{app:#1}}}
\newcommand{\dfn}[1]{\hyperref[dfn:#1]{Definition~\ref*{dfn:#1}}}
\newcommand{\fig}[1]{\hyperref[fig:#1]{Figure~\ref*{fig:#1}}}
\newcommand{\clm}[1]{\hyperref[clm:#1]{Claim~\ref*{clm:#1}}}
\newcommand{\alg}[1]{\hyperref[alg:#1]{Algorithm~\ref*{alg:#1}}}
\newcommand{\stp}[1]{\hyperref[stp:#1]{Step~\ref*{stp:#1}}}
\newcommand{\asm}[1]{\hyperref[asm:#1]{Assumption~\ref*{asm:#1}}}
\newcommand{\prot}[1]{\hyperref[prot:#1]{Protocol~\ref*{prot:#1}}}
\newcommand{\prob}[1]{\hyperref[prob:#1]{Problem~\ref*{prob:#1}}}
\newcommand{\rmk}[1]{\hyperref[rmk:#1]{Remark~\ref*{rmk:#1}}}
\newcommand{\bra}[1]{\langle #1 \vert}
\newcommand{\ket}[1]{\vert #1 \rangle}
\newcommand{\proj}[1]{\vert #1\rangle\!\langle #1\vert}
\newcommand{\tr}[0]{\mathrm{tr}}
\newcommand{\op}[0]{\mathrm{op}}
\let\originalleft\left
\let\originalright\right
\renewcommand{\left}{\mathopen{}\mathclose\bgroup\originalleft}
\renewcommand{\right}{\aftergroup\egroup\originalright}
\renewcommand{\d}[0]{\mathrm{d}}
\newcommand{\A}[0]{\mathcal{A}}
\newcommand{\B}[0]{\mathcal{B}}
\newcommand{\C}[0]{\mathcal{C}}
\newcommand{\D}[0]{\mathcal{D}}
\newcommand{\F}[0]{\mathcal{F}}
\newcommand{\G}[0]{\mathcal{G}}
\renewcommand{\H}[0]{\mathcal{H}}
\newcommand{\K}[0]{\mathcal{K}}
\newcommand{\M}[0]{\mathcal{M}}
\renewcommand{\O}[0]{\mathcal{O}}
\newcommand{\Q}[0]{\mathcal{Q}}
\newcommand{\R}[0]{\mathcal{R}}
\newcommand{\T}[0]{\mathcal{T}}
\newcommand{\U}[0]{\mathcal{U}}
\newcommand{\V}[0]{\mathcal{V}}
\newcommand{\X}[0]{\mathcal{X}}
\DeclareMathOperator*{\Exp}{\mathbb{E}}
\DeclareMathOperator{\poly}{poly}
\DeclareMathOperator{\negl}{negl}
\newcommand{\NP}[0]{\ensuremath{\mathsf{NP}}\xspace}
\newcommand{\QMA}[0]{\mathsf{QMA}}
\newcommand{\coNP}[0]{\mathsf{coNP}}
\newcommand{\BQP}[0]{\ensuremath{\mathsf{BQP}}\xspace}
\newcommand{\MA}[0]{\ensuremath{\mathsf{MA}}\xspace}
\newcommand{\AM}[0]{\ensuremath{\mathsf{AM}}\xspace}
\newcommand{\bit}{\{0,1\}}
\DeclareMathOperator*{\Span}{\mathrm{span}}
\newcommand{\QCAM}[0]{\mathsf{QCAM}}
\newcommand{\QCMA}[0]{\mathsf{QCMA}}
\newcommand{\cpoly}[0]{\mathsf{poly}}
\newcommand{\qpoly}[0]{\mathsf{qpoly}}
\renewcommand{\P}[0]{\mathcal{P}}
\renewcommand{\NP}[0]{\mathsf{NP}}
\newcommand{\QSZK}[0]{\mathsf{QSZK}}
\newcommand{\ALL}[0]{\mathsf{ALL}}
\newcommand{\PSPACE}[0]{\mathsf{PSPACE}}
\newcommand{\SZK}[0]{\mathsf{SZK}}
\newcommand{\Id}[0]{\mathbbm{1}}
\newcommand{\Ext}[0]{\mathrm{Ext}}
\newcommand{\XHOG}[0]{\mathrm{XHOG}}
\newcommand{\TIME}[0]{\mathsf{TIME}}
\newcommand{\LLQSV}[0]{\mathrm{LLQSV}}
\newcommand{\BFBD}[0]{\mathrm{BFBD}}
\newcommand{\LLHA}[0]{\textsc{LLHA}}
\newcommand{\LXEB}[0]{\textsc{LXEB}}
\newcommand{\adv}[0]{\mathrm{adv}}
\newcommand{\Dir}[0]{\mathrm{Dir}}
\newcommand{\diag}[0]{\mathrm{diag}}
\newcommand{\Haar}[0]{\mathrm{Haar}}
\newcommand{\CPTP}[0]{\mathrm{CPTP}}
\newcommand{\Bernoulli}[0]{\mathsf{Bernoulli}}
\newcommand{\Binomial}[0]{\mathsf{Binomial}}
\newcommand{\QSD}[0]{\mathrm{QSD}}
\newcommand{\NISZK}[0]{\mathsf{NISZK}}
\newcommand{\QIP}[0]{\mathsf{QIP}}
\renewcommand{\epsilon}[0]{\varepsilon}
\begin{document}
\title{Certified Randomness from Quantum Supremacy}
\author{Scott~Aaronson\footnote{University of Texas at Austin. \ Email: \textsf{aaronson@cs.utexas.edu}. \ Supported by a Vannevar Bush Fellowship
from the US Department of Defense, the Berkeley NSF-QLCI CIQC Center, a Simons Investigator Award, and the
Simons ``It from Qubit'' collaboration.} 
\and 
Shih-Han~Hung\footnote{University of Texas at Austin. Email: \textsf{shung@cs.utexas.edu}. \ Supported by the U.S. Department of Energy, Office of Science, National Quantum Information Science Research Centers, Quantum Systems Accelerator.}}
\date{}

\maketitle

\begin{abstract}
We propose an application for near-term quantum devices: namely, generating cryptographically certified random bits, to use (for example) in proof-of-stake cryptocurrencies. \ Our protocol repurposes the existing ``quantum supremacy'' experiments, based on random circuit sampling, that Google and USTC have successfully carried out starting in 2019. \ We show that, whenever the outputs of these experiments pass the now-standard Linear Cross-Entropy Benchmark (LXEB), under plausible hardness assumptions they necessarily contain $\Omega(n)$ min-entropy, where $n$ is the number of qubits. \ To achieve a net gain in randomness, we use a small random seed to produce pseudorandom challenge circuits. \ In response to the challenge circuits, the quantum computer generates output strings that, after verification, can then be fed into a randomness extractor to produce certified nearly-uniform bits---thereby ``bootstrapping'' from pseudorandomness to genuine randomness. \ We prove our protocol sound in two senses: (i) under a hardness assumption called Long List Quantum Supremacy Verification, which we justify in the random oracle model, and (ii) unconditionally in the random oracle model against an eavesdropper who could share arbitrary entanglement with the device. \ (Note that our protocol's output is unpredictable even to \emph{a computationally unbounded adversary who can see the random oracle}.) \ Currently, the central drawback of our protocol is the exponential cost of verification, which in practice will limit its implementation to at most $n\sim 60$ qubits, a regime where attacks are expensive but not impossible. \ Modulo that drawback, our protocol appears to be the \emph{only} practical application of quantum computing that both requires a QC and is physically realizable today.
\end{abstract}

\newpage
{
  \hypersetup{linkcolor=black}
  \tableofcontents
}
\newpage

\section{Introduction}

After three decades of quantum computing theory and experiment, the world finally has noisy quantum devices, with $50-60$ qubits or $\sim 100$ photons, that solve special sampling problems in a way that's conjectured to outperform any existing classical computer. \ The devices include Google's $53$-qubit ``Sycamore'' chip \cite{google19}, USTC's ``Jiuzhang'' \cite{ustc20} and ``Zu Chongzhi'' \cite{ustc21}, and most recently Xanadu's ``Borealis'' \cite{xanadu22}. \ The sampling problems, which include Random Circuit Sampling \cite{AC16} and BosonSampling \cite{AA11}, grew directly out of work in quantum complexity theory beginning around 2010.

To be clear, it's still debated in which senses current devices have achieved the milestone of ``quantum supremacy''---a term coined by Preskill \cite{Pre12} in 2012, to refer to an orders-of-magnitude speedup over all known classical approaches for some well-defined (but not necessarily useful) computational task. \ On the one hand, since Google's original 2019 announcement \cite{google19}, the sampling experiments have continued to improve, for example in number of qubits and circuit depth (for RCS) \cite{ustc21}, and in number of photons and measurement fidelity (for BosonSampling) \cite{xanadu22}. \ One expects further improvements. \ On the other hand, classical spoofing attacks against the experiments have \emph{also} improved---with some attacks based on tensor-network contraction (e.g., \cite{PCZ21}), and others taking advantage of noise in the devices (e.g., \cite{BCG20}). \ Notably, however, the attacks that fully replicate the Google device's observed performance, such as that of \cite{PCZ21}, still have inherently exponential scaling, and still seem to require an ExaFLOPS supercomputer to match or beat the Google device's running time of $\sim 3$ minutes. \ As a rough estimate, the Summit supercomputer uses $13$ megawatts, while Google \cite[Appendix H]{google19} estimated that the dilution refrigerator for its $53$-qubit QC uses $\sim 20$ kilowatts. \ Thus, \emph{despite} the QC's extreme need for refrigeration, it still wins by a factor of hundreds as measured by electricity cost.

For some, the recent quantum supremacy demonstrations were important mostly because they showcased many of the key ingredients of a future \emph{fault-tolerant, scalable} quantum computer---and just as importantly, did not detect any correlated errors of the sort that would render fault-tolerant quantum computing impossible. \ For others, however, these experiments have done more: namely, they've inaugurated the era of ``NISQ'' or Noisy Intermediate Scale Quantum computation, another term coined by Preskill \cite{Pre18}. \ The hope of NISQ is that, even \emph{before} fault-tolerance is achieved, noisy QCs with up to (say) $1000$ qubits might already prove useful for certain practical problems, just like various analog computing devices were useful even before the invention of the transistor inaugurated the digital era.

Unfortunately, despite the billions that have by now been invested into NISQ hopes, the lack of any obvious ``killer app'' for NISQ devices has emerged as a defining fact of the field.

Perhaps NISQ devices will be useful for simulation of condensed-matter physics or even quantum chemistry. \ Alas, while there are exciting proposals for quantum simulations that would need only a few hundred qubits (e.g., \cite{RWSWT17}), these proposals invariably have the drawback of requiring thousands or millions of layers of gates. \ Unless it can be remedied, this would put them completely out of reach for NISQ devices. \ Or perhaps NISQ devices will yield speedups for optimization problems, via quantum annealing or QAOA \cite{FGG14}. \ Alas, despite years of intense theoretical and empirical work, researchers have struggled to show any clear advantage for quantum annealing or QAOA over classical computing, for any practical optimization problem---\emph{let alone} an advantage that would be achievable on a NISQ device.

We should add that, in recent years, there have been striking \emph{new} ideas for how to demonstrate quantum supremacy. \ These include interactive protocols that exploit trapdoor one-way functions \cite{BCMVV18,KCVY22}, as well as the spectacular result of Yamakawa and Zhandry \cite{YZ22}, which gave an exponential quantum speedup for an $\mathsf{NP}$ search problem relative to a random oracle. \ Alas, pending some breakthrough, none of these ideas seem to be implementable on a NISQ device.

\subsection{Our Contribution}

This paper studies, to our knowledge for the first time, whether current sampling-based supremacy experiments might \emph{themselves} have a useful application outside of physics.\footnote{Some earlier work explored whether BosonSampling might be useful for (e.g.) calculating molecular vibronic spectra \cite{HGP+15} or graph similarity detection \cite{SBI+20}, but those hopes were killed by efficient classical simulations.} \ We focus on the generation of \emph{cryptographically certified random bits}.

Needless to say, it is easy to use a quantum computer---or for that matter, even a Geiger counter next to some radioactive material---to generate as many random bits as we like: bits that quantum mechanics itself predicts will be fundamentally unpredictable. \ The problem is, how do we convince a skeptic over the Internet, with no access to our hardware, that the bits were indeed random, and not secretly backdoored? \ This is not just a theoretical worry: for example, as a byproduct of the Edward Snowden revelations in 2013, the world learned that a NIST pseudorandomness standard known as Dual\_EC\_DRBG was indeed backdoored by the US National Security Agency.\footnote{https://en.wikipedia.org/wiki/Dual\_EC\_DRBG}

Certified randomness has become a significant practical problem---particularly with the rise of \emph{proof-of-stake cryptocurrencies}, which notably include Ethereum\footnote{https://en.wikipedia.org/wiki/Ethereum} (market cap at time of writing: \$163 billion), following its migration on September 15, 2022. \ In proof-of-stake systems, lotteries are continually run to decide which currency holder gets to add the next block to the blockchain. \ There is no trusted authority to manage these lotteries, yet the entire system rests on the assumption that they are conducted honestly and without bias. \ Other applications of certified randomness include non-interactive zero-knowledge proofs, and financial and electoral audits.

One approach to the certified-randomness problem uses blockchains themselves as a source of random bits---with the argument being that anyone who could predict the bits could exploit their predictability to get rich \cite{BCG15}. \ Other approaches look to the social or natural worlds for a source of publicly verifiable entropy: for example, perhaps one could use the least significant digits of the Dow Jones Industrial Average, or the patterns of granules that form on the surface of the Sun.

More relevant for us, since 2009, an exciting line of work has shown how to use measurements on entangled particles as a source of physically certified randomness \cite{Col09,PAM+10,VV12,CY14,MS16,MS17}. \ The idea is that, if the measurement outcomes are observed to violate the Bell/CHSH inequality, then by that very fact, the outcomes \emph{cannot} have been secretly deterministic, unless there was secret communication between the ``Alice'' and ``Bob'' detectors. \ Furthermore, depending on the experimental setup, this communication might need to have occurred faster than light. \ Thus, the outcomes must contain genuine entropy, which can be fed into a randomness extractor to purify it into nearly-uniform random bits. \ The technical part is that, in a Bell/CHSH experiment, the measurement bases must \emph{themselves} be unpredictable---and thus, they need to be chosen judiciously if we want an overall net gain in randomness. \ This is the problem that the line of works \cite{Col09,PAM+10,VV12,CY14,MS16,MS17} has now almost completely solved.

Bell/CHSH-based certified randomness protocols have already been experimentally demonstrated \cite{BEG+18}, and are even in consideration for practical deployment in the NIST Randomness Beacon \cite{nist19}, which generates $512$ random bits per minute.

The central drawback of these protocols is that a user, downloading allegedly random bits from the Internet, has no obvious way to verify that the ``Alice'' and ``Bob'' detectors were out of communication---the key assumption needed for security. \ Indeed, in some Bell/CHSH experiments, ``Alice'' and ``Bob'' are mere feet away! \ But even if they weren't, how would this be proved?

The central insight of this paper is that sampling-based quantum supremacy experiments provide an entirely different route to certified randomness---a route that requires only a single quantum device, while also being practical today. \ In our protocol, a classical verifier uses a small random seed to generate $n$-qubit challenge circuits $C_1,C_2,\ldots$ pseudorandomly. \ The verifier then submits these $C_i$'s one at a time, presumably over the standard Internet, to a quantum computer server. \ For each $C_i$, the server needs to respond quickly---say, in less than one second---with independent samples $s_1,\ldots,s_k$ from $C_i$'s output distribution: that is, the distribution over $\{0,1\}^n$ obtained by running $C_i$ on the initial state $\ket{0^n}$ and then measuring in the computational basis.

The verifier, at its leisure, can then calculate the so-called \emph{Linear Cross-Entropy Benchmark},
$$ \operatorname{LXEB} := \sum_{j=1}^{k} |\bra{s_j}C_i\ket{0^n}|^2, $$
for at least some of the challenge circuits $C_i$. \ \emph{If} the LXEB scores are sufficiently large, our analysis shows that the verifier can then be confident, under plausible computational assumptions, that there must be $\Omega(n)$ bits of genuine min-entropy in the returned samples.

In other words: even a quantum computer should need $\exp(n)$ time to generate samples that pass the LXEB test and yet are secretly deterministic or nearly-deterministic functions of $C_i$. \ For a typical circuit $C_i$, an honest sample from the output distribution will contain $n-O(\log n)$ bits of min-entropy. \ A dishonest quantum computer could \emph{somewhat} reduce the entropy of the returned samples---for example, by generating many samples and then returning only those that start with $0$ bits. \ But doing better, by finding (e.g.) the lexicographically first samples that pass the LXEB test, or the samples that maximize the LXEB score, should be exponentially hard even quantumly, requiring amplitude amplification or the like (while a subexponential classical algorithm wouldn't stand a chance). \ The purpose of our security reductions, which we will explain in detail in \sec{techover}, is just to formalize these simple intuitions.

Assuming the returned samples (or enough of them) pass the LXEB test, the last step of our protocol is to feed them into a classical \emph{seeded randomness extractor}, to produce output bits that are exponentially close in total variation distance to uniformly random.

Stepping back, many people have pointed out the close analogy between
\begin{enumerate}
\item the Bell/CHSH experiments, which ruled out local hidden-variable theories (and which have now been recognized with the Nobel Prize in Physics), and
\item sampling-based quantum supremacy experiments, which seek to rule out ``classical polynomial-time hidden-variable theories.''
\end{enumerate}
This paper shows that the analogy goes even further. \ In both cases, the original purpose of the experiment was just to demonstrate the reality of some quantum phenomenon, and rule out any classical explanation---but we then get certified randomness as a ``free byproduct'' of the demonstration. \ In both cases, the entire setup hinges on a numerical inequality---one that any classical explanation must satisfy, that quantum mechanics predicts can be violated by a large amount, and that realistic experiments can violate albeit by less than the maximum that quantum mechanics predicts. \ In both cases, \emph{any} violation of the inequality turns out to suffice for the certified randomness application.

We note, lastly, that our protocol \emph{inherently requires} the use of a quantum computer. \ This can be seen as follows: consider any server that's simulable in classical probabilistic polynomial-time. \ Then by definition, there can be no efficient way to distinguish that server from a simulation \emph{whose randomness has been replaced by the output of a pseudorandom generator}. \ Indeed, if the pseudorandom generator has an $m$-bit seed, then the best distinguishing algorithm would be expected to take $\exp(m)$ time---which means that even given the $\sim 2^n$ time that we allow for verification, the verifier \emph{still} cannot distinguish an honest server from one with only $m$ bits of true entropy, for any $m\gg n$.

How does our actual quantum protocol evade the above impossibility argument? \ Simply by a fact used again and again in quantum complexity theory: namely, that there is no notion of ``pulling the randomness'' (or quantumness) out of a quantum algorithm, for example to replace it with pseudorandomness, analogous to what is possible with classical randomized algorithms. \ One could also say: our protocol's security analysis will depend on a computational assumption, that the problem of ``Long List Quantum Supremacy Verification'' is hard for the complexity class $\QCAM$, whose classical analogue is simply false. \ The reasons for this, in turn, are closely related to one of the elemental differences between classical and quantum computation, that $\mathsf{PostBPP}$ ($\mathsf{BPP}$ with postselected outputs) is contained in the polynomial hierarchy and can be simulated using approximate counting, whereas $\mathsf{PostBQP}=\mathsf{PP}$ can express $\mathsf{\#P}$-complete problems.

\subsection{Practical Considerations}

Our certified randomness protocol could be demonstrated on existing devices, with $n=60$ qubits or some other number in the ``quantum supremacy regime.'' \ However, there are practical and even conceptual issues to be sorted out before deploying the protocol for proof-of-stake cryptocurrency or any other critical application.

\textbf{Verification cost.} The central drawback of our protocol, as it stands, is that to check the server's outputs, the classical verifier needs to calculate a Linear Cross-Entropy score, and this is expected to take $\sim 2^n$ time---similar to the time needed for classical spoofing. \ This drawback is directly inherited from Random Circuit Sampling and all other current approaches to NISQ quantum supremacy itself.

Because of the verification cost, $n$, the number of qubits, must be chosen small enough that $2^n$ is still within range of the most powerful classical supercomputers available. \ If so, however, the issue is obvious: $2^n$ would \emph{also} be within range of a sufficiently dedicated classical spoofer, who could then predict and control the allegedly random bits.

Nevertheless, we claim that not all hope is lost. \ The crucial observation is that spoofing, to be effective, needs to be \emph{continual}: for example, if the challenge circuits are submitted every second, then the spoofer needs to run nearly every second as well. \ Even if a real quantum computer were used (say) every \emph{other} second, the outputs would contain a lot of genuine min-entropy, which would suffice for a secure protocol. \ The spoofing also needs to be \emph{fast}---as fast as the QC itself.

One might object that, since most classical algorithms to simulate quantum circuits are highly parallelizable, spoofing our protocol within some exacting time limit is ``merely'' a matter of spending enough money on classical computing hardware. \ When (say) $n=60$, though, we estimate that the expenditure, to do $\exp(60)$ operations per second, would run into billions of dollars, outside the means of all but corporations and nation-states.

Verification, by contrast, only needs to be \emph{occasional}. \ Using a tiny amount of seed randomness, the verifier can choose $O(1)$ random rounds of the protocol and spot-check only those. \ Then a malicious server that spoofed even (say) $10\%$ of the rounds would be caught with overwhelming probability. \ Verification can also be done \emph{at leisure}: so long as the verifier is satisfied to catch the spoofer after the fact, the verifier could spend hours or days where the spoofer needed to take less than a second. \ Indeed, to keep the server honest, arguably the verification need not even be done: it's enough to threaten credibly that it \emph{might} be done!

Having said that, of course it would be preferable if the verification could be done in $n^{O(1)}$ time, in some way that retained the protocol's ``NISQiness.''

In our view, whether it's possible to achieve sampling-based quantum supremacy, \emph{on a NISQ device and with efficient classical verification}, has become one of the most urgent open problems in quantum computing theory, even independently of this work. \ Our work further underscores the problem's importance, by showing how a solution could turn secure, practical certified randomness into the first real application of quantum computers.

\textbf{Interactivity.} A second practical issue with our protocol is the need for the verifier continually to generate new challenge circuits that are unpredictable to the quantum computing server. \ One could reasonably ask: if the verifier has that ability, then why does it even \emph{need} the quantum computer to generate random bits?

The short answer is that our protocol offers an ``upgrade'' in the level of unpredictability: the challenge circuits only need to be \emph{pseudorandom} (for reasons to be explained later, against a $\QSZK$ adversary). \ So in particular, the verifier can generate all the circuits deterministically from a single initial random seed. \ The protocol's output, by contrast, is guaranteed to be \emph{genuinely} random.

Indeed, our protocol offers an appealing ``forward secrecy'' property. \ Namely, even if we imagine that the verifier's pseudorandom generator will be broken in the future, so long as the server can't break the PRG \emph{at the time the protocol is run}, the server is forced to generate truly random bits. \ Such bits will of course remain unpredictable, conditioned on anything that doesn't depend on themselves, regardless of any future advances in cryptanalysis.

\textbf{Who verifies the verifier?} Still, there remains a difficulty: the verifier checks the QC's outputs, but who checks the verifier? \ If the verifier just wants random bits for its own private use, then there is no problem: the verifier could use our protocol, for example, to check random bits output by a QC that was bought from an untrusted manufacturer. \ But consider an application like proof-of-stake cryptocurrency, where the certified random bits need to be shared with the world. \ Does the world designate some organization to play the role of the verifier? \ If so, then why couldn't that organization be corrupted or infiltrated, as surely as the organization that owns the quantum computer---bringing us back where we started?

Classical cryptography suggests a variety of potential solutions to this dilemma. \ For example, perhaps a dozen or more classical verifiers each generate their own pseudorandom sequences, and those sequences are then XORed together to produce a single sequence which is used to generate the challenge circuits to send to the quantum computing server. \ If even one verifier wants the sequence to be unpredictable to the server, then it will be, provided that no verifier can see any other verifier's sequence before committing to its own. 

Again one could ask: if we trust such a XOR protocol, then why not just use its outputs directly, and skip the quantum computer? \ Again our answer appeals to the ``randomness bootstrap'': provided we agree that the XOR'ed sequence is unpredictable \emph{in practice, for now}, the quantum computer's output will then be \emph{fundamentally} unpredictable. \ Our protocol thus provides an upgrade in the level of unpredictability.

\subsection{Related Work}

We are not the first to propose using a quantum computer to generate certified random bits, which are secure under some computational hardness assumption. \ Brakerski, Christiano, Mahadev, Vazirani, and Vidick \cite{BCMVV18} gave an elegant scheme based on the assumed hardness of the Learning With Errors (LWE) problem. \ In subsequent work, Mahadev, Vazirani, and Vidick \cite{MVV22} showed that the Brakerski et al.\ protocol generates $\Omega(n)$ random bits per round, which matches our protocol.

The central advantage of the Brakerski et al.\ protocol over ours is that its outputs can be verified in classical polynomial time. \ On the other hand, unlike ours, their protocol seems difficult or impossible to implement on a NISQ device, because it requires evaluating complicated cryptographic functions on superpositions of inputs. \ In addition, their protocol requires the quantum computer to maintain a coherent superposition state \emph{while it interacts with the verifier}, presumably over the Internet. \ This is currently feasible only with certain hardware platforms, such as trapped ions, and not for example with superconducting qubits (whose coherence times are measured in microseconds).

More recently, as a byproduct of their breakthrough on an exponential quantum speedup for $\NP$ search relative to a random oracle, Yamakawa and Zhandry \cite{YZ22} gave a different interactive protocol to certify $\Omega(\log n)$ random bits, in the random oracle model and also assuming the so-called \emph{Aaronson-Ambainis conjecture} \cite{AA14}. \ We do not know whether the Yamakawa-Zhandry protocol remains secure against an entangling adversary, nor whether it accumulates entropy across multiple rounds. \ In any case, theirs is again a protocol that evaluates complicated functions on superpositions of inputs, meaning there is little or no hope of running it on a NISQ device.

In contrast to these earlier works, here we pursue the ``minimalist approach'' to generating certified randomness using a quantum computer: we eschew all cryptography done in superposition, and just examine the output distributions of random or pseudorandom quantum circuits. \ By taking this route, we give up on efficient verification, but we gain feasibility on current hardware, as well as a conceptual unification of certified randomness with sampling-based quantum supremacy itself.

Recently, building on the unpublished announcements by one of us (SA) of the research now reported in this paper, Bassirian, Bouland, Fefferman, Gunn, and Tal \cite{BBFGT21} took some first steps toward analyzing the use of sampling-based quantum supremacy experiments for certified randomness. \ Their first result says that, relative to a random oracle, any efficient quantum algorithm for Fourier Sampling must generate $\Omega(n)$ bits of min-entropy as a byproduct of its operation. \ Their second result says that Long List Quantum Supremacy Verification (LLQSV), the problem that underlies our hardness reduction, is neither in $\BQP$ nor in $\mathsf{PH}$ relative to a random oracle. \ To prove non-containment in $\mathsf{PH}$, they build on the breakthrough oracle separation between $\BQP$ and $\mathsf{PH}$ due to Raz and Tal \cite{RT22}.

These results are of course closely related to ours, but they fall short of a soundness analysis for our certified randomness protocol. \ We go further than \cite{BBFGT21} in at least four respects:
\begin{enumerate}
\item We prove that a plausible hardness assumption about LLQSV implies the generation of certified random bits. \ This reduction does not depend on a random oracle.
\item We give black-box evidence for that hardness assumption. \ (The result of \cite{BBFGT21}, that black-box LLQSV is not in $\mathsf{PH}$, is interesting and new, but neither necessary nor sufficient for us. \ As we'll explain, we need non-containment in the class $\mathsf{QCAM/qpoly}$.)
\item We prove the accumulation of entropy across multiple rounds.
\item In the black-box setting, we prove security against an entangled adversary.
\end{enumerate}
Our proof techniques are also independent of those in \cite{BBFGT21}.

Lastly, let us mention that Brand{\~a}o and Peralta \cite{BP20} have already reported numerical calculations to find appropriate parameter settings for the protocol described in this paper.

\subsection{This Paper's History}

One of us (SA) conceived the certified randomness protocol, as well as basic elements of its soundness analysis (e.g., the $\operatorname{LLQSV}\not\in\mathsf{QCAM/qpoly}$ hardness assumption), in February 2018. \ SA then gave various public talks about the proposal (e.g., \cite{Aar18b}), albeit only sketching the analysis. \ Those talks influenced subsequent work on quantum supremacy: for example, the Google group cited them as motivation in its 2019 paper announcing its $53$-qubit Sycamore experiment \cite{google19}.

Alas, the soundness analysis ended up being too involved for SA to complete alone. \ That and other factors caused a more than four-year delay in writing up this paper. \ Here, we not only complete the analysis that SA announced in 2018: we also prove security, in the random oracle model, \emph{against an adversary who could be arbitrarily entangled with the QC}. \ This goes beyond what SA claimed in 2018, and indeed addresses one of the central open problems raised at that time.

\subsection{Future Directions}

Many important problems remain:

\begin{itemize}
    \item As mentioned before, perhaps the biggest problem is to design a sampling-based quantum supremacy experiment that both runs on a NISQ device \emph{and} admits efficient classical verification. \ If such an experiment were developed, then based on our results here, we predict that it could be readily repurposed to get a secure, efficiently-verifiable certified randomness scheme that runs on existing devices.
    \item Short of that, it would also be interesting to adapt our randomness protocol from Random Circuit Sampling (RCS) to other known quantum supremacy proposals, such as BosonSampling \cite{AA11} and $\mathsf{IQP}$ \cite{BJS11}. \ With BosonSampling, the problem is that we currently lack a crisp, quantitative conjecture about the best that a polynomial-time classical algorithm can do to spoof tests such as the Linear Cross-Entropy Benchmark (LXEB). \ From 2013 work of Aaronson and Arkhipov \cite{AA13}, we know that, in contrast to what we conjecture for RCS, efficient classical algorithms can get \emph{some} depth-independent, $\Omega(1)$ LXEB advantage for BosonSampling, but how much? \ Answering this question seems like a prerequisite to designing a certified randomness protocol, as it would set the lower bound on how well a BosonSampling experiment has to do before it can be used for such a protocol.
    \item Of course, it would be great to know more about the truth or falsehood of the central hardness conjectures on which we base our protocol's security---e.g., that ``Long List Quantum Supremacy Verification'' (LLQSV) lacks a $\mathsf{QCAM/qpoly}$ protocol. \ It would be also great to prove our protocol's security under weaker assumptions. \ Can we at least remove the exponentially long list of circuits, and use a hardness assumption involving a single circuit?
    \item In the setting with an entangled adversary, we can currently prove security \emph{only} in the random oracle model. \ Can we state a plausible hardness assumption that suffices for that setting?
    \item Under some plausible hardness assumption, can we tighten the lower bound on the amount of min-entropy generated per sample---even up to the maximum of $n-O(\log n)$?\footnote{$n-O(\log n)$ is the maximum because the quantum computer could always (say) generate $n^{O(1)}$ samples from the correct distribution until it finds one whose first $O(\log n)$ bits are all $0$'s.}
    \item Likewise, can we show that more and more min-entropy continues to be generated, even if we sample with the same circuit $C$ over and over? \ Clearly there is a limit here: once enough time has elapsed that a spoofer could have explicitly calculated $C$'s entire output distribution, and perhaps even stored it in a giant lookup table, $C$ is no longer secure and needs to be replaced by a new circuit. \ But can we at least go up to that limit? \ To whatever extent we can, our protocol would become much more efficient in practice---especially once we factor in (e.g.) the time needed to calibrate a superconducting QC on a new circuit $C$.
    \item We know, both from the Haar-random approximation and from extensive numerical evidence, that an ideal, honest QC \textit{does} succeed at the Linear Cross-Entropy Benchmark with overwhelming probability, given a random quantum circuit $C$ as input. \ And in some sense, since this fact is never needed in our security analysis, empirical evidence suffices for it! \ All the same, it is strange that a rigorous proof of the fact is still lacking, at least for ``natural'' quantum circuit ensembles. \ In \sec{perfect}, we prove the fact in the random oracle model, but can we prove it outright? \ Recent advances showing that random quantum circuits yield $t$-designs \cite{BHH16,Haf22} take us part of the way, but an additional idea seems needed.
\end{itemize}

\section{Technical Overview}\label{sec:techover}

\subsection{Our Basic Result (without entangled adversary)}\label{sec:intro-no-side-info}

Throughout this paper, we let $C$ be a quantum circuit acting on $n$ qubits, and we let $N=2^n$ be the Hilbert space dimension. \ Let $P_C$ be the probability distribution defined by $p_C(z)=|\bra{z}C\ket{0^n}|^2$. \
It is well-known that when $C\sim\Haar(N)$, the Haar measure over $N\times N$ unitary matrices, we have
\begin{align}
\Exp_C\Exp_{z\sim P_C}[p_C(z)]=\frac{2}{N+1}.\label{eq:haareq}
\end{align}
In 2019, Google \cite{google19} announced an experiment to show quantum advantage on the following task, called Linear Cross-Entropy Benchmarking (LXEB).
\begin{problem}[Linear Cross-Entropy Benchmarking {$\LXEB_{b,k}(\D)$}]\label{prob:lxeb}
  Let $\D$ be a probability distribution over quantum circuits on $n$ qubits. \ Then the $\LXEB_{b,k}(\D)$ problem is as follows: given $C$ drawn from $\D$, output samples $z_1,\ldots,z_k\in \{0,1\}^n$ such that 
  \begin{align}
    \frac{1}{k}\sum_{i=1}^k p_C(z_i) \geq \frac{b}{N}.
  \end{align}
  We sometimes omit the argument $\D$.
\end{problem}

Intuitively, we expect that a polynomial-time classical algorithm should be unable to solve $\LXEB_{b,k}$ for any $b=1+\Omega(1)$. \ By contrast, if an ideal quantum computer simply runs $C$ over and over on the initial state $\ket{0^n}$, measures in the computational basis, and returns the results, then approximating $C$ by a random unitary, by \eq{haareq} we expect the QC to solve $\LXEB_{b,k}$ with $1-1/\exp(k)$ success probability for any constant $b<2$. \ Meanwhile, a noisy QC could be expected to solve $\LXEB_{b,k}$ for some $b$ greater than $1$ but less than $2$---and indeed that's exactly what's observed empirically, with (for example) Google's 2019 experiment achieving $b\sim 1.002$.

In this paper, our aim is to show, not merely a quantum advantage over classical in solving $\LXEB_{b,k}$, but a quantum sampling advantage over \emph{any efficient algorithm---quantum or classical---that returns the same $s_i$'s a large fraction of the time when given the same circuit $C$.}

To do this, we'll use a new and admittedly nonstandard hardness assumption, but one that strikes us as extremely plausible. \ Our assumption concerns the following problem:

\begin{problem}[Long List Quantum Supremacy Verification $\LLQSV(\U)$]\label{prob:llqsv}
  We are given oracle access to $M=O(2^{3n})$ quantum circuits $C_1,\ldots,C_M$, each on $n$ qubits, which are promised to be drawn independently from the distribution $\U$. \ We're also given oracle access to $M$ strings $s_1,,\ldots,s_M\in \{0,1\}^n$. \ Then the task is to distinguish the following two cases:
  \begin{enumerate}
  \item \textbf{No-Case}: Each $s_i$ is sampled uniformly and uniformly from $\bit^n$.
  \item \textbf{Yes-Case}: Each $s_i$ is sampled from $p_{C_i}$, the output distribution of $C_i$.
  \end{enumerate}
\end{problem}

Our hardness assumption, which we call the Long List Hardness Assumption ($\LLHA_B(\U)$), now says the following, for some parameter $B<n$: 
\begin{align}
  \LLQSV(\U)\notin\QCAM\TIME(2^B)/\mathsf{q}(2^B n^{O(1)}).
\end{align}

Here $\QCAM$, or Quantum Classical Arthur Merlin, is the class of problems that admit an $\AM$ protocols with classical communication and a quantum verifier. \ $\QCAM\TIME(T)$ is the generalization of $\QCAM$ where the verifier can use running time $T$ (the communication is still restricted to be polynomial). \ $\QCAM\TIME(T)/\mathsf{q}(A)$ is the same, but where the verifier now receives $A$ bits of quantum advice that depend only on $n$.

Our first main result is then the following.

\begin{theorem}[Single-round analysis, no side information, informal]
  Let $\U$ be a distribution over $n$-qubit quantum circuits, and suppose $\LLHA_B(\U)$ holds. \ Also, let $\A$ be a polynomial-time quantum algorithm that solves $\LXEB_{b,k}(\U)$ with probability at least $q$. \ Then $\A$'s output, $s_1,\ldots,s_k$, satisfies
\begin{align}\label{eq:llqsv-single-informal}
\Pr_{C'\sim\U}\left[H_{\min}(s_1\ldots s_k|C=C')
\geq B/2 \right]\geq  \left(\frac{bq-1}{b-1}-o(1)\right),
\end{align}
where $H_{\min}(\{p_i\}):=\min_i \log_2 \frac{1}{p_i}$ is the min-entropy.
\label{thm:thm1}
\end{theorem}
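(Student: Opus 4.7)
My plan is to prove the theorem by contrapositive, reducing a failure of the min-entropy bound to a $\QCAM\TIME(2^B)/\mathsf{q}(2^B n^{O(1)})$ protocol for $\LLQSV(\U)$, which would contradict $\LLHA_B(\U)$. Write $L$ for the event ``$H_{\min}(\A(C')) < B/2$'' (over $C'\sim\U$) and $S$ for the event ``$\A$ outputs a tuple with $\operatorname{LXEB}$ score $\geq b/N$'' (over $C'$ and $\A$'s randomness); set $\delta:=\Pr[L]$, and recall $\Pr[S]\geq q$. The main thing to prove is the \emph{Markov-type claim} $\Pr[S\mid L]\leq 1/b+o(1)$. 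Given this claim, a one-line linear-programming argument
\[
q\;=\;\Pr[S\mid L]\Pr[L]+\Pr[S\mid \bar L]\Pr[\bar L]\;\leq\;\tfrac{\delta}{b}+(1-\delta)
\]
yields $\delta\leq b(1-q)/(b-1)+o(1)$, which is equivalent to the theorem's stated bound $\Pr[\bar L]\geq (bq-1)/(b-1)-o(1)$.

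\textbf{Reduction to $\LLQSV$.} To establish the key claim, suppose for contradiction that $\Pr[S\mid L]>1/b+\eta$ for a constant $\eta>0$. I will build the following $\QCAM\TIME(2^B)/\mathsf{q}(2^B n^{O(1)})$ protocol for $\LLQSV(\U)$. Given oracle access to $\{(C_i,s_i)\}_{i\in[M]}$, Arthur samples $R$ random indices. For each sampled $i$, Merlin sends a candidate tuple $T_i\in(\{0,1\}^n)^k$ that is supposed to be both (a)~an $\A$-heavy output, with $\Pr[\A(C_i)=T_i]\geq 2^{-B/2}/b$, and (b)~$\operatorname{LXEB}$-good for $C_i$, with $\operatorname{LXEB}(T_i)\geq b/N$. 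Arthur verifies (a) by running $\A(C_i)$ roughly $2^{B/2}$ times and estimating the empirical frequency of $T_i$, and verifies (b) by directly computing $\operatorname{LXEB}(T_i)$ in time $\tilde O(2^n)$; the $2^B n^{O(1)}$ qubits of quantum advice are used to amortize the $\A$-simulation and to precompute the amplitude tables needed for evaluating $p_{C}(\cdot)$ quickly, so that the per-index cost fits the $2^B$ budget. Arthur's final decision is based on the collision statistic $\sum_{i}\mathbb{1}[s_i\in T_i]$ over the verified indices.

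\textbf{Distinguishability.} On the $\delta(1/b+\eta)$ fraction of circuits in $L\cap S$, Merlin can honestly produce a tuple satisfying both (a) and (b): passing to $\A(C_i)$ conditioned on $S$ keeps the min-entropy below $B/2+\log b$, so its mode has $\A$-probability at least $2^{-B/2}/b$ and is $\operatorname{LXEB}$-good by construction. On verified indices, in the \textbf{Yes-Case} $s_i\sim p_{C_i}$ gives $\Pr[s_i\in T_i]\geq bk/N$, while in the \textbf{No-Case} the uniform $s_i$ gives $\Pr[s_i\in T_i]=k/N$. The per-verified-index gap of $(b-1)k/N$ is amplified over $R$ indices; taking $R=\Theta(N/k)$ makes the total signal exceed the $O(\sqrt{R k/N})$ noise by a constant factor, giving the required soundness/completeness gap. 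Soundness against a cheating Merlin follows because verification steps (a) and (b) are performed on $T_i$ alone (not on $s_i$), so in the No-Case the collision count is $k/N$ per verified index regardless of what Merlin sends.

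\textbf{The main obstacle.} The most delicate step is ensuring the certificate $T_i$ is \emph{simultaneously} $\A$-heavy and $\operatorname{LXEB}$-good, since the unconditional mode of $\A(C_i)$ need not be $\operatorname{LXEB}$-good; this is why passing to the $S$-conditional distribution and tracking how min-entropy changes under conditioning is essential, and it is exactly where the $B/2$ threshold (rather than, e.g., $B$) enters. A secondary obstacle is the bookkeeping that fits $R\cdot(2^{B/2}+2^n)\cdot n^{O(1)}$ work and $R\cdot k\cdot n$ oracle queries into $\QCAM\TIME(2^B)/\mathsf{q}(2^B n^{O(1)})$; this is where the quantum advice earns its keep, letting Arthur ``batch'' the $\A$-simulations and amplitude evaluations so that the amortized per-index cost is $\tilde O(1)$ after the $2^B n^{O(1)}$-qubit preprocessing is built into the advice state.
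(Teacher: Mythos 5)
Your high-level framing (contrapositive plus the averaging step, and the idea of using ``does $s_i$ land in the device's output set'' as the distinguishing statistic for $\LLQSV$) is in the spirit of the paper's reduction, but your concrete protocol has two fatal problems. First, soundness: you test the collision $s_i\in T_i$ against the tuple $T_i$ that \emph{Merlin} supplies, and you claim the No-case collision rate is $k/N$ ``regardless of what Merlin sends.'' That is false: Merlin sees $s_i$, and the heaviness condition $\Pr[\A(C_i)=T_i]\geq 2^{-B/2}/b$ admits up to $\sim b\,2^{B/2}$ distinct tuples, whose union can cover $\sim k\,b\,2^{B/2}$ strings; a cheating Merlin simply picks a heavy tuple containing $s_i$ whenever one exists, inflating the No-case collision rate by a factor up to $2^{B/2}$ and wiping out the factor-$b$ Yes-case signal. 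The paper avoids exactly this by decoupling the two checks: Merlin's witness is used \emph{only} to certify heaviness, while the collision test $s_i\in O$ is performed against a fresh sample $O\sim\A(C_i)$ drawn by Arthur, which Merlin cannot correlate with $s_i$. Second, completeness: you require the witness to be simultaneously $\A$-heavy and LXEB-good, and argue this exists because conditioning on passing LXEB ``keeps the min-entropy below $B/2+\log b$.'' That inference is wrong: a device that outputs one fixed LXEB-bad tuple with probability $2^{-B/2+1}$ and otherwise behaves honestly has min-entropy below $B/2$ and passes LXEB with probability close to $1$, yet has no tuple that is both heavy and LXEB-good. The paper never needs such a joint witness: the LXEB property enters only in the analysis, through the probability that $s\sim p_C$ lands in Arthur's fresh sample $O$ conditioned on $O$ passing LXEB, via a union bound $\Pr[\text{heavy}\wedge\text{pass}]\geq p+q-1$.

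There is also a structural/resource gap. Your Arthur verifies $R=\Theta(N/k)$ randomly chosen indices, each costing $\sim 2^{B/2}$ device runs plus a $2^n$-time LXEB evaluation; this is far outside $\QCAM\TIME(2^Bn^{O(1)})$, and the quantum advice cannot rescue it, since advice depends only on $n$ and not on the instance (so it cannot encode amplitude tables for the $C_i$'s), nor can it reduce the cost of actually running $\A$ on exponentially many indices. The paper's protocol never estimates the collision fraction by direct sampling and never computes any LXEB score at all: it turns the per-tuple acceptance probabilities ($\approx bk/N(p+q-1)$ vs.\ $\approx kp/N$) into set sizes over the exponentially long list of $M=N^3$ tuples and certifies the size gap with the Goldwasser--Sipser hashing protocol, so Arthur verifies essentially one Merlin-selected index per (parallel-repeated) round, at cost $K=O(T^2 2^B n)$ device samples. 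Finally, your sharp-threshold heaviness check ignores the estimation-boundary issue (one cannot decide $\Pr[\A(C)=d]\geq\theta$ exactly from samples); the paper needs a dedicated telescoping lemma producing adjacent thresholds $\tau$ and $\tau-1/T$ with $\mu_1(\tau)\geq(1+\epsilon/2)\mu_0(\tau-1/T)$, and the good threshold index is part of the short classical advice. As written, your reduction neither is sound nor fits the complexity class, so the contradiction with $\LLHA_B(\U)$ is not established.
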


To illustrate, suppose we set $B:=0.49n$---the best \emph{upper} bound that we know, $B<n/2$, follows from Grover's algorithm. \ Suppose also that $b=1.002$, as in Google's experiment \cite{google19}, and that $k$ is chosen large enough so that $q\ge 0.9990$ by a large deviation inequality. \ Then \thm{thm1} is telling us that $\A$'s output must contain at least $(0.12-o(1))n$ random bits.

Interestingly, while \thm{thm1} is stated in terms of min-entropy, and while our eventual multi-round result will \emph{also} be stated in terms of min-entropy, as an intermediate step it's convenient to switch to Shannon entropy, as this is what entropy accumulation theorems use. \ Of course, since $H_{\min}(\D) \le H(\D)$ for every distribution $\D$, \thm{thm1} immediately implies the same lower bound on Shannon entropy. \ Indeed, since Shannon entropy behaves linearly with respect to expectation, \thm{thm1} implies that
\begin{align}
    H(s_1,\ldots,s_k|C) \geq \frac{B}{2}\cdot \left(\frac{bq-1}{b-1} - o(1)\right).
\end{align}

While $\LLHA_B(\U)$ is admittedly a strong assumption, our next result justifies it by proving that it holds in the random oracle model:

\begin{theorem}[Hardness of $\LLQSV(\U)$, informal]
  Given a random oracle $\O$, let $\U$ be the uniform distribution over $M=2^{O(n)}$ quantum circuits $C_1,\ldots,C_M$, which Fourier-sample disjoint Boolean functions $f_1,\ldots,f_M:\{0,1\}^n\rightarrow\{-1,+1\}$ respectively defined by $A$. \ Then $\LLHA_{B}(\U)$ holds relative to $\O$ for $B=\Omega(n)$.
\end{theorem}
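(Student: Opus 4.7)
The plan is to reduce the statement that $\LLQSV(\U)\notin\QCAM\TIME(2^B)/\mathsf{q}(2^B n^{O(1)})$ relative to $\O$ to a quantum query complexity lower bound, then handle Merlin's classical message and the quantum advice separately by a union bound and an oracle-independence argument. Any verifier running in time $2^B$ makes at most $T=2^B\cdot n^{O(1)}$ oracle queries to $A$, so it suffices to show that no such $T$-query quantum algorithm, even supplied with a $\poly(n)$-length classical certificate and an oracle-independent quantum advice state, achieves non-negligible distinguishing advantage between the Yes- and No-distributions of $\LLQSV(\U)$.

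First I would prove a single-instance quantum lower bound: for one random Boolean function $f_i:\bit^n\to\{-1,+1\}$ carved out of a disjoint block of $A$, the task of distinguishing $s_i$ uniform from $s_i\sim \hat{f}_i^{\,2}$ is a ``single-sample Forrelation''-style problem. By the polynomial method, in the spirit of the Raz--Tal Forrelation lower bound and Aaronson--Ambainis's Fourier-checking analysis, any $q$-query quantum algorithm computes a low-degree polynomial in the oracle bits whose expectations under the Yes- and No-distributions differ by at most $O(q/\sqrt{N})$. This gives the required single-instance bound.

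Next I would lift to the $M=2^{O(n)}$-instance problem by a hybrid/tensorization step. Because the $f_i$'s live on disjoint oracle blocks and the $s_i$'s are independent, the Yes- and No-distributions are products of the single-instance ones. Using a chain-rule / Pinsker tensorization, the total TV distance achievable with $T$ queries subject to $\sum_i q_i\le T$ is at most $\sqrt{\sum_i(q_i/\sqrt{N})^2}\le T/\sqrt{N}$, maximized when the attacker concentrates queries on a single instance. For $B<n/2-\omega(\log n)$ this is $o(1)$, which is $\Omega(n)$ and hence suffices for the theorem.

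Finally I would handle the $\QCAM$ structure and the quantum advice. For each fixed classical Merlin message $m\in\bit^{\poly(n)}$, the verifier is a plain $T$-query quantum algorithm to which the previous two steps apply; to survive a union bound over the $2^{\poly(n)}$ possible $m$'s I would amplify the lower bound to give \emph{exponentially small} advantage per fixed message, e.g., via a direct-product or XOR-style lemma tailored to Fourier sampling over many disjoint oracle blocks. The oracle-independent quantum advice of $2^B n^{O(1)}$ qubits is absorbed by Aaronson's ``advice does not help oracle problems'' direct-product theorem: a $T$-query algorithm with $S$ qubits of advice still has query complexity $\Omega(\sqrt{N})$ provided $S\ll\sqrt{N}$, which holds for $B<n/2$. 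The hard part will be this last step: amplifying the single-instance bound strongly enough to absorb the union bound over Merlin messages \emph{and} the quantum advice simultaneously, without blowing up either the query count or the range of $B$ for which the argument goes through.
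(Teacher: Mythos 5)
There is a genuine gap, and it sits exactly where you flagged it: the treatment of Merlin's message and the advice. Your plan is to fix a Merlin message, prove an $O(T/\sqrt{N})$ distinguishing bound per fixed message, amplify, and then union-bound over messages and absorb the advice via an ``advice does not help'' theorem. This cannot be made to work as stated. First, the advice allowed in $\LLHA_B$ has length $2^B n^{O(1)}$ \emph{bits/qubits}, so the number of possible advice strings is doubly exponential and a union bound is hopeless even classically; for \emph{quantum} advice there is nothing to enumerate at all, and no generic ``advice is useless for oracle problems'' theorem applies here --- indeed the paper explicitly warns that combining strong verification with quantum advice can trivialize everything ($\QIP/\qpoly=\ALL$, $\mathsf{PDQP}/\qpoly=\ALL$). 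Second, even for the message alone, bounding the \emph{expected} advantage of each fixed-message verifier is not the right quantity for an Arthur--Merlin protocol: soundness/completeness compare $\max_w\Pr[\text{accept}]$ across the two cases, so you would additionally need per-witness concentration over the oracle, which your tensorization step does not provide.

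The paper's route is structurally different and is what makes the statement go through. For the uniform part, it reduces $\LLQSV$ to a bias-detection problem ($\BFBD$) and runs a BBBV-style hybrid at the level of $\QIP[2]$: since the Yes- and No-distributions differ only by a small random modification of the functions, any prover strategy that convinces the verifier on Yes instances also convinces it on the modified (No) instances, so no union bound over messages is ever needed; $\QCAM\subseteq\QIP[2]$ then gives the uniform hardness. For the nonuniform part, it does \emph{not} try to absorb the advice query-theoretically: it uses $\QCAM/\qpoly=\QCMA/\qpoly\subseteq\QMA/\cpoly$ (Aaronson--Drucker exchange), replaces the classical advice by a uniformly random guess --- paying only a $2^{-S}$ factor in success probability --- and then invokes Sherstov's \emph{strong direct product theorem} over the $N$ disjoint oracle sections, whose single-instance approximate-degree bound (via Markov/the polynomial method) is $\Omega((N/n)^{1/4})$. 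The SDPT is exactly the tool that tolerates success probability as small as $2^{-\Omega(N)}$, which is why advice of length $cN$ can be handled, whereas any amplification-plus-union-bound scheme caps out far earlier. If you want to salvage your outline, the missing ingredients are precisely these two: a hybrid argument that moves the cheating prover from Yes to No instances (rather than a per-message union bound), and the exchange-theorem-plus-SDPT mechanism for the advice.
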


\alert{
Here we outline the proof. \ 
First we give a reduction for $\LLQSV$ from another problem called Boolean Function Bias Detection (BFBD). \ 
In the latter problem, the algorithm is given access to $M$ functions sampled from either a distribution $\D$ or the uniform distribution. \ 
The distribution $\D$ can be described with the following process: \ 
First sample a integer $r\in\{0,1,\ldots,N\}$ with probability $N(1-2r/N)^2\cdot \binom{N}{r}2^{-N}$, sample a random subset $R\subseteq\bit^n$ of size $r$, and finally set $f(x)=-1$ if and only if $x\in R$. \ 
Since both distributions are concentrated around $r=N/2$, a simple hybrid argument leads to a basic lower bound for $\BQP$. \ 
We then extend the hardness result to interactive proof systems, specifically, the class $\QIP[2]$ of two-message quantum interactive proofs, using a similar argument: \ 
Recall that the prover's goal is always to convince the verifier that the given function is sampled from $\D$. \ 
From the observation stated above, we can modify a function $f\sim\D$ on a small number of points to yield a random function. \
Thus a prover which convinces the verifier to accept $f\sim\D$ would also convince the verifier to accept a random function. \ 
Then by the inclusion $\QCAM\subseteq\QIP[2]$, $\LLQSV$ is hard for $\QCAM$. \ 

To prove the desired hardness for the non-uniform class $\QCAM/\qpoly$ (or generalizations of it to use more queries and more advice), we observe that by Aaronson and Drucker's exchange theorem \cite{AD10}, $\QCAM/\qpoly\subseteq\QMA/\cpoly$, so it suffices to show hardness against the latter class. \ 
We then change the oracle model (and not the problem itself) as follows: \
The oracle $\O$ contains $N$ sections, each indexed by an $n$-bit string $x$. \ 
The non-uniform protocol given oracle access to $\O$, input $x$, and all the samples (also indexed by $x$), is challenged to determine whether the sample $s_x$ is sampled from $\O_x$ or uniform (see \prob{llqsv}). \ 

Clearly any solver for LLQSV can determine whether $s_x$ is sampled from $\O_x$ for every $x\in\bit^n$. \ 
By replacing the classical advice with a random guess, we show that any $\QMA/\cpoly$ verifier solving the problem would imply a $Nn^{O(1)}$-query $\QMA$ verifier solving $N$ problems with probability $2^{-\poly(n)}$. \ 
Then we appeal to the \emph{strong direct product theorem} by Sherstov \cite{She11}, who showed that even to achieve success probability $2^{-\Omega(N)}$, computing $N$ independent problems requires $\Omega(Nd)$ queries, where $d$ is the query lower bound of a single problem obtained using the polynomial method. \ 
Finally we derive a contradiction by showing that a single problem has an exponential lower bound. \ 
}

Building upon the above single-round analysis, the next step is to showk \emph{accumulation} of entropy across multiple rounds.
We give a simple $m$-round entropy accumulation process using $\LXEB_{1+\delta,k}$ as the verification for $m=n^{O(1)}$\footnote{Potentially $m$ can be exponentially large, but we do not pursue this here.} and constant $0<\delta<1$.
In each round, for $\gamma=O(\log n/m)$, the verifier sends the same circuit as in the previous round with probability $1-\gamma$, or samples a fresh random circuit with probability $\gamma$.
We define an \emph{epoch} to be an interval of consecutive rounds where the same circuit is sent.\footnote{Brakerski, Christiano, Mahadev, Vazirani and Vidick \cite{BCMVV18} used the same concept of ``epochs'' to analyze their certified randomness protocol.}
With overwhelming probability, there are at most $O(\log n)$ epochs.
The verifier chooses $k$ random samples for each circuit, and checks if the verifier passes $\LXEB_{1+\delta,k}$ for $99\%$ of the given circuits.
Applying an Entropy Accumulation Theorem (EAT, explained in \sec{intro-eat}), we prove the following statement.

\begin{theorem}[Entropy accumulation, no side information, informal]\label{thm:llqsv-eat-informal}
  For $\beta\in[0,1]$, if $\LLHA_{\beta n}$ holds,
  then for integer $k=\Omega(n^2)$ and $m=\Omega(\log n)$,
  there exists an $m$-round entropy accumulation protocol taking $k\cdot m$ samples such that conditioned on the event $\Omega$ of not aborting,
  \begin{align}
      H_{\min}(Z|C)_{\rho|\Omega}
      \geq 
      n\left(\left(0.99-\frac{0.01}{\delta}\right)\frac{\beta}{2} m-O(\sqrt{m})\right)
  \end{align}
   for every device solving $\LXEB_{1+\delta,k}$,
  where $\rho$ is the output state, $Z$ is the responses received from the device, and $C$ is the circuit in each round. %
\end{theorem}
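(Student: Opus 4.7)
The plan is to lift the single-round bound of \thm{thm1} to a multi-round min-entropy statement by invoking the Entropy Accumulation Theorem (EAT) of Dupuis--Fawzi--Renner, with a min-tradeoff function read off directly from \eq{llqsv-single-informal}. First I would model round $i$ of the protocol as an EAT channel $\M_i$ whose inputs are the device's quantum side-register and fresh public randomness $X_i$ (a bit indicating whether to resample the circuit and, if so, the new circuit); its outputs are the classical circuit $C_i$, the device response $Z_i\in(\bit^n)^k$, and the LXEB flag $T_i\in\{0,1\}$. The EAT Markov condition---that, conditional on the device state and the past transcript, the previous outputs $Z_{<i}$ are independent of the future inputs $X_{\geq i}$---holds because the $X_{\geq i}$ are generated internally by the verifier and never reach the device until they are used.

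Next I would build the min-tradeoff function
\[
f(q) \;=\; \frac{\beta n}{2}\cdot\frac{(1+\delta)q-1}{\delta}\;-\;o(n),
\]
and show that $H(Z_i \mid C_i E_i)\geq f(q_i)$, where $q_i$ is the pass probability in round $i$ and $E_i$ collects all side information available before round $i$ (including the device's accumulated quantum memory). For rounds that draw a fresh $C_i\sim\U$, this follows from \thm{thm1} combined with $H(\cdot)\geq H_{\min}(\cdot)$ in expectation, after absorbing the worst-case $E_i$ into the $\mathsf{q}(2^{\beta n}n^{O(1)})$ quantum advice that $\LLHA_{\beta n}$ tolerates. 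For rounds that repeat the previous circuit, rerun the same reduction: the round-$i$ device action is still polynomial-time, so any violation of the single-round bound would yield a $\QCAM\TIME(2^{\beta n})/\mathsf{q}(2^{\beta n}n^{O(1)})$ solver for $\LLQSV$, contradicting $\LLHA_{\beta n}$.

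Finally I would feed this into the EAT. Conditional on the non-abort event $\Omega$ that at least $99\%$ of the per-circuit LXEB tests pass, the observed empirical pass rate satisfies $\bar q\geq 0.99$, and EAT gives
\[
H_{\min}(Z\mid C)_{\rho\,|\,\Omega} \;\geq\; m\,f(\bar q)\;-\;O(\sqrt m) \;\geq\; n\Bigl(\bigl(0.99 - 0.01/\delta\bigr)\tfrac{\beta}{2}\,m - O(\sqrt m)\Bigr),
\]
where the choice $k=\Omega(n^2)$ ensures the per-circuit LXEB estimator concentrates around $q_i$.

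The hardest part is the per-round bound in an epoch continuation: inside an epoch the device can exploit correlations with its own previous outputs on the same $C_i$, and this is not covered by the plain statement of \thm{thm1}, which is phrased for a stateless polynomial-time adversary. The remedy is to compile the device's entire accumulated quantum memory into $\mathsf{q}(2^{\beta n}n^{O(1)})$ advice for a one-shot solver and then invoke \thm{thm1} as a black box; this works only because $\LLHA_{\beta n}$ is stated in the advice-tolerant form $\LLQSV\notin\QCAM\TIME(2^{\beta n})/\mathsf{q}(2^{\beta n}n^{O(1)})$. A secondary technical issue is converting the EAT's smooth-min-entropy output on accepting trajectories into the conditional min-entropy $H_{\min}(Z\mid C)_{\rho|\Omega}$ in the theorem statement, which should follow from standard smoothing and union-bound arguments once the abort threshold is set slightly above $99\%$ to leave room for concentration.
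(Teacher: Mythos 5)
Your skeleton matches the paper's route: the single-round bound from $\LLHA_{\beta n}$ (\thm{thm1}, promoted to a conditional von Neumann entropy bound) is turned into the affine min-tradeoff function $f(q)=\frac{(1+\delta)q-1}{\delta}\cdot\frac{\beta n}{2}-o(n)$, and the multi-round claim is obtained from a spot-checked Entropy Accumulation Theorem with $\gamma=O((\log n)/m)$ circuit refreshes and $k=\Omega(n^2)$ samples per round, yielding exactly the $(0.99-0.01/\delta)\frac{\beta}{2}nm-O(\sqrt m)$ bound. So far this is the same proof as in the paper.

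The genuine gap is in the step you yourself flag as the hardest: handling rounds that continue an epoch. Your remedy---compiling the device's accumulated quantum memory into the $\mathsf{q}(2^{\beta n}n^{O(1)})$ advice tolerated by $\LLHA_{\beta n}$ and then invoking \thm{thm1} as a black box---does not work as stated, for two reasons. First, the advice in the assumption depends only on $n$, whereas the device's in-epoch memory depends on the very circuit $C$ that plays the role of the $\LLQSV$ challenge in the reduction (and on the device's own earlier outputs, which are correlated with it); instance-dependent states cannot be smuggled in as nonuniform advice. Second, even granting some way to hand the memory to the reduction (e.g., by letting Arthur re-simulate the earlier rounds, which he can do in polynomial time), what \thm{thm1} then controls is the \emph{marginal} entropy of the round-$i$ output given $C$, not the entropy conditioned on (a purification of) the device's memory, which is the quantity the EAT's min-tradeoff condition requires. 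In particular, nothing in your argument rules out the trivial attack in which the device outputs genuinely random samples in the first round of an epoch and then replays them verbatim in every later round of that epoch: each round's marginal passes \thm{thm1}, yet no entropy accumulates. The paper closes this hole at the protocol level rather than the assumption level: in the protocol of \fig{llqsv} the test-round predicate $W_i$ includes the flag $E_i$ that forces the responses of any two rounds sharing the same challenge circuit to be pairwise distinct, and the min-tradeoff function is kept at the level of $H(Z|C)$ from \thm{llqsv-single}. Your protocol description omits this distinctness check, and without it (or some substitute argument conditioning on the device's past) the accumulation step does not go through. The remaining technical point you raise---passing from the EAT's smooth min-entropy on the non-aborting event to the stated bound---is handled the same way in the paper and is not an issue.
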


\subsection{Entangled Adversary and Ideal Measurements}

In the previous section, we assumed that an attacker, Eve, trying to predict the quantum computer's outputs had no preshared entanglement with the quantum computer. \ Now we relax that assumption.

To build intuition, we start with the special case where the quantum computer performs an ideal measurement---i.e., it ``just'' applies $C$ to $n$ qubits, followed by a measurement in the standard basis. \ The ``only'' problem is that the qubits might not start in the desired initial state $\ket{0^n}$, but rather in some arbitrary state entangled with Eve's qubits.

We define the following idealized score, called $b$-$\XHOG(\U)$. %
\begin{problem}[$b$-XHOG$(\D)$ {\cite{Kre21}}]
  For a distribution $\D$ over quantum circuits on $n$ qubits, an algorithm $\A$ given access to $C\sim\D$ is said to solve $b$-$\XHOG$ if it outputs a sample $z$ such that 
  \begin{align}
    \Exp_{C\sim\D}\left[\Exp_{z\sim\A^C}\left[p_C(z)\right]\right] \geq \frac{b}{N}.
  \end{align}
\end{problem}
For an algorithm $\A$ which is given access to $C$ and outputs $z$, we define the ``XHOG score'' of $\A$ to be the value $\Exp_C[\Exp_{z\sim\A^C}[p_C(z)]]$.
This score was first considered by Kretschmer for showing a Tsirelson bound for random circuit sampling in the oracle model \cite{Kre21}. \ 
Recall that with the CHSH game, a violation of the classical bound $3/4$ implies certified randomness. \ 
Interestingly, our result may be interpreted as certified randomness from a violation of the classical XHOG score. \ 

We will proceed with our analysis with $b$-XHOG first, and show a von Neumann entropy lower bound $\Omega(\delta n)$ when the device solves $(1+\delta)$-XHOG.
First, the problem itself is linear in the device's output distribution: that is, for two devices $\A$ with score $s_\A$ and $\B$ with score $s_\B$, a third device that runs $\A$ with probability $p$ and $\B$ with probability $(1-p)$ has score $p\cdot s_\A + (1-p)\cdot s_\B$.
The linearity condition coincides with the score calculation from violation of Bell's inequality.

More concretely, recall that to establish certified randomness from a violation of Bell's inequality, two devices $\A$ and $\B$ are asked to play, say, the CHSH game: the verifier sends two questions $x,y$ to the devices and collecting the responses $a,b$.
The verifier sets the score to $1$ if $x\wedge y = a\oplus b$ and 0 otherwise, and the expectation of the score is defined as 
\begin{align}
  \omega = \Exp_{x,y\sim\bit}\left[ \Exp_{a,b\sim\A^x\otimes\B^y(\rho)}[\Id[x\wedge y=a\oplus b]] \right].
\end{align}
It is well-known that the best achievable expectation is $\omega=\cos^2(\pi/8)$. \ 
For certified randomness, it was further shown that when the expectation $\omega\geq \cos^{2}(\pi/8+\epsilon)$, the output of $\A$ has von Neumann entropy lower-bounded by $1-h(\sin 4\epsilon)\approx 1-O(\epsilon)$, where $h(x):=-x\log x-(1-x)\log(1-x)$ is the binary entropy function \cite{ADFRV18}. \ 
Like the $\XHOG$ score, here the score $\omega$ can be exactly computed only by taking infinitely many samples from the same devices. \ 
With a finite number of samples, we can only approximate the score. \ 

Proving a conditional min-entropy lower bound from sample statistics, in an $m$-round sequential process, amounts to the problem of entropy accumulation. \ 
An Entropy Accumulation Theorem (EAT) for certified randomness is usually stated as follows: \
In an $m$-round sequential process, the verifier randomly selects $O(\log m)$ rounds to get an approximation of the score. \
If the approximation is sufficiently close to $\cos^2(\pi/8)$, the number of extractable random bits is at least $\Omega(m)$ times the von Neumann entropy lower bound established in a single-round analysis. \ 

Without loss of generality, let the entanglement shared between the device and Eve be a pure state $\ket{\psi}$.  \ 
For every state $\rho_{ZE}$ classical on $Z$, we show that the conditional von Neumann entropy $H(Z|CE)_\rho\geq H(Z|C)_\rho-\chi(Z:CE)_\rho$, where $\chi$ is the Holevo quantity. \ 
To see why they they must use weak entanglement, we can write $\ket{\psi} := \sum_x \alpha_x \ket{\psi_x}\ket{\phi_x}$ for orthonormal bases $\{\ket{\psi_x}\}$ and $\{\ket{\phi_x}\}$ in the Schmidt decomposition. \ 
We show that the device solving $b$-XHOG for $b\geq 1+\delta$, the amplitude $\alpha_x$ must concentrate at a single compoment, say $x^*$, such that $|\alpha_{x^*}|^2\geq\delta$.
By the observation that the Holevo quantity equals the entanglement entropy, we establish an upper bound $O((1-\delta) n)$ on the Holevo quantity. \  

In this simplified setting, we have already seen that if the device has a large XHOG score, then they must use weak entanglement to pass the verification. \ 
However, the entire analysis relies on the assumption that the device must perform the ideal measurement. \ 

\subsection{A Fully General Device}

Next, we consider the setting in which the adversary may share arbitrary entanglement with Eve. \ 
We give an unconditional proof for certified randomness in the random oracle model. \ 
Instead of setting a binary-valued score for each question-answer pair as in \sec{intro-no-side-info}, for question $C$ and response $z$, the score is set to $p_C(z)$. \ 
Then, in a single round analysis, we first show that if the device passes $(1+\delta)$-XHOG score, then the von Neumann entropy of the joint state is at least $\Omega(n)$. \ %

\begin{theorem}[Single-round analysis, informal]\label{thm:main-informal}
  Every device $\A$ that on input the first system of a bipartite state $\rho_{DE}$ and given oracle access to a Haar random $C$, makes $T\leq 2^{n/7}$ queries and solves $(1+\delta)$-$\XHOG$ must output a state $2^{-\Omega(n)}$-close to a state $\psi_{ZE}$ classical on system $Z$ such that 
  \begin{align}\label{eq:main}
    H(Z|CE)_{\psi} \geq 0.99\delta n - O(\log T).
  \end{align}
  Furthermore, there is a single-query device that solves $(2-2^{-n})$-$\XHOG$.
\end{theorem}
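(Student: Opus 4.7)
The plan is to purify Eve's share so that the joint initial state is a pure $|\psi\rangle_{DE}$ with Schmidt decomposition $\sum_x \alpha_x |\psi_x\rangle_D |\phi_x\rangle_E$, and to exploit the linearity of the XHOG score in the input density matrix. Since $\A$ never touches $E$, there is a PSD operator $M_\A$ on $D$ satisfying $\XHOG(\A,\rho_D) = \tr(M_\A \rho_D)$, namely $M_\A = \Exp_C[V_C^\dagger P_C V_C]$ with $V_C$ the device's isometry given oracle $C$ and $P_C = \sum_z p_C(z)|z\rangle\langle z|_Z$. Kretschmer's Tsirelson-type bound \cite{Kre21} already gives $\|M_\A\|_{\mathrm{op}} \leq 2/N + 2^{-\Omega(n)}$ whenever $T\leq 2^{n/7}$, which is the starting point.

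My key identity is the chain rule $H(Z|CE) = H(Z|C) - \chi(Z:E|C)$, where $\chi$ is the Holevo quantity and I use that $C$ is classical and independent of $E$. For the first term, concavity of Shannon entropy across the Schmidt basis, combined with a Weingarten-calculus bound on $\Exp_C[\sum_z q_C^x(z)^2]$ for each fixed pure input $|\psi_x\rangle$, should yield $\Exp_C[H(q_C^x)] \geq n - O(\log T)$ uniformly in $x$, hence $H(Z|C) \geq n - O(\log T)$. For the Holevo term, I would refine Kretschmer's bound into a \emph{structural} claim: the excess $M_\A - (1/N) I$ is $2^{-\Omega(n)}$-close in operator norm to a rank-one PSD operator $(1/N)|\eta\rangle\langle\eta|$, for some device-dependent unit vector $|\eta\rangle$ encoding the device's effective ``honest input.'' My approach would be via a compressed-oracle argument adapted to Haar-random unitaries, where the second-moment computation of $M_\A$ against the Haar measure isolates the rank-one contribution; I expect this spectral refinement to be the main technical obstacle, since existing Tsirelson-style arguments control only the top eigenvalue.

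Granted the structural lemma, $\tr(M_\A \rho_D) \geq (1+\delta)/N$ forces $\langle \eta|\rho_D|\eta\rangle \geq \delta - 2^{-\Omega(n)}$, so $\lambda_{\max}(\rho_D) \geq \delta - 2^{-\Omega(n)}$. Hence $H(\rho_E) = H(\rho_D) \leq (1-\delta)n + O(1)$ (using $\dim D = N$ after absorbing unused ancilla into the output register), and by the Holevo bound $\chi(Z:E|C) \leq H(\rho_E) \leq (1-\delta)n + O(1)$. Putting everything together, $H(Z|CE) \geq n - O(\log T) - (1-\delta)n - O(1) = \delta n - O(\log T)$; the explicit $0.99$ factor absorbs these lower-order terms together with the $2^{-\Omega(n)}$ trace-distance smoothing needed to pass from the actual output state to a clean classical--quantum state $\psi_{ZE}$. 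The degenerate case where the device essentially ignores $D$ (so that $M_\A$ is near-scalar and $\chi(Z:E|C)\approx 0$) is compatible with the same bound, since then $H(Z|C) \approx n$ already. Finally, for the ``furthermore'' part, the trivial single-query device that prepares $|0^n\rangle$, applies $C$, and measures achieves $\XHOG = \Exp_C[\sum_z p_C(z)^2] = 2/(N+1) = (2 - 2/(N+1))/N$, giving $b = 2 - 2/(N+1)$, within $O(2^{-n})$ of the Kretschmer upper bound $2$.
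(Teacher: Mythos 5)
Your proposal has a genuine gap, and it sits exactly at the step you flagged as the ``main technical obstacle'': the structural claim that $M_\A-\frac{1}{N}\Id$ is (at any useful scale) close in operator norm to a rank-one operator is simply false for general devices, and so are the consequences you draw from it. Take the device that ignores $D$ entirely, prepares $\ket{0^n}$, applies $C$ once and measures: then $M_\A=\frac{2}{N+1}\Id$, the excess is essentially a multiple of the identity (as far from rank-one as possible), and the device attains $\delta\approx 1$ even when $\rho_D$ is maximally entangled with Eve, so $\lambda_{\max}(\rho_D)=2^{-n}$ and $H(\rho_E)=n$, contradicting both ``$\lambda_{\max}(\rho_D)\geq\delta-2^{-\Omega(n)}$'' and ``$H(\rho_E)\leq(1-\delta)n+O(1)$.'' You dismiss this as a degenerate case with $\chi\approx 0$, but devices can interpolate: let the device measure $D$ to get $x=(x_1,x_2)$, output an honest sample from $p_C$ if $x_1=0$ and output $x_2\|0$ if $x_1=1$. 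Then $\delta\approx 1/2$, $M_\A-\frac{1}{N}\Id\approx\frac{1}{N}\Pi_{x_1=0}$ has rank $N/2$, and with $\rho_D$ maximally mixed and purified by Eve one has $H(\rho_E)=n$ while the true Holevo term is $\chi(Z:E|C)\approx n/2$; the chain $\chi\leq H(\rho_E)\leq(1-\delta)n$ is therefore unusable even though the theorem's bound $\approx\delta n$ is exactly tight for this device. The Holevo term depends on how the device correlates its \emph{output} with the input register, not on the spectrum of the score observable together with an eigenvalue of $\rho_D$. A second, independent problem: the claimed uniform bound ``$\Exp_C[H(q^x_C)]\geq n-O(\log T)$ for every fixed input $\ket{\psi_x}$'' is false as stated, since a device may ignore the oracle and output a fixed string (collision probability $1$); any correct version of $H(Z|C)\gtrsim n$ must already be coupled to the score, which defeats the clean split $H(Z|CE)=H(Z|C)-\chi(Z:E|C)$ with the two terms bounded separately.

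For comparison, the paper uses your kind of argument only in the restricted setting of \sec{ideal}, where the device is constrained to apply $C$ to its input register and measure; there the score operator genuinely has the rank-one-excess form, Schmidt-weight concentration holds, and the Holevo bound closes the argument. For a fully general $T$-query device the paper proceeds differently: (i) any $T$-query device is $2^{-\Omega(n)}$-close in diamond norm to a query-free device given $k\approx T^2 2^{O(n)}$ i.i.d.\ samples from $P_C$ (Kretschmer's state-preparation simulation combined with the Ambainis--Rosmanis--Unruh resource-state reduction, \thm{query-to-sample}); (ii) a Dirichlet-posterior computation shows that a $(1+\delta)$-XHOG score forces the simplified device to output one of its given samples with probability $\approx\delta$ (\thm{hog-simple}); (iii) since Eve has no information about which samples were drawn, a mutual-information argument yields $H(Z|E)\gtrsim\delta\,(H_{\min}(P_C)-2\log k)$, and $H_{\min}(P_C)\geq n-\log n-O(1)$ with overwhelming probability (\thm{vn-bound}, \lem{haar-min}). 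Steps (ii)--(iii) are precisely the replacement for your spectral/Holevo route: they tie the score to the output's correlation with fresh randomness unknown to Eve, which is the quantity that actually controls $H(Z|CE)$. Your ``furthermore'' computation for the single-query honest device is fine.
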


To prove \thm{main-informal}, the key observation is that one can approximate, to diamond distance $2^{-\Omega(n)}$, any device $\A$ making $T$ queries to a Haar random $C$ by another device $\F$ which does not make any queries, but is given $k$ samples $z_1,\ldots,z_k\sim P_C$ for $k=T^2\cdot 2^{O(n)}$. \ %
For each $\A$, we call the associated device $\F$ the simplified device. \ 
With probability $1-O(k^2/N)$, these samples does not contain any collision. \ 
In this event (no collision occurring), $\F$ solves $(1+\delta)$-XHOG implies that $\F$ outputs $z\in S=\{z_1,\ldots,z_k\}$ with probability at least $\delta-o(1)$. \ 
Intuitively, this robustly certifies that $\A$'s output must be $\epsilon$-close to a strategy where the output is prepared by sampling from $C$ for $k$ times and choosing one of the samples. %

From this point of view, a simplified device solving $(1+\delta)$-XHOG is equivalent to winning the following game with probability at least $\delta-o(1)$: \ 
Given $k$ independent samples $S$ from $P_C$, outputs a string $z\in S$. \ 
Though the game looks quite trivial, it yields a sharp lower bound of the von Neumann min-entropy. \ 
By the no-communication theorem, Eve, even if she learns $P_C$, has no information about the samples given to $\F$, but Eve can potentially control the output distribution when the device sees a particular set of samples. \ 
Since with high probability over $C$, $P_C$ has min-entropy $n-O(\log n)$, we show that averaging over any distribution supported on these samples, the resulting distribution has von Neumann entropy at least $0.99\delta n - O(\log T)$. \

Our lower bound in \thm{main-informal} is close to optimal. \ 
Consider a device which samples from $P_C$ with probability $\delta$ and outputs a uniform string obtained by performing a standard basis measurement on EPR pairs shared with Eve with probability $1-\delta$. \ 
In the former event, the device solves $b$-XHOG for $b\approx 2$, whereas in the latter, the output is a uniformly random string, which solves $1$-XHOG. \  
Thus, by linearity, the device solves $(1+\delta)$-$\XHOG$. \ 
The output joint classical-quantum state from the device is a probabilistic mixture of the two states. \ 
Moreover, with overwhelming probability over choices of $C$, the Shannon entropy of $P_C$ is $n-O(\log n)$. \ 
Then by the concavity of von Neumann entropy, the output has conditional von Neumann entropy $\delta n-o(n)$. \ 

To accumulate the entropy, we give a sequential process which is very similar to the one introduced in \sec{intro-no-side-info}: \ 
The verifier samples $t=O(\log n)$ different circuits, and asks for at least $k$ samples for each circuit. \ 
Upon receiving the samples, the verifier chooses $k$ random samples for each circuit, and checks if the device passes $\LXEB_{1+\delta,k}$ for a constant $\delta$ for $99\%$ of the circuits. \ 
We show that the accumulation process certifies $\Omega(\delta mn)$ bits. \ 

\begin{theorem}[Entropy accumulation from a general device, informal]\label{thm:eat-general-informal}
  For integer $k=\Omega(n^2)$, $m=\Omega(k\log n)$, there exists an entropy accumulation protocol taking $m$ samples such that conditioned on the event $\Omega$ of non-aborting,
  \begin{align}
      H_{\min}(Z|CE)_{\rho|\Omega}
      \geq n\left(0.99\delta m - O(\sqrt{m})\right)
  \end{align}
   for devices solving $\LXEB_{1+\delta,k}$,
  where $\rho$ is the output state, $Z$ is the samples from the device, $C$ is the circuits and $E$ is the information held by Eve.
\end{theorem}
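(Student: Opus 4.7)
The plan is to apply an Entropy Accumulation Theorem (EAT) in the style of Dupuis--Fawzi--Renner, with the single-round bound of \thm{main-informal} supplying the per-sample von Neumann entropy rate and the $\LXEB$ test statistic serving as the tradeoff observable. I would model the protocol as a sequence of EAT channels $\M_1,\ldots,\M_m$, where $\M_j$ takes Eve's current side register, couples in the epoch's (possibly repeated) circuit $C_{i(j)}$, runs the device once, and produces a single output sample $z_j$. The test flag for each epoch is computed only after all $k$ samples of that epoch are collected, and does not feed back into the device during execution. Because the $t = m/k$ circuits are sampled freshly and independently of the history, this gives the Markov chain $Z_{<j} \leftrightarrow C_{\le j} E_{j-1} \leftrightarrow Z_j$ required by the EAT.

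Next I would construct the min-tradeoff function $f$ that EAT consumes. The observed statistic per epoch is the indicator that the epoch passes $\LXEB_{1+\delta,k}$. If, conditional on the past, the epoch passes with probability at least $1-\eta$, then a Hoeffding-type argument (enabled by $k = \Omega(n^2)$) shows that the conditional average XHOG score per sample within the epoch is at least $(1+\delta - O(1/\sqrt{k}) - O(\eta))/N$. Chaining \thm{main-informal} across the $k$ samples of the epoch, while folding all previously revealed samples into Eve's side register, yields
\begin{align*}
H(z_{i,j} \mid C_i\, E_{i-1}\, z_{i,<j}) \;\geq\; 0.99\,\delta\,n - o(n),
\end{align*}
and summing over $j$ and epochs produces a linear $f$ whose value at the $99\%$ acceptance threshold is at least $0.99\,\delta\,n$ per sample.

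Finally, I would invoke the EAT to conclude that, conditioned on the non-aborting event $\Omega$,
\begin{align*}
H_{\min}^{\varepsilon}(Z \mid C E)_{\rho\mid\Omega} \;\geq\; 0.99\,\delta\,n\,m \;-\; c\,\sqrt{m}\,\log(1/\varepsilon),
\end{align*}
for a constant $c$ depending on the slope of $f$. Choosing $\varepsilon = 2^{-\Theta(\sqrt{m})}$ and converting smoothed min-entropy back to min-entropy absorbs the $\log(1/\varepsilon)$ factor into the $O(\sqrt{m})$ term, giving the claimed bound $n\bigl(0.99\,\delta\,m - O(\sqrt{m})\bigr)$.

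The main obstacle will be the min-tradeoff function. \thm{main-informal} controls entropy in terms of the population XHOG expectation, whereas the protocol only sees finite-sample $\LXEB$ statistics, and a device passing $\LXEB$ in expectation could a priori concentrate its XHOG advantage on a single sample while leaving the others deterministic. Ruling this out requires the chained per-sample application above to be robust to conditioning on previously revealed samples, and requires the concentration argument to be uniform over adversary strategies and over the randomness of $C$. A secondary subtlety is verifying the Markov condition cleanly in the presence of both the $k$-fold epoch grouping and the random test-round selection; deferring all test computations to the end of the protocol, and placing $C_i$ on the conditioning side rather than the output side, is what makes this work.
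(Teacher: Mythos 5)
Your high-level architecture---the single-round bound of \thm{main-informal} plugged into a Dupuis--Fawzi--Renner-style EAT, with circuits on the conditioning side and test computations deferred---is the same as the paper's, but the step you yourself flag as the ``main obstacle'' is exactly the step that is missing, and as sketched it does not go through. An EAT statistic must be generated round by round, whereas the binary $\LXEB_{1+\delta,k}$ indicator is a function of $k$ rounds jointly; so either you treat an entire epoch as one EAT round, in which case you need a block-level entropy bound that you never establish, or you need a genuinely per-round statistic. Your proposed fix---``chaining'' \thm{main-informal} over the $k$ samples of an epoch after converting the pass probability into a conditional XHOG score---fails on two counts: within an epoch the \emph{same} circuit is reused, so conditioning on the epoch passing only controls the \emph{average} of the per-sample conditional scores, not each one (a device may concentrate all of its XHOG advantage on a single sample while the rest are nearly deterministic), and \thm{main-informal} is a statement averaged over a fresh Haar circuit, so it cannot simply be applied sample-by-sample conditioned on earlier outputs from the same circuit. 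Saying the chained application ``requires'' robustness to this conditioning is a restatement of the gap, not a proof.

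The paper closes this gap by changing the EAT itself rather than the single-round bound: in \sec{eat} the min-tradeoff function is defined on the real-valued \emph{score} rather than on a distribution over binary outcomes, and in the protocol of \fig{full} the per-round statistic on spot-checked rounds is $G=N\sum_z \proj{z}\,p_{C_i}(z)$, i.e.\ the XHOG score of that round's single sample. The affine min-tradeoff function supplied by \cor{single} (slope $0.99n$, whence $V\leq 3.99n$) then depends only on the \emph{observed average} score, so the ``advantage on one sample'' scenario is averaged away by construction, and the $99\%$-of-epochs acceptance event is converted into a lower bound on that observed average score with the device emitting one sample per round and only an $O((n^2\log n)/m)$ fraction of rounds tested. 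If you want to salvage your route with the binary epoch indicator, you would have to prove a block-level min-tradeoff bound for $k$ samples drawn with a reused circuit, which is substantially harder than what \thm{main-informal} gives you; adopting the score-valued statistic is the intended (and much cleaner) path.
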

More concretely, taking $\delta=0.1$, this bound is $n\cdot\left(0.099 m - O(\sqrt m)\right)$ by taking $m$ samples from the device.
We note that this bound is seemingly weaker than the bound in \thm{llqsv-eat-informal}, but the number of samples is $m$ (instead of $km$ as in \thm{llqsv-eat-informal}). 
The minimal sample complexities in the protocols are no different---both are $\Omega(n^2\log n)$---for a perfect device to pass the verification with overwhelming probability. \ 
For technical reasons, in the latter protocol, $k$ samples for each verification are randomly chosen (from all samples sent by the device corresponding to the same challenge circuit) and received sequentially. \ 
In contrast, in the former protocol, in each round the device is asked to send $k$ samples, and the verifier checks one round for each circuit. \ 

We also note that \thm{llqsv-eat-informal} and \thm{eat-general-informal} are  incomparable results. \ 
In particular, the security analysis for \thm{eat-general-informal} heavily relies on the model in which the device is given access to the circuit and the distribution (the Haar measure) over circuits. \
In contrast, the security analysis based on LLHA may still hold when the device is given access to a description of circuits sampled from other distributions. \ 
We leave it as an open question whether there exists a hardness assumption under which linear cross-entropy benchmarking certifies min-entropy against an entangling adversary in the plain model. \

\subsection{Entropy Accumulation}\label{sec:intro-eat}

The proof of \thm{eat-general-informal} is based on the entropy accumulation theorem (EAT) by Dupuis, Fawzi and Renner \cite{DFR20}, with modifications explained as follows. \ 
Let $f$ be an affine function, called the min-tradeoff function, such that in a single-round analysis, one can show that $H(Z|E)_\rho\geq f(q)$ for distribution $q=(p,1-p)$ and any state $\rho$ whose acceptance probability is $p$. \ 
In an $m$-round sequential process, the verifier checks $\gamma m$ rounds (called test rounds) by computing the decision bits from the samples, and computes an approximate distribution $\tilde q=(\tilde p, 1-\tilde p)$.
The min-entropy round across the $m$ rounds is then $m\cdot f(\tilde q)-O(\sqrt{m})$. \ 
Thus an EAT reduces a multi-round analysis to a lower bound on the single-round von Neumann entropy. \ 

Since we adopt the $b$-XHOG score for a bound of the von Neumann entropy in a single round analysis, the score obtained from the test rounds is no longer computed from binary random variables. 
Thus we define a new min-tradeoff function $f'$ which maps the \emph{score} to a lower bound of the von Neumann entropy. \ 
Then we show that if an approximation of the score, defined as the average of $p_{C_i}(z_i)$ is more than $s$, then the accumulated entropy is at least $m\cdot f'(s)-O(\sqrt{m})$. \ 

The entropy accumulation procedure allows for \emph{spot checking}, that is, in the $m$-round process, instead of computing $p_{C_i}(z_i)$ for every round $i\in[m]$, the verifier only computes $p_{C_i}(z_i)$ for a subset of indices $i$ of size $O(n^2\log n)$. \ 
In more details, the verifier changes the circuits for $O(\log n)$ times, and in each epoch the verifier computes the average of $k=O(n^2)$ samples. \ 
The number of test rounds is set for the device that takes i.i.d.\ samples from $p_C$ on each circuit $C$ to pass the verification with overwhelming probability. \ 
By Hoeffding's inequality, a device that samples from $p_C$ outputs $k$ samples whose average score is concentrated above $2-O(1)$ for a typical $C$ with overwhelming probability. \
If the verifier passes $\LXEB_{b,k}$ for the epoches of a sufficiently large fraction $\Omega(1)$, the average is above $1+\Omega(1)$ with overwhelming probability, and by the entropy accumulation theorem, the conditional min-entropy is $\Omega(nm)$. \

\subsection{Pseudorandomness and Statistical Zero Knowledge}

The protocols for certified randomness rely on perfect randomness for generating the challenge circuit. \ 
However, by a counting argument, the challenge space is doubly exponentially large, and it requires exponentially many random bits to compute a truly random circuit. \ 
To produce a net gain in randomness, we must rely an efficiently computable function which uses polynomially many random bits and generates pseudorandomness with security level sufficient for our purpose. \ 

However, the standard notion of pseudorandom functions (PRFs) against quantum polynomial-time adversaries does not seem to be sufficient, since it only guarantees the output of the device is \emph{pseudorandom}! \ 
Thus for certified randomness, we require a stronger pseudorandom function, when a truly random function is replaced with which, the output remains \emph{statistically indistinguishable} from the uniform distribution. \ 

To provide such a security guarantee, we construct a pseudorandom function indistinguishable from a truly random function for any $\QSZK$ protocols. \ 
To see why such a security level is sufficient, we recall facts about the class $\QSZK$ which consists of promise problems that admit a quantum statistical zero-knowledge protocol. \ 
A $\QSZK$ protocol is one that consists of a proof system, i.e., a quantum polynomial-time verifier and an unbounded prover, and an efficient quantum simulator which simulates the interaction of the proof system without access to a witness. \ 

Watrous showed that $\QSZK$ has a natural complete problem called the \emph{quantum state distinguishability problem} (QSD) \cite{Wat02}. \ 
In this problem, the instance is a tuple of two efficiently computable quantum circuits $Q_0,Q_1$. \ 
For $\alpha\in(0,1]$, the verifier is challenged to determine the trace distance $\|\rho_0-\rho_1\|_{\tr}$ is at least $\alpha$, or at most $\alpha^2$, where $\rho_b$ is a marginal state obtained by computing $Q_b$ for $b\in\bit$. \ 
It is known for this class, there is an amplification procedure, and therefore the gap can be made exponentially close to 1 \cite{Wat02}. \ 
More recently, Menda and Watrous showed that relative to a random oracle, $\mathsf{UP}\not\subset\QSZK$ \cite{MW18}. \  
Ben-David and Kothari defined a query measure on statistical zero-knowledge proof, and showed that the positive-weighted adversary method can only prove suboptimal lower bounds for certain problems \cite{BK19}. \ 

For certified randomness, we define the $\QSZK$-distinguishability between two distributions over functions, and a similar definition can be extended to distributions over unitaries.
\begin{definition}[$\QSZK$-distinguishability, informal]
  Two distributions $\D_0,\D_1$ over functions are said to be $\mathrm{QSZK}$-distinguishable if there exist a pair of algorithms $\A,\B$ such that the averaged trace distance between $\A^F$ and $\B^F$'s output states has non-neglgigible difference between $F\sim\D_0$ and $F\sim\D_1$.
  The distributions are said to be $\QSZK$-indistinguishable if no such algorithms exist.
\end{definition}
A $\QSZK$-secure pseudorandom function is defined as one $\QSZK$-indistinguishable from a random function. 
We propose an assumption, called pseudorandom function assumption (PRFA), that there exists a $\QSZK$-secure pseudorandom function. \ 
We justify the assumptions are valid by giving a construction for algorithms given oracle access to the function. \ 
\begin{theorem}[Pseudorandom functions, informal]
  There exists a $\QSZK$-secure pseudorandom function with key length $O(n)$ relative to a random oracle. \ 
\end{theorem}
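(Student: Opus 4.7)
The plan is to construct a keyed function directly from the random oracle and then reduce $\QSZK$-indistinguishability to ordinary quantum pseudorandomness of that construction. Concretely, let $H$ denote the random oracle and define $f_k(x) := H(k,x)$ for a key $k \in \bit^{Cn}$ sampled uniformly and kept secret, where $C$ is a constant to be tuned. This yields seed length $O(n)$, and both the honest verifier and any would-be distinguisher are permitted oracle access to $H$.

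First I would establish a standard quantum PRF guarantee in the random oracle model: for every quantum algorithm making $T = \poly(n)$ oracle queries to both $H$ and an auxiliary oracle $F$, the statistical distance between its output distribution under $F = f_k$ (with $k$ hidden and uniform) and under $F$ an independent uniformly random function is at most $O(T \cdot 2^{-\Omega(n)})$. This follows from the compressed-oracle or one-way-to-hiding framework: any distinguishing advantage forces the algorithm to query $H$ at a point whose first argument coincides with the secret $k$, and identifying such a $k$ from a set of size $2^{Cn}$ is ruled out by a BBBV-style quantum search lower bound.

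Next I would lift this to $\QSZK$-indistinguishability. Suppose for contradiction that a pair $(\A,\B)$ of $\poly(n)$-time quantum algorithms witnesses a non-negligible gap $\epsilon$ between $\Exp_k \|\A^{f_k,H}(\rho)-\B^{f_k,H}(\rho)\|_{\tr}$ and $\Exp_F \|\A^{F,H}(\rho)-\B^{F,H}(\rho)\|_{\tr}$. By an averaging argument there exists a threshold $\tau$ at which the induced CDFs of the trace-distance random variable differ by $\Omega(\epsilon)$. Running Watrous's $\QSZK$ protocol for $\QSD$ on the state-preparation circuits $\A^{\cdot,H}(\rho)$ and $\B^{\cdot,H}(\rho)$, with completeness/soundness thresholds straddling $\tau$ and sequentially amplified to negligible error, yields a $\QSZK$ protocol whose verifier is $\BQP$ with oracle access to $(F,H)$. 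Now invoke the statistical zero-knowledge \emph{simulator}: it produces an accept bit whose distribution is statistically close to the verifier's, yet runs in $\BQP^{F,H}$ with no prover. This simulator therefore distinguishes $F = f_k$ from $F$ uniform with advantage $\Omega(\epsilon) - \negl(n)$, contradicting the first step.

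The main obstacle is the final reduction---specifically, verifying that the Watrous $\QSD$ simulator relativizes faithfully. The verifier's state-preparation calls to $\A^{F,H}$ and $\B^{F,H}$, and the simulator's reproduction of the verifier's view, must all treat $(F,H)$ as black-box oracles, and the simulator's statistical closeness must hold for every fixing of the oracle rather than merely in expectation. This should go through because Watrous's construction manipulates its input circuits only syntactically, using no structural property of their gates, but it requires revisiting the simulator construction in the relativized setting. A secondary bookkeeping issue is that $\QSD$ amplification by sequential repetition is polynomial in the gap and preserves zero-knowledge, so the query and time complexity of the final distinguisher remain polynomial.
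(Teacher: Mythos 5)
Your construction is the same as the paper's ($f_k(x)=\O(k,x)$ with an $O(n)$-bit key), and your first step (a standard quantum PRF guarantee in the random oracle model via a BBBV/one-way-to-hiding search bound) is fine. The problem is the second step, and it is a genuine gap: you cannot convert the pair $(\A,\B)$ with a gap in \emph{expected trace distance} into a single polynomial-query decision distinguisher by invoking the $\QSD$ zero-knowledge simulator. The zero-knowledge property only guarantees that the simulator's output is statistically close to the verifier's view on \emph{yes} instances of the promise problem (i.e., when the trace distance is large, and for the interaction with the honest prover); on instances where $\|\A^F(\rho)-\B^F(\rho)\|_{\tr}$ is small there is no guarantee at all. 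Indeed, for Watrous's protocol the canonical simulator simply sets $b'=b$ and therefore ``accepts'' with probability close to $1$ regardless of the instance, so its accept bit carries no information about the trace distance and the derived distinguisher has advantage $\negl(n)$, not $\Omega(\epsilon)$. Replacing the prover by the simulator is exactly the step that fails, and no relativization bookkeeping fixes it. Note also that if this reduction worked it would show that ordinary (decision-style) pseudorandomness implies $\QSZK$-indistinguishability, i.e., the converse of the paper's observation that standard-distinguishability implies $\QSZK$-distinguishability --- which the paper explicitly says is not known, and which is the entire reason the stronger notion is introduced as a separate assumption.

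The paper's actual proof avoids any reduction to a decision distinguisher. It bounds the quantity
$\bigl|\Exp_{k}\|\A^{\O,\O_k}-\B^{\O,\O_k}\|_{\tr}-\Exp_{H}\|\A^{\O,H}-\B^{\O,H}\|_{\tr}\bigr|$
directly: first, a Hoeffding bound plus a union bound shows that for all but a $2^{-\Omega(N)}$ fraction of random oracles, the keyed family under a uniform key behaves like an $N^{-1/2}$-biased random function; second, a BBBV-style hybrid argument shows that a slightly biased random function and a uniform random function lead to nearly identical \emph{outputs of each algorithm separately}, and the triangle inequality
$\bigl|\|\A^F-\B^F\|_{\tr}-\|\A^{F_0}-\B^{F_0}\|_{\tr}\bigr|\le\|\A^F-\A^{F_0}\|_{\tr}+\|\B^F-\B^{F_0}\|_{\tr}$
then controls the change in trace distance, giving advantage $O(T N^{-1/4})$. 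If you want to repair your write-up, you should replace the simulator step with an argument of this kind that perturbs the oracle and bounds the effect on $\A$ and $\B$ individually, rather than trying to estimate the trace distance with an efficient verifier.
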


Similarly, we say a pseudorandom unitary is $\QSZK$-secure if it is $\QSZK$-indistinguishable from a random unitary. \ 
We propose a similar assumption, called pseudorandom unitary assumption (PRUA), that there exists a $\QSZK$-secure pseudorandom unitary, and prove the existence relative to an oracle with key length $O(n)$. \ 

Under these assumptions, when replacing a random circuit with a pseudorandom one, the output remains statistical indistinguishable from a uniform distribution, conditioned on Eve's side information. \ 
To see why, recall that De, Portmann, Vidick, and Renner \cite{DPVR12} showed that Trivesan's randomness extractor \cite{Tre01} is quantum-proof. That is, the output from the randomness extractor together with Eve's side information is a quantum state $\rho$ statistically indistinguishable from $\sigma\otimes \rho_E$, where $\sigma$ is a maximally mixed state and $\rho_E$ is the marginal state held by Eve. \
If the device given a pseudorandom circuit outputs a quantum state that changes the distance by a non-negligible amount from $\sigma\otimes\rho_E$, then such a device implies a $\QSZK$ protocol that distinguishes a pseudorandom circuit from a random one. \ 

To see there is a net gain in randomness, the protocol samples $O(\log n)$ pseudorandom circuits, each of which takes $O(n)$ random bits for the keys of the pseudorandom function, and finally it produces $\Omega(mn)$ random bits. \ 
For $m=\poly(n)$, we have a polynomial expansion. \ 

While we do not know whether a weaker assumption can work for certified randomness, the security level seems necessary against an entangling adversary. \ 
Indeed, if there is no quantum side information, then all we need is to use a pseudorandom circuit against adversaries solving the \emph{statistical difference from uniform} problem. \ 
In the purely classical setting, the problem is known to be complete for $\NISZK$, a subclass of $\SZK$ consisting of problems that admits a non-interactive statistical zero-knowledge protocol \cite{GSV99}. \ 
However, in the presence of quantum side information, an unbounded Eve can prepare any $\rho_E$, and security against $\QSZK$ seems necessary. \

\section{Preliminaries}\label{sec:preliminaries}

As we said, a circuit $C$ acts on $n$ qubits, and $N=2^n$. \ 
The binary entropy function $h:[0,1]\to\mathbb R$ is defined as $h(x):=-x\log x-(1-x)\log(1-x)$ and $h(0)=h(1)=0$. \ 
For matrix $A$, we denote by $\|A\|_p:=\tr(|A|^p)^{1/p}$ the Schatten $p$-norm of $A$. \ 
Furthermore, we denote $\|A\|_{\op}$ the operator norm and $\|A\|_F:=\left(\sum_{i=1}^n\sum_{j=1}^n |A_{ij}|^2\right)^{1/2}$ the Frobenius norm of an $n\times n$ matrix $A$. \ 
The trace distance between two quantum states $\rho,\sigma$ is defined as $\|\rho-\sigma\|_{\tr}:=\frac{1}{2}\|\rho-\sigma\|_1$. \ 
The fidelity $F(\rho,\sigma)$ between two quantum states $\rho,\sigma$ is defined as $\|\rho^{1/2}\sigma^{1/2}\|_1$.
The function $F$ is symmetric, i.e., $F(\rho,\sigma)=F(\sigma,\rho)$. \ 
If $\rho$ is a pure state $\ket{\psi}$, then the fidelity $F(\rho,\sigma)=\bra{\psi}\sigma\ket{\psi}$. \ 

A quantum processes is completely positive trace preserving (CPTP) map. \ 
For integer $n,m>0$, let $\Phi:M_n(\mathbb C)\to M_{m}(\mathbb C)$ denote a linear transformation from complex-valued $n\times n$ matrices $M_n(\mathbb C)$ to $M_m(\mathbb C)$. \
The diamond norm of $\Phi$ is defined as $\|\Phi\|_\diamond:=\max_{X\in M_n(\mathbb C):\|X\|_1\leq 1}\|\Phi\otimes\Id_n(X)\|$. \ 
The diamond distance between two quantum processes $\Phi$ and $\Psi$ is defined as $\|\Phi-\Psi\|_\diamond$. \ 

For Hilbert space $A$, we denote $\mathbb S(A)$ the set of normalized quantum state in $A$. \ 
For Hilbert spaces $A,B$, we denote $\CPTP(A,B)$ the set of CPTP maps from linear operators on $A$ to linear operators on $B$. \
The set of unitary operators on $A$ be $\mathbb U(A)$. \ 

\subsection{Complexity Classes}

We assume the readers to have the familiarity with standard classical complexity classes and the class $\BQP$. \ 
Here we briefly introduce quantum complexity classes related to our work. \ 
The class $\QMA$ (which stands for ``Quantum Merlin-Arthur'') is a quantum analogue of $\MA$, defined as follows. 
\begin{definition}[$\QMA$]
  The complexity class $\QMA$ consists of languages $L$ for which there exists a quantum polynomial-time algorithm $V$ such that the following conditions hold.
  \begin{itemize}
  \item If $x\in L$, there exists a quantum state $\rho$ on $\poly(|x|)$ qubits such that $\Pr[V(x,\rho)\text{ accepts}]\geq 2/3$.
  \item If $x\notin L$, for every quantum state $\rho$ on $\poly(|x|)$ qubits, $\Pr[V(x,\rho)\text{ accepts}]\leq 1/3$.
  \end{itemize}
\end{definition}

The quantum polynomial-time algorithm $V$ is also called the $\QMA$ verifier, and the state $\rho$ can be thought of as a state sent by a $\QMA$ ``prover'' of unbounded power. \ 
The class $\QCMA$ is also a quantum analogue of $\MA$ and can be defined similarly as $\QMA$, except that the witness state $\rho$ is restricted to a classical string. \ 
To be specific about the running time or the query complexity of a $\QMA$ verifier, we denote $\QMA(T)$ to be a $\QMA$ verifier running in time $T$ or making $T$ queries in a relativized world. \ 
The same notation also applies to other classes. \ 

The class $\QCAM$ (which stands for ``Quantum-Classical Arthur Merlin'') is a quantum analogue of $\AM$, and can be defined in terms of a two-message protocol. \
In the first message, the quantum polynomial-time verifier sends a random string $r$ of size polynomial in the size of the instance to the prover Merlin. \ 
Merlin then sends a response $w$ (also called the witness) which is also of size polynomial in the size of the instance. \ 
Note that the witness $w$ can arbitrarily depend on the instance and the random string $r$. \ 
A language is in $\QCAM$ if there exists an Arthur which outputs the correct answer for every instance with probability at least $2/3$. \ 
\begin{definition}[$\QCAM$]
  The complexity class $\QCAM$ consists of languages $L$ for which there exists a quantum polynomial-time algorithm $V$ (also called Arthur) and a polynomial $p$ such that the following conditions hold. 
  \begin{itemize}
  \item If $x\in L$, then there exist a polynomial $q$ and a classical string $w\in\bit^{q(|x|)}$ such that 
    \begin{align}
      \Pr_{r\in\bit^{p(|x|)}}[V(x,r,w)\text{ accepts}] \geq 2/3. 
    \end{align}

  \item If $x\notin L$, then for every polynomial $q$ and every string $w\in\bit^{q(|x|)}$, 
    \begin{align}
      \Pr_{r\in\bit^{p(|x|)}}[V(x,r,w)\text{ accepts}] \leq 1/3. 
    \end{align}
  \end{itemize}
\end{definition}

A $k$-message interactive proof system consists of two algorithms, the computationally unbounded prover $P$ and a polynomial-time verifier $V$, and there are $k$ message exchanges between $V$ and $P$. \ 
We will also say such a protocol is a verifier of length $k$. \ 
The class $\QIP$ is an interactive proof system in which the verifier runs in quantum polynomial time, and each message can be a quantum state. \ 
Such a protocol is also called a quantum interactive proof system. \

\begin{definition}[{$\QIP[k]$}]
  The complexity class $\QIP[k]$ consists of languages $L$ for which there exists a quantum polynomial-time algorithm $V$ (also called the verifier) of length $k$ such that the following conditions hold.
  \begin{itemize}
  \item If $x\in L$, then there exists a prover $P$ which makes the verifier accepts with probability $2/3$. 
  \item If $x\notin L$, then for every prover $P$, the verifier $V$ accepts with probability at most $1/3$. 
  \end{itemize}
\end{definition}

The complexity class $\QSZK$ (which stands for ``Quantum Statistical Zero-Knowledge) consists of languages that admit a quantum statistical zero-knowledge protocol, defined by Watrous \cite{Wat02}.

\begin{definition}[{$\QSZK$}]\label{dfn:qszk}
  A quantum statistical zero-knowledge proof system for a language $L$ consists of an unbounded $P$ (called the honest prover), a quantum polynomial-time verifier $V$ such that the following holds.
  \begin{itemize}

    \item Completeness and soundness: $(V,P)$ is a proof system for $L$.

      \item Zero-knowledge: there exist a negligible function $\eta$ and a set of preparable states $\{\sigma_{x,i}\}$ such that if $x\in L_1$, $\|\sigma_{x,i}-\mathrm{view}_{V,P}(x,i)\|_{\tr}\leq \eta(|x|)$. \ 
      Here $\mathrm{view}_{V,P}(x,i)$ is the verifier's view after the $i$-th round, i.e., the mixed state of the verifier and the message qubits after $i$-th message have been sent during an execution of the proof system on input $x$.
  \end{itemize}
\end{definition}
Watrous showed that $\QSZK$ has a natural complete problem called Quantum State Distinguishability ($\QSD$) \cite{Wat02}, defined as follows. \ 
\begin{definition}[$(\alpha,\beta)$-$\QSD$]
  The promise problem $(\alpha,\beta)$-$\QSD=(\QSD_1,\QSD_0)$ for $0\leq\alpha<\beta^2\leq 1$ consists of a pair of quantum circuits $(Q_0,Q_1)$ such that
  \begin{itemize}
  \item if $(Q_0,Q_1)\in\QSD_1$, then $\|\rho_0-\rho_1\|_{\tr}\geq \beta$, and
  \item if $(Q_0,Q_1)\in\QSD_0$, then  $\|\rho_0-\rho_1\|_{\tr}\leq \alpha$,
  \end{itemize}
  where $\rho_b$ is obtained by applying $Q_b$ on the zero state followed by partial tracing on some of the qubits.
\end{definition}

Watrous showed that the complete and soundness parameters $(\alpha,\beta)$ can be amplified to $(2^{-n},1-2^{-n})$ by giving a transformation from $Q_b$ to a circuit that has size polynomial in $n$ and $|Q_b|$ for $b\in\bit$ \cite{Wat02}. \
Since the parameters do not matter for this problem, we will denote $\QSD$ the same problem with a constant gap. \

The problem has a very simple $\QSZK$ protocol: \
The verifier tosses a random coin $b$, and sends $\rho_b$ to the prover. The prover performs the optimal measurement that saturates the trace distance and outputs a bit $b'$. \
The verifier accepts if $b=b'$. \
The completeness follows from the fact that the states's trace distance is negligibly close to one, and this implies there exists a measurement that perfectly distinguishes the states. \
For the soundness, since the states are negligibly close in trace distance, the prover does not succeed with non-negligibly advantage over random guessing. \
To show the protocol is zero-knowledge, the quantum simulator applies the verifier's quantum operation first. \
After receiving the reponse $b'$ from the prover, it sets $b'=b$. \

The $\QSZK$-completeness of $\QSD$ relativizes. \ 
In particular, Menda and Watrous \cite{MW18} showed that for oracle $A$, a problem $L_A$ in $\QSZK^A$ if there exists a reduction from $L_A$ to $\QSD^A$.
\begin{theorem}[{\cite[Theorem~1]{MW18}}]\label{thm:mw18}
  For alphabet $\Sigma,\Gamma$, let $L\subseteq \Gamma^*$ be a language and $A\subseteq \Sigma^*$ be an oracle.
  The language $L_A$ is contained in $\QSZK^A$ if and only if there exists a polynomial-time uniform family of pairs of relativized quantum circuits $(Q_0^A,Q_1^A)$ with the following properties:
  \begin{itemize}
  \item If $x\in L_A$, then $(Q_0^A,Q_1^A)\in\QSD_1^A$.
  \item If $x\notin L_A$, then $(Q_0^A,Q_1^A)\in\QSD_0^A$.
  \end{itemize}
\end{theorem}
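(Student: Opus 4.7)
The plan is to establish both directions of the equivalence, following Watrous's original proof of $\QSZK$-completeness of $\QSD$ and verifying that every step relativizes in the oracle $A$.

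For the easy ``if'' direction, I would run the natural $\QSZK$ protocol for $\QSD$ relativized to $A$: the verifier flips a random bit $b$, prepares $\rho_b$ by running $Q_b^A$ on $\ket{0}$ and tracing out the auxiliary qubits, and sends $\rho_b$ to the prover; the prover returns a guess $b'$, and the verifier accepts iff $b=b'$. When $(Q_0^A,Q_1^A)\in\QSD_1^A$, the Holevo--Helstrom bound gives the optimal measurement a success probability $\tfrac{1+\|\rho_0-\rho_1\|_{\tr}}{2}$, which is close to $1$, so completeness holds; when $(Q_0^A,Q_1^A)\in\QSD_0^A$, no prover can do better than $\tfrac{1+\alpha}{2}$, giving soundness; and zero-knowledge is immediate because a simulator can sample $b$ itself, produce $\rho_b$ via $Q_b^A$, and output the transcript $(b,\rho_b,b)$. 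Since $V$ only invokes $Q_b^A$, this is a genuine $\QSZK^A$ protocol.

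For the harder ``only if'' direction, I would follow Watrous's hybrid reduction. Because $\QSZK$ is closed under complement and this closure relativizes, it suffices to produce $(Q_0^A,Q_1^A)$ such that $x\in L_A$ gives $\QSD_0^A$ and $x\notin L_A$ gives $\QSD_1^A$, then swap the two circuits. Given a $\QSZK^A$ protocol with verifier $V^A$, honest prover $P$, and statistical simulator $S^A$ producing simulated views $\sigma_{x,i}$, for each round $i$ at which the verifier acts I would define a pair of relativized circuits: $Q_0^{A,i}$ runs $S^A$ to obtain $\sigma_{x,i}$ and outputs the message register, while $Q_1^{A,i}$ runs $S^A$ to obtain $\sigma_{x,i-1}$, applies the verifier's round-$i$ unitary $U_{V_i}^A$, and outputs the corresponding register. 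In the yes case, zero-knowledge gives $\sigma_{x,i-1}\approx\mathrm{view}_{V,P}(x,i-1)$ and $\sigma_{x,i}\approx\mathrm{view}_{V,P}(x,i)=U_{V_i}^A\cdot\mathrm{view}_{V,P}(x,i-1)\cdot (U_{V_i}^A)^\dagger$, so the two states are close in trace distance at every round. In the no case, a standard Uhlmann-based argument shows that if \emph{all} of these distances were simultaneously small, one could stitch the simulator's intermediate states together into a cheating prover that makes $V^A$ accept with high probability, contradicting soundness; hence at some round $i^\star$ the trace distance is bounded below by an inverse polynomial. Combining the polynomially many candidate pairs into a single pair (e.g.\ by taking a direct sum indexed by $i$, which preserves the trace-distance gap up to a polynomial factor) yields a $\QSD^A$ instance with an inverse-polynomial gap, and applying Watrous's polarization lemma amplifies it to $(2^{-n},1-2^{-n})$; this lemma is itself a black-box transformation of circuits and therefore relativizes.

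The main obstacle I anticipate is the no-case argument: showing that uniformly small distances across all rounds yields a cheating prover requires carefully chaining applications of Uhlmann's theorem so that the purifications used by the fictitious prover depend only on data available at that point in the interaction. One must check that these Uhlmann isometries are purely unitary operations on the message and prover registers and so can be implemented (non-uniformly) by an unbounded prover, and that the approximation errors add up linearly across the polynomially many rounds rather than blowing up. Everything else, including polarization and the closure of $\QSZK$ under complement, is a syntactic transformation of circuits that preserves oracle access to $A$ verbatim.
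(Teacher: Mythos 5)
First, a framing point: the paper does not actually prove this statement---it is imported verbatim from Menda and Watrous \cite{MW18}, whose proof consists of checking that Watrous's $\QSD$-completeness argument \cite{Wat02} relativizes. Your plan is therefore the expected route, and your ``if'' direction (the distinguishing protocol, the Holevo--Helstrom bound, the trivial honest-verifier simulator, and relativizing polarization) is fine. The problems are in the ``only if'' direction.

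Two concrete gaps. (1) Orientation: swapping $Q_0^A$ and $Q_1^A$ does nothing, since trace distance is symmetric, so it cannot convert ``$x\in L_A\Rightarrow(Q_0^A,Q_1^A)\in\QSD_0^A$'' into ``$x\in L_A\Rightarrow(Q_0^A,Q_1^A)\in\QSD_1^A$.'' The legitimate move is to apply your simulator-based construction to a $\QSZK^A$ protocol for the \emph{complement} language (using relativized closure under complement, or noting that the construction combined with $\QSD^A\in\QSZK^A$ itself yields that closure); as written, the flip is unjustified. (2) The circuit pair you define only enforces consistency across verifier moves, comparing $\sigma_{x,i}$ with $U_{V_i}^A\,\sigma_{x,i-1}\,(U_{V_i}^A)^\dagger$. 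This omits the factors that make your own no-case argument work: for each \emph{prover} move one must also compare the reduced simulated states on the registers the prover cannot touch (the verifier's private qubits), because that closeness is exactly the Uhlmann/fidelity condition allowing an unbounded prover to realize the transition; one must further include an acceptance factor (the final simulated view, with the verifier's final measurement absorbed into a unitary, compared against an accepting state) and anchor $\sigma_{x,0}$ to the true initial view. Without these factors, ``all pairwise distances small'' on a no instance does not produce a cheating prover---the simulator could be perfectly verifier-consistent yet simulate a rejecting interaction, or be consistent only by altering the verifier's private registers in a way no real prover can---so the inverse-polynomial lower bound on the trace distance in the no case, which is the heart of the hardness direction, fails. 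With those factors restored, the rest of your outline (tensor or direct-sum combination of the round-wise pairs, checking $\alpha<\beta^2$, and relativizing polarization) goes through exactly as in \cite{MW18}.
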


Ben-David and Kothari \cite{BK19} studied independently the so-called $\QSZK$ complexity of function $f$, denoted $\mathrm{QSZK}(f)$, which is defined as the minimum number $k$ made by a pair of query algorithms $\A,\B$ given oracle access to $x$ such that for every $x$ such that (i) if $f(x)=1$, then $\|\A^x-\B^x\|_{\tr}\geq 2/3$, and (ii) if $f(x)=0$, then $\|\A^x-\B^x\|_{\tr}\leq 1/3$.

\subsection{The Polynomial Method}

The quantum polynomial method by Beals, Buhrman, Cleve, Mosca, and de Wolf is a standard technique for proving quantum lower bound of query problems \cite{BBC+01}. \ 
Specifically, we will use the degree lower bound by Markov and the strong direct product theorem by Sherstov \cite{She11}.
\begin{lemma}\label{lem:markov}
  Let $p:\mathbb R\to\mathbb R$ be a polynomial. \ 
  For real numbers $a,b$, if
  \begin{align}
    \max_{a\leq x\leq b}|p(x)-p(y)| \leq H,
  \end{align}
  then 
  \begin{align}
    |p'(x)| \leq \frac{H}{b-a} \deg(p)^2,
  \end{align}
  where $p'$ is the first derivative of $p$ and $\deg(p)$ is the degree of $p$.
\end{lemma}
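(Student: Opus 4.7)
The plan is to recognize this as A.A.~Markov's classical inequality on the derivative of a bounded polynomial, derived from its standard normalized form on $[-1,1]$ by an affine rescaling. First I would substitute $x = \frac{a+b}{2} + \frac{b-a}{2}\,t$ and define $q(t) := p\!\left(\frac{a+b}{2} + \frac{b-a}{2}\,t\right)$, a polynomial of the same degree $d := \deg(p)$. The hypothesis then becomes the oscillation bound $\max_{t,s \in [-1,1]} |q(t)-q(s)| \le H$, so after subtracting the midpoint of the range of $q$ on $[-1,1]$ (which leaves $q'$ unchanged), I may assume $|q(t)| \le H/2$ for all $t \in [-1,1]$.

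Second, I would invoke the classical Markov inequality in its normalized form: for any polynomial $q$ of degree $d$ bounded by $M$ in absolute value on $[-1,1]$, one has $|q'(t)| \le M d^2$ for every $t \in [-1,1]$. Applied with $M = H/2$, this yields $|q'(t)| \le \frac{H d^2}{2}$. The chain rule then gives $p'(x) = \frac{2}{b-a}\, q'(t)$, so
\[
|p'(x)| \;\le\; \frac{2}{b-a}\cdot\frac{H}{2}\cdot d^2 \;=\; \frac{H}{b-a}\,d^2,
\]
matching the claimed inequality.

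The substantive content is therefore entirely in the normalized inequality on $[-1,1]$, which is classical and proceeds via the Chebyshev polynomials $T_d(\cos\theta) = \cos(d\theta)$, satisfying $|T_d(t)| \le 1$ on $[-1,1]$ and $|T_d'(\pm 1)| = d^2$. The main obstacle, were one to reprove the normalized statement from scratch, is showing that $T_d$ is extremal: no degree-$d$ polynomial bounded by $1$ on $[-1,1]$ can have derivative of magnitude exceeding $d^2$ anywhere in the interval. The textbook argument expresses $q'(x_0)$ as an explicit linear combination of the values of $q$ at suitable Chebyshev nodes (a Lagrange-style interpolation identity), bounds the result via the triangle inequality, and uses that the maximum of $|q'|$ on $[-1,1]$ is always attained at an endpoint. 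Since this inequality is entirely standard, I would simply cite it rather than reprove it here.
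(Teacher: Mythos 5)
Your proposal is correct and is essentially the paper's treatment: the lemma is just the classical Markov (brothers') inequality stated on a general interval with the sup norm replaced by the oscillation, and the paper itself gives no proof, citing it as a standard tool for the polynomial method. Your affine rescaling $x=\frac{a+b}{2}+\frac{b-a}{2}t$, centering by the midpoint of the range to convert the oscillation bound $H$ into the bound $H/2$ on $[-1,1]$, and the chain-rule factor $\frac{2}{b-a}$ combine correctly to give $|p'(x)|\le \frac{H}{b-a}\deg(p)^2$. One small inaccuracy in your sketch of the classical proof: it is not true in general that $\max_{[-1,1]}|q'|$ is attained at an endpoint (e.g.\ $q(t)=t-t^3/3$ up to scaling); the actual argument needs a case distinction depending on where the maximizing point lies, but since you are citing the normalized inequality rather than reproving it, this does not affect the correctness of your reduction.
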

\begin{theorem}[{Strong direct product theorem \cite[Theorem~1.5]{She11}}]\label{thm:sherstov}
  Fix functions $f_1,\ldots,f_k:\{-1,+1\}^m\to\{-1,+1\}$. \ 
  Then solving $(f_1,\ldots,f_k)$ with worst-case probability $2^{-\Omega(k)}$ requires 
\begin{align}
\Omega\left(\min_{S\subseteq[k]:|S|=0.99k}\left(\sum_{i\in S}\deg_{1/5}(f_i)\right)\right),
\end{align}
where $\deg_\epsilon(f)$ stands for the least degree of a real polynomial that approximates $f$ within $\epsilon$ pointwise. \ 
\end{theorem}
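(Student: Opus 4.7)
The plan is to reduce the quantum query lower bound to an approximate-degree lower bound via the polynomial method of Beals, Buhrman, Cleve, Mosca, and de Wolf. If a quantum algorithm uses $T$ queries on input $(x_1,\ldots,x_k) \in (\{-1,+1\}^m)^k$ and outputs the correct tuple $(f_1(x_1),\ldots,f_k(x_k))$ with worst-case probability at least $2^{-\Omega(k)}$, then for each candidate output tuple $y$ its acceptance polynomial has degree at most $2T$ in the oracle bits and is pointwise within $1-2^{-\Omega(k)}$ of the indicator $\mathbbm{1}[(f_1(x_1),\ldots,f_k(x_k))=y]$. Equivalently, a polynomial of degree $2T$ approximates the $\{-1,+1\}$-valued product $\prod_i f_i(x_i)$ (or any fixed coordinate of the joint output) to $\ell_\infty$ error $1-2^{-\Omega(k)}$. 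The task reduces to showing that any such approximation requires degree $\Omega\bigl(\sum_{i\in S}\deg_{1/5}(f_i)\bigr)$ for the worst-case $0.99k$-subset $S$.

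For the polynomial lower bound itself, I would use linear-programming duality. For each $f_i$ with $\deg_{1/5}(f_i)=d_i$, the approximate-degree characterization supplies a dual witness $\phi_i:\{-1,+1\}^m\to\mathbb{R}$ with $\sum_x|\phi_i(x)|=1$, $\sum_x \phi_i(x)f_i(x)\geq 1/5$, and $\phi_i$ orthogonal to every polynomial of degree $<d_i$. The tensor product $\Phi=\phi_1\otimes\cdots\otimes\phi_k$ is then a signed measure of total mass $1$, orthogonal to all polynomials of degree less than $\sum_i d_i$, and correlated with $\prod_i f_i$ by at least $(1/5)^k$. A standard LP-duality argument rules out polynomials of degree below $\sum_i d_i$ that approximate $\prod_i f_i$ within better than $(1/5)^k$ on average under $|\Phi|/\|\Phi\|_1$. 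This is the \emph{weak} direct product statement and is already a useful intermediate milestone, but its error exponent $(1/5)^k$ is way off from the $2^{-\Omega(k)}$ that our hypothesis delivers.

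The main obstacle, and the heart of Sherstov's argument, is upgrading the $(1/5)^k$ error of the tensor-product witness to a \emph{pointwise} $1-2^{-\Omega(k)}$ approximation error that matches what the assumed algorithm produces. I would follow Sherstov's pattern-matrix method combined with an XOR-lemma-style boosting: convolve $\Phi$ with an appropriate symmetrizing kernel over random variable substitutions so that the resulting dual object is nearly supported on inputs where the product $\prod_i f_i$ takes a consistent value, while retaining pure high degree $\sum_i d_i$. The $0.99k$ in the statement appears precisely because this sharpening step is forced to concede a constant fraction of the coordinates, so one ends up summing the approximate degrees over the best $0.99k$-subset $S$ rather than all $k$ of them.

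The step I expect to be most delicate is controlling the fraction of the witness's mass that lands on ``bad'' inputs after sharpening, since that mass directly determines the constant hidden in the $2^{-\Omega(k)}$ exponent; Sherstov handles this through a careful second-moment analysis on the pattern matrix together with a combinatorial argument picking out the $0.99k$-subset. If this were being written from scratch rather than quoted, I would factor the proof as (i) polynomial method, (ii) dual-witness tensorization, (iii) pattern-matrix-based amplification, and (iv) subset selection, and I would expect step (iii) to consume the bulk of the technical effort.
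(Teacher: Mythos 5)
You should first note that the paper itself does not prove this statement: it is imported verbatim as \cite[Theorem~1.5]{She11}, so the only thing your proposal can be measured against is Sherstov's own argument. Your scaffolding (polynomial method, dual characterization of $\deg_\epsilon$, tensor product of dual witnesses as the ``weak'' intermediate step, and the recognition that the whole difficulty is upgrading a constant-correlation witness to one certifying error $1-2^{-\Omega(k)}$) is the right frame, but there are two genuine gaps. First, your opening reduction is asserted as an equivalence and it is not one: from an algorithm that outputs the \emph{entire} correct tuple with probability $2^{-\Omega(k)}$ you cannot directly conclude that a bounded degree-$2T$ polynomial approximates the parity $\prod_i f_i(x_i)$ to pointwise error $1-2^{-\Omega(k)}$. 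The natural candidate $\mathbb{E}[\prod_i y_i]$ can anti-correlate with $\prod_i f_i$, since incorrect outputs may be wrong in a correlated way; the vector-valued output has to be handled explicitly, e.g.\ through the identity
\begin{align}
\prod_{i=1}^{k}\frac{1+y_i f_i(x_i)}{2} \;=\; \Exp_{S\subseteq[k]}\Bigl[\prod_{i\in S} y_i f_i(x_i)\Bigr],
\end{align}
which drags in \emph{all} subsets $S$ simultaneously and is part of why the conclusion is phrased as a minimum over large subsets. This is a real component of Sherstov's proof, not a rephrasing of the hypothesis.

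Second, the core of the theorem---constructing a dual object of pure high degree about $\sum_{i\in S} d_i$ whose correlation with the target exceeds $1-2^{-\Omega(k)}$, rather than the $(1/5)^k$ achieved by the plain tensor witness---is precisely the step you leave as a black box, and the mechanism you name for it is misattributed. The pattern-matrix method is Sherstov's lifting tool for \emph{communication} lower bounds; the query-complexity SDPT rests on an explicit dual-witness amplification (what the subsequent literature calls hardness amplification for approximate degree), in which the individual witnesses $\psi_i$, written as differences of distributions, are combined so that the certified error parameter improves exponentially in $k$ while purity of high degree is retained on all but a constant fraction of the blocks---that concession is where the $0.99k$ enters. ``Convolve with a symmetrizing kernel'' and ``second-moment analysis on the pattern matrix'' do not reconstruct that construction, and since essentially all of the technical content of the theorem lives there, the proposal as written is an outline of where a proof would go rather than a proof. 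If you intend to include this material rather than simply cite it, the amplification lemma must be stated precisely and proved.
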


\subsection{Quantum Information Theory}

The amount of extractable randomness is the conditional min-entropy $H_{\min}(Z|E)$ which describe the amount of randomness system $Z$ has conditioned on Eve's information $E$.
The smooth min-entropy is formally defined as 
\begin{align}
  H_{\min}(Z|E)_{\rho} = \sup_{\sigma_E}\left\{ -\inf_{\lambda}\left\{\lambda: \rho_{ZE} \leq 2^{-\lambda}\Id_Z\otimes\sigma_E\right\} \right\}.
\end{align}
In the case where both $Z$ and $E$ are classical, they are formally defined as random variables, and both $\rho_{ZE}$ and $\sigma_E$ are diagonal in the same basis.

We will also consider a smooth version of conditional min-entropy, which relaxes the above notion by considering an $\epsilon$-close pair of random variables, in total variation distance:
\begin{align}
  H_{\min}^\epsilon(Z|E)_\rho = \sup_{\tilde\rho:\|\tilde\rho_{ZE}-\rho_{ZE}\|_{\tr}\leq\epsilon} H_{\min}(Z|E)_\rho.
\end{align}

Another useful quantity in the family of quantum R\'{e}nyi entropies is the von Neumann entropy.
The von Neumann entropy of a quantum state $\rho_A$ is defined as $H(A)_\rho:=-\tr(\rho\log\rho)$.
The conditional von Nemann entropy of a bipartite state $\rho_{AB}$ is defined as $H(A|B)_\rho=H(AB)_\rho-H(B)_\rho$.

For classical random variable $X$ distributed according to a distribution $P$, we denote $H(X)$ or $H(P)$ the von Neumann entropy of $X$, and $H_{\min}(X)$ or $H_{\min}(P)$ the min-entropy of $X$.

The following inequalities for von Neumann entropy will be useful.
  \begin{lemma}\label{lem:triangle}
    For finite-dimensional Hilbert spaces $A,B,C$ and a tripartite state $\rho_{ABC}$ on $A\otimes B\otimes C$, it holds that
    \begin{align}
      H(A|B)_\rho + H(B|C)_\rho \geq H(A|C)_\rho.
    \end{align}
  \end{lemma}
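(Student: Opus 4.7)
The plan is to reduce this triangle-style inequality to the strong subadditivity (SSA) of von Neumann entropy. First I would rewrite the claim $H(A|B) + H(B|C) \geq H(A|C)$ using the definition $H(X|Y) = H(XY) - H(Y)$, obtaining the equivalent form $H(AB) + H(BC) \geq H(B) + H(AC)$. Next I would invoke SSA in its standard form $H(AB) + H(BC) \geq H(ABC) + H(B)$, so it suffices to prove $H(ABC) \geq H(AC)$, that is, $H(B|AC) \geq 0$.

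For the intended applications in this paper---where $B$ appears as a classical register (for instance, the measurement-outcome register $Z$ used throughout the randomness analysis)---the bound $H(B|AC) \geq 0$ is an immediate consequence of the Holevo inequality. Writing the classical-quantum state as $\rho_{ABC} = \sum_b p(b) \outp{b}{b}_B \otimes \rho^b_{AC}$, one computes $H(B|AC) = H(B) - \chi(B : AC)$, and the right-hand side is nonnegative since the Holevo information about a classical register $B$ cannot exceed $H(B)$. Chaining this with SSA closes the argument and gives the stated inequality.

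The main obstacle is precisely this non-negativity step. For a fully general tripartite quantum state $\rho_{ABC}$ in which $B$ is entangled with $AC$, the conditional entropy $H(B|AC)$ can be negative, and the bare inequality as stated can actually be violated---for example, taking $\rho_{AB}$ to be maximally entangled on qubits and $C$ trivial, one finds $H(A|B) + H(B|C) = -1 + 1 = 0$ while $H(A|C) = 1$. So before finalizing the proof I would go through every invocation of \lem{triangle} in the paper and check that it occurs in a setting where $B$ (or the analogous middle register) is classical, or more generally where some Markov-type condition forces $H(B|AC) \geq 0$; the lemma's hypothesis should be strengthened to reflect this, after which the SSA plus Holevo argument above is routine.
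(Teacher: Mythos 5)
Your reduction is, at bottom, the same as the paper's: the paper's one-line proof writes $H(B|C)_\rho = H(AB|C)_\rho - H(A|BC)_\rho$ and applies strong subadditivity $H(A|BC)_\rho \le H(A|B)_\rho$, which—exactly as in your rewriting—still needs $H(AB|C)_\rho \ge H(A|C)_\rho$, i.e.\ $H(B|AC)_\rho \ge 0$, a step the paper uses silently. You are right to flag that step: for a general tripartite state it fails, and your counterexample (maximally entangled $\rho_{AB}$, trivial $C$, giving $H(A|B)+H(B|C)=0$ while $H(A|C)=1$) genuinely refutes the lemma as stated, and with it the paper's own proof as written. Your proposed repair is also the correct one for how the lemma is actually used: the only invocation of \lem{triangle} is in the proof of \thm{vn-bound}, with $A=Z_1\ldots Z_k$, $B=Z$, $C=E$, where $Z$ is a classical register of the cq state $\psi$; there $H(Z|Z_1\ldots Z_k E)_\psi \ge 0$ holds (non-negativity of conditional entropy when the conditioned system is classical, which is your Holevo-style bound $H(B|AC)=H(B)-I(B\,{:}\,AC)\ge 0$), so SSA plus this positivity closes the argument. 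In short, your proposal is not missing anything relative to the paper—it follows the same SSA route—but it correctly identifies that the lemma's hypothesis (and its proof) should be strengthened, e.g.\ to require the middle system to be classical or to otherwise guarantee $H(B|AC)_\rho \ge 0$, which is precisely the regime in which the paper applies it.
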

  \begin{proof}
    By strong sub-additivity, $H(A|BC)_\rho\leq H(A|B)_\rho$ and thus
    \begin{align}\nonumber
      H(B|C)_\rho 
      &= H(AB|C)_\rho - H(A|BC)_\rho \\
      &\geq H(A|C)_\rho - H(A|B)_\rho.
    \end{align}
  \end{proof}

  \begin{lemma}\label{lem:convexity}
    For finite-dimensional Hilbert space $A$, let $\rho,\sigma$ be two normalized quantum states on $A$. 
    For $\lambda\in[0,1]$,
    \begin{align}
      H(A)_{(1-\lambda)\rho + \lambda\sigma}
      \leq (1-\lambda) H(A)_\rho + \lambda H(A)_\sigma + h(\lambda)
    \end{align}
    where $h$ is the binary entropy function.
  \end{lemma}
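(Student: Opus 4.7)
The plan is to prove the inequality by introducing a classical flag register $X$ that records which of $\rho$ or $\sigma$ is drawn, and then applying the chain rule for von Neumann entropy together with the fact that conditional entropy is nonnegative on classical-quantum states.

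Concretely, I would define the classical-quantum state
\begin{align}
\tau_{XA} := (1-\lambda)\,\outp{0}{0}_X \otimes \rho_A + \lambda\,\outp{1}{1}_X \otimes \sigma_A
\end{align}
on an auxiliary qubit register $X$ together with $A$. The key observation is that the marginal on $A$ is exactly $\tau_A = (1-\lambda)\rho + \lambda\sigma$, so $H(A)_\tau$ equals the left-hand side of the inequality we are trying to prove. Meanwhile, because $\tau_{XA}$ is block-diagonal in the standard basis of $X$, one directly computes $H(X)_\tau = h(\lambda)$ and $H(A|X)_\tau = (1-\lambda)H(A)_\rho + \lambda H(A)_\sigma$, which reconstructs exactly the two terms on the right.

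With these ingredients in place, the inequality will follow from the identity $H(XA)_\tau = H(X)_\tau + H(A|X)_\tau$ (chain rule) together with the bound $H(A)_\tau \le H(XA)_\tau$. The latter is equivalent to $H(X|A)_\tau \ge 0$, which holds because $X$ is classical in $\tau_{XA}$ (for a classical-quantum state, conditional entropy on the classical register given the quantum register is always nonnegative, as it can be written as the Holevo-type difference of a mixture entropy minus an average of component entropies and equivalently as a classical-quantum mutual-information-free statement). Chaining these gives
\begin{align}
H(A)_{\tau} \;\le\; H(XA)_\tau \;=\; H(X)_\tau + H(A|X)_\tau \;=\; h(\lambda) + (1-\lambda)H(A)_\rho + \lambda H(A)_\sigma,
\end{align}
which is exactly the claimed bound.

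There is no real obstacle here: the only subtlety is to invoke the right property of classical-quantum states, namely $H(X|A)_\tau \ge 0$ when $X$ is classical, which I would cite as a standard fact (it follows from the fact that measuring $X$ in its classical basis does not disturb $\tau_{XA}$, so $H(X|A)_\tau$ equals the classical conditional Shannon entropy of $X$ given the outcome of any measurement on $A$, and hence is nonnegative). Once that is in hand, the rest is the chain rule plus the explicit computation of $H(X)_\tau$ and $H(A|X)_\tau$ for a block-diagonal cq state.
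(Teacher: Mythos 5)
Your proposal is correct and is essentially the paper's own proof: the paper likewise introduces a classical flag register, observes that the marginal of the block-diagonal state is the mixture, computes the joint entropy (which equals $h(\lambda)+(1-\lambda)H(A)_\rho+\lambda H(A)_\sigma$), and concludes from $H(A)\le H(\text{flag},A)$. If anything, your justification of that last step is more careful than the paper's, which asserts that conditional von Neumann entropy is "always" nonnegative, whereas the correct reason — as you note — is that the flag register is classical.
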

  \begin{proof}
    Define the quantum state 
    \begin{align}
      \psi_{BA} = (1-\lambda)\proj{0}_B\otimes \rho_A + \lambda\proj{1}_B \otimes \sigma_A.
    \end{align}
    By definition, the von Neumann entropy of $\psi$ is 
    \begin{align}\nonumber
      H(BA)_\psi 
      &= -\tr((1-\lambda)\rho\log(1-\lambda)\rho))-\tr(\lambda\sigma\log \lambda\sigma)) \\\nonumber
      &= -(1-\lambda)\log(1-\lambda) - (1-\lambda)\tr(\rho\log\rho) - \lambda\log\lambda - \lambda\tr(\sigma\log\sigma) \\
      &= H(A)_\rho + H(A)_\sigma + h(\lambda). 
    \end{align}
    Moreover, since $\tr_B(\psi)=(1-\lambda)\rho + \lambda\sigma$, 
    $H(A)_\psi = H(A)_{(1-\lambda)\rho + \lambda\sigma}$.
    Since conditional von Neumann entropy is always non-negative, $H(BA)_\psi\geq H(A)_\psi$, and we conclude the proof.
  \end{proof}

  \begin{lemma}[{\cite[Theorem~11.9]{NC02}}]\label{lem:vn-data}
    For Hilbert space $A$, let $\rho$ be a quantum state on $A$ and $\{P_i\}$ be a complete set of projective measurements on $A$.
    Then the entropy of the post-measurement state $\sigma=\sum_i P_i\rho P_i$ is at least as great as the original entropy, i.e., $H(A)_\sigma \geq H(A)_\rho$. 
  \end{lemma}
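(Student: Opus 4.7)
The plan is to prove this via the non-negativity of quantum relative entropy (Klein's inequality), a standard technique for monotonicity statements about von Neumann entropy. Recall that $S(\rho\|\sigma):=\tr(\rho\log\rho)-\tr(\rho\log\sigma)\geq 0$ for any two density operators $\rho,\sigma$, with the convention that $S(\rho\|\sigma)=+\infty$ if $\supp(\rho)\not\subseteq\supp(\sigma)$. Since $-H(A)_\rho=\tr(\rho\log\rho)$, proving $H(A)_\sigma\geq H(A)_\rho$ reduces to showing $\tr(\rho\log\sigma)=\tr(\sigma\log\sigma)$, which together with Klein's inequality yields $-H(A)_\rho\geq -H(A)_\sigma$ as desired.

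The key algebraic step, and the one requiring the most care, is verifying that each $P_i$ commutes with $\sigma$. Using orthogonality of the projectors ($P_iP_j=\delta_{ij}P_i$), I would compute
\begin{align}
P_i\sigma \;=\; P_i\sum_j P_j\rho P_j \;=\; P_i\rho P_i \;=\; \sum_j P_j\rho P_j P_i \;=\; \sigma P_i.
\end{align}
Since $P_i$ commutes with $\sigma$, it commutes with any polynomial in $\sigma$, and by a standard functional calculus argument also with $\log\sigma$ (restricted to the support of $\sigma$).

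With that commutation in hand, the remaining computation is short. Using completeness $\sum_i P_i=\Id$, idempotency $P_i^2=P_i$, cyclicity of trace, and $[P_i,\log\sigma]=0$, I would write
\begin{align}
\tr(\rho\log\sigma) \;=\; \sum_i\tr(P_i\rho\log\sigma) \;=\; \sum_i\tr(P_i^2\rho\log\sigma) \;=\; \sum_i\tr(P_i\rho P_i\log\sigma) \;=\; \tr(\sigma\log\sigma).
\end{align}
Combining with Klein's inequality gives $\tr(\rho\log\rho)\geq \tr(\sigma\log\sigma)$, i.e. $H(A)_\sigma\geq H(A)_\rho$.

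The main subtlety to watch is the support condition needed for Klein's inequality to be finite: one must ensure $\supp(\rho)\subseteq\supp(\sigma)$. This is automatic here because $\sum_i P_i=\Id$ means $\sigma=\sum_i P_i\rho P_i$ has support containing that of $\rho$ (indeed, $\bra{\psi}\sigma\ket{\psi}=0$ forces $P_i\ket{\psi}=0$ for all $i$ when $\rho=\proj{\psi}$, and by linearity in general). Once this is checked, the argument goes through cleanly, and the proof is essentially three lines of algebra plus one application of Klein's inequality.
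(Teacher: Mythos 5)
Your proof is correct and is exactly the standard Klein's-inequality argument of Nielsen--Chuang Theorem~11.9, which is precisely what the paper relies on (it cites that theorem without reproving it): commute $P_i$ with $\sigma$, insert $\sum_i P_i=\Id$ to get $\tr(\rho\log\sigma)=\tr(\sigma\log\sigma)$, and conclude from $S(\rho\|\sigma)\geq 0$. One minor imprecision: in your support check, $\bra{\psi}\sigma\ket{\psi}=0$ forces $\rho^{1/2}P_i\ket{\psi}=0$ for all $i$ (hence $\rho^{1/2}\ket{\psi}=0$), not $P_i\ket{\psi}=0$, but the conclusion $\supp(\rho)\subseteq\supp(\sigma)$ stands and the argument goes through.
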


By \lem{convexity} and concavity of von Neumann entropy, we can lower bound the mutual information of a probabilistic mixture of states by a convex combination of the mutual information of each component.
Recall that by definition, for Hilbert space $A,B$ and bipartite state $\rho_{AB}$, $I(A:B)_\rho=H(A)+H(B)-H(AB)$. 

\begin{lemma}\label{lem:mutual}\nonumber
  For finite-dimensional Hilbert space $A$ and $B$, let $\rho_{AB},\sigma_{AB}$ be a bipatite state.
  For $\lambda\in[0,1]$,
  \begin{align}
    I(A:B)_{(1-\lambda)\rho+\lambda\sigma} 
    \geq (1-\lambda)I(A:B)_\rho + \lambda I(A:B)_\sigma - h(\lambda).
  \end{align}
\end{lemma}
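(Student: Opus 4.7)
The plan is to unfold the definition $I(A:B)_\tau = H(A)_\tau + H(B)_\tau - H(AB)_\tau$ for the mixed state $\tau := (1-\lambda)\rho + \lambda\sigma$, and then handle the three terms separately using the two opposing bounds on von Neumann entropy: concavity of $H$ for the marginal terms, and the near-convexity bound of \lem{convexity} for the joint term.

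Concretely, first I would observe that partial trace is linear, so the marginals are $\tau_A = (1-\lambda)\rho_A + \lambda\sigma_A$ and likewise for $\tau_B$. Then by concavity of von Neumann entropy, $H(A)_\tau \geq (1-\lambda)H(A)_\rho + \lambda H(A)_\sigma$ and similarly $H(B)_\tau \geq (1-\lambda)H(B)_\rho + \lambda H(B)_\sigma$. Next, apply \lem{convexity} directly to the joint state on system $AB$ to get $H(AB)_\tau \leq (1-\lambda)H(AB)_\rho + \lambda H(AB)_\sigma + h(\lambda)$. Subtracting, the three coefficients of $(1-\lambda)$ and $\lambda$ reassemble into $I(A:B)_\rho$ and $I(A:B)_\sigma$ respectively, and the only extra term is the $-h(\lambda)$ coming from the joint entropy bound, which gives exactly the desired inequality.

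There is no real obstacle here; the lemma is essentially a bookkeeping consequence of \lem{convexity} and the concavity of $H$. The only point worth being careful about is the direction of each inequality, since $I(A:B)$ combines entropies of opposite signs: the marginals contribute $+H$ (so concavity gives the right-way inequality) while the joint contributes $-H$ (so we need an upper bound on $H(AB)_\tau$, which is exactly what \lem{convexity} supplies, and which is also what produces the $-h(\lambda)$ correction on the right-hand side). Everything else is a one-line combination.
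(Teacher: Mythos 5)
Your proof is correct and matches the paper's argument exactly: concavity of von Neumann entropy applied to the marginals $\tau_A,\tau_B$, together with the $h(\lambda)$-penalized upper bound of \lem{convexity} on $H(AB)_\tau$, combined termwise. (If anything, you are more careful than the paper, which labels the marginal step ``convexity'' while actually invoking the concavity inequality, as you correctly do.)
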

\begin{proof}
  By convexity of von Neumann entropy, for $X\in\{A,B\}$,
  \begin{align}
    H(X)_{(1-\lambda)\rho+\lambda\sigma} \geq (1-\lambda) H(X)_\rho + \lambda H(X)_\sigma.
  \end{align}
  Then by \lem{convexity},
  \begin{align}\nonumber
    I(A:B)_{(1-\lambda)\rho+\lambda\sigma}
    &= H(A)_{(1-\lambda)\rho+\lambda\sigma}
    +H(B)_{(1-\lambda)\rho+\lambda\sigma}
    -H(AB)_{(1-\lambda)\rho+\lambda\sigma} \\
    &\geq (1-\lambda)I(A:B)_\rho + \lambda I(A:B)_\sigma - h(\lambda).
  \end{align}
\end{proof}

\subsection{Haar Random Unitaries}\label{sec:haar-intro}

We will rely one the following facts.
For a Haar random unitary $C\in\mathbb C^{N\times N}$, for every $i\in\bit^n$, it holds that the distribution $P$ of density $P_z=p_C(z):=|\bra{z}C\ket{0}|^2$ is distributed according to the Dirichlet distribution $\Dir(1^N)$ on the probability simplex \cite{DMR13,Kre21}.

\subsubsection{The Dirichlet Distributions}

Let $\Dir(\alpha)$ denote the Dirichlet distribution for concentration hyperparameter $\alpha$.
The moments for $(X_1,\ldots,X_N)\sim\Dir(\alpha)$ for $\alpha=(\alpha_1,\ldots,\alpha_N)$ is well-studied. 
First, the mean of each random variable $\Exp[X_i]=\frac{\alpha_i}{\alpha_0}$, where $\alpha_0=\sum_{i=1}^N\alpha_i$.
Moreover,
\begin{align}
  \Exp\left[\prod_{i=1}^N X_i^{\beta_i}\right] = \frac{B(\alpha+\beta)}{B(\alpha)},
\end{align}
where $B(\alpha):=\frac{1}{\Gamma(\alpha_0)}\prod_{i=1}^N\Gamma(\alpha_i)$.

The Dirichlet distribution is the conjugate prior distribution of the categorical distribution and the multinomial distribution \cite{FKG10}.
If the prior distribution is sampled according to the Dirichlet distribution, the posterior is also a Dirichlet distribution with a different hyperparameter.
In particular, let the data points be $z_1,\ldots,z_k\sim P$ where $P\sim\Dir(\alpha)$.
Then the posterior distribution $P|(z_1,\ldots,z_k)\sim\Dir(\alpha+m)$, where $m$ is the vector of the number of occurrences for the data points in each category.

We can sample a probability distribution from the Dirichlet distribution $\Dir(\alpha)$ using the following process: First sample $Q_i\sim\Gamma(\alpha_i,1)$ independently for $i\in[N]$, where $\Gamma(\alpha_i,1)$ is the Gamma distribution with parameters $\alpha_i,1$.
Then compute $\bar Q=\sum_{i=1}^N Q_i$ and set $P_i=Q_i/\bar Q$ for each $i\in[N]$. 
Thus it would be useful to briefly introduce facts about the Gamma function.
Recall that the Gamma distribution $\Gamma(\alpha,\beta)$ has pdf
\begin{align}
  f(x; \alpha,\beta) = \frac{\beta^\alpha e^{-\beta x} x^{\alpha-1}}{\Gamma(\alpha)}.
\end{align}
In particular, for $\alpha=\beta=1$, the pdf is $f(x; 1, 1)= e^{-x}$.
This implies that the CDF of $\Gamma(1,1)$ is 
\begin{align}
  F(x; 1,1) = 1-e^{-x}.
\end{align}

Using these facts, we prove a few lemmas which will be useful later.
First, the maximum value of $P\sim\Dir(1^N)$ is $O(n)/N$ in expectation.   
\begin{lemma}\label{lem:haar-min-avg}
    Let $P\sim\Dir(1^N)$. Then
    \begin{align}\label{eq:haar-min-avg-1}
      \Exp_{P\sim\Dir(1^N)}\left[\max_z P_z\right] \leq \frac{2\ln N+7}{N}.
    \end{align}
    and 
    \begin{align}\label{eq:haar-min-avg-2}
      \Exp_{P\sim\Dir(1^N)}\left[H_{\min}(P)\right] \geq n-\log n-O(1).
    \end{align}
  \end{lemma}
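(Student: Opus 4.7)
My plan is to exploit the Gamma-variable representation of the Dirichlet distribution described in the preliminaries: sample $Q_1,\ldots,Q_N \sim \Gamma(1,1)$ i.i.d.\ (which, since $\alpha=\beta=1$, are just standard Exp$(1)$ variables), set $\bar Q := \sum_{i=1}^N Q_i$, and then $(P_1,\ldots,P_N) := (Q_1/\bar Q,\ldots,Q_N/\bar Q)$ has distribution $\Dir(1^N)$. Under this coupling, $\max_z P_z = (\max_z Q_z)/\bar Q$, so I reduce \eq{haar-min-avg-1} to analyzing a numerator and a denominator that are easy to handle separately. The denominator concentrates tightly around its mean $N$, and the numerator has a clean, classical expectation.

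First I would bound the numerator. Since $Q_1,\ldots,Q_N$ are i.i.d.\ Exp$(1)$, the maximum $M := \max_z Q_z$ has CDF $(1-e^{-t})^N$, and a standard integration $\Exp[M] = \int_0^\infty \bigl(1-(1-e^{-t})^N\bigr)\,dt$ gives $\Exp[M] = H_N = \sum_{k=1}^N 1/k \leq \ln N + 1$. Next, for the denominator, $\bar Q \sim \Gamma(N,1)$, so a standard Chernoff bound for sums of i.i.d.\ exponentials yields $\Pr[\bar Q < N/2] \leq e^{-cN}$ for an absolute constant $c > 0$.

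Then I would combine the two estimates. Splitting on whether $\bar Q \geq N/2$ and using the trivial bound $M/\bar Q \leq 1$ on the complementary event,
\begin{align}
\Exp\!\left[\frac{M}{\bar Q}\right]
\leq \frac{2}{N}\,\Exp\!\left[M\cdot \mathbbm{1}[\bar Q\geq N/2]\right] + \Pr[\bar Q < N/2]
\leq \frac{2(\ln N + 1)}{N} + e^{-cN}.
\end{align}
For sufficiently large $N$ the tail term is dominated by $5/N$, giving \eq{haar-min-avg-1}; for the finitely many small $N$ one checks the bound directly against the trivial estimate $\max_z P_z \leq 1$ (the RHS exceeds $1$ once $N$ is tiny).

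Finally, \eq{haar-min-avg-2} follows immediately from \eq{haar-min-avg-1} by Jensen's inequality applied to the convex function $-\log$:
\begin{align}
\Exp\!\left[H_{\min}(P)\right] = \Exp\!\left[-\log \max_z P_z\right] \geq -\log \Exp\!\left[\max_z P_z\right] \geq -\log\!\frac{2\ln N + 7}{N} = n - \log n - O(1),
\end{align}
using $\ln N = n \ln 2$ to absorb constants. The only mildly subtle step is the concentration-based split in the combining step — specifically, handling the dependence between $M$ and $\bar Q$ — but the crude bound $M \cdot \mathbbm{1}[\bar Q \geq N/2] \leq M$ (discarding the indicator inside the expectation) sidesteps this entirely, making the proof essentially a two-line exponential-order-statistics computation plus a Chernoff tail.
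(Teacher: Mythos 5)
Your proposal is correct and follows essentially the same route as the paper's proof: the Gamma-variable representation of $\Dir(1^N)$, the order-statistics computation $\Exp[\max_z Q_z]=H_N\leq \ln N+1$, a split on the event $\{\bar Q\geq N/2\}$ (you use a Chernoff tail and drop the indicator, the paper uses Chebyshev and a conditional-expectation bound — a cosmetic difference), and Jensen's inequality for \eq{haar-min-avg-2}. No gaps.
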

  \begin{proof}
    Let $F(x)=1-e^{-x}$ be the CDF of $\Gamma(1,1)$ and $Q_1,\ldots,Q_{N}\sim\Gamma(1,1)$.
    Also let $Q=(Q_1,\ldots,Q_N)$.
    The CDF of $\max_z Q_z$ is
    \begin{align}
      G(x) = \Pr[\forall z, Q_z\leq x] = (1-e^{-x})^N.
    \end{align}
    Thus the expectation 
    \begin{align}\nonumber
      \Exp_{Q\sim\Gamma(1,1)^N}\left[\max_z Q_z \right] 
      &= \int_0^{\infty} \d x (1-(1-e^{-x})^N) \\\nonumber
      &= \int_0^1 \d y \frac{1-y^N}{1-y} \\\nonumber
      &= \int_0^1\d y (1+y+y^2+\ldots y^{N-1}) \\\nonumber
      &= 1 + \frac{1}{2} + \ldots +\frac{1}{N} \\
      &\leq \ln N + 1.
    \end{align}
    As shown in the proof of \lem{haar-min}, $\bar Q=\sum_z Q_z$ is concentrated:
    \begin{align}\label{eq:q-concentration}
      \Pr_Q[|\bar Q-N|\leq N/2] \geq 1-\frac{4}{N}.
    \end{align}
    Let $\Omega:=\{Q:\bar Q\geq N/2\}$.
    From \eq{q-concentration}, $\Pr_{Q\sim\Gamma(1,1)^N}[Q\in\Omega]\geq 1-4/N$. 
    This means that 
    \begin{align}\nonumber
      \Exp_{Q\sim\Gamma(1,1)^N}\left[\frac{\max_z Q_z}{\bar Q}\right]
      &\leq 
      \Exp_Q\left[\left.\frac{\max_z Q_z}{\bar Q}\right| Q\in\Omega\right]
      + \Pr[Q\notin\Omega] \\\nonumber
      &\leq
        \frac{2}{N}\Exp_Q\left[\left.\max_z Q_z\right| Q\in\Omega\right] + \frac{4}{N} \\\nonumber
      &\leq
        \frac{2}{N}\frac{1}{\Pr[Q\in\Omega]}\Exp_Q\left[\max_z Q_z\right] + \frac{4}{N} \\
      &\leq \frac{2\ln N+7}{N}.
    \end{align}
    This show that \eq{haar-min-avg-1} is correct.
    For \eq{haar-min-avg-2}, by Jensen's inequality,
    \begin{align}
      \Exp_{P\sim\Dir(1^N)}\left[ H_{\min}(P) \right]
      \geq -\log \Exp_{P\sim\Dir(1^N)}\left[\max_z P_z\right] \geq n-\log n - O(1).
    \end{align}
  \end{proof}

In fact, the maximum of $P\sim\Dir(1^N)$ is concentrated around $O(n)/N$.
  \begin{lemma}\label{lem:haar-min}
    It holds that
    \begin{align}
      \Pr_{P\sim\Dir(1^N)}\left[ \max_z P_z \leq \frac{4\ln N}{N} \right] \geq 1-\frac{6}{N}.
    \end{align}
  \end{lemma}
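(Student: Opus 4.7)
The plan is to invoke the Gamma representation of the Dirichlet distribution that is already set up in the preamble to this lemma, reducing the tail bound for $\max_z P_z$ to two independent concentration facts: one for the sum $\bar Q = \sum_z Q_z$ and one for the maximum $\max_z Q_z$, where $Q_1,\ldots,Q_N \sim \Gamma(1,1)$ i.i.d.\ and $P_z = Q_z / \bar Q$.

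First, I would recall (from \eq{q-concentration} established inside the proof of \lem{haar-min-avg}) that $\bar Q$ concentrates around $N$: specifically, $\Pr[\bar Q \ge N/2] \ge 1 - 4/N$. This uses only the fact that $\Exp[Q_z] = \Var(Q_z) = 1$ together with Chebyshev's inequality on the sum.

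Next, I would upper-bound $\max_z Q_z$ via the exponential tail. Since each $Q_z \sim \Gamma(1,1)$ has CDF $F(x) = 1 - e^{-x}$, a union bound gives
\begin{align}
\Pr\left[\max_z Q_z > 2\ln N\right] \le N \cdot e^{-2\ln N} = \frac{1}{N}.
\end{align}
Hence with probability at least $1 - 1/N$ we have $\max_z Q_z \le 2\ln N$.

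Finally, I would combine the two events via a union bound: with probability at least $1 - 4/N - 1/N \ge 1 - 6/N$, both $\bar Q \ge N/2$ and $\max_z Q_z \le 2\ln N$ hold simultaneously, which immediately yields
\begin{align}
\max_z P_z \;=\; \frac{\max_z Q_z}{\bar Q} \;\le\; \frac{2\ln N}{N/2} \;=\; \frac{4\ln N}{N}.
\end{align}
No step here is really an obstacle; the only minor subtlety is ensuring that the maximum threshold $2\ln N$ is chosen so that its union-bound tail probability (which is $1/N$) fits comfortably within the slack left by the $\bar Q$-concentration bound, so that the final bound of $6/N$ is clean.
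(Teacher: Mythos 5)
Your proposal is correct and matches the paper's own argument essentially step for step: both use the Gamma representation $P_z = Q_z/\bar Q$, bound $\max_z Q_z$ by $2\ln N$ via the exponential tail (the paper multiplies the CDFs, you use a union bound---equivalent up to a constant), apply Chebyshev to get $\bar Q \ge N/2$ with probability $1-4/N$, and combine by a union bound to conclude $\max_z P_z \le 4\ln N/N$ with probability at least $1-6/N$. No substantive difference to report.
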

  \begin{proof}

    Let $F(x)=(1-e^{-x})$ be the CDF of $\Gamma(1,1)$.
    Thus instead consider 
    \begin{align}\label{eq:69}\nonumber
      \Pr_{Q\sim\Gamma(1,1)^N}\left[\max_z Q_z \leq 2\ln N\right]
      &= 
      \Pr_{Q\sim\Gamma(1,1)^N}\left[\forall z, Q_z \leq 2\ln N\right] \\\nonumber
      &= F(2(\ln 2) n)^N \\\nonumber
      &= (1-N^{-2})^N \\
      &\geq 1-2/N.
    \end{align}
    Also $\bar Q=\sum_{z} Q_z$ is concentrated since 
    the mean $\Exp[\bar Q]=N$ and the variance $\sigma^2=\Exp[\bar Q^2]-\Exp[\bar Q]^2=N(\Exp[\bar Q_0^2]-\Exp[\bar Q_0]^2)=N$.
    By Chebyshev inequality,
    \begin{align}\label{eq:70}
      \Pr_Q[|\bar Q-N|\leq N/2]\geq 1-\frac{4}{N}.
    \end{align}
    Combining \eq{69} and \eq{70},
    \begin{align}
      \Pr_Q\left[\frac{\max_z Q_z}{\bar Q}\leq \frac{4\ln N}{N}\right] \geq 1-\frac{6}{N}.
    \end{align}
  \end{proof}

\subsubsection{The Performance of a Perfect Device\label{sec:perfect}}

In this paper, we define a perfect device to be one that is given $C$, outputs a sample $z\sim p_C$, where $p_C$ is the distribution defined by $p_C(z):=|\bra{z}C\ket{0^n}|^2$.
The first result is well-known and has been proven using different mathematical tools: a perfect device solves $b$-XHOG for $b\approx 2$.
Here we prove the result using properties of the Dirichlet distribution.

\begin{lemma}\label{lem:completeness}
  Sampling from $z\sim P_C$ for Haar random $C$, it holds that
  \begin{align}
    \Exp_{C\sim\Haar(N),z\sim P_C}[p_C(z)] = \frac{2}{N+1}.
  \end{align}
\end{lemma}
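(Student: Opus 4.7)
The plan is to reduce the computation directly to a second-moment calculation for the Dirichlet distribution, using the fact (recalled in Section~\ref{sec:haar-intro}) that for a Haar random $C$, the output distribution $P_C = (p_C(z))_{z \in \{0,1\}^n}$ is distributed as $\Dir(1^N)$ on the probability simplex.

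First, I would unpack the expectation. By definition of $z \sim P_C$,
\begin{align}
\Exp_{C \sim \Haar(N)} \Exp_{z \sim P_C}[p_C(z)]
= \Exp_{C \sim \Haar(N)}\left[\sum_{z \in \{0,1\}^n} p_C(z)^2\right]
= \sum_{z} \Exp_{P \sim \Dir(1^N)}[P_z^2],
\end{align}
where the last equality uses the distributional identity $P_C \sim \Dir(1^N)$ and Fubini.

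Second, I would plug in the standard moment formula for Dirichlet. For $\alpha = 1^N$ (so $\alpha_0 = N$) and $\beta = 2 e_z$, the Dirichlet moment formula gives
\begin{align}
\Exp_{P \sim \Dir(1^N)}[P_z^2] = \frac{B(\alpha + 2 e_z)}{B(\alpha)} = \frac{\Gamma(N)}{\Gamma(N+2)} \cdot \frac{\Gamma(3)\prod_{j \neq z}\Gamma(1)}{\prod_j \Gamma(1)} = \frac{2}{N(N+1)}.
\end{align}
Summing over the $N$ values of $z$ yields $\sum_z \Exp[P_z^2] = N \cdot \frac{2}{N(N+1)} = \frac{2}{N+1}$, which is exactly the claimed value.

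There is no genuine obstacle here; the only thing to verify is the Dirichlet moment computation, which is a one-line application of the formula $\Exp[\prod_i X_i^{\beta_i}] = B(\alpha+\beta)/B(\alpha)$ already recalled in Section~\ref{sec:haar-intro}. Alternatively, one could bypass the Dirichlet machinery and compute $\Exp_{C \sim \Haar(N)}[|\langle z | C | 0^n \rangle|^4]$ directly via the second moment of a Haar-random unitary's matrix entry (equivalently, the projection of two copies of $C|0^n\rangle$ onto the symmetric subspace of dimension $\binom{N+1}{2}$), obtaining $\frac{2}{N(N+1)}$ and then summing over $z$; this serves as a sanity check for the Dirichlet route.
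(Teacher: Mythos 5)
Your proposal is correct, and every step checks out: $\Exp_{z\sim P_C}[p_C(z)]=\sum_z p_C(z)^2$, the Dirichlet moment $\Exp_{P\sim\Dir(1^N)}[P_z^2]=\frac{\Gamma(N)}{\Gamma(N+2)}\Gamma(3)=\frac{2}{N(N+1)}$, and the sum over the $N$ outcomes gives $\frac{2}{N+1}$. The route differs slightly from the paper's: the paper invokes the conjugate-prior property recalled in \sec{haar-intro}, writing the posterior $P\,|\,z\sim\Dir(1^N+e_z)$ after one observed sample and reading off the posterior mean $\Exp[P_z\,|\,m_z=1]=\frac{m_z+1}{N+1}=\frac{2}{N+1}$, whereas you compute the collision probability $\sum_z\Exp[P_z^2]$ directly from the moment formula $\Exp[\prod_i X_i^{\beta_i}]=B(\alpha+\beta)/B(\alpha)$. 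The two are equivalent in content (the posterior mean is exactly $\Exp[P_z^2]/\Exp[P_z]$), so neither buys more generality; your version has the mild advantage of being the same computation that reappears in the proof of \lem{collision-concentration} (where $\Exp_C[S_C^2]$ is handled by the same moment formula), while the paper's posterior phrasing matches the Bayesian-update language it reuses in \thm{hog-simple}. Your Haar-symmetric-subspace sanity check is also a valid independent verification of the second moment.
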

\begin{proof}
  The algorithm $\A$ outputs the given sample $z$. %
  For $z\in\bit^n$, let $P_z=|\bra{z}C\ket{0}|^2$ be a random variable for Haar random $C$, and $P=(P_0,\ldots,P_{N-1})$ be a vector of random variables distributed on the probability simplex.
  Since $P\sim\Dir(1^N)$, $P|z\sim\Dir(m+1^N)$, where $m=(0,\ldots,0,m_z=1,0,\ldots,0)$.
  The score can be calculated as follows:
  \begin{align}
    \Exp_{P\sim\Dir(1^N)}[P_z|m] = \Exp_{P\sim\Dir(m+1^N)}[P_z]= \frac{m_z+1}{N+1}.
  \end{align}
  Thus for $z$ such that $m_z=1$, and the expectation is $\frac{2}{N+1}$.
\end{proof}

In fact, the score for $C\sim\Haar(N)$ is concentrated around $\frac{2}{N+1}$.
Thus sampling from $C\sim\Haar(N)$, with overwhelming probability, a perfect device answers with a sample that has score close to $\frac{2}{N+1}$.
\begin{lemma}[Concentration of collision probability]\label{lem:collision-concentration}
  Let $S_C:= \Exp_{z\sim p_C}[p_C(z)]$.
  Then $S_C$ is concentrated in the sense that 
  \begin{align}
    \Pr_C\left[ \left|S_C - \frac{2}{N+1}\right| \leq \frac{\epsilon}{N+1}\right] \geq 1-O\left(\frac{1}{\epsilon^2 N}\right).
  \end{align}
\end{lemma}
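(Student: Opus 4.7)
The plan is to apply Chebyshev's inequality after computing the first two moments of $S_C = \sum_z P_z^2$ viewed as a function of $P \sim \Dir(1^N)$, using the explicit moment formula for the Dirichlet distribution recalled in the preliminaries, namely $\Exp[\prod_i X_i^{\beta_i}] = B(\alpha+\beta)/B(\alpha)$.

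First I would recompute the mean: by \lem{completeness} we already have $\Exp_C[S_C] = 2/(N+1)$, which also follows directly from $\Exp[P_z^2] = \Gamma(3)\Gamma(N)/\Gamma(N+2) = 2/(N(N+1))$ summed over $N$ values of $z$. The main computation is the second moment
\begin{align}
\Exp_C[S_C^2] = \sum_{z} \Exp[P_z^4] + \sum_{z\neq z'}\Exp[P_z^2 P_{z'}^2].
\end{align}
For $z=z'$ the Dirichlet moment formula gives $\Exp[P_z^4] = \Gamma(5)\Gamma(N)/\Gamma(N+4) = 24/(N(N+1)(N+2)(N+3))$, and for $z\neq z'$ it gives $\Exp[P_z^2 P_{z'}^2] = \Gamma(3)^2\Gamma(N)/\Gamma(N+4) = 4/(N(N+1)(N+2)(N+3))$. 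Summing,
\begin{align}
\Exp_C[S_C^2] = \frac{24 + 4(N-1)}{(N+1)(N+2)(N+3)} = \frac{4(N+5)}{(N+1)(N+2)(N+3)}.
\end{align}

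Next I would subtract $(\Exp[S_C])^2 = 4/(N+1)^2$ to get the variance. A short common-denominator calculation collapses $4(N+5)(N+1) - 4(N+2)(N+3)$ to $4(N-1)$, yielding
\begin{align}
\Var_C(S_C) = \frac{4(N-1)}{(N+1)^2(N+2)(N+3)} = O\!\left(\frac{1}{N^3}\right).
\end{align}
Finally, Chebyshev's inequality with deviation $t = \epsilon/(N+1)$ gives
\begin{align}
\Pr_C\!\left[\left|S_C - \tfrac{2}{N+1}\right| > \tfrac{\epsilon}{N+1}\right] \leq \frac{\Var_C(S_C)}{\epsilon^2/(N+1)^2} = O\!\left(\tfrac{1}{\epsilon^2 N}\right),
\end{align}
which is exactly the desired bound.

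There is no real obstacle; the only slightly delicate point is that the leading $1/N^2$ terms of $\Exp[S_C^2]$ and $(\Exp[S_C])^2$ must cancel for the variance to come out $O(1/N^3)$ rather than $O(1/N^2)$ (otherwise Chebyshev would only give a trivial bound). This cancellation is automatic from the Dirichlet moment algebra above, and it is the reason the inequality gains the extra factor of $1/N$ beyond what one might naively guess.
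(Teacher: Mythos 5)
Your proposal is correct and follows essentially the same route as the paper's proof: computing $\Exp[S_C^2]$ via the Dirichlet moment formula (splitting into the $z=z'$ and $z\neq z'$ terms), observing the cancellation that yields $\Var(S_C)=\frac{4(N-1)}{(N+1)^2(N+2)(N+3)}=O(1/N^3)$, and then applying Chebyshev with deviation $\epsilon/(N+1)$. The moment values and the final bound match the paper's exactly.
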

\begin{proof}
We consider the moments of the random variable $S_C=\sum_z p_C(z)^2$ for Haar random $C$.
By \lem{completeness}, $\mu = \Exp_C[S_C] = \frac{2}{N+1}$.
\begin{align}\nonumber\label{eq:pc-collision-exp}
  \Exp_C[S_C^2] 
  &= \Exp_{C}\left[\sum_{z,z'}p_C(z)^2p_C(z')^2\right] \\\nonumber
  &= N\Exp_C[p_C(0)^4] + N(N-1)\Exp_C[p_C(0)^2p_C(1)^2]  \\\nonumber
  &= \frac{N!4!}{(N+3)!} + N(N-1)\frac{(N-1)!2!2!}{(N+3)!} \\
  &= \frac{20+4N}{(N+1)(N+2)(N+3)}
\end{align}
Therefore, the variance is 
\begin{align}\nonumber
  \Exp_C[S_C^2]-\Exp_{C}[S_C]^2
  &= \frac{(4N+20)(N+1) - 4(N+2)(N+3)}{(N+1)^2(N+2)(N+3)} \\
  &= \frac{4(N-1)}{(N+1)^2(N+2)(N+3)} = O\left(\frac{1}{N^3}\right).
\end{align}
By Chebyshev inequality, %
\begin{align}\nonumber\label{eq:collision-concentration}
  \Pr_{C}\left[\frac{2+\epsilon}{N+1}\geq S_C\geq \frac{2-\epsilon}{N+1}\right] 
  &= \Pr_C\left[|S_C-\mu|\leq \frac{\epsilon}{N+1}\right] \\
  &\geq 1 - \frac{4(N-1)}{\epsilon^2(N+2)(N+3)} = 1- O\left(\frac{1}{\epsilon^2N}\right).
\end{align}
This implies that with probability $1-O(\frac{1}{\epsilon^2N})$ over $C$, when $S_C\geq (2-\epsilon)/N$.
\end{proof}

This lemma shows that the collision probability of $P_C$ is sharply concentrated around its mean. 
That is, for almost every $C$, if the device samples from $P_C$, the expectation of $p_C(z)$ is very close to $\frac{2}{N+1}$.
This implies that a device sampling from $P_C$ solves $\LXEB_{b,k}$ with constant probability for $k=\Omega(n^2)$ and $b\geq 1.98$.
\begin{lemma}\label{lem:lxeb-perfect}
  For integer $k$, with probability $1-O(1/N)$ over $C\sim\Haar(N)$, sampling $z_1,\ldots,z_k\sim P_C$,
  \begin{align}
      \Pr_{z_1,\ldots,z_k\sim P_C}\left[\frac{1}{k}\sum_{i=1}^k p_C(z_i) \geq \frac{1.98}{N+1}\right] \geq 1-2^{-\Omega(k/n^2)}.
  \end{align}
\end{lemma}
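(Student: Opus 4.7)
The plan is to treat this as a straightforward concentration argument, combining the two preceding lemmas (\lem{collision-concentration} and \lem{haar-min}) with Hoeffding's inequality applied to the i.i.d.\ random variables $X_i := p_C(z_i)$.

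First I would define a ``good'' event $\Omega$ on the choice of $C$, namely that both (i) the collision probability satisfies $S_C \geq \frac{1.99}{N+1}$ and (ii) $\max_z p_C(z) \leq \frac{4\ln N}{N}$. By \lem{collision-concentration} applied with $\epsilon = 0.01$ and by \lem{haar-min}, a union bound gives $\Pr_C[\Omega] \geq 1 - O(1/N)$, which accounts for the outer probability claim.

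Second, condition on $C \in \Omega$ and observe that the samples $z_1,\ldots,z_k \sim P_C$ make $X_i := p_C(z_i)$ i.i.d.\ random variables with common mean $\mu_C = S_C \geq \frac{1.99}{N+1}$ and range contained in $[0, \frac{4\ln N}{N}]$. Applying Hoeffding's inequality with deviation parameter $t = \frac{0.01}{N+1}$ yields
\begin{align}
\Pr\left[\frac{1}{k}\sum_{i=1}^k X_i \leq \mu_C - t\right] \leq \exp\!\left(-\frac{2 k t^2}{(4\ln N/N)^2}\right) = \exp(-\Omega(k/n^2)),
\end{align}
using $(\ln N)^2 = \Theta(n^2)$. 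Since $\mu_C - t \geq \frac{1.98}{N+1}$ on $\Omega$, this gives exactly the desired tail bound.

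The only issue to be careful about is parameter choice: we need the ``slack'' $\mu_C - \frac{1.98}{N+1}$ to be $\Omega(1/N)$ so that Hoeffding's bound collapses cleanly to $\exp(-\Omega(k/n^2))$, and we need $\max_z p_C(z) = O(n/N)$ rather than anything weaker so that the Hoeffding range does not kill the exponent. Both are delivered by choosing $\epsilon = 0.01$ in \lem{collision-concentration} and by the $O(\log N / N) = O(n/N)$ bound from \lem{haar-min}. I do not anticipate any genuine obstacle; this is essentially a bookkeeping exercise assembling the two concentration facts already proved.
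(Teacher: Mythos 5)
Your proposal is correct and follows essentially the same route as the paper's proof: condition on the good event that $S_C \geq \frac{1.99}{N+1}$ (from \lem{collision-concentration} with $\epsilon=0.01$) and $\max_z p_C(z) = O(n/N)$ (from \lem{haar-min}), then apply Hoeffding to the i.i.d.\ variables $p_C(z_i)$ with deviation $\frac{0.01}{N+1}$, giving the $2^{-\Omega(k/n^2)}$ tail. The parameter bookkeeping you flag is exactly what the paper does, so there is nothing further to add.
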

\begin{proof}
  By \lem{collision-concentration} with $\epsilon=0.01$, 
  \begin{align}
      \Pr_{C\sim\Haar(N)}\left[S_C \geq \frac{1.99}{N+1}\right] \geq 1-O(1/N).
  \end{align}
  Furthermore, by \lem{haar-min}, with probability $1-O(1/N)$, $\max_z p_C(z)\leq 4n/N$.
  For $C$ satisfying $S_C\geq 1.99/(N+1)$ and $\max_z p_C(z)\leq 4n/N$, by Hoeffding's inequality,
  \begin{align}\nonumber
  \Pr_{z_1,\ldots,z_k\sim P_C}\left[
  \frac{1}{k}\sum_{i=1}^k p_C(z_i) \leq \frac{1.98}{N+1}
  \right]
  &\leq 
    \Pr_{z_1,\ldots,z_k\sim P_C}\left[
  \left|\frac{1}{k}\sum_{i=1}^k p_C(z_i) -S_C\right| \geq \frac{0.01}{N+1}
  \right] \\
    &\leq 
    2e^{-0.01 k/(16n^2)} = 2e^{-k/(1600n^2)}.
  \end{align}
\end{proof}

Therefore, it suffices to take $O(n^2)$ independent samples from $O(\log n)$ different circuits.
\begin{lemma}\label{lem:lxeb-product}%
  For integer $k$, circuit $C$ and a tuple $d$ of $k$ strings in $\bit^n$, let $V_k(C,d)$ be defined as
  \begin{align}
     V_k(C,d) := \Id\left[\frac{1}{k}\sum_{i=1}^k p_C(z_i) \geq \frac{1.98}{N+1}\right]
  \end{align}
  for $d=(z_1,\ldots,z_k)$.
  Then for $k=O(n^2)$ and $m=\Omega(\log n)$, sampling a perfect device $\A$ $mk$ times yields $d_1,\ldots,d_m$ satisfying 
  \begin{align}
      \Pr_{C_1,\ldots,C_m\sim\Haar(N),\{d_i\sim\A(C_i)\}_{i=1}^m}\left[\sum_{i=1}^m V_k(C_i,d_i) \geq 0.99m\right] \geq 1-1/n.
  \end{align}
\end{lemma}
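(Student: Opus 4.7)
The plan is to reduce the multi-round statement to a concentration inequality for a sum of i.i.d.\ indicators, with the per-round bound coming directly from \lem{lxeb-perfect}.

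First I would observe that since $C_1,\ldots,C_m$ are drawn i.i.d.\ from $\Haar(N)$, and the perfect device $\A$ draws its $k$ samples independently from $P_{C_i}$ in each round $i$, the random variables $V_k(C_i,d_i)$ for $i\in[m]$ are i.i.d.\ Bernoulli. By \lem{lxeb-perfect}, each one satisfies
\begin{align}
    \Pr[V_k(C_i,d_i) = 1] \geq 1 - O(1/N) - 2^{-\Omega(k/n^2)}.
\end{align}
Taking the implicit constant in $k = O(n^2)$ large enough, the right-hand side is at least $1 - \eta$ for any constant $\eta$ we choose; I would fix $\eta$ to be a small constant strictly below $0.01$, for instance $\eta = 0.001$.

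The remainder is a standard concentration step. Let $W_i := 1 - V_k(C_i,d_i)$, so the $W_i$ are i.i.d.\ Bernoulli with mean at most $\eta$ and $\Exp\bigl[\sum_{i=1}^m W_i\bigr] \leq \eta m$. A multiplicative Chernoff bound then yields
\begin{align}
    \Pr\left[\sum_{i=1}^m W_i \geq 0.01\, m\right] \leq \exp(-\Omega(m)),
\end{align}
where the hidden constant depends only on the gap between $0.01$ and $\eta$. Taking $m = \Omega(\log n)$ with a sufficiently large implicit constant makes the right-hand side at most $1/n$, which is exactly the claimed bound after negating the event.

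The main obstacle, insofar as there is one, is bookkeeping of the constants: the constants in $k = O(n^2)$ and $m = \Omega(\log n)$ must be chosen so that the per-round failure probability $\eta$ is small enough that the Chernoff bound, applied to only $\Theta(\log n)$ rounds, still produces a $1/n$-level tail. No deeper technical difficulty arises, since \lem{lxeb-perfect} already delivers the single-round bound and the rounds are fully independent under the perfect-device assumption.
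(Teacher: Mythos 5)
Your proposal is correct and follows essentially the same route as the paper: invoke \lem{lxeb-perfect} to get a per-round success probability bounded below by a constant close to $1$ (the paper takes $\mu \geq 0.995$, you take $1-\eta$ with $\eta=0.001$), then apply a standard concentration inequality to the $m$ i.i.d.\ round indicators and choose the constant in $m=\Omega(\log n)$ large enough to drive the tail below $1/n$. The only cosmetic difference is that the paper uses Hoeffding's inequality directly on the indicators while you phrase it as a multiplicative Chernoff bound on the failure indicators, which changes nothing of substance.
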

\begin{proof}
  By \lem{lxeb-perfect}, for $k=cn^2$ for sufficiently large constant $c$, 
  \begin{align}
      \mu := \Pr_{C\sim\Haar(N),d\sim\A(C)}[V(C,d)=1] \geq 0.995.
  \end{align}
  Then by Hoeffding's inequality,
  \begin{align}\nonumber
      &\Pr_{C_1,\ldots,C_m\sim\Haar(N),\{d_i\sim\A(C_i)\}_{i=1}^m}\left[\sum_{i=1}^m V_k(C_i,d_i) \leq  0.99m\right] \\\nonumber
      &\qquad\leq 
      \Pr_{C_1,\ldots,C_m\sim\Haar(N),\{d_i\sim\A(C_i)\}_{i=1}^m}\left[\left|\frac{1}{m}\sum_{i=1}^m V_k(C_i,d_i) -\mu\right|\geq  0.005\right] \\\nonumber
      &\qquad \leq 2^{-2\cdot 0.005^2m} \\
      &\qquad \leq 1/n
  \end{align}
  for $m=c'\log n$ for sufficiently large $c'$.
\end{proof}

This implies that sampling from a perfect device $mk$ times to yield an approximation of the XHOG score close to $1.94/(N+1)$.
The bound can be improved to $2-c$ for an arbitrarily small constant $c>0$.
\begin{corollary}
  For $k=O(n^2)$ and $m=\Omega(\log n)$, sampling a perfect device $\A$ for $k$ times on $m$ independent circuits sampled from the Haar measure,
  with probability $1-1/n$,
  \begin{align}
      \frac{1}{mk}\sum_{i=1}^m\sum_{j=1}^k p_{C_i}(z_{ij}) \geq \frac{1.94}{N+1},
  \end{align}
  where $z_{ij}$ denotes the $j$-th sample from $C_i$.
\end{corollary}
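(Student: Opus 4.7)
The plan is to derive this corollary as a straightforward consequence of \lem{lxeb-product} together with the nonnegativity of the probabilities $p_{C_i}(z_{ij})$. First, I would invoke \lem{lxeb-product} with the same constants ($k = O(n^2)$, $m = \Omega(\log n)$): with probability at least $1 - 1/n$ over the sampling of the circuits $C_1,\ldots,C_m$ and the $mk$ outputs of the perfect device, the indicator sum satisfies $\sum_{i=1}^m V_k(C_i,d_i) \geq 0.99 m$. In particular, for at least $0.99 m$ indices $i \in [m]$, the per-epoch average obeys $\tfrac{1}{k}\sum_{j=1}^k p_{C_i}(z_{ij}) \geq \tfrac{1.98}{N+1}$.

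Next, I would combine this with the trivial observation that $p_{C_i}(z_{ij}) \geq 0$ for every $i,j$, so the remaining (at most $0.01 m$) epochs contribute nonnegatively to the total. Writing $I \subseteq [m]$ for the set of indices where $V_k(C_i,d_i) = 1$, this gives
\begin{align}
\frac{1}{mk}\sum_{i=1}^m \sum_{j=1}^k p_{C_i}(z_{ij})
\;\geq\; \frac{1}{mk}\sum_{i \in I}\sum_{j=1}^k p_{C_i}(z_{ij})
\;\geq\; \frac{|I|}{m}\cdot \frac{1.98}{N+1}
\;\geq\; \frac{0.99 \cdot 1.98}{N+1}
\;=\; \frac{1.9602}{N+1}
\;\geq\; \frac{1.94}{N+1},
\end{align}
which is exactly the stated bound, conditioned on the $1-1/n$ event from \lem{lxeb-product}.

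Finally, for the remark that the bound can be improved to $\tfrac{2-c}{N+1}$ for any constant $c > 0$, the plan is to repeat the argument while tightening two separate constants. Specifically, in \lem{lxeb-perfect} one can replace $1.98$ by $2-\epsilon_1$ (which requires calling \lem{collision-concentration} with a smaller slack parameter $\epsilon$, still producing a $1-O(1/N)$ concentration bound) and shrink the Hoeffding gap accordingly; and in \lem{lxeb-product} one can replace $0.99$ by $1-\epsilon_2$ by taking the constant $c'$ in $m = c' \log n$ sufficiently large. Then the same chain of inequalities yields $(1-\epsilon_2)(2-\epsilon_1)/(N+1) \geq (2-c)/(N+1)$ provided $\epsilon_1,\epsilon_2$ are chosen so that $\epsilon_1 + 2\epsilon_2 \leq c$. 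I don't anticipate any real obstacle here—the entire argument is just bookkeeping on top of \lem{lxeb-product}—the only mildly delicate point is confirming that the failure probabilities in both the concentration lemma and Hoeffding's inequality stay within the desired $1-1/n$ budget as the constants are tightened, which is handled by taking $k = \Theta(n^2)$ and $m = \Theta(\log n)$ with sufficiently large hidden constants depending on $c$.
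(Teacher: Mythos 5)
Your proposal is correct and follows essentially the same route as the paper: invoke \lem{lxeb-product} to get a $0.99$ (the paper uses $0.98$ in its implication) fraction of epochs with per-epoch average at least $\tfrac{1.98}{N+1}$, then use nonnegativity of the remaining terms to lower-bound the global average by $0.99\cdot\tfrac{1.98}{N+1}\geq\tfrac{1.94}{N+1}$. The additional sketch of how to push the constant to $2-c$ is consistent with the paper's remark and requires no new ideas.
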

\begin{proof}
  The corollary holds from \lem{lxeb-product} and the implication that 
  \begin{align}
      \frac{1}{m}\sum_{i=1}^m V_k(C_i,d_i) \geq 0.98
      \implies 
      \frac{1}{mk}\sum_{i=1}^m\sum_{j=1}^k p_{C_i}(z_{ij}) \geq \frac{1.94}{N+1},
  \end{align}
  for $d_i=(z_{i1},\ldots,z_{ik})$.
\end{proof}

\section{An Entropy Accumulation Theorem}\label{sec:eat}

In this section, we modify the entropy accumulation theorem (EAT) from the one given by Dupuis, Fawzi, and Renner~\cite{DFR20}. \
Our proof follows closely from that of \cite{DFR20} except with the following minor changes: \
In \cite{DFR20}, the min-tradeoff function $f:\mathbb P\to\mathbb R$ is defined over the set of probability distributions $\mathbb P$. \
Here our min-tradeoff function $f:\mathbb R^{\geq 0}\to\mathbb R$, where the input corresponds to the score of the device. \
To see the modification does lead to an useful tool that reduces the analysis of a multi-round entropy accumulation process to single-round analysis of von Neumann entropy, we formally prove the theorem from scratch. \

Recall that for von Neumann entropy, by definition, the chain rule holds:
\begin{align}
  H(A_1A_2|B)_\rho = H(A_1|B)_\rho + H(A_2|A_1B)_\rho
\end{align}
for every tripartite state $\rho_{A_1A_2B}$.
However, the same equality does not hold for other R\'{e}nyi entropies. \
Instead, Dupuis, Fawzi, and Renner \cite{DFR20} show the following statement. \ 
\begin{theorem}[{\cite[Theorem~3.2]{DFR20}}]\label{thm:chain-1}
  Let $\rho_{A_1A_2 B}$ be a density operator and $\alpha\in(0,\infty)$.
  Then
  \begin{align}\label{eq:eat}
    H_\alpha(A_1A_2|B)_\rho = H_\alpha(A_1|B)_\rho + H_\alpha(A_2|A_1 B)_\nu,
  \end{align}
  where
  \begin{align}
    \nu_{A_1A_2B} = \nu_{A_1B}^{1/2} \rho_{A_2|A_1B} \nu_{A_1B}^{1/2} 
    \qquad\text{ with}\qquad
    \nu_{A_1B} = \frac{\left(\rho_{A_1B}^{1/2} \rho_B^{\frac{1-\alpha}{\alpha}} \rho_{A_1B}^{1/2}\right)^\alpha}{\tr\left(\rho_{A_1B}^{1/2} \rho_B^{\frac{1-\alpha}{\alpha}} \rho_{A_1B}^{1/2}\right)^\alpha}.
  \end{align}
\end{theorem}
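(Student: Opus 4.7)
The plan is to unwrap the sandwiched R\'enyi conditional entropy and reduce the chain rule to a purely operator-algebraic identity. Let $\beta := (1-\alpha)/\alpha$, so by definition
\begin{align}
2^{(1-\alpha)H_\alpha(A_1A_2|B)_\rho} \;=\; \tr\bigl[(\rho_B^{\beta/2}\rho_{A_1A_2B}\rho_B^{\beta/2})^\alpha\bigr],
\end{align}
and similarly for $H_\alpha(A_1|B)_\rho$ and $H_\alpha(A_2|A_1B)_\nu$. Setting $K := \rho_{A_1B}^{1/2}\rho_B^{\beta}\rho_{A_1B}^{1/2}$, observe that the normalization appearing in the definition of $\nu_{A_1B}$ is exactly $\tr[K^\alpha] = 2^{(1-\alpha)H_\alpha(A_1|B)_\rho}$ (using the cyclic identity $\tr[(A^{*}A)^\alpha]=\tr[(AA^{*})^\alpha]$ for positive exponents). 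Hence \eq{eat} will follow once I establish the single multiplicative identity
\begin{align}
\tr\bigl[(\rho_B^{\beta/2}\rho_{A_1A_2B}\rho_B^{\beta/2})^\alpha\bigr]
\;=\; \tr[K^\alpha]\cdot \tr\bigl[(\nu_{A_1B}^{1/(2\alpha)}\rho_{A_2|A_1B}\nu_{A_1B}^{1/(2\alpha)})^\alpha\bigr],
\end{align}
where $\rho_{A_2|A_1B}:=\rho_{A_1B}^{-1/2}\rho_{A_1A_2B}\rho_{A_1B}^{-1/2}$ is the Petz-type conditional state.

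The centerpiece is a polar-decomposition trick. First I would use the factorization $\rho_{A_1A_2B}=\rho_{A_1B}^{1/2}\rho_{A_2|A_1B}\rho_{A_1B}^{1/2}$ and introduce $L:=\rho_B^{\beta/2}\rho_{A_1B}^{1/2}$, so that the inner operator becomes $L\rho_{A_2|A_1B}L^{*}$. Writing $L=U|L|$ with $|L|=(L^{*}L)^{1/2}=K^{1/2}$ allows me to push the unitary out,
\begin{align}
(L\rho_{A_2|A_1B}L^{*})^\alpha \;=\; U\bigl(K^{1/2}\rho_{A_2|A_1B}K^{1/2}\bigr)^\alpha U^{*},
\end{align}
so that taking the trace (and noting $U^{*}U$ acts as identity on the relevant support) yields
\begin{align}
\tr\bigl[(\rho_B^{\beta/2}\rho_{A_1A_2B}\rho_B^{\beta/2})^\alpha\bigr] \;=\; \tr\bigl[(K^{1/2}\rho_{A_2|A_1B}K^{1/2})^\alpha\bigr].
\end{align}
Substituting $K^{1/2}=\tr[K^\alpha]^{1/(2\alpha)}\,\nu_{A_1B}^{1/(2\alpha)}$ and pulling the scalar outside the trace then produces exactly the displayed identity. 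I would also do the consistency check that $\tr_{A_2}(\nu_{A_1A_2B})=\nu_{A_1B}$, which follows from $\tr_{A_2}(\rho_{A_2|A_1B})=I$ on the support of $\rho_{A_1B}$; this confirms that $\nu_{A_1B}$ is the legitimate $A_1B$-marginal implicitly used on the right-hand side when unfolding $H_\alpha(A_2|A_1B)_\nu$.

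The main obstacle I expect is the careful support analysis. The conditional state $\rho_{A_2|A_1B}$ is only well-defined on $\supp(\rho_{A_1B})$, and the polar-decomposition identity $(LXL^{*})^\alpha = U(|L|X|L|)^\alpha U^{*}$ requires $\supp(|L|X|L|)\subseteq\supp(|L|)$. The standard workaround is to regularize by replacing $\rho_B\mapsto \rho_B+\epsilon I$ and $\rho_{A_1B}\mapsto\rho_{A_1B}+\epsilon I$, carry out every manipulation in the strictly positive regime where $U$ is genuinely unitary and all inverse square roots exist, and then pass $\epsilon\to 0$ by continuity of $H_\alpha$. A smaller bookkeeping point is that $\frac{1}{1-\alpha}\log(\cdot)$ flips sign between $\alpha\in(0,1)$ and $\alpha>1$, but since every operator identity above is insensitive to the sign of $1-\alpha$, a single argument handles both regimes.
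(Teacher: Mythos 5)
This theorem is imported verbatim from Dupuis--Fawzi--Renner \cite{DFR20} and the paper itself gives no proof, so there is nothing in-paper to compare against; your verification is correct and is essentially the same direct computation that underlies the cited result: unfold $2^{(1-\alpha)H_\alpha}$, factor $\rho_{A_1A_2B}=\rho_{A_1B}^{1/2}\,\rho_{A_2|A_1B}\,\rho_{A_1B}^{1/2}$, use $\tr[(AA^{*})^{\alpha}]=\tr[(A^{*}A)^{\alpha}]$ (your polar-decomposition step) to trade $\rho_B^{\frac{1-\alpha}{2\alpha}}(\cdot)\rho_B^{\frac{1-\alpha}{2\alpha}}$ for $K^{1/2}(\cdot)K^{1/2}$ with $K=\rho_{A_1B}^{1/2}\rho_B^{\frac{1-\alpha}{\alpha}}\rho_{A_1B}^{1/2}$, and finish with $\nu_{A_1B}\propto K^{\alpha}$, the exponent identity $\frac{1-\alpha}{2\alpha}+\frac{1}{2}=\frac{1}{2\alpha}$, and the marginal check $\tr_{A_2}\nu_{A_1A_2B}=\nu_{A_1B}$. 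The only case your normalization does not literally cover is $\alpha=1$ (where $2^{(1-\alpha)H_\alpha}=\tr[\cdot]$ is vacuous), but there $\nu=\rho$ and the claim is the ordinary von Neumann chain rule, so this is a boundary remark rather than a gap, and your support/regularization discussion adequately handles the remaining degeneracies.
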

Note that $\nu_{A_1A_2B}$ is normalized and $\nu_{A_1B}$ is the marginal of $\nu_{A_1A_2B}$ obtained by tracing out the system $A_2$.
Though there is a state $\nu$ such that the equality in \eq{eat} holds, for our purpose, we only care about some family of states for which an inequality can be derived.
In particular, the following inequality will be useful.
\begin{theorem}[{\cite[Theorem~3.3]{DFR20}}]\label{thm:chain-2}
  Let $\rho_{A_1B_1A_2B_2}$ be a density operator and $\alpha\in(0,\infty)$ such that the Markov chain condition holds.
  Then 
  \begin{align}
    \inf_\nu H_\alpha(A_2|B_2A_1B_1)_\nu
    \leq H_\alpha(A_1A_2|B_1B_2)_\rho - H_\alpha(A_1|B_1)_\rho
    \leq
    \inf_\nu H_\alpha(A_2|B_2A_1B_1)_\nu
  \end{align}
\end{theorem}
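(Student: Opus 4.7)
The plan is to reduce \thm{chain-2} to the exact chain-rule identity \thm{chain-1} by taking the conditioning system $B$ there to be $B_1 B_2$, and then using the Markov hypothesis $A_1 \leftrightarrow B_1 \leftrightarrow B_2$ to simplify the term $H_\alpha(A_1 | B_1 B_2)_\rho$ down to $H_\alpha(A_1 | B_1)_\rho$. Concretely, \thm{chain-1} produces a specific normalized state $\nu^{\star}_{A_1 A_2 B_1 B_2}$, built from $\rho_{A_1 B_1 B_2}$ and $\rho_{B_1 B_2}$ by the explicit formula given there, such that
\begin{align}
H_\alpha(A_1 A_2 | B_1 B_2)_\rho = H_\alpha(A_1 | B_1 B_2)_\rho + H_\alpha(A_2 | A_1 B_1 B_2)_{\nu^{\star}}.
\end{align}

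Next I would establish the Markov invariance $H_\alpha(A_1 | B_1 B_2)_\rho = H_\alpha(A_1 | B_1)_\rho$. The quantum Markov chain hypothesis is equivalent to the existence of a CPTP recovery map $\mathcal R_{B_1 \to B_1 B_2}$ with $(\mathrm{id}_{A_1} \otimes \mathcal R)(\rho_{A_1 B_1}) = \rho_{A_1 B_1 B_2}$. Data processing of conditional Rényi entropy on the conditioning side then yields $H_\alpha(A_1 | B_1)_\rho \leq H_\alpha(A_1 | B_1 B_2)_\rho$, while data processing under the partial trace $\tr_{B_2}$ gives the reverse inequality, so the two quantities coincide. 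Substituting into the chain-rule identity gives
\begin{align}
H_\alpha(A_1 A_2 | B_1 B_2)_\rho - H_\alpha(A_1 | B_1)_\rho = H_\alpha(A_2 | A_1 B_1 B_2)_{\nu^{\star}}.
\end{align}

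Since $\nu^{\star}$ satisfies $\nu^{\star}_{A_1 B_1 B_2} = \rho_{A_1 B_1 B_2}$ by construction, it belongs to the optimization family over which the infimum and supremum on the two sides of \thm{chain-2} are taken (we read the right-hand side of the statement as a $\sup_\nu$, the two bounds otherwise being syntactically identical). The exact equality above therefore automatically sandwiches the middle quantity between $\inf_\nu H_\alpha(A_2 | A_1 B_1 B_2)_\nu$ and $\sup_\nu H_\alpha(A_2 | A_1 B_1 B_2)_\nu$ over that family, which is the claim.

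The main technical obstacle will be justifying the Markov-invariance step for the particular Rényi entropy in play: one needs monotonicity of $H_\alpha(A_1 | \cdot)_\rho$ under CPTP maps acting only on the conditioning system. For the sandwiched Rényi divergence this follows from the Frank--Lieb and Beigi data-processing inequalities together with the variational identity $H_\alpha(A|B)_\rho = -\inf_{\sigma_B} D_\alpha(\rho_{AB} \,\|\, \Id_A \otimes \sigma_B)$; the Petz case is analogous. Once this invariance is in place, the remainder of the argument is a direct algebraic rearrangement of the equality produced by \thm{chain-1}, with no further optimization needed.
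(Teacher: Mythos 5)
Your overall route is the same as the paper's (the paper only sketches it): apply the exact chain rule of \thm{chain-1} with $B=B_1B_2$, then use the Markov condition to replace $H_\alpha(A_1|B_1B_2)_\rho$ by $H_\alpha(A_1|B_1)_\rho$. Your reading of the displayed statement is also right --- the right-hand bound should be a supremum, not a second infimum --- and your two-sided data-processing argument for the Markov step (recovery map $\mathcal R_{B_1\to B_1B_2}$ in one direction, $\tr_{B_2}$ in the other) is a legitimate way to get $H_\alpha(A_1|B_1B_2)_\rho=H_\alpha(A_1|B_1)_\rho$, with the usual caveat that the divergence DPI restricts the admissible range of $\alpha$ (and note that the entropy appearing in \thm{chain-1} is the non-optimized one, so it is cleaner to apply DPI directly to $D_\alpha(\rho_{A_1B_1B_2}\,\|\,\Id\otimes\rho_{B_1B_2})$ rather than through the variational formula for the optimized entropy).

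The genuine gap is the membership step. You assert that the chain-rule state satisfies $\nu^{\star}_{A_1B_1B_2}=\rho_{A_1B_1B_2}$ ``by construction,'' but the construction in \thm{chain-1} gives $\nu^{\star}_{A_1B}\propto\bigl(\rho_{A_1B}^{1/2}\rho_B^{(1-\alpha)/\alpha}\rho_{A_1B}^{1/2}\bigr)^{\alpha}$ with $B=B_1B_2$, which equals $\rho_{A_1B_1B_2}$ only at $\alpha=1$ (or under special commutation). So if you intend the optimization in \thm{chain-2} to run over states with the same $A_1B_1B_2$ marginal as $\rho$, the chain-rule state does not lie in that family and your sandwich argument does not close. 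What saves the argument in \cite{DFR20} is that the family of $\nu$'s is tied to the \emph{conditional} structure of $\rho$ rather than to its marginal: $\nu^{\star}=\nu^{\star\,1/2}_{A_1B}\,\rho_{A_2|A_1B}\,\nu^{\star\,1/2}_{A_1B}$ has the same conditional operator $\rho_{A_2|A_1B_1B_2}$ as $\rho$, and it is this property (together with the Markov condition, used again when passing to the channel formulation of \thm{chain-3}) that places $\nu^{\star}$ inside the set over which the infimum and supremum are taken. You should therefore either define the optimization family explicitly as states of the form $\sigma_{A_1B_1B_2}^{1/2}\rho_{A_2|A_1B_1B_2}\sigma_{A_1B_1B_2}^{1/2}$ and check membership of $\nu^{\star}$ through that lens, or drop the false marginal claim and justify membership directly from the formula in \thm{chain-1}.
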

\thm{chain-2} can be obtained by showing that the Markov chain condition implies that $H_{\alpha}(A_1|B_1B_2)_\rho=H_\alpha(A_1|B_1)_\rho$ and by \thm{chain-1}.
We can further represent \thm{chain-2} in terms of quantum channels.
\begin{theorem}[{\cite[Corollary~3.5]{DFR20}}]\label{thm:chain-3}
  Let $\rho_{RA_1B_1}$ be a density operator on $R\otimes A_1\otimes B_1$, $\M\in\CPTP(R,A_2B_2)$ and $\alpha\in(0,\infty)$.
  If $\M(\rho_{RA_1B_1})$ satisfies the Markov chain condition $A_1\leftrightarrow B_1\leftrightarrow B_2$, then
  \begin{align}
    \inf_\omega H_\alpha(A_2|B_2A_1B_1)_{\M(\omega)}
    \leq
    H_\alpha(A_1A_2|B_1B_2)_{\M(\rho)} - H_\alpha(A_1|B_1)_\rho
    \leq
    \sup_\omega H_\alpha(A_2|B_2A_1B_1)_{\M(\omega)},
  \end{align}
  where the supremum and infimum range over density operator on $R\otimes A_1\otimes A_2$.
  Moreover, if $\rho_{RA_1B_1}$ is pure, then it suffices to optimize over pure states $\omega_{RA_1B_1}$.
\end{theorem}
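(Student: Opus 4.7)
The plan is to deduce \thm{chain-3} directly from \thm{chain-2} by applying the latter to the image state $\sigma := \M(\rho_{RA_1B_1})$ on $A_1 \otimes B_1 \otimes A_2 \otimes B_2$, which by hypothesis satisfies the Markov chain condition $A_1 \leftrightarrow B_1 \leftrightarrow B_2$, and then re-expressing the optimization over reference states in terms of input states on $R \otimes A_1 \otimes B_1$.

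First I would observe that because $\M$ acts only on the $R$ register, the marginal on $A_1 B_1$ is preserved: $\sigma_{A_1 B_1} = \rho_{A_1 B_1}$, and in particular $H_\alpha(A_1|B_1)_\sigma = H_\alpha(A_1|B_1)_\rho$. Invoking \thm{chain-2} on $\sigma$ then sandwiches the chain-rule defect $H_\alpha(A_1 A_2|B_1 B_2)_{\M(\rho)} - H_\alpha(A_1|B_1)_\rho$ between $\inf_\nu H_\alpha(A_2|B_2 A_1 B_1)_\nu$ and $\sup_\nu H_\alpha(A_2|B_2 A_1 B_1)_\nu$, where $\nu$ ranges over the states on $A_1 B_1 A_2 B_2$ appearing in the \thm{chain-2} optimization.

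The core step is to rewrite this optimization in terms of input states $\omega \in \mathbb{S}(R \otimes A_1 \otimes B_1)$ via the pushforward $\omega \mapsto \M(\omega)$. I would show that the set $\{\M(\omega) : \omega \in \mathbb{S}(R \otimes A_1 \otimes B_1)\}$ and the optimization range in \thm{chain-2} are equivalent for computing $\inf$ and $\sup$: given any $\nu$ in the \thm{chain-2} range, one can produce $\omega$ with $\M(\omega) = \nu$ by taking a Stinespring dilation of $\M$ and choosing $\omega$ on a sufficiently large $R$ register accordingly; conversely, every $\M(\omega)$ lies in the relevant range under the Markov hypothesis inherited from $\M(\rho)$. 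Substituting this equivalence into the sandwich from \thm{chain-2} yields the bounds asserted by \thm{chain-3}.

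For the refinement to pure $\omega_{R A_1 B_1}$ when $\rho_{R A_1 B_1}$ is pure, I would use that $H_\alpha(A_2|B_2 A_1 B_1)_{\M(\omega)}$ depends only on the reduced state $\omega_{R A_1 B_1}$, and hence purifying any mixed $\omega$ by enlarging $R$ preserves the quantity; the infimum and supremum over all density operators are therefore attained at pure states. The main obstacle is the equivalence-of-ranges step, which requires carefully verifying that every admissible $\nu$ in the \thm{chain-2} optimization really does arise as $\M(\omega)$ for an allowed $\omega$. This rests on a Stinespring-type dilation argument and on exploiting the freedom to enlarge the dimension of $R$; once the correspondence is established, the remainder of the proof is a direct substitution.
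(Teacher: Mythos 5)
There is a genuine gap at the heart of your argument, namely the ``equivalence-of-ranges'' step. You claim that every admissible $\nu$ in the \thm{chain-2} optimization can be realized as $\M(\omega)$ for some input $\omega$ on $R\otimes A_1\otimes B_1$, ``by taking a Stinespring dilation of $\M$ and choosing $\omega$ on a sufficiently large $R$ register.'' But $\M$ is a \emph{fixed} channel whose input space $R$ cannot be enlarged, and Stinespring dilation enlarges the environment on the output side---it does not make $\M\otimes\Id_{A_1B_1}$ surjective. For a generic channel the image is a small subset of the states on $A_1B_1A_2B_2$: if $\M$ is a replacer channel, for instance, every $\M(\omega)$ is of the product form $\sigma_{A_2B_2}\otimes\omega_{A_1B_1}$, so almost no $\nu$ is attained. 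Hence the inclusion ``chain-2 range $\subseteq\{\M(\omega)\}$'' cannot be established by a generic dilation argument. What is actually needed (and what the paper, following \cite{DFR20}, alludes to by ``presenting a state $\omega$ that saturates the inequalities'') is much more specific: apply the exact chain rule (\thm{chain-1}) to $\M(\rho)$ with $B=B_1B_2$, use the Markov condition to replace $H_\alpha(A_1|B_1B_2)_{\M(\rho)}$ by $H_\alpha(A_1|B_1)_\rho$, and then show that the \emph{one} tilted state $\nu$ appearing in the chain rule is itself of the form $\M(\omega^*)$ for an explicitly constructed input $\omega^*$ obtained from $\rho$ by conjugation with operators acting only on $A_1B_1$ (this is where the Markov structure is used, since the naive tilting also touches $B_2$, which is an output of $\M$). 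Once that single value is identified with some $H_\alpha(A_2|B_2A_1B_1)_{\M(\omega^*)}$, it is trivially sandwiched between the infimum and supremum over all inputs, which is the statement of \thm{chain-3}. Your proposal never engages with the structure of this particular $\nu$, and the surjectivity surrogate you substitute for it is false.

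The purity refinement has a related problem. Purifying a mixed $\omega$ requires adjoining a purifying register, which takes you outside the allowed domain of states on the fixed spaces $R\otimes A_1\otimes B_1$ (again, $R$ is the channel input and cannot be enlarged, and enlarging $A_1$ or $B_1$ changes the conditioning systems in $H_\alpha(A_2|B_2A_1B_1)$). Note also that your argument never uses the hypothesis that $\rho_{RA_1B_1}$ is pure, which is a sign it is not capturing the real mechanism: in the intended proof the restriction to pure $\omega$ is legitimate precisely because the explicit saturating witness $\omega^*$ is obtained from $\rho$ by conjugation with an operator on $A_1B_1$, and is therefore pure whenever $\rho$ is pure.
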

\thm{chain-3} can be obtained by applying \thm{chain-2} and presenting a state $\omega$ that saturates the inequalities.
In our case, the Markov chain condition trivially holds since the system $B_i$ is empty.

Let $\tilde E_{i-1}$ be a system isomorphic to $R_{i-1}E$ and $\mathbb{P}$ denote the set of distributions.
Adding a system $\tilde R_{i-1}$ is meant to purify the state on $R_{i-1}E$ so the state $\omega_{R_{i-1}E\tilde R_{i-1}}$ is pure and its marginal is any state on $R_{i-1}$ input to $\M_i$.

We modify the definition of min-tradeoff function from \cite{DFR20}, formally stated as follows.

\begin{definition}\label{dfn:min-tradeoff}
  A real-valued function on $f:\mathbb R^{\geq 0}\to\mathbb R$ is called a min-tradeoff function for $\mathcal M_i$ if it satisfies 
  \begin{align}
    f(s) \leq \inf_{\nu\in\Sigma_i(G,s)} H(A_i | E \tilde E_{i-1})_\nu 
  \end{align}
  where $\Sigma_i(G,s):=\{\rho_{A_i R_i E\tilde E_{i-1}}=(\M_i\otimes\Id_{E\tilde R_{i-1}})(\omega_{R_{i-1}E\tilde R_{i-1}}): \tr(G \rho_{A_i})=s\}$, i.e., the set of states whose marginal on $A_i$ has score $s$ evaluated using a diagonal, positive semi-definite matrix $G$.
\end{definition}
While the domain of $f$ is the set of non-negative real numbers, 
we restrict our attention to the properties of $f$ in the interval $[0,2]$ for normalization purposes.
Let $\|\nabla f\|_\infty$ be the infinity norm of $\nabla f$ restricted to $[0,2]$, and $g_{\max},g_{\min}$ be the maximum and the minimum of $f$ over $[0,2]$, i.e.,
\begin{align}
  g_{\max} := \max_{s\in[0,2]} f(s), \qquad g_{\min} = \min_{s\in[0,2]} f(s)
\end{align}
In particular, it holds that $\frac{1}{2}|g_{\max}-g_{\min}|\leq \|\nabla f\|_{\infty}$, where $\|\nabla f\|_{\infty}$ is the infinite norm of $f$, by setting the domain in $[0,2]$.
Note that the restriction is without loss of generality: given a game with score in $[0,a]$ with some diagonal $G'$, we can set $G=2G'/a$.

To determine the infimum, it suffices to consider only pure states $\omega_{R_{i-1}\tilde R_{i-1}}$ since by strong subadditivity, adding a purification system cannot increase $H(A_i|E\tilde R_{i-1})$.
For an event $\Omega\subseteq \A^m$, we denote 
\begin{align}
\Pr_\rho[\Omega]:=\sum_{(a_1,\ldots,a_m)\in\Omega}\tr(\rho_{E,a_1,\ldots,a_m})
\end{align}
for classical-quantum state $\rho$ classical on $A^m$ the trace of the state projected onto the subspace spanned by $\ket{w_1,\ldots,w_m}$ for $(w_1,\ldots,w_m)\in\Omega$.
For any state $\rho_{A_1^m B_1^m E}$ classical on $A_1^mB_1^m$, we denote the conditional state
\begin{align}
  \rho_{A^m E|\Omega}
  := \frac{1}{\Pr_\rho[\Omega]} \sum_{(a_1,\ldots,a_m)\in\Omega} \proj{a_1,\ldots,a_m} \otimes \rho_{E,a_1,\ldots,a_m}.
\end{align}
Clearly, the state is normalized.

The following theorem states it suffices to bound $H_{\alpha}^\uparrow(Z^m|E)_{\M_m\circ\ldots\circ\M_1(\rho)|\Omega}$ from below.

\begin{proposition}[{\cite[Lemma~B.10]{DFR20}}]\label{prp:min-alpha}
  For any density operator $\rho$, and non-negative operator $\sigma$ any $\alpha\in(1,2]$, and any $\epsilon\in(0,1)$, 
  \begin{align}
    H_{\min}^\epsilon(Z^m|E)_{\M_m\circ\ldots\circ\M_1(\rho)|\Omega} \geq H_\alpha^\uparrow(Z^m|E)_{\M_m\circ\ldots\circ\M_1(\rho)|\Omega} - \frac{\log(2/\epsilon^2)}{\alpha-1}.
  \end{align}
\end{proposition}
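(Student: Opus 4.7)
The plan is to establish the inequality via the standard Chernoff/Markov-type reduction from sandwiched $\alpha$-R\'enyi entropy to smooth min-entropy. Since the proposition is a purely entropic statement, and conditioning on the event $\Omega$ merely renormalizes the classical-quantum state, it suffices to prove that for any bipartite state $\tau_{AB}$, any $\alpha \in (1,2]$, and any $\epsilon \in (0,1)$,
\[
H_{\min}^\epsilon(A|B)_\tau \;\geq\; H_\alpha^\uparrow(A|B)_\tau - \frac{\log(2/\epsilon^2)}{\alpha-1},
\]
and then specialize $A = Z^m$, $B = E$, $\tau := (\M_m \circ \cdots \circ \M_1(\rho))|_\Omega$.

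The first step is to fix a density operator $\sigma_B^\star$ attaining the supremum in $H_\alpha^\uparrow(A|B)_\tau = \sup_{\sigma_B} -\tilde D_\alpha(\tau_{AB}\,\|\,I_A \otimes \sigma_B)$; compactness of the density-operator simplex guarantees this in finite dimension (singular $\sigma_B^\star$ is handled by restricting throughout to $\operatorname{supp}(\sigma_B^\star)$ and taking limits). Writing $\sigma' := I_A \otimes \sigma_B^\star$, the defining identity of the sandwiched R\'enyi divergence gives
\[
\tr\!\left[(\sigma'^{(1-\alpha)/(2\alpha)}\, \tau_{AB}\, \sigma'^{(1-\alpha)/(2\alpha)})^\alpha\right] \;=\; 2^{-(\alpha-1)\, H_\alpha^\uparrow(A|B)_\tau}.
\]
Setting $\lambda := H_\alpha^\uparrow(A|B)_\tau - \tfrac{\log(2/\epsilon^2)}{\alpha-1}$, I would then take $P$ to be the spectral projector of $\sigma'^{-1/2}\tau_{AB}\sigma'^{-1/2}$ onto eigenvalues at most $2^{-\lambda}$ (pseudoinverse on $\operatorname{supp}(\sigma')$), with $Q := I - P$. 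By construction the subnormalized state $\tilde\tau := P\tau_{AB}P$ satisfies the operator inequality $\tilde\tau \leq 2^{-\lambda}\, \sigma'$, which certifies $H_{\min}(A|B)_{\tilde\tau} \geq \lambda$.

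What remains is the smoothing bound $\|\tilde\tau - \tau_{AB}\|_1 \leq O(\epsilon)$. This follows from a weighted Markov inequality: on the range of $Q$ the eigenvalues of interest exceed $2^{-\lambda}$, so a Chernoff-type comparison with $\tr[X^\alpha]$, where $X := \sigma'^{(1-\alpha)/(2\alpha)}\tau_{AB}\sigma'^{(1-\alpha)/(2\alpha)}$, yields $\tr(Q\tau_{AB}) = O(\epsilon^2)$, after which Winter's Gentle Measurement lemma delivers the trace-distance bound. Combining these, $\tilde\tau$ lies in the $\epsilon$-ball defining the smooth min-entropy and witnesses the claimed lower bound.

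The main obstacle I anticipate is making the Markov step precise despite the fact that the spectral projector $P$ is defined using $\sigma'^{-1/2}\tau\sigma'^{-1/2}$, whereas the controlled trace quantity involves the differently twisted operator $X$; because $\tau$ and $\sigma'$ do not commute in general, the spectra of these two operators differ. Reconciling them requires either the Araki--Lieb--Thirring inequality or the bookkeeping developed in Tomamichel's monograph and reused in \cite{DFR20}. A secondary point is the exact constant inside the logarithm, which depends on whether one smooths in trace distance (as in the paper's definition) or purified distance and on which formulation of Gentle Measurement is invoked; these details shift the factor between $\log(2/\epsilon^2)$ and $\log(1/\epsilon^2)$ but do not affect the overall argument.
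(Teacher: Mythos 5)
First, a point of comparison: the paper does not prove this proposition at all — it is imported as a black box from \cite{DFR20} (Lemma B.10, which in turn follows Tomamichel), so there is no in-paper argument to measure your attempt against; what you have written is a from-scratch reconstruction of the cited lemma. Judged on its own terms, your outline follows the standard route (fix the optimizer $\sigma_B^\star$, truncate spectrally, bound the discarded weight by a Markov-type comparison with $\tr[X^\alpha]$, finish with gentle measurement), but the pivotal step is wrong as written. If $P$ is the spectral projector of $Y:=\sigma'^{-1/2}\tau\sigma'^{-1/2}$ onto eigenvalues at most $2^{-\lambda}$, the construction gives $PYP\le 2^{-\lambda}P$, and conjugating by $\sigma'^{1/2}$ yields $G\tau G^\dagger \le 2^{-\lambda}\,\sigma'^{1/2}P\sigma'^{1/2}\le 2^{-\lambda}\sigma'$ for the \emph{oblique} operator $G=\sigma'^{1/2}P\sigma'^{-1/2}$; it does not give $P\tau P\le 2^{-\lambda}\sigma'$. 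Since $P$ and $\sigma'$ do not commute, the subspace on which you actually control $\tau$ against $\sigma'$ is $\sigma'^{-1/2}\mathrm{range}(P)$, not $\mathrm{range}(P)$, and the claimed inequality for $P\tau P$ is false in general, so your min-entropy certificate collapses. If you repair it by smoothing with $G\tau G^\dagger$ instead, the gentle-measurement lemma no longer applies as stated, because $G$ is not a contraction ($\|G\|_{\op}$ blows up when $\sigma'$ is ill-conditioned), and the closeness bound then requires a genuinely different argument — this is precisely where the published proofs spend their effort.

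The second gap is the one you flag yourself but defer: your projector is built from $\sigma'^{-1/2}\tau\sigma'^{-1/2}$ while the quantity you control, $\tr[X^\alpha]$ with $X=\sigma'^{(1-\alpha)/(2\alpha)}\tau\,\sigma'^{(1-\alpha)/(2\alpha)}$, involves a differently twisted operator with a different spectrum; the Markov step $\tr(Q\tau)=O(\epsilon^2)$ is therefore not a routine estimate (it works verbatim only in the commuting case), and gesturing at Araki--Lieb--Thirring does not close it. Your worry about constants, on the other hand, is harmless in the direction needed: purified distance dominates the paper's trace distance, so the purified-distance statement of \cite{DFR20} directly implies the trace-distance-smoothed version stated here with the same $\log(2/\epsilon^2)$. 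Since the proposition is used in this paper purely as an imported lemma, the right course is to cite it as the paper does; if you want a self-contained proof, follow the argument relating $D_{\max}^{\epsilon}$ to the sandwiched $\tilde D_\alpha$ in Tomamichel's monograph, where both of the issues above are handled explicitly.
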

To account for the event $\Omega$, we follow the ideas from \cite{DFR20}: \
First, we introduce systems $D_1,\ldots,D_m$ and normalized states $\{\tau(z):z\in\mathcal Z\}$ such that 
\begin{align}\label{eq:hd}
  H(D_i)_{\tau(z)}=\bar g - f(G_z), 
\end{align}
where $G_z:=\bra{z}G\ket{z}$ and CPTP maps $\D_i:\CPTP(Z_i,Z_iD_i)$, where 
\begin{align}
  \D_i(\proj{z}) := \proj{z} \otimes \tau(z)_{D_i}.
\end{align}
Also we define $\bar\M_i:=\D_i\circ\M_i$. \
This allows us to apply the following proposition due to Metger, Fawzi, Sutter, and Renner~\cite{MFSR22}.
\begin{proposition}[{\cite[Lemma~4.5]{MFSR22}}]\label{prp:alpha-up}
  For $\alpha>1$ and normalized state $\rho$,
  \begin{align}\label{eq:alpha-up}
    H_\alpha^\uparrow(Z^m|E)_{\M_m\circ\ldots\circ\M_1(\rho)|\Omega}
    \geq H_\alpha^\uparrow(Z^mD^m|E)_{\M_m\circ\ldots\circ\M_1(\rho)|\Omega}
    - \max_{z\in\Omega} H_\alpha(D^m)_{\bar\M_m\circ\ldots\circ\bar\M_1(\rho)_{z}}.
  \end{align}
\end{proposition}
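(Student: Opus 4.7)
The inequality is an instance of a chain-rule-type bound for the sandwiched R\'enyi conditional entropy due to Metger, Fawzi, Sutter, and Renner~\cite{MFSR22}, and I would follow their template. The key structural observation is that after applying $\bar\M_i = \D_i\circ\M_i$ in sequence and conditioning on the classical event $\Omega\subseteq\Z^m$ on the registers $Z^m$, the resulting state on $Z^m D^m E$ takes the simple product form
\begin{align}
\sigma_{Z^m D^m E} \;=\; \sum_{z\in\Omega} q(z)\,\proj{z}_{Z^m}\otimes \tau(z_1)_{D_1}\otimes\cdots\otimes\tau(z_m)_{D_m}\otimes \sigma_z^E,
\end{align}
where $q(z)=\Pr_\rho[Z^m=z\mid\Omega]$. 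In particular, conditioned on any classical value $Z^m=z$, the $D$-registers are in a fixed product of the $\tau(z_i)$'s and are independent of $E$.

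The plan then reduces to establishing the following chain-rule inequality for states of this structure: whenever $\sigma_{ZDE}$ is classical on $Z$ with $D$ and $E$ in tensor product conditional on each value $Z=z$,
\begin{align}
H_\alpha^\uparrow(Z|E)_\sigma \;\geq\; H_\alpha^\uparrow(ZD|E)_\sigma \;-\; \max_{z\in\mathrm{supp}(Z)} H_\alpha(D)_{\sigma_{D\mid Z=z}}.
\end{align}
I would prove this by unpacking the definition $H_\alpha^\uparrow(A|B)_\sigma = -\inf_{\tau_B} D_\alpha(\sigma_{AB}\,\|\,\I_A\otimes\tau_B)$ via the sandwiched R\'enyi divergence, noting that the classical-on-$Z$, product-with-$D$ structure lets one evaluate the divergence block-by-block over $z$; the $D$ factor pulls out with a penalty bounded uniformly by $\max_z H_\alpha(D)_{\tau(z)}$, while the residual optimization over $\tau_E$ reproduces $H_\alpha^\uparrow(Z|E)$. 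Applying this to $\sigma$ above and using $H_\alpha(D^m)_{\tau(z_1)\otimes\cdots\otimes\tau(z_m)} = H_\alpha(D^m)_{\bar\M_m\circ\cdots\circ\bar\M_1(\rho)_z}$ (additivity of $H_\alpha$ on tensor products) yields precisely the claimed inequality, since conditioning on $\Omega$ restricts the maximum to $z\in\Omega$.

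\textbf{Main obstacle.} The delicate point is the chain-rule inequality itself: sandwiched R\'enyi entropies do not obey the exact additive chain rule enjoyed by the von Neumann entropy, and the $\sup$ over normalizing states $\tau_E$ in the definition of $H_\alpha^\uparrow$ must be tracked carefully as one passes from $H_\alpha^\uparrow(ZD|E)$ to $H_\alpha^\uparrow(Z|E)$. What makes the separation work with only a $\max_z H_\alpha(D)_{\tau(z)}$ penalty---rather than the trivial $\log|D^m|$ penalty one would get from a dimension bound---is exactly the conditional product structure of $D$ and $E$ given $Z$, together with the additivity of the sandwiched divergence across classical registers. Once this inequality is in place, the rest of the argument is bookkeeping.
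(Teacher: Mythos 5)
The paper does not actually prove this proposition---it imports it verbatim as \cite[Lemma~4.5]{MFSR22}---so the relevant comparison is with that lemma's proof, and your sketch is a correct reconstruction of it. The mechanism you identify is the right one: conditioned on $Z^m=z$ the state on $D^mE$ is $\tau(z_1)\otimes\cdots\otimes\tau(z_m)\otimes\sigma^z_E$, so for any fixed reference state $\sigma_E$ the quantity $\tr\bigl[\bigl((\I\otimes\sigma_E)^{\frac{1-\alpha}{2\alpha}}\,\sigma_{Z^mD^mE}\,(\I\otimes\sigma_E)^{\frac{1-\alpha}{2\alpha}}\bigr)^{\alpha}\bigr]$ splits blockwise over $z\in\Omega$, the $D^m$ factor contributes $\tr[\tau(z_1)^\alpha]\cdots\tr[\tau(z_m)^\alpha]=2^{(1-\alpha)H_\alpha(D^m)_{\tau(z_1)\otimes\cdots\otimes\tau(z_m)}}$, and for $\alpha>1$ this is bounded below by $2^{(1-\alpha)\max_{z\in\Omega}H_\alpha(D^m)}$, which after taking $\frac{1}{\alpha-1}\log$ yields $\tilde D_\alpha(\sigma_{Z^mD^mE}\,\|\,\I\otimes\sigma_E)\geq \tilde D_\alpha(\sigma_{Z^mE}\,\|\,\I\otimes\sigma_E)-\max_{z\in\Omega}H_\alpha(D^m)$. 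Two small remarks: the step you flag as delicate---tracking the supremum over reference states in $H_\alpha^\uparrow$---is in fact immediate, since the divergence inequality holds pointwise in $\sigma_E$ and therefore survives taking suprema of the negated divergences on both sides; and no appeal to additivity of $H_\alpha$ is needed, since conditioned on $z$ the $D^m$ marginal simply \emph{is} the product $\tau(z_1)\otimes\cdots\otimes\tau(z_m)$ (the paper's own Equation for the product structure in Proposition~\ref{prp:alpha-dm}). The restriction of the maximum to $z\in\Omega$ is likewise automatic because conditioning leaves the classical distribution supported on $\Omega$.
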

Let $g_{\min}$ and $g_{\max}$ denote the maximum and minimum of the range of $f$. \ 
Furthermore, let $\bar g:=\frac{1}{2}(g_{\min}+g_{\max})$. \ 
We say a distribution $q$ is induced by samples $z_1,\ldots,z_m$ if 
\begin{align}
  q(z) = \frac{1}{m}|\{i:z_i=z\}|.
\end{align}
For affine $f$ and distribution $q$ induced by the samples $z_1,\ldots,z_m$, 
\begin{align}\label{eq:nabla}
  \left|\bar g - f\left(\Exp_{z\sim q}G_z\right)\right|
  \leq \frac{1}{2}|g_{\max}-g_{\min}|
  \leq \|\nabla f\|_\infty.
\end{align}
We next bound each term on the rhs of \eq{alpha-up} individually.
\begin{proposition}\label{prp:alpha-dm}
  For $\alpha>1$ and every $z=(z_1,\ldots,z_m)\in\mathcal Z^m$,
  \begin{align}
    H_\alpha(D^m)_{\bar\M_m\circ\ldots\circ\bar\M_1(\rho)_{D,z}} \leq m\|\nabla f\|_\infty,
  \end{align}
\end{proposition}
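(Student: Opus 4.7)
The plan is to exploit the product structure of the auxiliary $D$-registers that $\bar\M_i = \D_i \circ \M_i$ attaches. First I would unfold the definition: since $\D_i(\proj{z}) = \proj{z} \otimes \tau(z)_{D_i}$, the channel $\D_i$ simply tensors the gadget state $\tau(z)$ onto a fresh register $D_i$ whenever the $Z_i$ outcome is $z$. Applying $\bar\M_1,\ldots,\bar\M_m$ in sequence and conditioning on the observed string $Z^m = z = (z_1,\ldots,z_m)$, the reduced state on $D_1\cdots D_m$ is therefore the tensor product
\[
\bar\M_m \circ \cdots \circ \bar\M_1(\rho)_{D,z} \;=\; \tau(z_1)\otimes\tau(z_2)\otimes\cdots\otimes\tau(z_m).
\]

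Second, I would invoke the additivity of the $\alpha$-Rényi entropy on product states to reduce the bound to a per-round estimate,
\[
H_\alpha(D^m)_{\tau(z_1)\otimes\cdots\otimes\tau(z_m)} \;=\; \sum_{i=1}^m H_\alpha(D_i)_{\tau(z_i)}.
\]
Since $\alpha > 1$, the Rényi entropy is monotonically non-increasing in $\alpha$, so $H_\alpha(\tau(z_i)) \le H(\tau(z_i))$, and by the defining property \eq{hd} of the gadget states, $H(\tau(z_i)) = \bar g - f(G_{z_i})$. Alternatively, one may take each $\tau(z)$ to be a maximally mixed state on a subspace of log-dimension matching $\bar g - f(G_z)$, which makes all Rényi entropies coincide with the von Neumann entropy and gives equality rather than a monotonicity bound.

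Third, I would apply the affine bound \eq{nabla} specialized to the single sample $z_i$, so that the induced distribution $q$ is the point mass $\delta_{z_i}$ and $\Exp_{z\sim q} G_z = G_{z_i}$. This yields $|\bar g - f(G_{z_i})| \le \|\nabla f\|_\infty$. Summing the per-round contributions then gives the claimed bound
\[
H_\alpha(D^m)_{\bar\M_m\circ\cdots\circ\bar\M_1(\rho)_{D,z}} \;\le\; \sum_{i=1}^m \|\nabla f\|_\infty \;=\; m\,\|\nabla f\|_\infty.
\]

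The main subtlety is the explicit construction of the gadget states $\tau(z)$: the target entropy $\bar g - f(G_z)$ need not be a non-negative integer and can in principle be negative (when $f(G_z)$ exceeds the midpoint $\bar g$ of its range). One handles this by realizing the target as a convex mixture of maximally mixed states of nearby integer dimensions, and, in the regime where $f(G_z) > \bar g$, simply taking $\tau(z)$ pure so that $H_\alpha(\tau(z)) = 0 \le \|\nabla f\|_\infty$. This is the standard EAT gadget construction and preserves the per-round upper bound without affecting any downstream use of the proposition.
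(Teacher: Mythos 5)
Your proposal is correct and follows essentially the same route as the paper's proof: product structure of the $D$-registers conditioned on $z$, additivity plus monotonicity of the R\'enyi entropy in $\alpha$, the entropy value \eq{hd} of each $\tau(z_i)$, and the bound \eq{nabla}. The only (immaterial) difference is that you apply \eq{nabla} termwise with point masses while the paper first averages using affineness of $f$ and then applies \eq{nabla} once; your closing remark on realizing the gadget states is a side point about the construction preceding the proposition, not a gap in this argument.
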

\begin{proof}
Since each $\tau(z_i)_{D_i}$ is determined from $z_i$, the marginal state is product, i.e.,
\begin{align}\label{eq:tau-product}
  \bar\M_m\circ\ldots\circ\bar\M_1(\rho)_{D_1\ldots D_m} = \tau(z_1)_{D_1}\otimes \ldots \otimes \tau(z_m)_{D_m}.
\end{align}
Thus 
\begin{align}\nonumber
  \frac{1}{m} H_\alpha(D^m)_{\bar\M_m\circ\ldots\circ\bar\M_1(\rho)} 
  &= \frac{1}{m}\sum_{i=1}^m H_\alpha(D_i)_{\tau(z_i)} \\\nonumber
  &\leq \frac{1}{m}\sum_{i=1}^m H(D_i)_{\tau(z_i)} \\\nonumber
  &= \frac{1}{m}\sum_{i=1}^m \bar g - f(G_{z_i}) \\\nonumber
  &= \bar g - f\left(\Exp_{z\sim q} G_z\right) \\
  &\leq \|\nabla f\|_\infty.
\end{align}
The first equality holds from \eq{tau-product}.
The first inequality holds by the monotonicity of R\'{e}nyi entropyies in $\alpha$.
The second equality holds from \eq{hd}.
The third equality holds because $f$ is an affine function.
The last inequality holds from \eq{nabla}. 
\end{proof}
The first term of \eq{alpha-up} can be further simplified using the chain rule \thm{chain-3}.
Note that the Markov chain trivially holds since in this paper, we only consider the case where each system $B_i$ in \thm{chain-3} is empty for $i\in[m]$.
\begin{proposition}\label{prp:chain}
  For $\alpha\in(1,\infty)$, %
  \begin{align}\nonumber\label{eq:chain-2}
    &H_\alpha^\uparrow(Z^mD^m|E)_{\sigma|\Omega} \\
    &\qquad\geq 
  \sum_{i=1}^m \inf_{\omega_{i-1}\in\mathrm S(R_{i-1}E\tilde E_{i-1})} H_{\alpha}(Z_iD_i|E\tilde E_{i-1})_{\bar\M_i(\omega_{i-1})} - \frac{\alpha}{\alpha-1}\log\left(\frac{1}{\Pr_{\sigma}[\Omega]}\right),
  \end{align}
  where $\sigma:=\M_m\circ\cdots\circ\M_1(\rho)$ and $\tilde E_{i-1}$ is a system isomorphic to $R_{i-1}E$ (see \dfn{min-tradeoff}).
\end{proposition}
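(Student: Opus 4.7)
The plan is to combine two ingredients: a standard conditioning-penalty bound that controls the loss when we restrict to the event $\Omega$, and then an iterated application of the chain rule (Theorem~\ref{thm:chain-3}) across the $m$ rounds of the process, exploiting the fact that in our setting each $B_i$ system is trivial so the Markov chain condition holds for free.

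First, I would reduce to the unconditioned state. A standard property of the Rényi entropy $H_\alpha^\uparrow$ under sub-normalization (used throughout \cite{DFR20}) says that for any event $\Omega$ and any $\alpha>1$,
\begin{align}
H_\alpha^\uparrow(Z^mD^m|E)_{\sigma|\Omega} \geq H_\alpha^\uparrow(Z^mD^m|E)_{\sigma_\Omega} \geq H_\alpha^\uparrow(Z^mD^m|E)_\sigma - \frac{\alpha}{\alpha-1}\log\frac{1}{\Pr_\sigma[\Omega]},
\end{align}
where $\sigma_\Omega$ is the sub-normalized restriction of $\sigma$ to outcomes in $\Omega$. This absorbs the $\Pr_\sigma[\Omega]^{-1}$ factor on the right-hand side of \eq{chain-2}, so it suffices to prove the unconditional inequality
\begin{align}
H_\alpha^\uparrow(Z^mD^m|E)_\sigma \geq \sum_{i=1}^m \inf_{\omega_{i-1}\in\mathbb S(R_{i-1}E\tilde E_{i-1})} H_\alpha(Z_iD_i|E\tilde E_{i-1})_{\bar\M_i(\omega_{i-1})}.
\end{align}

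Next I would peel off rounds one at a time. For the final round, note that $\sigma = (\bar\M_m \otimes \Id_E)(\sigma^{(m-1)})$, where $\sigma^{(m-1)} = \bar\M_{m-1}\circ\cdots\circ\bar\M_1(\rho)$ carries the register $R_{m-1}$ that is fed into $\bar\M_m$. Since $H_\alpha^\uparrow \geq H_\alpha$, and since in our setting there is no $B$-system, the Markov-chain hypothesis of Theorem~\ref{thm:chain-3} is vacuous; applying that theorem (identifying $A_1 \leftrightarrow Z^{m-1}D^{m-1}$, $A_2 \leftrightarrow Z_mD_m$, $R \leftrightarrow R_{m-1}$, and taking $\tilde E_{m-1}$ to purify $R_{m-1}E$) yields
\begin{align}
H_\alpha(Z^mD^m|E)_\sigma \geq H_\alpha(Z^{m-1}D^{m-1}|E)_{\sigma^{(m-1)}} + \inf_{\omega_{m-1}} H_\alpha(Z_mD_m|E\tilde E_{m-1})_{\bar\M_m(\omega_{m-1})}.
\end{align}
Iterating this $m$ times gives the desired sum. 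To thread $H_\alpha^\uparrow$ through the recursion instead of $H_\alpha$, I would pass to the optimized-sigma representation of $H_\alpha^\uparrow$ at the outer step only, bounding $H_\alpha^\uparrow \ge H_\alpha$ once and then carrying the $H_\alpha$ recursion through the interior, exactly as in \cite[Corollary~3.5]{DFR20}; alternatively, Theorem~\ref{thm:chain-3} applies directly to $H_\alpha^\uparrow$ because its proof in \cite{DFR20} goes through the same sandwich representation.

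The main obstacle I anticipate is bookkeeping rather than a genuinely new idea: the registers $\tilde E_{i-1}$ in the infima are defined as purifications of $R_{i-1}E$ that change at every round, so one must check that the round-$i$ minimization over $\omega_{i-1}$ is compatible with the state actually produced by $\bar\M_{i-1}\circ\cdots\circ\bar\M_1(\rho)$. This is exactly what the ``sup/inf over purifications'' form of Theorem~\ref{thm:chain-3} handles: because any state on $R_{i-1}$ is attainable as the marginal of some pure $\omega_{i-1}$ on $R_{i-1}E\tilde E_{i-1}$, restricting to such $\omega$ loses nothing, and the accumulated inequality telescopes cleanly.
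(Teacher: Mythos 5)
Your proposal is correct and takes essentially the same route as the paper: the conditioning-on-$\Omega$ penalty of $\frac{\alpha}{\alpha-1}\log\left(1/\Pr_{\sigma}[\Omega]\right)$ (Lemma~B.5 of \cite{DFR20}), the bound $H_\alpha^\uparrow\geq H_\alpha$, and an iterated application of \thm{chain-3} with the Markov condition holding trivially because each $B_i$ is empty. The only cosmetic differences are the order of the first two steps (the paper drops to $H_\alpha$ before paying the conditioning penalty) and your intermediate comparison with the sub-normalized state $\sigma_\Omega$, whose direction depends on the normalization convention, but the endpoint bound you invoke is the standard one, so the argument goes through.
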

\begin{proof}
  By direct calculation,
  \begin{align}\nonumber
    H_\alpha^\uparrow(Z^mD^m|E)_{\M_m\circ\ldots\circ\M_1(\rho)|\Omega}
    &\geq H_\alpha(Z^mD^m|E)_{\M_m\circ\ldots\circ\M_1(\rho)|\Omega} \\
    &\geq H_\alpha(Z^mD^m|E)_{\M_m\circ\ldots\circ\M_1(\rho)} - \frac{\alpha}{\alpha-1}\log\left(\frac{1}{\Pr_{\sigma}[\Omega]}\right).
  \end{align}
  The first inequality holds $H_\alpha^\uparrow(A|B)_\rho\geq H_\alpha(A|B)_\rho$ for any finite-dimensional Hilbert spaces $A,B$ and bipartite state $\rho$. \ 
  The second inequality holds from \cite[Lemma~B.5]{DFR20}.

  Next, we apply the chain rule \thm{chain-3} on the first term and have
  \begin{align}
    H_\alpha(Z^mD^m|E)_{\M_m\circ\ldots\circ\M_1(\rho)}
    \geq 
  \sum_{i=1}^m \inf_{\omega_{i-1}\in\mathrm S(R_{i-1}E\tilde E_{i-1})} H_{\alpha}(Z_iD_i|E\tilde E_{i-1})_{\bar\M_i(\omega_{i-1})},
  \end{align}
  where we introduce an purifying system $\tilde E_i$ for each $i$ such that $\omega_{i-1}\in\mathrm S(R_{i-1}E\tilde E_{i-1})$ is a pure state.
\end{proof}

Now we bound each term in \eq{chain-2}.
In fact, the bound will be depend on the dimension of $D_i$. \
Recall that each $D_i$ is introduce for analysis purposes, and we can choose a sufficient large dimension $d_{D_i}=d_D:=\lceil 2^{\|\nabla f\|_\infty}\rceil$ such that \eq{hd} can be satisfied for every $i\in[m]$. \
This implies the following proposition.
\begin{proposition}\label{prp:alpha-zdee}
  For $\alpha\in(1,1+\log(2d_Zd_D+1))$, 
  \begin{align}
    H_\alpha(Z_iD_i|E\tilde E_i)_{\bar\M_i(\omega_{i-1})}
    &\geq
    H(Z_iD_i|E\tilde E_i)_{\bar\M_i(\omega_{i-1})} - (\alpha-1)(\|\nabla f\|_\infty+\log(2d_Z+1))^2.
  \end{align}
\end{proposition}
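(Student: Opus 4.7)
The plan is to invoke a standard continuity bound for the conditional sandwiched R\'enyi entropy $H_\alpha$ near $\alpha = 1$, in the spirit of the Lemma~B.9 of \cite{DFR20} that the authors have been using throughout this section. Such a bound reads: for any bipartite state $\rho_{AB}$, and for every $\alpha$ in a small neighborhood of $1$ whose width is controlled by $1/\log(2 d_A + 1)$,
\begin{align}
  H_\alpha(A|B)_\rho \geq H(A|B)_\rho - (\alpha - 1)\bigl(\log(2 d_A + 1)\bigr)^2.
\end{align}
I would first instantiate this inequality with $A := Z_i D_i$ and $B := E\tilde E_i$, evaluated on the state $\bar\M_i(\omega_{i-1})$. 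The allowed range of $\alpha$ then becomes the one written in the proposition statement, since it is governed by the same $\log(2 d_{Z_iD_i}+1) = \log(2 d_Z d_D + 1)$ factor.

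Next, I would unpack the dimension $d_A = d_Z \cdot d_D$ in terms of the quantity $\|\nabla f\|_\infty$. Recall from the construction immediately before Proposition~\ref{prp:chain} that each auxiliary system $D_i$ was chosen with dimension $d_D = \lceil 2^{\|\nabla f\|_\infty}\rceil$, which is the smallest dimension sufficient to realize the entropies $H(D_i)_{\tau(z)} = \bar g - f(G_z)$ from \eq{hd}, bounded by $\|\nabla f\|_\infty$ via \eq{nabla}. Consequently $\log d_D \leq \|\nabla f\|_\infty + O(1)$, so
\begin{align}
  \log(2 d_Z d_D + 1) \;\leq\; \log(2 d_Z + 1) + \log d_D \;\leq\; \log(2 d_Z + 1) + \|\nabla f\|_\infty + O(1).
\end{align}
Substituting this upper bound for $\log(2 d_A + 1)$ into the continuity inequality yields exactly the stated form of the lower bound, with the harmless $O(1)$ additive slack absorbed into the $\|\nabla f\|_\infty$ contribution (which is the dominant term in any regime where the min-tradeoff function is nontrivial).

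The main obstacle here is essentially nonexistent: the heavy lifting is done by the DFR continuity lemma, which is already stated in the reference and provides a first-order Taylor expansion of $H_\alpha$ around $\alpha = 1$ with coefficient controlled by a log-dimension factor. The only substantive step is the bookkeeping that turns the bare $\log(2 d_Z d_D+1)$ into $\|\nabla f\|_\infty + \log(2 d_Z + 1)$, which is the form needed to combine with Propositions~\ref{prp:alpha-dm} and \ref{prp:chain} so that the optimization over $\alpha$ later on yields the desired $O(\sqrt{m})$ correction in the final entropy accumulation statement.
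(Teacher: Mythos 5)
Your proposal matches the paper's proof: the paper likewise applies Lemma~B.9 of \cite{DFR20} to the pair $A = Z_iD_i$, $B = E\tilde E_i$ on the state $\bar\M_i(\omega_{i-1})$, obtaining the penalty $(\alpha-1)\log^2(2d_Zd_D+1)$, and then uses the choice $d_D = \lceil 2^{\|\nabla f\|_\infty}\rceil$ to bound $\log(2d_Zd_D+1) \leq \|\nabla f\|_\infty + \log(2d_Z+1)$, exactly your bookkeeping step. The only difference is cosmetic: the paper states this final inequality without the $O(1)$ slack you mention, which is immaterial.
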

\begin{proof}
  By \cite[Lemma~B.9]{DFR20}, 
  \begin{align}\nonumber
    H_\alpha(Z_iD_i|E\tilde E_i)_{\bar\M_i(\omega_{i-1})}
    &\geq 
    H(Z_iD_i|E\tilde E_i)_{\bar\M_i(\omega_{i-1})} - (\alpha-1)\log^2(2d_Zd_D+1) \\
    &\geq
    H(Z_iD_i|E\tilde E_i)_{\bar\M_i(\omega_{i-1})} - (\alpha-1)(\|\nabla f\|_\infty+\log(2d_Z+1))^2.
  \end{align}
  
\end{proof}

Combining \prp{chain} and \prp{alpha-zdee}, we have the following corollary.
\begin{corollary}\label{cor:zde}
  For $\alpha\in(1,1+2/V)$,
  \begin{align}\nonumber
    &H_\alpha^\uparrow(Z^mD^m|E)_{\M_m\circ\ldots\circ\M_1(\rho)|\Omega} \\
    &\qquad\geq 
  \sum_{i=1}^m \inf_{\omega_{i-1}\in\mathrm S(R_{i-1}E\tilde E_{i-1})} H(Z_iD_i|E\tilde E_i)_{\bar\M_i(\omega_{i-1})} - m\left(\frac{\alpha-1}{4}\right)V^2 - \frac{\alpha}{\alpha-1}\log\left(\frac{1}{\Pr_{\sigma}[\Omega]}\right),
  \end{align}
  where $V=2(\log(2d_Z+1)+\|\nabla f\|_\infty)$.
\end{corollary}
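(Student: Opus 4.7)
The proof is a direct chaining of the two preceding propositions, followed by the arithmetic identity that packages the correction terms into the single quantity $V$. There is no essentially new content here; this corollary is a convenient packaging of Propositions~\ref{prp:chain} and~\ref{prp:alpha-zdee}.

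First I would invoke Proposition~\ref{prp:chain} with $\sigma=\M_m\circ\cdots\circ\M_1(\rho)$ to obtain
$$H_\alpha^\uparrow(Z^mD^m|E)_{\sigma|\Omega}\;\geq\;\sum_{i=1}^m\inf_{\omega_{i-1}\in\mathrm S(R_{i-1}E\tilde E_{i-1})} H_\alpha(Z_iD_i|E\tilde E_{i-1})_{\bar\M_i(\omega_{i-1})}-\frac{\alpha}{\alpha-1}\log\frac{1}{\Pr_\sigma[\Omega]}.$$
Next, for each fixed $i$ and each candidate state $\omega_{i-1}$ in the infimum, I would apply Proposition~\ref{prp:alpha-zdee} to convert the R\'enyi quantity into a von Neumann entropy, at the cost of an additive correction:
$$H_\alpha(Z_iD_i|E\tilde E_i)_{\bar\M_i(\omega_{i-1})}\;\geq\;H(Z_iD_i|E\tilde E_i)_{\bar\M_i(\omega_{i-1})}-(\alpha-1)\bigl(\log(2d_Z+1)+\|\nabla f\|_\infty\bigr)^2.$$
Crucially, the correction term depends only on $\alpha$, $d_Z$, and $\|\nabla f\|_\infty$, not on $\omega_{i-1}$, so it passes cleanly through the infimum over $\omega_{i-1}$.

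Finally I would use the definition $V=2(\log(2d_Z+1)+\|\nabla f\|_\infty)$ to rewrite $(\log(2d_Z+1)+\|\nabla f\|_\infty)^2=V^2/4$, so that summing the per-round corrections across the $m$ rounds contributes exactly $m(\alpha-1)V^2/4$, yielding the stated bound. The admissible range $\alpha\in(1,1+2/V)$ is inherited from the validity interval in Proposition~\ref{prp:alpha-zdee}: with $d_D=\lceil 2^{\|\nabla f\|_\infty}\rceil$ chosen so that \eqref{eq:hd} is realizable, we have $\log(2d_Zd_D+1)\le V/2$, and one needs $\alpha-1$ small compared to $1/\log(2d_Zd_D+1)$ for the Taylor-type bound to remain meaningful.

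I expect no genuine obstacle. The only point of mild care is the interchange of the infimum with the additive correction---legitimate because the correction is state-independent---and a cosmetic check that the systems $\tilde E_{i-1}$ and $\tilde E_i$ indexing the purifications match the conventions of the two input propositions. Everything else is direct substitution, so the corollary is really a bookkeeping step rather than a new argument.
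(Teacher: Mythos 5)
Your proposal is correct and matches the paper's proof: the paper likewise obtains the corollary by chaining Proposition~\ref{prp:chain} with Proposition~\ref{prp:alpha-zdee}, with the state-independent correction term $(\alpha-1)(\log(2d_Z+1)+\|\nabla f\|_\infty)^2 = (\alpha-1)V^2/4$ passing through the infimum and summing to $m(\alpha-1)V^2/4$. Nothing further is needed.
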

\begin{proof}
  By direct calculation, 
  \begin{align}\nonumber
    &H_\alpha^\uparrow(Z^mD^m|E)_{\sigma|\Omega} \\\nonumber
    &\qquad\geq 
  \sum_{i=1}^m \inf_{\omega_{i-1}\in\mathrm S(R_{i-1}E\tilde E_{i-1})} H_{\alpha}(Z_iD_i|E\tilde E_{i-1})_{\bar\M_i(\omega_{i-1})} - \frac{\alpha}{\alpha-1}\log\left(\frac{1}{\Pr_{\sigma}[\Omega]}\right) \\
    &\qquad\geq
  \sum_{i=1}^m \inf_{\omega_{i-1}\in\mathrm S(R_{i-1}E\tilde E_{i-1})} H(Z_iD_i|E\tilde E_i)_{\bar\M_i(\omega_{i-1})} - m\left(\frac{\alpha-1}{4}\right)V^2 - \frac{\alpha}{\alpha-1}\log\left(\frac{1}{\Pr_{\sigma}[\Omega]}\right).
  \end{align}
  The first inequality holds from \prp{chain}. \
  The second inequality holds from \prp{alpha-zdee}.
\end{proof}

Next we simplify the first term in \cor{zde}.
\begin{proposition}\label{prp:zdee}
  \begin{align}
    H(Z_iD_i|E\tilde E_i)_{\bar\M_i(\omega_{i-1})}
    \geq \bar g.
  \end{align}
\end{proposition}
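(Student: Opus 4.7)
The plan is to apply the chain rule for von Neumann entropy
\begin{align}
H(Z_iD_i|E\tilde E_i)_{\bar\M_i(\omega_{i-1})} = H(Z_i|E\tilde E_i)_{\bar\M_i(\omega_{i-1})} + H(D_i|Z_iE\tilde E_i)_{\bar\M_i(\omega_{i-1})},
\end{align}
and bound each summand using the defining property \eq{hd} of $\tau(z)$, the affinity of the min-tradeoff function $f$, and \dfn{min-tradeoff}.

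First I would handle the second term. By construction $\bar\M_i = \D_i\circ\M_i$, so $\bar\M_i(\omega_{i-1})$ is classical on $Z_i$ with distribution $p(z) = \bra{z}\rho_{A_i}\ket{z}$ for $\rho_{A_i}$ the $A_i$-marginal of $\M_i(\omega_{i-1})$, and conditioned on $Z_i=z$ the register $D_i$ carries the fresh state $\tau(z)$, uncorrelated with $E\tilde E_i$. Hence
\begin{align}
H(D_i|Z_iE\tilde E_i)_{\bar\M_i(\omega_{i-1})} = \sum_z p(z)\, H(D_i)_{\tau(z)} = \sum_z p(z)\bigl(\bar g - f(G_z)\bigr)
\end{align}
using \eq{hd}. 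Because $f$ is affine, $\sum_z p(z)\, f(G_z) = f\!\left(\sum_z p(z)\, G_z\right) = f(\tr(G\rho_{A_i})) = f(s)$, where $s := \tr(G\rho_{A_i})$. Therefore this term equals $\bar g - f(s)$.

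For the first term, note that adjoining the classically-determined register $D_i$ cannot change $H(Z_i|E\tilde E_i)$, so $H(Z_i|E\tilde E_i)_{\bar\M_i(\omega_{i-1})} = H(A_i|E\tilde E_{i-1})_{\M_i(\omega_{i-1})}$ (identifying $Z_i$ with the classical outcome on $A_i$ and absorbing the post-measurement side system into the purification $\tilde E_i$). Since $\M_i(\omega_{i-1}) \in \Sigma_i(G,s)$ by construction of $s$, \dfn{min-tradeoff} gives
\begin{align}
H(A_i|E\tilde E_{i-1})_{\M_i(\omega_{i-1})} \geq f(s).
\end{align}
Adding the two bounds yields $H(Z_iD_i|E\tilde E_i)_{\bar\M_i(\omega_{i-1})} \geq f(s) + \bar g - f(s) = \bar g$, as desired.

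The only real subtlety is the book-keeping around the purifying system (matching $\tilde E_{i-1}$ used in \dfn{min-tradeoff} with the $\tilde E_i$ that appears in the statement) and verifying that $\D_i$, being a CPTP map that only touches $Z_i$ and a fresh ancilla $D_i$, preserves the conditional entropy $H(Z_i|E\tilde E_i)$ and produces the claimed product structure conditioned on $Z_i=z$. Once this identification is made, every remaining step is an immediate consequence of the chain rule, the explicit form of $\tau(z)$, and affinity of $f$; no non-trivial inequality beyond the min-tradeoff hypothesis itself is needed.
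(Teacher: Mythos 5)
Your proposal is correct and follows essentially the same route as the paper: chain rule on $H(Z_iD_i|E\tilde E_{i-1})$, evaluation of the $D_i$ term via \eq{hd} and affinity of $f$ to get $\bar g - f(\Exp_{z}G_z)$, and the min-tradeoff hypothesis (\dfn{min-tradeoff}) to lower bound the $Z_i$ term by $f(\Exp_z G_z)$. The only cosmetic difference is your explicit book-keeping of the $A_i$/$Z_i$ and $\tilde E_{i-1}$/$\tilde E_i$ identifications, which the paper handles implicitly.
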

\begin{proof}
  By direct calculation,
  \begin{align}\nonumber
    H(Z_iD_i|E\tilde E_{i-1})_{\bar\M_i(\omega_{i-1})}
    &=
    H(Z_i|E\tilde E_{i-1})_{\bar\M_i(\omega_{i-1})}
    + H(D_i|Z_iE\tilde E_{i-1})_{\bar\M_i(\omega_{i-1})} \\\nonumber
    &=
    H(Z_i|E\tilde E_{i-1})_{\bar\M_i(\omega_{i-1})}
    + H(D_i|Z_i)_{\bar\M_i(\omega_{i-1})} \\\nonumber
    &=
    H(Z_i|E\tilde E_{i-1})_{\bar\M_i(\omega_{i-1})}
    + \Exp_{z\sim r} H(D_i)_{\tau(z)} \\\nonumber
    &=
    H(Z_i|E\tilde E_{i-1})_{\bar\M_i(\omega_{i-1})}
    + \bar g - \Exp_{z\sim r} f(G_z) \\\nonumber
    &=
    H(Z_i|E\tilde E_{i-1})_{\bar\M_i(\omega_{i-1})}
    + \bar g - f\left(\Exp_{z\sim r} G_z\right) \\
    &\geq
    \bar g,
  \end{align}
  where $r$ is the distribution obtained by taking the marginal of $\bar\M_i(\omega_{i-1})$ on $Z_i$. \
  The first equality holds by the chain rule of von Neumann entropy. \
  The second holds because the marginal state $\bar\M_i(\omega_{i-1})_{D_iZ_i}=\sum_z\tau_i(z)_{D_i}\otimes r(z)\proj{z}_{Z_i}$. \
  The third holds by the definition of conditional von Neumann entropy and that of $r$. \
  The fourth holds from \eq{hd}. \
  The fifth equality holds since $f$ is affine.
  The inequality holds since by definition, $H(Z_i|E\tilde E_{i-1})_{\bar\M_i(\omega_{i-1})}\geq f\left(\Exp_{z\sim r}G_z\right)$, where $r=\bar\M_i(\omega_{i-1})_{Z_i}$.
\end{proof}
Combining \prp{alpha-up}, \prp{alpha-dm}, \cor{zde}, and \prp{zdee},
we have the following corollary.
\begin{corollary}\label{cor:zme}
  For $\alpha\in(1,1+2/V)$,
  \begin{align}
    H_\alpha^\uparrow(Z^m|E)_{\M_m\circ\ldots\circ\M_1(\rho)|\Omega}
    &\geq 
      m f(s)- m\left(\frac{\alpha-1}{4}\right)V^2 - \frac{\alpha}{\alpha-1}\log\left(\frac{1}{\Pr_\sigma[\Omega]}\right),
  \end{align}
  where $\sigma:=\M_m\circ\cdots\circ\M_1(\rho)$, and $s:=\frac{1}{m}\sum_{i=1}^m G_{z_i}$ is the score evaluated by taking the average of $G_{z_i}$ from each round $i\in[m]$.
\end{corollary}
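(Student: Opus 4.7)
The plan is to chain the four ingredients already assembled—\prp{alpha-up}, \cor{zde}, \prp{zdee}, and the calculation inside \prp{alpha-dm}—into a single inequality, being careful to use the fact that conditioning on $\Omega$ pins down the empirical score $s=\frac{1}{m}\sum_i G_{z_i}$. The first step is to apply \prp{alpha-up} to descend from $H_\alpha^\uparrow(Z^m|E)_{\sigma|\Omega}$ to the enlarged entropy $H_\alpha^\uparrow(Z^mD^m|E)_{\sigma|\Omega}$ minus a correction $\max_{z\in\Omega} H_\alpha(D^m)_{\bar\M_m\circ\cdots\circ\bar\M_1(\rho)_z}$. This splits the task into a positive and a negative contribution, both involving the auxiliary registers $D_i$ whose marginal entropies were engineered to equal $\bar g - f(G_{z_i})$.

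Next I would lower-bound the enlarged entropy via \cor{zde}, substituting \prp{zdee} into every summand to obtain
\[
H_\alpha^\uparrow(Z^mD^m|E)_{\sigma|\Omega}
\;\geq\; m\bar g \;-\; m\tfrac{\alpha-1}{4}V^2 \;-\; \tfrac{\alpha}{\alpha-1}\log\bigl(1/\Pr_\sigma[\Omega]\bigr).
\]
For the correction term, I would note that inside $\Omega$ the sequence $z=(z_1,\ldots,z_m)$ has, by definition, empirical score exactly $s$. Recycling the computation in the proof of \prp{alpha-dm}—the product structure of the $D^m$-marginal, the inequality $H_\alpha\leq H$, and the affinity of $f$—produces the sharper bound
\[
H_\alpha(D^m)_{\bar\M_m\circ\cdots\circ\bar\M_1(\rho)_z}
\;\leq\; \sum_{i=1}^m \bigl(\bar g - f(G_{z_i})\bigr)
\;=\; m\bar g - m f(s)
\]
for every $z\in\Omega$. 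Subtracting this from the enlarged bound, the two copies of $m\bar g$ cancel and the claimed inequality falls out.

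The only conceptually delicate point is this last cancellation: the statement of \prp{alpha-dm} only records the worst-case estimate $m\|\nabla f\|_\infty$, but what is actually needed is the intermediate quantity $m\bar g - m f(s)$ appearing in its proof. This refinement is what lets the final bound depend on the \emph{observed} score $s$ rather than on any global constant determined by the range of $f$, and it is precisely the feature that turns \cor{zme} into a useful single-shot-to-many-shot reduction for the subsequent EAT-based entropy accumulation arguments. Once this tighter form is in hand, the remainder is routine bookkeeping of the $(\alpha-1)V^2/4$ and $\log(1/\Pr_\sigma[\Omega])$ error terms.
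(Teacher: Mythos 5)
Your argument is correct in substance but takes a genuinely different route from the paper's own write-up. The paper never re-opens \prp{alpha-dm}: it applies \prp{alpha-up} together with the blanket estimate $\max_z H_\alpha(D^m)\le m\|\nabla f\|_\infty$, lower-bounds $H_\alpha^\uparrow(Z^mD^m|E)_{\sigma|\Omega}$ by $m\bar g - m\left(\frac{\alpha-1}{4}\right)V^2-\frac{\alpha}{\alpha-1}\log\left(1/\Pr_\sigma[\Omega]\right)$ via \cor{zde} and \prp{zdee}, and then converts $m\bar g-m\|\nabla f\|_\infty$ into $mf(s)$ by invoking $f(s)\le\bar g-\|\nabla f\|_\infty$. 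You instead keep the score-dependent intermediate estimate from inside the proof of \prp{alpha-dm}, namely $H_\alpha(D^m)_{\bar\M_m\circ\cdots\circ\bar\M_1(\rho)_z}\le m\bar g-mf(s_z)$ (product structure of the $D^m$ marginal, $H_\alpha\le H$, affinity of $f$), and restrict the maximum in \prp{alpha-up} to $z\in\Omega$, so the two copies of $m\bar g$ cancel and the observed score enters directly. Your version is the sharper and, in my view, the more faithful one: since $\bar g-\|\nabla f\|_\infty\le g_{\min}$ for an affine $f$ restricted to $[0,2]$, the paper's final conversion step is tight only when $f(s)$ sits at the bottom of the range of $f$, whereas your bound genuinely tracks the observed score, which is what the downstream entropy-accumulation statements need.

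The one point you should tighten is the assertion that every $z\in\Omega$ has empirical score \emph{exactly} $s$. In the protocols that invoke this corollary, $\Omega$ is a threshold event (the average of $p_{C_i}(z_i)$ exceeds some value), not a level set, so what your cancellation actually requires is $f(s_z)\ge f(s)$ for all $z\in\Omega$, equivalently $\max_{z\in\Omega}H_\alpha(D^m)_z\le m\bar g-m\min_{z\in\Omega}f(s_z)$. This follows because the min-tradeoff function is affine and nondecreasing in the score and $s$ is the guaranteed lower bound on the empirical score over $\Omega$; with that wording (or with $f(s)$ read as $\min_{z\in\Omega}f(s_z)$), your derivation goes through and yields the stated bound.
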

\begin{proof}
  By direct calculation,
  \begin{align}\nonumber
    H_\alpha^\uparrow(Z^m|E)_{\M_m\circ\ldots\circ\M_1(\rho)|\Omega}
    &\geq H_\alpha^\uparrow(Z^mD^m|E)_{\M_m\circ\ldots\circ\M_1(\rho)|\Omega}
    - m\|\nabla f\|_\infty \\\nonumber
    &\geq 
    m\bar g - m\left(\frac{\alpha-1}{4}\right)V^2 - \frac{\alpha}{\alpha-1}\log\left(\frac{1}{\Pr_\sigma[\Omega]}\right) - m\|\nabla f\|_\infty \\
    &\geq 
      m f(s)- m\left(\frac{\alpha-1}{4}\right)V^2 - \frac{\alpha}{\alpha-1}\log\left(\frac{1}{\Pr_\sigma[\Omega]}\right).
  \end{align}
  The first inequality holds from \prp{alpha-up} and \prp{alpha-dm}. \
  The second holds from \cor{zde}. \
  The third holds because $f(s)\leq \bar g-\|\nabla f\|_{\infty}$.
\end{proof}

Finally, by \prp{min-alpha} and \cor{zme}, we have the following corollary.
\begin{corollary}\label{cor:min}
  For $\alpha\in(1,1+2/V)$, 
  \begin{align}
    H_{\min}^\epsilon(Z^m|E)_{\M_m\circ\ldots\circ\M_1(\rho)|\Omega}
    \geq m f(s) - m\left(\frac{\alpha-1}{4}\right)V^2 - \frac{1}{\alpha-1}\log\left(\frac{2}{\epsilon^2\Pr_\sigma(\Omega)^2}\right).
  \end{align}
\end{corollary}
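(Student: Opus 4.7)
The plan is to chain together the two technical results developed just above---Proposition~\ref{prp:min-alpha}, which relates smooth min-entropy to the sandwiched R\'enyi quantity $H_\alpha^\uparrow$, and Corollary~\ref{cor:zme}, which gives the multi-round lower bound on $H_\alpha^\uparrow(Z^m|E)_{\sigma|\Omega}$ in terms of $m f(s)$, a variance-like correction, and a conditioning penalty involving $\Pr_\sigma[\Omega]$. All the heavy lifting (the chain rule in \thm{chain-3}, the $D_i$-system trick, and the R\'enyi-to-von-Neumann approximation \prp{alpha-zdee}) is already packaged into \cor{zme}, so what remains is essentially a bookkeeping step plus one cosmetic bound on $\alpha$.

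Concretely, I would first substitute \cor{zme} into \prp{min-alpha} to obtain
\[
H_{\min}^\epsilon(Z^m|E)_{\sigma|\Omega} \;\ge\; m f(s) \,-\, m\tfrac{\alpha-1}{4}V^2 \,-\, \tfrac{\alpha}{\alpha-1}\log\tfrac{1}{\Pr_\sigma[\Omega]} \,-\, \tfrac{1}{\alpha-1}\log\tfrac{2}{\epsilon^2},
\]
where $\sigma := \M_m\circ\cdots\circ\M_1(\rho)$. The only mismatch with the target bound is that the first conditioning term carries a factor $\alpha/(\alpha-1)$ rather than the desired $1/(\alpha-1)$ applied to $\log(1/\Pr_\sigma[\Omega]^2)$.

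Next, I would use the hypothesis $\alpha \in (1,1+2/V)$ together with the explicit form $V = 2(\log(2d_Z+1)+\|\nabla f\|_\infty)$ from \cor{zde}. Since $d_Z \ge 2$ already gives $V > 2$, we have $\alpha-1 < 2/V < 1$, hence $\alpha < 2$ and in particular $\alpha/(\alpha-1) \le 2/(\alpha-1)$. Therefore
\[
\tfrac{\alpha}{\alpha-1}\log\tfrac{1}{\Pr_\sigma[\Omega]} \;\le\; \tfrac{1}{\alpha-1}\log\tfrac{1}{\Pr_\sigma[\Omega]^2},
\]
and combining this with the $\epsilon$-term under a common $1/(\alpha-1)$ factor gives exactly $\tfrac{1}{\alpha-1}\log\!\bigl(\tfrac{2}{\epsilon^2 \Pr_\sigma[\Omega]^2}\bigr)$, which is the stated form.

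I do not expect any genuine obstacle here: the entire section has been building toward this assembly, and the remaining work is one substitution plus one elementary inequality on $\alpha$. The one subtlety worth double-checking is the boundary case of $V$---one must confirm that the range of admissible $\alpha$ really does force $\alpha \le 2$ under the paper's conventions, since otherwise the slack $\alpha/(\alpha-1) \le 2/(\alpha-1)$ would fail and a slightly different constant would appear inside the logarithm. Under the definition of $V$ used throughout this section, this is automatic.
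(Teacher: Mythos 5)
Your proposal is correct and follows the paper's own route exactly: \cor{min} is obtained there precisely by substituting \cor{zme} into \prp{min-alpha}, with the factor $\alpha/(\alpha-1)$ on $\log(1/\Pr_\sigma[\Omega])$ weakened to $2/(\alpha-1)$ so that it merges with the $\epsilon$-term into $\frac{1}{\alpha-1}\log\bigl(\frac{2}{\epsilon^2\Pr_\sigma[\Omega]^2}\bigr)$. Your check that $\alpha<2$ on the admissible range $\alpha\in(1,1+2/V)$ (since $V=2(\log(2d_Z+1)+\|\nabla f\|_\infty)>2$) is the same implicit observation the paper relies on, just spelled out.
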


Note that \cor{min} holds for any $\alpha\in(1,1+2/V)$.
Thus we optimize the bound by finding a good $\alpha$.
\begin{corollary}\label{cor:eat}
  For $V=2(\log(2d_Z+1)+\|\nabla f\|_\infty)$, 
  \begin{align}
    H_{\min}^\epsilon(Z^m|E)_{\M_m\circ\ldots\circ\M_1(\rho)|\Omega}
    \geq m f(s) - \sqrt{m} V\sqrt{\log\frac{2}{\Pr_\sigma[\Omega]^2\epsilon^2}}
  \end{align}
\end{corollary}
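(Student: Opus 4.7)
The plan is to derive \cor{eat} from \cor{min} by optimizing over the free parameter $\alpha \in (1, 1+2/V)$. Write $x := \alpha - 1 \in (0, 2/V)$ and let $L := \log\left(2/(\epsilon^2 \Pr_\sigma[\Omega]^2)\right)$. Then \cor{min} reads
\begin{align}
H_{\min}^\epsilon(Z^m|E)_{\M_m\circ\cdots\circ\M_1(\rho)|\Omega} \geq m f(s) - g(x),
\qquad
g(x) := \frac{m V^2}{4} x + \frac{L}{x}.
\end{align}
So I would choose $x$ to minimize $g(x)$ subject to $x \in (0, 2/V)$, which then maximizes the lower bound.

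The function $g$ is convex on $(0, \infty)$, and setting $g'(x) = \frac{mV^2}{4} - \frac{L}{x^2} = 0$ yields the unconstrained minimizer $x^\ast = \frac{2 \sqrt{L}}{V \sqrt{m}}$. At this point the two terms of $g$ are equal, giving
\begin{align}
g(x^\ast) = 2 \cdot \frac{m V^2}{4} \cdot \frac{2 \sqrt{L}}{V \sqrt{m}} = V \sqrt{m L} = V \sqrt{m} \sqrt{\log\tfrac{2}{\Pr_\sigma[\Omega]^2 \epsilon^2}},
\end{align}
which matches the claimed bound exactly. Plugging $\alpha = 1 + x^\ast$ into \cor{min} then yields \cor{eat}.

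The one thing to verify is that $x^\ast$ lies in the allowed interval $(0, 2/V)$, i.e.\ that $\sqrt{L/m} < 1$. This holds whenever $m > L$, which is the regime of interest (otherwise the bound is vacuous because $m f(s)$ is dominated by a term of order $\sqrt{m} V \sqrt{L}$ anyway). If $x^\ast \geq 2/V$, one instead sets $x = 2/V$; since $g$ is decreasing on $(0, x^\ast)$ and the claim only requires an inequality (not optimality), the right-hand side of \cor{eat} can in this corner case be achieved or weakened without issue, since $\sqrt{mL} \leq L$ whenever $L \geq m$, and the direct substitution $\alpha - 1 = 2/V$ in \cor{min} gives a bound at least as strong as the stated one after a trivial rearrangement. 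I do not expect any real obstacle here — the entire argument is a one-variable convex optimization — so the proof is essentially a single line beyond \cor{min}.
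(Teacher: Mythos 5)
Your proposal is correct and follows essentially the same route as the paper: the paper likewise optimizes over $\alpha$ in \cor{min} by equating the two error terms, obtaining $\alpha = 1+\bigl(\tfrac{4}{mV^2}\log\tfrac{2}{\epsilon^2\Pr_\sigma[\Omega]^2}\bigr)^{1/2}$ so that each term equals $\tfrac{\sqrt{m}V}{2}\sqrt{\log\tfrac{2}{\epsilon^2\Pr_\sigma[\Omega]^2}}$, giving the stated bound. One small caveat on your corner-case remark (which the paper silently skips): substituting $\alpha-1=2/V$ into \cor{min} gives an error term $V(m+L)/2\geq V\sqrt{mL}$ by AM--GM, i.e.\ a \emph{weaker} bound, so that case is covered only by your observation that the claimed bound is then vacuous, not by the endpoint substitution.
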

\begin{proof}
To optimize the parameter $\alpha$, we set 
\begin{align}
  m\left(\frac{\alpha-1}{4}\right) V^2  = \frac{1}{\alpha-1} \log\left(\frac{2}{\epsilon^2\Pr_\sigma[\Omega]^2}\right),
\end{align}
which gives
\begin{align}
  \alpha = 1 + \left( \frac{4}{mV^2}\log\left(\frac{2}{\epsilon^2\Pr_\sigma[\Omega]^2}\right)\right)^{1/2}.
\end{align}
This implies that 
\begin{align}
  m\left(\frac{\alpha-1}{4}\right) V^2 
  = \frac{\sqrt{m}V}{2}\sqrt{\log\left(\frac{2}{\epsilon^2\Pr_\sigma[\Omega]^2}\right)}.
\end{align}
\end{proof}

In a spot-checking protocol, the verifier in each round tosses a biased coin $T_i\sim\Bernoulli(\gamma)$ for probability $\gamma\in[0,1]$. \
If $T_i=1$, then the protocol enters a test round, in which case the verifier counts the score. \
Otherwise, if $T_i=0$, then the protocol enters a generation round, in which case the verifier does not calculate the score. \
Effectively, this sets the score to zero.

To see how spot-checking works with our modification, we follow the idea from \cite{DFR20}. 
In particular, we multiply $G$ by the factor $\gamma$, i.e., we consider a new linear operator $G'=\gamma G$.
Furthermore, we choose a new min-tradeoff function $f'=\frac{1}{\gamma}f$. \
With the new choices $G'$ and $f'$, the above analysis establishes the same lower bound on the smooth conditional min-entropy. \

The change for spot-checking does seem to allow us to choose an arbitrarily small $\gamma$. \
However, an arbitrarily small $\gamma$ should not work since  taking $\gamma\to 0$ would imply the entropy accumulation protocol does not require any verification. \ %
Instead, we want to choose $\gamma$ sufficiently large such that a \emph{good} device can satisfy $\Omega$ with probability asymptotically close to one. \
For example, in an entropy accumulation protocol from a violation of Bell's inequality, the score is a value in $[\omega_c,\omega_q]$, where $\omega_c$ and $\omega_q$ are constants describing the best score achievable from a classical and a quantum device respectively. \
By Hoeffding's inequality, it suffices to take $\gamma = O((\log m)/m)$ for a perfect device to succeed with probability $1-O(1/m)$. \

In the following sections of this paper, we are aiming to prove a lower bound on the conditional min-entropy in an LXEB-based accumulation protocol. \
In \sec{perfect}, we have shown that the concentration of the collision probability over choices of $C$ allows us to conclude that it suffices to verify samples from $O(\log n)$ circuits for a perfect device to succeed with probability $1-O(1/n)$ by the same reasoning. \

\section{A General Device with No Side Information}\label{sec:general-empty}

In this section, we consider the following situation where the device does not share an entanglement with the adversary, but the circuit can be learned.
In this setting, we give a protocol in which conditioned on the event $\Omega$ of passing $\LXEB_{1+\delta,k}$, the entropy is accumulated, for sufficiently large $\delta=\Omega(1)$.
More formally, we aim to show $H_{\min}^\epsilon(Z_1\ldots Z_m | C_1\ldots C_m)_{\rho|\Omega}$ has a lower bound $\Omega(nm)$.

In \sec{llqsv}, we define a problem, called the Long List Quantum Supremacy Verification (LLQSV) problem, which is to determine whether a string $s$ is sampled from a random circuit $C$, or it is independently sampled according to the uniform distribution. \ 
Our hardness assumption, called the Long List Hardness Assumption (LLHA), states that $\LLQSV$ is hard for $\QCAM$ protocols with access to a quantum advice state. \
In \sec{llqsv-single}, we show that if LLHA holds, then any device must generate min-entropy $\Omega(n)$ with probability $\Omega(1)$ over choices of $C$. \ 
Since conditional von Neumann entropy is calculated by taking the expectation, this implies that every device passing $\LXEB_{1+\delta,k}$ must establish a conditional von Neumann entropy lower bound $\Omega(n)$ on the input circuit. 

We state our protocol in \fig{llqsv}. In particular, the verifier randomly selects $\Omega(\log n)$ test rounds by chossing to verifier each round with probability $\gamma=\Omega((\log n)/m)$, in which the verifier determines whether the device passes $\LXEB_{1+\delta,k}$. \
By the conditional von Neumann entropy lower bound established in \sec{llqsv} and the entropy accumulation theorem shown in \sec{eat}, in \sec{llqsv-eat}, we establish an $\Omega(nm)$ lower bound of the smooth conditional min-entropy.

Since LLHA is a seemingly strong assumption, we must justify it. \ 
In \sec{llqsv-ro}, we prove that ralative to a random oracle, the assumption that $\LLQSV$ is hard for $\QCAM$ protocols with quantum advice. \
This implies that there exists a circuit distribution with which $\LXEB_{1+\delta,k}$ can be used to certify randomness in a sequential process, if the device is given oracle access to the random circuit. \ 

\subsection{The Long List Quantum Supremacy Verification Problem}\label{sec:llqsv}

In this section, we formally define the problem LLQSV, as follows. \
\begin{problem}[Long List Quantum Supremacy Verification (LLQSV)]\label{prob:LLQSV}
Given a list of $M=O(N^3)$ circuit-string tuples $\{(C_i,s_i):i\in[M]\}$, distinguish the following cases: 
\begin{itemize}
\item \textbf{Yes-case}: for each $i\in[M]$, $C_i\sim\D$ and $s_i$ is sampled from $C_i$, i.e., $s_i\sim p_{C_i}$.
\item \textbf{No-case}: for each $i\in[M]$, $C_i\sim\D$ and $s_i$ is sampled uniformly (hence independent of $C_i$).
\end{itemize}
\end{problem}

We will show that if 
\begin{align}
\LLQSV(\D)\notin \QCAM\mathsf{TIME}(2^B n^{O(1)})/\mathsf q(2^B n^{O(1)}), 
\end{align}
then any quantum algorithm that runs in $n^{O(1)}$ time solving $\LXEB_{b,k}$ with probability $q$, must output $s_1,\ldots,s_k$ of min-entropy at least $B$ with probability $\frac{bq-1}{b-1}$ over choices of $C$. \
Thus the parameter $B$ determines the min-entropy lower bound. \
We prove this assuming what we call the Long List Hardness Assumption (LLHA). \ 
\begin{assumption}[$\LLHA_B(\D)$]\label{asm:llha}
  There exists no $\QCAM$ protocol in which the quantum Arthur solves $\LLQSV(\D,V)$ in time $2^B n^{O(1)}$ given access to a quantum advice of length $2^B n^{O(1)}$.
  In other words, $\LLQSV(\D)\notin\QCAM\TIME(2^B n^{O(1)})/\mathsf q(2^B n^{O(1)})$.
\end{assumption}

\subsection{LLHA implies Certified Randomness}\label{sec:llqsv-single}

In this section, we show that $\LLHA_B(\D)$ implies that any device passing $\LXEB_{b,k}$ must output samples of min-entropy $\Omega(n)$ given $C\sim\D$. \
Our reduction relies on the Goldwasser-Sipser protocol for approximate counting \cite{GS86}, explained as follows. \
Recall that the instance consists of $M$ tuples $(C_i,s_i)$ for $i\in[M]$, where $C_i\sim\D$ and $s_i$ is either sampled from $p_{C_i}$ or from the uniform distribution $\U$. \
Let $V$ be a quantum-classical Merlin-Arthur protocol running in time $2^Bn^{O(1)}$ such that $V$ accepts at least $\kappa$ tuples for a yes instance and accepts at most $(1-\Omega(\epsilon))\kappa$ tuples, both with probability $1-2^{-\Omega(n)}$. \
Then this immediately yields a quantum-classical Arthur-Merlin protocol: \
\begin{enumerate}
\item[] Input: Both Arthur and Merlin receives an instance $(C_1,s_1),\ldots,(C_M,s_M)$. \

\item Arthur samples a random hash function $h:[M]\to[R]$ for $R$ determined later. \

\item Merlin sends $i\in[M]$ and a proof $\pi$. \ %

\item Arthur accepts if $V$ accepts $(C_i,s_i)$ with witness $\pi$ and $h(i)=y$.
\end{enumerate}
The gap of the protocol is $\Omega(\epsilon^2/\alpha)$, proved as follows. \
\begin{lemma}\label{lem:goldwasser-sipser}
  For real number $\alpha> 1$ and integer $\kappa$, there exists an Arthur-Merlin protocol which on input the description of a set $S$, determines $\alpha \kappa\geq |S|\geq \kappa$ or $|S|\leq (1-\epsilon)\kappa$ with gap at least $\frac{\epsilon^2}{4\alpha}$ using a hash function of range size $R=2\alpha\kappa/\epsilon$.
\end{lemma}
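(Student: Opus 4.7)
The plan is to implement the classical Goldwasser--Sipser set lower-bound protocol using a pairwise-independent hash family, then analyze completeness and soundness via a two-term Bonferroni (truncated inclusion--exclusion) argument. Arthur picks a uniformly random $h$ from a pairwise-independent family $\mathcal{H}=\{h:[N]\to[R]\}$ with $R=2\alpha\kappa/\epsilon$, together with a uniformly random target $y\in[R]$, and sends $(h,y)$ to Merlin. Merlin responds with an element $i$ together with a certificate that $i\in S$; Arthur accepts iff the certificate verifies and $h(i)=y$. Since an optimal prover can find such an $i$ whenever one exists, the acceptance probability is exactly $\Pr_{h,y}[\exists i\in S:h(i)=y]$, so the whole task reduces to bounding this probability from below in the Yes case and from above in the No case.

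For the Yes case ($\kappa\le|S|\le\alpha\kappa$), I would apply the two-term Bonferroni inequality: writing $X_i=\mathbb{1}[h(i)=y]$, pairwise independence gives $\Pr[X_i=1]=1/R$ and $\Pr[X_i=X_j=1]=1/R^2$ for $i\ne j$, so
\[
\Pr_{h,y}[\exists i\in S:h(i)=y]\ \ge\ \frac{|S|}{R}-\binom{|S|}{2}\frac{1}{R^2}\ \ge\ \frac{|S|}{R}\left(1-\frac{|S|}{2R}\right).
\]
The choice of $R$ combined with $|S|\le\alpha\kappa$ yields $|S|/R\le\epsilon/2$, so the factor in parentheses is at least $1-\epsilon/4$; together with $|S|\ge\kappa$ this gives a completeness lower bound of $(\kappa/R)(1-\epsilon/4)=\tfrac{\epsilon}{2\alpha}(1-\epsilon/4)$. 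For the No case ($|S|\le(1-\epsilon)\kappa$), a plain union bound gives a soundness upper bound of $|S|/R\le(1-\epsilon)\epsilon/(2\alpha)$.

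Subtracting yields a gap of at least $\tfrac{\epsilon}{2\alpha}\bigl((1-\epsilon/4)-(1-\epsilon)\bigr)=\tfrac{3\epsilon^2}{8\alpha}\ge\tfrac{\epsilon^2}{4\alpha}$, as claimed. There is no serious conceptual obstacle here---this is essentially a textbook argument---but the one calibration point that matters is choosing $R$ so that $|S|/R$ stays bounded by $\epsilon/2$ in the Yes case, which is exactly what makes the second-order Bonferroni error $|S|^2/(2R^2)$ a small multiplicative correction to the first-order term rather than swamping it. It is worth emphasizing that only pairwise independence of $\mathcal{H}$ is used, so a standard linear hash family over a finite field of size at least $\max(N,R)$ suffices and keeps Arthur's randomness polynomial in $\log N$ and $\log R$; this will matter when the lemma is applied on top of the QCMA verifier $V$, since the description of $(h,y)$ must be short compared to the rest of the protocol.
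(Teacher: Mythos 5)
Your proposal is correct and follows essentially the same route as the paper: a random hash-and-target protocol, a union bound for the No case, and a two-term inclusion--exclusion lower bound for the Yes case, with the same calibration $R=2\alpha\kappa/\epsilon$. The only differences are cosmetic---you keep the slightly sharper factor $1-|S|/(2R)$ (giving gap $3\epsilon^2/(8\alpha)$ versus the paper's $1-|S|/R$ and gap $\epsilon^2/(4\alpha)$) and you make explicit that pairwise independence of the hash family suffices.
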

\begin{proof}
  It suffices to get an tight upper bound and a lower bound of the probability 
  \begin{align}
    \Pr_{h,y}[\exists x\in S, h(x)=y]
  \end{align}
  for a random hash function $h$ of range size $R$. \
  By union bound, an upper bound is $\sum_{x\in S}\Pr_{h,y}[h(x)=y]=\frac{|S|}{R}$. \
  For a lower bound, by the inclusion-exclusion principle\footnote{The principle states that $\Pr[\bigvee_i E_i]\geq \sum_i \Pr[E_i]-\sum_{i<j}\Pr[E_i\wedge E_j]$.}
  \begin{align}\nonumber
    \Pr_{h,y}[\exists x\in S, h(x)=y] 
    &\geq \frac{|S|}{R} - \sum_{x\leq x', x,x'\in S} \Pr_{h,y}[h(x)=h(x')=y] \\\nonumber
    &= \frac{|S|}{R} - \binom{|S|}{2}\frac{1}{R^2} \\
    &= \frac{|S|}{R}\left(1 - \frac{|S|-1}{2R}\right) > \frac{|S|}{R}\left(1 - \frac{|S|}{R}\right)
  \end{align}
  In the yes case, $\alpha\kappa\geq |S|\geq\kappa$ and $R=2\alpha\kappa/\epsilon$, and
  \begin{align}\nonumber
    \Pr_{h,y}[\exists x\in S,h(x)=y]
    &\geq \frac{|S|}{R}\left(1 - \frac{|S|}{R}\right) \\\nonumber
    &\geq \frac{\kappa}{R}\left(1-\frac{\alpha\kappa}{R}\right) \\ 
    &\geq \frac{\kappa}{R}\left(1-\frac{\epsilon}{2}\right),
  \end{align}
  whereas in the no case, $|S|\leq (1-\epsilon)\kappa$, and
  \begin{align}
    \Pr_{h,y}[\exists x\in S,h(x)=y] \leq \frac{|S|}{R} \leq \frac{\kappa}{R}(1-\epsilon).
  \end{align}
  The gap is at least $\frac{\kappa\epsilon}{2R}=\frac{\epsilon^2}{4\alpha}$.
\end{proof}

Assume that there is a quantum device $\A$ which solves $\LXEB_{b,k}$ with probability $q$ over choices of $C$, and outputs $k$ tuples $(s_1,\ldots,s_k)$ of min-entropy at most $B/2$ with probability $p$ such that $p$ and $q$ satisfy
\begin{align}\label{eq:llqsv-contra-4}
  p > \frac{b}{b-1} (1-q) + \epsilon.
\end{align}
For breaking $\LLHA_B(\D)$, we consider the following random variable $Y_\tau(C,s)$ for $\tau\in[0,1]$:
\begin{align}\label{eq:y-tau}
  Y_\tau(C,s)
  := 
  \left\{
  \begin{array}{ll}
    1 & \text{if $\exists d=(z_1,\ldots,z_k)$, $\Pr[\A(C)=d]\geq\frac{\tau}{2^{B/2}}$ and $s\in\A(C)$} \\ %
    0 & \text{otherwise.}
  \end{array}
  \right.
\end{align}
Here $s\in\A(C)$ means that the samples $s\in\{z_1,\ldots,z_k\}$ for samples $(z_1,\ldots,z_k)\sim\A(C)$. \
Define
\begin{align}\label{eq:mu1-mu0}
  \mu_1(\tau) &= \Exp_{C\sim\D,s\sim p_C}[Y_\tau(C,s)], &
  \mu_0(\tau) &= \Exp_{C\sim\D,s\sim \U}[Y_\tau(C,s)],
\end{align}
where $\U$ is the uniform distribution over $\bit^n$. \

Let $p(\tau)$ be the probability that $\A(C)$'s maximum probability is at least $\tau/2^{B/2}$ for $C\sim\D$, i.e.,
\begin{align}\label{eq:p-tau}
  p(\tau) := \Pr_{C\sim\D}\left[\max_d \Pr[\A(C)=d] \geq \frac{\tau}{2^{B/2}}\right].
\end{align}
The following lemma shows that the ratio $\mu_1(\tau)/\mu_0(\tau)$ is bounded by $b\cdot (p(\tau)+q-1)/p(\tau)$. \
\begin{lemma}\label{lem:y-tau}
  For $\tau\in[0,1]$, let $Y_\tau$ be defined as in \eq{y-tau}, $\mu_1(\tau),\mu_0(\tau)$ as in \eq{mu1-mu0} and $p(\tau)$ as in \eq{p-tau}.
  Then
  \begin{align}
    \frac{\mu_1(\tau)}{\mu_0(\tau)} \geq b\cdot \frac{p(\tau)+q-1}{p(\tau)}.
  \end{align}
\end{lemma}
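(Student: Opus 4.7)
The plan is to upper bound $\mu_0(\tau)$ and lower bound $\mu_1(\tau)$ separately and take the ratio. Let $H(C) := \{d : \Pr[\A(C) = d] \geq \tau/2^{B/2}\}$ denote the ``heavy'' tuples, so that $T(C) := \{s : Y_\tau(C, s) = 1\} = \bigcup_{d \in H(C)} \{z_1^d, \ldots, z_k^d\}$, and $T(C) = \varnothing$ precisely when $H(C) = \varnothing$. Both $\mu_0(\tau) = \Exp_C[|T(C)|/N]$ and $\mu_1(\tau) = \Exp_C[p_C(T(C))]$ therefore restrict to the ``good'' event $H(C) \neq \varnothing$, which by \eq{p-tau} has probability $p(\tau)$.

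For the upper bound on $\mu_0$, probabilities must sum to at most $1$, so pigeonhole gives $|H(C)| \leq 2^{B/2}/\tau$ and hence $|T(C)| \leq k\cdot 2^{B/2}/\tau$ on good $C$, yielding $\mu_0(\tau) \leq p(\tau) \cdot k \cdot 2^{B/2}/(\tau N)$. For the lower bound on $\mu_1$, inclusion--exclusion on the joint randomness of $C$ and $\A$ shows that the event ``$C$ good \emph{and} $\A(C)$ satisfies $\LXEB_{b,k}$'' has probability at least $p(\tau) + q - 1$; on this event, the successful tuple $d$ (or a representative chosen inside $H(C)$) has $\sum_i p_C(z_i^d) \geq kb/N$ with components sitting in $T(C)$, contributing at least $kb/N$ to $p_C(T(C))$. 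Summing over $C$ and dividing the two bounds then gives $\mu_1(\tau)/\mu_0(\tau) \geq b\,(p(\tau) + q - 1)/p(\tau)$, which is exactly the claim. This ratio will subsequently feed into a $\QCAM$ distinguisher for $\LLQSV$ via the Goldwasser--Sipser protocol of \lem{goldwasser-sipser}.

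The main obstacle will be the second half: the LXEB-successful tuple sampled by $\A$ need not itself lie in $H(C)$, and a naive combination of the two bounds loses a factor of $2^{B/2}/\tau$, yielding only a much weaker ratio. I expect the remedy to come from a pointwise comparison $p_C(T(C)) \geq \max_{d \in H(C)} p_C(\{z_i^d\}) \geq \max_{d \in H(C)} \LXEB_d$, combined with a counting argument that matches each heavy tuple's contribution to $|T(C)|$ against its contribution to $p_C(T(C))$ so that the $|H(C)|$ factor cancels across the two bounds. Once this cancellation is made rigorous, the inclusion--exclusion step above restores the clean factor of $b$ stated in the lemma.
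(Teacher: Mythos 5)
There is a genuine gap, and it starts with the definition of $Y_\tau$ rather than with the estimates. You read $Y_\tau(C,s)=1$ as ``$s$ lies in the union $T(C)$ of the components of the heavy tuples $H(C)$,'' and accordingly set $\mu_0=\Exp_C[|T(C)|/N]$ and $\mu_1=\Exp_C[p_C(T(C))]$. That is not how \eq{y-tau} is used in the paper: there, the heaviness condition is only a flag on $C$, namely $G_\tau(C):=\Id[\max_d\Pr[\A(C)=d]\geq\tau 2^{-B/2}]$, while ``$s\in\A(C)$'' refers to membership of $s$ in a \emph{fresh} output $O\sim\A(C)$, so that $\Exp[Y_\tau(C,s)]=\Pr_{O\sim\A(C)}[G_\tau(C)\wedge s\in O]$. (The later Merlin--Arthur verification confirms this: Arthur checks that Merlin's claimed $d$ is empirically heavy and, separately, that $s\in O$ for an independent sample $O\sim\A(C)$ --- $s$ is never required to lie in the heavy tuple.) Under the correct reading the lemma is three lines: a union bound gives $\Pr[G_\tau(C)\wedge V(C,O)]\geq p(\tau)+q-1$, and conditioned on passing $\LXEB_{b,k}$ the fresh output satisfies $\Pr_{s\sim p_C}[s\in O]=\sum_{z\in O}p_C(z)\geq bk/N$, so $\mu_1(\tau)\geq \frac{bk}{N}(p(\tau)+q-1)$; in the no-case $s$ is uniform and independent of $O$, so $\mu_0(\tau)\leq \frac{k}{N}p(\tau)$. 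The obstacle you identify --- that the LXEB-successful tuple need not be heavy --- simply never arises, because heaviness of the output tuple is never needed.

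Under your reading, by contrast, the inequality you are trying to prove is actually false, so no counting argument can rescue the plan you sketch. Take $\A(C)$ that with probability $1/2$ outputs the fixed tuple $d_0(C)$ consisting of the $k$ strings of smallest $p_C$-mass, and with probability $1/2$ outputs $k$ i.i.d.\ samples from $p_C$. Then $q\approx 1/2$, $p(\tau)=1$ (the tuple $d_0(C)$ is heavy), and your $T(C)$ consists of the components of $d_0(C)$, so $\mu_1\approx \Exp_C[p_C(T(C))]\ll k/N$ while $\mu_0=k/N$; the ratio is $o(1)$, far below the claimed $b(p(\tau)+q-1)/p(\tau)=b/2$. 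This also explains why your $\mu_0$ bound loses the factor $2^{B/2}/\tau$: that loss is intrinsic to the union-over-heavy-tuples event, not an artifact of a loose estimate. The remedy is not a cleverer cancellation but the decoupling in the paper's definition: heaviness as a property of $C$ alone, membership tested against an independent run of $\A(C)$. (Your use of $\Pr[A\cap B]\geq\Pr[A]+\Pr[B]-1$ for the conjunction is fine and matches the paper.)
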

\begin{proof}
  Note that in each case, we are bounding
  \begin{align}
    \Exp_{C\sim\D,s}[Y_\tau(C,s)] 
    = \Pr_{C\sim\D,(z_1,\ldots,z_k)\sim\A(C),s}\left[\exists d, \Pr[\A(C)=d]\geq \frac{\tau}{2^{B/2}} \wedge s\in\{z_1,\ldots,z_k\}\right],
  \end{align}
  where $s\sim p_C$ in the yes case, or $s\sim\U$ in the no case.
  Define
  \begin{align}
    G_\tau(C) := \Id\left[\max_d\Pr[\A(C)=d]\geq\frac{\tau}{2^{B/2}}\right].
  \end{align}
  By \eq{p-tau}, $p(\tau)=\Exp_{C\sim\D}[G_\tau(C)]$.
  \begin{itemize}
  \item For the yes case, 
    \begin{align}\nonumber
      \mu_1(\tau) 
      &= \Exp_{C\sim\D,s\sim p_C}[Y_\tau(C,s)] \\\nonumber
      &= \Pr_{C\sim\D,O\sim\A(C),s\sim p_C}\left[G_\tau(C) \wedge s\in O\right] \\
      &\geq\Pr_{C\sim\D,O\sim\A(C),s\sim p_C}\left[G_\tau(C) \wedge s\in O \wedge V(C,O)\right],
    \end{align}
    where $V(C,O)=1$ if $\LXEB_{b,k}$ accepts $O=(z_1,\ldots,z_k)$, i.e.,
    \begin{align}\label{eq:llqsv-lxeb}
    \sum_{i=1}^k p_C(z_i) \geq \frac{bk}{N}
      \end{align}
      and 0 otherwise.
    By union bound, 
    \begin{align}\label{eq:y-yes-1}\nonumber
      \Pr_{C\sim\D,O\sim\A(C),s\sim p_C}\left[G_\tau(C)\wedge V(C,O)=1\right]
      &\geq \Pr_{C\sim\D}[G_\tau(C)] + \Pr_{C\sim\D,O\sim\A(C)}[V(C,O)] - 1 \\
      & = p(\tau)+q-1.
    \end{align}
    Furthermore, the conditional probability
    \begin{align}\label{eq:y-yes-2}
      \Pr_{C\sim\D,O\sim\A(C),s\sim p_C}\left[s\in O \left|G_\tau(C)\wedge V(C,O)\right.\right] \geq \frac{bk}{N}
    \end{align}
    since if $V(C,O)=1$, the probability that $\Pr_{s\sim p_C}[s\in O]=\sum_{z\in O}p_C(z)\geq bk/N$ by \eq{llqsv-lxeb}.
    Combining \eq{y-yes-1} and \eq{y-yes-2},
    \begin{align}\label{eq:y-yes}
      \mu_1(\tau) = \frac{bk}{N} (p(\tau)+q-1).
    \end{align}

    \item For the no case, since $s$ is independently sampled,
      \begin{align}\nonumber\label{eq:y-no}
        \mu_0(\tau) 
        &= \Pr_{C\sim\D,O\sim\A(C),s\sim\U}\left[ G_\tau(C) \wedge s\in O \right] \\ 
        &= \frac{k}{N}\Pr_{C\sim\D}[G_\tau(C)] = \frac{k}{N}\cdot p(\tau).
      \end{align}
  \end{itemize}
  By \eq{y-yes} and \eq{y-no}, we conclude the proof.
\end{proof}
Furthermore, we show that the lower bound is monotonically non-increasing.
\begin{lemma}\label{lem:non-increasing}
  The ratio $(p(\tau)+q-1)/p(\tau)$ is monotonically non-increasing for $\tau\in[0,1]$.
\end{lemma}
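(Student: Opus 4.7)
The plan is to reduce the claim to two elementary observations: that $p(\tau)$ is monotonically non-increasing in $\tau$, and that $q-1 \le 0$. First I would rewrite the ratio in a form that isolates the $\tau$-dependence:
\begin{align}
  \frac{p(\tau)+q-1}{p(\tau)} \;=\; 1 \;+\; \frac{q-1}{p(\tau)}.
\end{align}

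Next I would verify monotonicity of $p(\tau)$. By \eq{p-tau}, $p(\tau) = \Pr_{C\sim\D}[\max_d \Pr[\A(C)=d] \ge \tau/2^{B/2}]$. For $0 \le \tau \le \tau' \le 1$, the event $\{\max_d \Pr[\A(C)=d] \ge \tau'/2^{B/2}\}$ is contained in the event $\{\max_d \Pr[\A(C)=d] \ge \tau/2^{B/2}\}$, so $p(\tau') \le p(\tau)$. Thus $1/p(\tau)$ is non-decreasing on the range of $\tau$ where $p(\tau) > 0$.

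Finally, since $q$ is a probability we have $q-1 \le 0$. Multiplying a non-positive constant by a non-decreasing non-negative function $1/p(\tau)$ yields a non-increasing function, so $(q-1)/p(\tau)$ is non-increasing in $\tau$, and therefore so is $1 + (q-1)/p(\tau)$. The only mild subtlety is handling the endpoint where $p(\tau) = 0$: since $p(0) = 1$ and the lemma is only meaningfully invoked for $\tau$ with $p(\tau) > 0$, I would state the result on the maximal interval $[0, \tau^\ast]$ with $\tau^\ast := \sup\{\tau : p(\tau) > 0\}$. I don't anticipate any real obstacle here; the statement is essentially a one-line rewrite plus a containment-of-events argument.
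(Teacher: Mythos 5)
Your proof is correct and follows essentially the same route as the paper: both arguments rest on the monotonicity of $p(\tau)$ (event containment) together with $q-1\le 0$, the paper phrasing it as the cross-ratio inequality $\frac{p(\beta)+q-1}{p(\alpha)+q-1}\ge\frac{p(\beta)}{p(\alpha)}$ while you use the equivalent rewriting $1+(q-1)/p(\tau)$. Your explicit handling of the $p(\tau)=0$ endpoint is a minor tidiness the paper omits, but it does not change the substance.
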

\begin{proof}
  For $1\geq \alpha\geq \beta\geq 0$, $p(\alpha)\leq p(\beta)$, and %
  \begin{align}
    \frac{p(\beta)+q-1}{p(\alpha)+q-1} \geq \frac{p(\beta)}{p(\alpha)}
  \end{align}
  This implies that $\frac{p(\tau)+q-1}{p(\tau)}$ is monotonically non-increasing for $\tau\in[0,1]$.
\end{proof}

\lem{y-tau} and \lem{non-increasing} imply that 
\begin{align}\label{eq:llqsv-contra-6}
  \frac{\mu_1(\tau)}{\mu_0(\tau)} \geq \frac{\mu_1(1)}{\mu_0(1)} \geq b\cdot \frac{p+q-1}{p} \geq 1+\epsilon.
\end{align}
While this seems to give us a sufficient condition for applying the Goldwasser-Sipser protocol, a caveat is that we do not know how to verify if $Y_\tau$ accepts $(C,s)$ (in particular, $Y_1$) in time $2^Bn^{O(1)}$.
The reason is that the condition that $\A(C)$ outputs a string $d$ with probability greater than $\tau/2^B$ cannot be efficiently verified, even when a witness $d$ is given.
The next step is to show that for sufficiently large $T$, there exists $j\in[T]$ such that for $\tau=1/2+j/T$,
  $\mu_1(\tau)\geq (1+\epsilon/2)\mu_0(\tau-1/T)$.

\begin{lemma}\label{lem:gap}
  Let $\mu_1$ and $\mu_0$ be defined as in \eq{mu1-mu0}.
  Assume that \eq{llqsv-contra-4} holds.
  Then for $T\geq \frac{8}{\epsilon}\log(\frac{N}{\epsilon})$, there exists $j\in[T]$ such that 
  \begin{align}
    \mu_1(1/2+j/T) \geq (1+\epsilon/2) \mu_0(1/2+(j-1)/T).
  \end{align}
\end{lemma}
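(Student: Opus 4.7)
The plan is to prove the statement by contradiction. Suppose that for every $j\in[T]$ we have
$$\mu_1(1/2+j/T) < (1+\epsilon/2)\,\mu_0(1/2+(j-1)/T).$$
The key idea is to convert each such failure into a multiplicative contraction on $\mu_1$ alone, iterate it $T$ times, and then contradict the resulting smallness of $\mu_1(1)$ by an explicit lower bound.

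First, I will use the \emph{uniform} ratio bound between $\mu_1$ and $\mu_0$ to replace each $\mu_0$ on the right-hand side by $\mu_1$. By \lem{y-tau} together with \lem{non-increasing}, for every $\tau\in[0,1]$,
$$\frac{\mu_1(\tau)}{\mu_0(\tau)}\ \geq\ b\cdot\frac{p(\tau)+q-1}{p(\tau)}\ \geq\ b\cdot\frac{p+q-1}{p}\ \geq\ 1+\epsilon,$$
where the last inequality is just the restatement of \eq{llqsv-contra-4} appearing in \eq{llqsv-contra-6}. Applying this at $\tau=1/2+(j-1)/T$ and inserting it into the assumed failure gives the per-step contraction
$$\mu_1(1/2+j/T)\ <\ \gamma\cdot\mu_1(1/2+(j-1)/T),\qquad \gamma:=\frac{1+\epsilon/2}{1+\epsilon}<1.$$

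Next, I telescope over $j=1,\ldots,T$ to obtain $\mu_1(1)<\gamma^{T}\,\mu_1(1/2)\leq \gamma^{T}$, using that $\mu_1(1/2)\leq 1$ as a probability. The elementary estimate $\log(1/\gamma)=-\log\bigl(1-(\epsilon/2)/(1+\epsilon)\bigr)\geq \epsilon/4$ (valid for $\epsilon\in(0,1]$) then yields $\mu_1(1)<e^{-T\epsilon/4}$. I contradict this with an absolute lower bound on $\mu_1(1)$: from equation \eq{y-yes} in the proof of \lem{y-tau}, $\mu_1(1)\geq (bk/N)(p+q-1)$, and rearranging \eq{llqsv-contra-4} gives $b(p+q-1)\geq (b-1)\epsilon$, so $\mu_1(1)\geq (b-1)k\epsilon/N\geq \epsilon/N$ under the parameter regime $(b-1)k\geq 1$. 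Combining the two bounds forces $e^{-T\epsilon/4}>\epsilon/N$, i.e.\ $T<(4/\epsilon)\log(N/\epsilon)$, contradicting the hypothesis $T\geq (8/\epsilon)\log(N/\epsilon)$.

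The main obstacle is just choosing the right split between the $1+\epsilon/2$ factor in the conclusion and the $1+\epsilon$ global gap, so that the per-step contraction $\gamma$ is bounded away from $1$ by $\Theta(\epsilon)$ while leaving a factor of two of slack to absorb the constants in the final logarithm. Everything else is bookkeeping: the lower bound on $\mu_1(1)$ has already been done inside \lem{y-tau}, and the global ratio bound $(1+\epsilon)$ is simply a consequence of combining \lem{y-tau} and \lem{non-increasing}.
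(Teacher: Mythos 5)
Your proof is correct and takes essentially the same route as the paper's: the same contradiction hypothesis, the same uniform ratio bound $\mu_1(\tau)/\mu_0(\tau)\geq 1+\epsilon$ obtained from \lem{y-tau} and \lem{non-increasing}, and the same telescoping over the grid, merely rearranged so that you use $\mu_1(1/2)\leq 1$ to contradict the lower bound on $\mu_1(1)$, whereas the paper uses $\mu_1(1)\geq k\epsilon/N$ to contradict $\mu_1(1/2)\leq 1$. The only cosmetic point is that your side condition $(b-1)k\geq 1$ is unnecessary: \eq{llqsv-contra-4} directly gives $p+q-1\geq\epsilon$, hence $\mu_1(1)=\frac{bk}{N}(p+q-1)\geq \epsilon/N$ without it.
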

\begin{proof}
  We prove the lemma by contrapositive.
  Suppose that for every $j\in[T]$, 
  \begin{align}\label{eq:llqsv-contra-5}
    \mu_1(1/2+j/T)<(1+\epsilon/2) \mu_0(1/2+(j-1)/T).
  \end{align}
  Then expanding the ratio into a telescoping product, 
  \begin{align}\nonumber
    \frac{\mu_1(1/2)}{\mu_1(1)} 
    &= \prod_{j=1}^T \frac{\mu_1(1/2+j/T)}{\mu_1(1/2+(j+1)/T)} \\\nonumber
    &\geq (1+\epsilon/2)^{-T}\prod_{j=1}^T \frac{\mu_1(1/2+j/T)}{\mu_0(1/2+j/T)} \\\nonumber
    &\geq \left(\frac{1+\epsilon}{1+\epsilon/2}\right)^T \\
    &\geq (1+\epsilon/4)^T.
  \end{align}
  The first inequality holds by \eq{llqsv-contra-5}.
  The second holds by \eq{llqsv-contra-6}. %
  The third holds by the inequality $1+\epsilon-(1+\epsilon/2)(1+\epsilon/4)=\epsilon/4\cdot(1-\epsilon/2)\geq 0$ for $\epsilon\leq 2$.
  Taking $T\geq \frac{8}{\epsilon}\log(n/\epsilon)$,
  \begin{align}\nonumber
    \mu_1(1/2) 
    &> (1+\epsilon/4)^T\cdot\mu_1(1) \\\nonumber
    &\geq \frac{k}{N} \cdot \epsilon (1+\epsilon/4)^T \\\nonumber
    &\geq \frac{k}{N} \cdot \epsilon e^{\epsilon T/8} \\
    &\geq k.
  \end{align}
  This contradicts the fact that $\mu_1(1/2)\leq 1$.
\end{proof}

With the gap, we can consider the following Merlin-Arthur protocol which verifies that $s\sim p_C$ in time $2^B\cdot T^2\cdot n^{O(1)}$: 
\begin{enumerate}
\item[] Both Arthur and Merlin are given access to a circuit $C$ and a string $s\in\bit^n$.
\item Merlin sends $d=(z_1,\ldots,z_k)$.
\item Arthur gets $K+1$ samples $d_1,\ldots,d_K,O\sim\A(C)$ for integer $K$ determined later. \ 
Arthur accepts if 
  \begin{align}\label{eq:fraction}
    \frac{1}{K}|\{\ell:d_\ell=d\}| \geq \left(\tau-\frac{1}{2T}\right) 2^{-B/2},
  \end{align}
  and $s\in O$, and rejects otherwise.
\end{enumerate}
This quantum-classical Merlin-Arthur protocol solves the problem that either $(C,s)$ is accepted by $Y_{\tau}$, or rejected by $Y_{\tau-1/T}$. \
\begin{lemma}\label{lem:y-to-ma}
  For $\tau\in[1/2,1]$, $\eta=O(1/N^2)$ and $K\geq 4T^22^Bn$, the above process satisfies the following conditions: \
  \begin{enumerate}
  \item If $Y_\tau$ accepts $(C,s)$, then Arthur accepts with probability $1-\eta$, and
  \item if $Y_{\tau-1/T}$ rejects $(C,s)$, then Arthur rejects with probability $1-\eta$.
  \end{enumerate}
\end{lemma}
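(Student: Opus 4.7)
The plan is to analyze completeness and soundness separately, using Hoeffding's inequality on the $K$ i.i.d.\ samples $d_1,\ldots,d_K \sim \A(C)$ to estimate $\Pr[\A(C)=d]$ for whatever $d$ Merlin sends, and treating the check $s \in O$ as independent (since $O$ is a fresh sample independent of the $d_\ell$'s). The key design choice is placing the acceptance threshold $(\tau - 1/(2T))\cdot 2^{-B/2}$ in \eq{fraction} exactly midway between $\tau/2^{B/2}$ (the threshold used by $Y_\tau$) and $(\tau - 1/T)/2^{B/2}$ (the threshold used by $Y_{\tau-1/T}$), so that a single Hoeffding deviation $\delta := 1/(2T \cdot 2^{B/2})$ gives slack on both sides.

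For completeness, I would assume $Y_\tau$ accepts $(C,s)$, so there is a witness $d^*$ with $p^* := \Pr[\A(C) = d^*] \geq \tau / 2^{B/2}$ and $s \in O$. Merlin sends this $d^*$. Writing $X_\ell := \Id[d_\ell = d^*]$, I apply Hoeffding's inequality to get
\[
\Pr\left[\frac{1}{K}\sum_{\ell=1}^K X_\ell < p^* - \delta\right] \leq \exp(-2K\delta^2) = \exp\left(-\frac{K}{2T^2 \cdot 2^B}\right).
\]
For $K \geq 4T^2 \cdot 2^B \cdot n$ this is at most $\exp(-2n) = O(1/N^2) = \eta$. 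The empirical fraction then exceeds $p^* - \delta \geq (\tau - 1/(2T))/2^{B/2}$, and combined with $s \in O$ from the hypothesis, Arthur accepts with probability $1 - \eta$.

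For soundness, $Y_{\tau - 1/T}(C,s) = 0$ breaks into two subcases. If $s \notin O$, Arthur's second check fails and he rejects deterministically. Otherwise $G_{\tau - 1/T}(C) = 0$, meaning $\Pr[\A(C) = d] < (\tau - 1/T)/2^{B/2}$ for \emph{every} $d$ Merlin might send. The same Hoeffding bound with the same $\delta$, now applied in the other direction, shows that the empirical fraction exceeds $(\tau - 1/(2T))/2^{B/2}$ with probability at most $\eta$, uniformly over Merlin's strategy. Hence Arthur rejects with probability at least $1 - \eta$ in both subcases.

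The main (modest) obstacle is just the bookkeeping: matching the threshold placement with the right Hoeffding deviation in both directions, and verifying that the sample count $K = 4T^2 \cdot 2^B \cdot n$ suffices to force failure probability $O(1/N^2)$ uniformly over Merlin's choice. The one conceptual point to keep consistent with \lem{y-tau} is that the sample $O$ used inside the predicate $Y_\tau(C,s)$ should be identified with the independent sample $O$ that Arthur draws in the protocol; once that is fixed, the argument above goes through cleanly.
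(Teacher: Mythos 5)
Your proposal is correct and follows essentially the same route as the paper: Merlin sends the heavy witness $d$, Hoeffding's inequality on the $K$ i.i.d.\ samples with deviation $\frac{1}{2T}2^{-B/2}$ (the midpoint between the two thresholds) gives failure probability at most $2e^{-2n}=O(1/N^2)$ for $K\geq 4T^2 2^B n$, and the $s\in O$ check is handled by identifying Arthur's fresh sample $O$ with the one implicit in $Y_\tau$, exactly as the paper does. Your explicit two-subcase split in the soundness direction is just a slightly more careful phrasing of the paper's one-line argument, not a different method.
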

\begin{proof}
  If $Y_\tau$ accepts $(C,s)$, then there exists $d=(z_1,\ldots,z_k)$ such that $\Pr[\A(C)=d]\geq \tau/2^B$ and $s\in\{z_1,\ldots,z_k\}$. \
  By Hoeffding's inequality, the fraction $\tilde \nu$ in \eq{fraction} no more than $(\tau-\frac{1}{2T})2^{-B/2}$ occurs with probability at most 
  \begin{align}
    \Pr\left[ |\tilde\nu-\Pr[\A(C)=d]| > \frac{1}{2T}2^{-B/2} \right] \leq 2e^{-2\frac{K}{4T^22^B}} \leq 2e^{-2n}.
  \end{align}
  
  If $Y_{\tau-1/T}$ rejects $(C,s)$, then for every $d$ such that $s$ is contained in $d$, $\Pr[\A(C)=d]<(\tau-1/T)/2^{B/2}$. \
  Again, by Hoeffding's inequality, \eq{fraction} occurs with probability at most $2e^{-2n}$. \
\end{proof}

For $M\geq N^3$, the quantum-classical Merlin-Arthur protocol accepts more tuples in an yes instance than a no instance, with overwhelming probability. \
\begin{lemma}\label{lem:llqsv-qcma}
  For $M\geq N^3$ and $\eta'=2^{-\Omega(N)}$, there exists an integer $\kappa\in[M]$ such that
  with probability at least $1-\eta'$, Arthur accepts at least $\kappa$ tuples in an yes instance and accept at most $\kappa'=(1-\epsilon/5+O(\epsilon^2))\kappa$ tuples in a no instance. \
\end{lemma}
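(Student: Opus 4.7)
The plan is to combine the single-tuple gap from \lem{y-to-ma} with a Chernoff/Hoeffding bound over the $M$ independent tuples of an LLQSV instance. First, fix the value $\tau$ guaranteed by \lem{gap}, so $\mu_1(\tau) \geq (1+\epsilon/2)\,\mu_0(\tau - 1/T)$. By \lem{y-to-ma}, with $\eta = O(1/N^2)$, Arthur accepts a single tuple $(C_i, s_i)$ with probability at least $p_1 := \mu_1(\tau)(1-\eta)$ in the yes case and at most $p_0 := \mu_0(\tau - 1/T) + \eta$ in the no case. By \lem{y-tau} and the assumption \eq{llqsv-contra-4}, $\mu_1(\tau) \geq \mu_1(1) = \Omega(\epsilon k/N)$, so $p_1 = \Omega(k/N)$, and the additive noise $\eta$ is a lower-order correction: the multiplicative ratio $p_1/p_0 \geq (1+\epsilon/2)(1 - O(1/N))$ survives.

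Next, set $\kappa := \lfloor(1 - \epsilon/20)\,M p_1\rfloor$ and $\kappa' := \lceil(1 + \epsilon/20)\,M p_0\rceil$. Because $p_1 \leq 1$ and $Mp_1 = \Omega(kN^2) \gg 1$, we have $\kappa \in [M]$. A direct Taylor expansion using the preserved gap gives
\begin{align*}
  \frac{\kappa'}{\kappa} \;\leq\; \frac{1 + \epsilon/20}{(1 - \epsilon/20)(1 + \epsilon/2)} \;=\; 1 - \frac{2\epsilon}{5} + O(\epsilon^2) \;\leq\; 1 - \frac{\epsilon}{5} + O(\epsilon^2),
\end{align*}
so the claimed ratio between $\kappa'$ and $\kappa$ holds by construction.

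It remains to apply concentration. Because the $M$ tuples $(C_i, s_i)$ are i.i.d., standard Chernoff/Hoeffding bounds give
\begin{align*}
  \Pr[\#\mathrm{yes\text{-}accept} < \kappa]   &\;\leq\; \exp\left(-\Omega(\epsilon^2 M p_1)\right), \\
  \Pr[\#\mathrm{no\text{-}accept}  > \kappa']  &\;\leq\; \exp\left(-\Omega(\epsilon^2 M p_1)\right),
\end{align*}
where for the second line we use the fact that $\kappa' - M p_0 = \Omega(\epsilon M p_1)$ combined with an additive Hoeffding bound. Using $M = N^3$ and $M p_1 = \Omega(k N^2)$, both failure probabilities are $2^{-\Omega(N)}$, well within the claimed $\eta'$.

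The only real obstacle here is bookkeeping: juggling the three multiplicative slacks---the gap $(1+\epsilon/2)$ from \lem{gap}, the additive sampling error $\eta$ from \lem{y-to-ma}, and the Chernoff tolerance $\epsilon/20$---so that after composition the residual gap is still at least $1 - \epsilon/5 + O(\epsilon^2)$, while the Chernoff tails remain $2^{-\Omega(N)}$. The choice $M \geq N^3$ is comfortably larger than strictly needed; essentially any $M = N^{1 + \Omega(1)}$ would suffice for the target $\eta' = 2^{-\Omega(N)}$.
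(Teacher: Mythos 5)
Your proposal is correct and follows essentially the same route as the paper's proof: translate the $Y_\tau$ gap from \lem{gap} into per-tuple acceptance probabilities via \lem{y-to-ma}, set $\kappa,\kappa'$ as slightly shrunk/inflated multiples of the expected counts, bound the ratio using the preserved $(1+\epsilon/2)$ gap, and apply Hoeffding/Chernoff over the $M\geq N^3$ i.i.d.\ tuples to get $2^{-\Omega(N)}$ failure probability. The only differences are bookkeeping constants ($\epsilon/20$ versus the paper's $\epsilon/8$, and a multiplicative Chernoff bound in place of the paper's additive Hoeffding), which if anything yield a slightly stronger ratio $1-2\epsilon/5+O(\epsilon^2)$, still consistent with the stated $\kappa'$.
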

\begin{proof}
  Recall that $\mu_1(\tau)=\Exp_{C\sim\D,s\sim p_C}[Y_\tau(C,s)]$. \
  Since each tuple is independently sampled both in the yes instance and in the no instance, 
  by \lem{y-to-ma},
  Authur accepts each sample with probability at most $\mu_0(\tau)+\eta$. \
  Thus by Hoeffding's inequality, for $\kappa'=(\mu_0(\tau)+\eta)(1+\epsilon/8)M$, 
  \begin{align}\nonumber
    \Pr_{C_1,\ldots,C_M\sim\D,s_1,\ldots,s_M\sim\U}[\text{Arthur accepts more than $\kappa'$ tuples}]
    &\leq 2e^{-M\epsilon^2(\mu_0(\tau)+\eta)^2/32} \\\nonumber
    &= 2e^{-\Omega(M\epsilon^2k^2p^2/N^2)} \\
    &\leq 2e^{-\Omega(M\epsilon^4k^2/N^2)},
  \end{align}
  since by \eq{llqsv-contra-4}, $\mu_0(\tau)=p(\tau)k/N\geq pk/N\geq \epsilon k/N$.
  Taking $M\geq N^3$, the upper bound is $2e^{-\Omega(N)}$.

  Similarly, for $\kappa=(\mu_1(\tau)-\eta)(1-\epsilon/8)M$, 
  \begin{align}\nonumber
    \Pr_{C_1,\ldots,C_M\sim\D,s_i\sim p_{C_i}}[\text{Arthur accepts fewer than $\kappa$ tuples}]
    &\leq
      2e^{-M\epsilon^2(\mu_1(\tau)-\eta)^2/32} \\
    &\leq 
      2e^{-\Omega(M\epsilon^2p^2k^2/N^2)},
  \end{align}
  since $\eta=O(1/N^2)$ and $\mu_1(\tau)-\eta= \Omega(\mu_1(\tau))=\Omega(\mu_0(\tau))=\Omega(pk/N)$.
  
  It remains to give an upper bound of the ratio: Since $\eta=o(\epsilon)$,
  \begin{align}\nonumber
    \frac{\kappa'}{\kappa} 
    &\leq \frac{\mu_0(\tau)+\eta}{\mu_1(\tau)-\eta} \frac{1+\epsilon/8}{1-\epsilon/8} \\\nonumber
    &\leq \frac{\mu_0(\tau)+\eta}{\mu_1(\tau)-\eta} \frac{1}{1-\epsilon/4} \\\nonumber
    &\leq \frac{1}{1+\epsilon/4-O(\eta)}\\\nonumber
    &\leq \frac{1}{1+\epsilon/5}\\
    &\leq 1-\epsilon/5 +O(\epsilon^2).
  \end{align}
\end{proof}

We have proved all the statements needed for proving our main theorem in this section. \
Now we describe the quantum-classical Arthur-Merlin protocol. \

\begin{enumerate}
\item[] Both Merlin and Arthur are given access to $(C_1,s_1),\ldots,(C_M,s_M)$ and advice strings including integers $T$, $j\in[T]$ and $R$.
\item Arthur chooses a random hash function $h:[M]\to [R]$ and $y$ uniformly from $[R]$.
\item Merlin sends $i$ and $d=(z_1,\ldots,z_k)$.
\item For $K\geq 4T^22^Bn$, Arthur takes $K+1$ samples $d_1,\ldots,d_K,O\sim\A(C_i)$. \
  He accepts if 
  \begin{enumerate}
    \item $\frac{1}{K}|\{\ell:d_\ell=d\}| \geq \left(\tau-\frac{1}{2T}\right) 2^{-B/2}$, where $\tau=1/2+j/T$,

    \item $s_i\in O$, and
    \item $h(i)=y$.
  \end{enumerate}
\end{enumerate}

We prove the main theorem in this section. \
\begin{theorem}\label{thm:llqsv-basic}
  If there exists a device $\A$ which runs in quantum polynomial time and satisfies \eq{llqsv-contra-4}, 
  then there is a quantum-classical Arthur-Merlin protocol which on input an $O(n)$-bit advice string, solves $\LLQSV_B(\D)$. \
  In other words, $\LLQSV_B(\D)\in\QCAM\TIME(2^Bn^{O(1)})/O(n)$. \
\end{theorem}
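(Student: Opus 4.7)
The plan is to assemble the QCAM protocol already sketched just before the theorem statement, and verify that it satisfies the completeness/soundness gap required by $\QCAM$, has quantum verifier running time $2^B n^{O(1)}$, and uses only $O(n)$ bits of classical advice. The structural idea is a two-step reduction: first, \lem{gap} and \lem{y-to-ma} combine to give a quantum-classical \emph{Merlin--Arthur} protocol that distinguishes ``$Y_\tau$ accepts $(C_i,s_i)$'' from ``$Y_{\tau-1/T}$ rejects $(C_i,s_i)$''; then a Goldwasser--Sipser approximate-counting layer (\lem{goldwasser-sipser}) on top of this converts the difference in \emph{counts} of accepted tuples (\lem{llqsv-qcma}) into a noticeable probability gap for Arthur to observe.

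Concretely, I would proceed as follows. First, assuming \eq{llqsv-contra-4}, invoke \lem{gap} to obtain $j\in[T]$ with $T=\Theta(\epsilon^{-1}\log(N/\epsilon))$ such that $\mu_1(\tau)\ge (1+\epsilon/2)\mu_0(\tau-1/T)$ for $\tau=1/2+j/T$; put $T$ and $j$ into the $O(\log n)$-bit advice. Next, plug this into \lem{y-to-ma} with $K=4T^2 2^B n$ samples from $\A(C_i)$: with $\eta=O(1/N^2)$ error, Arthur's verification step correctly accepts whenever $Y_\tau$ accepts $(C_i,s_i)$ and correctly rejects whenever $Y_{\tau-1/T}$ rejects. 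Now apply \lem{llqsv-qcma} to obtain the threshold $\kappa$ with $\kappa'\le (1-\epsilon/5+O(\epsilon^2))\kappa$; place $\kappa$ (equivalently $R=2\alpha\kappa/\epsilon'$ for some constant $\alpha$, with $\epsilon'=\Theta(\epsilon)$) into the advice, which still fits in $O(n)$ bits since $\kappa\le M=O(N^3)$. Finally, in the actual protocol, Arthur sends a random hash $h:[M]\to[R]$ and a target $y\in[R]$, Merlin replies with $(i,d)$, and Arthur both runs the MA verifier of \lem{y-to-ma} and checks $h(i)=y$; by \lem{goldwasser-sipser} the resulting acceptance probabilities in the Yes and No cases differ by $\Omega(\epsilon^2)$, which can be amplified in the usual $\QCAM$ fashion.

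For the complexity accounting: Arthur draws $K+1=O(2^B\cdot\poly(n,1/\epsilon))$ samples from the polynomial-time device $\A$, evaluates the frequency count in \eq{fraction}, and checks membership $s_i\in O$ and $h(i)=y$; all of this runs in $2^B n^{O(1)}$ time. The advice consists of $T$, $j$, and $\kappa$ (or equivalently $R$), each an integer of bit-length $O(n)$, as required.

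The main technical obstacle is not in any single step but in verifying that all the randomness in the protocol composes correctly: the failure probabilities in \lem{y-to-ma} (per-instance verification error $\eta$), in \lem{llqsv-qcma} (concentration of the number of accepted tuples around $\kappa$ or $\kappa'$), and in \lem{goldwasser-sipser} (the hash-based approximate counting) must all be small enough after union bounding over $M=\Theta(N^3)$ tuples, while still leaving a noticeable final gap $\Omega(\epsilon^2/\alpha)$ between the Yes and No acceptance probabilities. Since $\eta=O(1/N^2)$ and \lem{llqsv-qcma} gives exponentially small $\eta'=2^{-\Omega(N)}$ error, these are absorbed easily, so the surviving gap is dominated by the $\Theta(\epsilon^2)$ loss from Goldwasser--Sipser, which suffices to contradict $\LLHA_B(\D)$ after standard $\QCAM$ amplification.
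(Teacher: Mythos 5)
Your proposal is correct and follows essentially the same route as the paper: it assembles the same protocol from \lem{gap}, \lem{y-to-ma}, \lem{llqsv-qcma}, and the Goldwasser--Sipser layer of \lem{goldwasser-sipser}, with advice $(T,j,R)$ of total length $O(n)$, verifier time $2^B n^{O(1)}$, and a final $\poly(1/\epsilon)$-fold amplification to a constant gap. The only cosmetic difference is that you track the gap as $\Omega(\epsilon^2)$ (which is what \lem{goldwasser-sipser} literally gives) while the paper's proof states $\Omega(\epsilon)$; since $1/\epsilon=n^{O(1)}$, either bound suffices for the amplification step, so this changes nothing.
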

\begin{proof}
  We show that the above protocol solves $\LLQSV_B(\D)$ with a constant gap. \
  By \lem{llqsv-qcma}, $\kappa'\leq (1-\epsilon/6)\kappa$ with probability $1-2^{-\Omega(N)}$. \
  Then by \lem{goldwasser-sipser}, for $R=12\alpha\kappa/\epsilon$, since $\alpha=1+O(\epsilon)$, the gap is $\Omega(\epsilon)$. \ 
  Thus running an $(1/\epsilon)^{O(1)}$-fold parallel repetition of the above protocol yields a constant gap. \
  For the length of the advice string, since $1/\epsilon=n^{O(1)}$, $T=n^{O(1)}$ and $R\leq M$ and it suffices to choose $M=N^3$, the total length is $O(n)$. \
\end{proof}

As a corollary, if $\LLHA_B(\D)$ is true, then 
\begin{align}
  p \leq \frac{b}{b-1}(1-q) + n^{-\omega(1)}.
\end{align}
\begin{corollary}\label{cor:llqsv-basic}
  For integer $n$, assume that $\LLHA_B(\D)$ holds for distribution $\D$ over circuit acting on $n$ qubits. \
  Then for every device $\A$ passes $\LXEB_{b,k}$ with probability $q$ over choices of $C\sim\D$, 
  \begin{align}
    \Pr_{C\sim\D}\left[H_{\min}(\A(C)) \geq B/2\right] \geq \frac{bq-1}{b-1} - n^{-\omega(1)}.
  \end{align}
\end{corollary}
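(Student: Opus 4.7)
The corollary follows from Theorem~\ref{thm:llqsv-basic} by contrapositive together with a one-line algebraic simplification. My plan is as follows.

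First, I will unwind the definition of $p$ introduced in the setup of Theorem~\ref{thm:llqsv-basic}: there, $p$ denotes $\Pr_{C\sim\D}[H_{\min}(\A(C))<B/2]$, i.e., the probability, over $C\sim\D$, that the device's output has \emph{low} min-entropy. The theorem says that if some quantum polynomial-time $\A$ passing $\LXEB_{b,k}$ with probability $q$ satisfies $p>\frac{b}{b-1}(1-q)+\epsilon$, then $\LLQSV(\D)$ admits a $\QCAM\TIME(2^Bn^{O(1)})$ protocol with $O(n)$ bits of classical advice.

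Second, I invoke $\LLHA_B(\D)$, which forbids precisely such protocols; indeed it even forbids the larger class with $\mathsf{q}(2^Bn^{O(1)})$ quantum advice, and $\QCAM/O(n)$ is contained in that class. Contrapositively, every quantum polynomial-time $\A$ passing $\LXEB_{b,k}$ with probability $q$ must satisfy
$$p \leq \frac{b}{b-1}(1-q) + \epsilon$$
for every inverse polynomial $\epsilon$, which I abbreviate $p\leq\frac{b}{b-1}(1-q)+n^{-\omega(1)}$.

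Third, the target quantity is the complementary probability $\Pr_{C\sim\D}[H_{\min}(\A(C))\geq B/2]=1-p$, and the identity
$$1-\frac{b(1-q)}{b-1}=\frac{(b-1)-b(1-q)}{b-1}=\frac{bq-1}{b-1}$$
immediately converts the upper bound on $p$ into the claimed lower bound.

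The only point that needs checking — and in my view this is the sole nontrivial step — is that the resource accounting in the proof of Theorem~\ref{thm:llqsv-basic} really allows $\epsilon$ to range over arbitrary inverse polynomials. Tracking the parameters $T=\Theta(\epsilon^{-1}\log(N/\epsilon))$, $K=\Theta(T^22^B n)$, and $R\leq M=N^3$ used in that proof, one verifies that for $\epsilon=1/\poly(n)$ the verifier still runs in $2^Bn^{O(1)}$ time with $O(n)$ classical advice, so the resulting $\QCAM$ protocol is still of the form ruled out by $\LLHA_B(\D)$. Hence the contradiction goes through for every inverse polynomial $\epsilon$, yielding the $n^{-\omega(1)}$ slack in the final bound.
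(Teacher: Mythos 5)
Your proposal is correct and follows essentially the same route the paper intends: the paper states \cor{llqsv-basic} without a separate proof, precisely because it is the contrapositive of \thm{llqsv-basic} combined with $\LLHA_B(\D)$ (noting that $\QCAM\TIME(2^Bn^{O(1)})/O(n)$ is subsumed by the class the assumption rules out), followed by the identity $1-\tfrac{b(1-q)}{b-1}=\tfrac{bq-1}{b-1}$. Your check that the parameters $T$, $K$, and the advice length stay within $2^Bn^{O(1)}$ time and $O(n)$ advice for every inverse-polynomial $\epsilon$, which is what yields the $n^{-\omega(1)}$ slack, is exactly the point implicit in the paper's argument.
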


Since min-entropy is the smallest quantity in the family of R\'{e}nyi entropies, we establish a lower bound on the von Neumann entropy.
\begin{theorem}\label{thm:llqsv-single}
    For integer $n$, assume that $\LLHA_B(\D)$ holds for distribution $\D$ over circuits acting on $n$ qubits. \
    Then for any device which on input a circuit $C$, outputs a classical state $\psi$ over $\bit^{nk}$ solving $\LXEB_{b,k}$ with probability $q$, it holds that 
    \begin{align}
        H(Z|C)_\psi \geq \frac{B}{2}\left(\frac{bq-1}{b-1} - n^{-\omega(1)}\right).
    \end{align}
\end{theorem}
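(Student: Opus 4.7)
The plan is to reduce the theorem to the already-established min-entropy statement in \cor{llqsv-basic} via the elementary fact that for any discrete distribution $D$, the Shannon entropy satisfies $H(D) \geq H_{\min}(D)$, together with the definition of conditional von Neumann entropy as an expectation over the conditioning register.

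Concretely, I would first apply \cor{llqsv-basic} to the device $\A$, which yields that the event
\begin{align}
    E := \{C' \in \mathrm{supp}(\D) : H_{\min}(\A(C')) \geq B/2\}
\end{align}
satisfies $\Pr_{C \sim \D}[C \in E] \geq \frac{bq-1}{b-1} - n^{-\omega(1)}$. Since the output state $\psi_{ZC}$ is classical on both $Z$ and $C$, the conditional von Neumann entropy decomposes as
\begin{align}
    H(Z|C)_\psi = \Exp_{C' \sim \D}\left[ H(Z)_{\psi | C = C'} \right].
\end{align}
For each $C' \in E$, the distribution of $Z$ given $C = C'$ is exactly $\A(C')$, and the standard inequality $H(\cdot) \geq H_{\min}(\cdot)$ gives $H(Z)_{\psi | C = C'} \geq B/2$. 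For every $C' \notin E$, nonnegativity of Shannon entropy on a classical register gives $H(Z)_{\psi | C = C'} \geq 0$.

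Combining these two bounds in the expectation yields
\begin{align}
    H(Z|C)_\psi \geq \frac{B}{2}\cdot \Pr_{C \sim \D}[C \in E] \geq \frac{B}{2}\left(\frac{bq-1}{b-1} - n^{-\omega(1)}\right),
\end{align}
which is the desired bound. There is no real obstacle at this step: all of the technical work—the Goldwasser--Sipser-style approximate counting argument, the telescoping gap lemma, and the $\QCAM$ reduction—has already been carried out in the proof of \thm{llqsv-basic} and summarized in \cor{llqsv-basic}. The only thing to be careful about is that the lower bound is applied to the entire $k$-tuple of samples rather than a single sample, but this is exactly the form given by \cor{llqsv-basic} (which lower-bounds $H_{\min}(\A(C))$ for $\A$ producing the full list $(z_1,\ldots,z_k)$), so no additional argument is needed.
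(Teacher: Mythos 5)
Your proposal is correct and follows essentially the same route as the paper's own proof: both apply \cor{llqsv-basic}, use that Shannon entropy dominates min-entropy on the good set of circuits, and combine this with non-negativity of entropy and the expectation form of $H(Z|C)_\psi$ to conclude. Your write-up simply makes explicit the steps the paper states tersely, and your remark that the bound already applies to the full $k$-tuple is exactly right.
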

\begin{proof}
  By \cor{llqsv-basic}, since min-entropy is the smallest quantity in the family of R\'{e}nyi entropies, 
  \begin{align}\nonumber
      \Pr_{C\sim\D}\left[ H(Z)_{\psi^C} \geq B/2\right] 
      &\geq \frac{bq-1}{b-1} - n^{-\omega(1)}
  \end{align}
  where $\psi^C$ is the distribution output by the device conditioned on $C$. \
  Since for every $C$, the von Neumann entropy is non-negative, by definition of conditional von Neumann entropy, we conclude the proof. \
\end{proof}

\subsection{Entropy Accumulation}\label{sec:llqsv-eat}

To apply the EAT we showed in \sec{eat}, we must give a min-tradeoff function $f$ such that for devices outputting a classical state $\psi$ and solving $\LXEB_{b,k}$ with probability $q$, $H(Z|C)_\psi\geq f(q)$.
From \thm{llqsv-single}, the affine function can be defined as 
\begin{align}\label{eq:f}
    f(q) := \frac{bq-1}{b-1}\frac{B}{2} - c,
\end{align}
for $c=n^{-\omega(1)}$ independent of $q$.

For spot checking, by \lem{lxeb-product}, we change the circuit $\Omega(\log n)$ times so that a perfect device can pass the verification with overwhelming probability.
With the parameters, we are ready to give our protocol in \fig{llqsv}.

\begin{figure}
    \hrule\vspace{.5em}
  Input: security parameter $n$, a distribution $\D$ over circuits on $n$ qubits, the threshold constant $b\in[1,2]$, the number of samples $k=O(n^2)$ per iteration, the number of rounds $m$, and the fraction $\gamma=O((\log n)/m)$ of circuit updates.\\

  The protocol:
  
  \vspace{.5em}
  \begin{enumerate}
    \item %
      For $i=1,\ldots,m$, run the following steps:
  \begin{enumerate}
  \item The verifier samples $T_i\sim\Bernoulli(\gamma)$. If $T_{i-1}=1$ (when $i>1$) or $i=1$, the device samples $C_i\sim\D$.
  Otherwise, the device sets $C_i=C_{i-1}$.
  The verifier sends $C_i$ to the device (and keeps $T_i$ secret).
  
  \item The device returns $k$ samples $d_i=(z_1,\ldots,z_k)$.
  
  \item If $T_i=1$, the verifier sets 
  \begin{align}
      W_i = \Id\left[\frac{1}{k}\sum_{i=1}^k p_C(z_i) \geq \frac{b}{N} \wedge E_i\right].
  \end{align}
  where $E_i=0$ if there exist distinct $\ell,\ell'\in\{j:C_j=C_i\}$ such that the samples $d_\ell=(z_{\ell 1},\ldots,z_{\ell k})$ and $d_{\ell'}=(z_{\ell' 1},\ldots,z_{\ell' k})$ are not all distinct. \ 
  (This check is used to prevent the device repeats responses for any two rounds using the same challenge circuit.) \ 
  If $T_i=0$, the verifier sets $W_i=\bot$.
  \end{enumerate} 
  
  \item Let $t=|\{i:T_i=1\}|$ be the number of test rounds. The verifier computes 
    \begin{align}
      W = \sum_{i:T_i=1} W_i.
      \end{align}
      If $W\geq 0.99 t$, then the verifier accepts and outputs $(d_1,\ldots,d_m)$ to the quantum-proof randomness extractor.
  \end{enumerate} 
  \hrule\vspace{1em}
  \caption{The entropy accumulation protocol based on $\LLHA$.
}
  \label{fig:llqsv}
  \end{figure}

We prove that the entropy accumulates.
In the following theorem, let $D=D_1\ldots D_m$ denote the responses received from the device, $C=C_1\ldots C_m$ denote the circuit sent from the device, and $T=T_1\ldots T_m$ the flags indicating whether a test round is executed (see \fig{llqsv}).
\begin{theorem}
  Assume that $\LLHA_B(\D)$ holds for distribution $\D$.
  Conditioned on the event $\Omega$ that the verifier does not abort in the protocol in \fig{llqsv}, 
  \begin{align}
      H_{\min}^\epsilon(D |CT)_{\rho|\Omega} 
      \geq 
      n\left((0.99-0.01/\delta) \beta m- O\left( \sqrt{m}\right)\cdot \sqrt{\log \frac{2}{p^2 \epsilon^2}} \right),
  \end{align}
  where $\beta=\frac{B}{2n}$, $\rho$ is the output state, and $p$ is the probability of non-aborting.
  Furthermore, there exists a device which solves $\LXEB_{2,k}$ for $k=O(n^2)$ with probability $1-o(1)$.
\end{theorem}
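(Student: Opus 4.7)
The plan is to cast the protocol in \fig{llqsv} into the framework of Section~\ref{sec:eat} and apply \cor{eat}, using \thm{llqsv-single} as the single-round input. Specifically, I would model round $i \in [m]$ as a CPTP map $\M_i$ that takes the device's memory register $R_{i-1}$ together with the verifier's randomness for $T_i$ and $C_i$, and outputs $(Z_i, C_i, T_i, R_i)$, where $Z_i := D_i$ stores the $k$ sampled responses. The score observable $G$ is diagonal in the classical basis and assigns value $1$ to any tuple $(d, C, T)$ with $T=1$, the distinctness check $E_i$ passing, and $\frac{1}{k}\sum_j p_C(z_j) \geq b/N$, and $0$ otherwise. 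By \thm{llqsv-single}, if the device passes LXEB on circuit $C_i$ with probability $q$ in a round, then $H(Z_i \mid C_i) \geq \frac{bq-1}{b-1}\cdot\frac{B}{2} - n^{-\omega(1)}$, regardless of $T_i$ since the device never sees the flag. Applying the spot-checking recipe from the end of Section~\ref{sec:eat} (replacing $G$ by $\gamma G$ and $f$ by $f/\gamma$), I define the affine min-tradeoff function
\begin{align}
f(s) := \frac{b(s/\gamma) - 1}{b - 1}\cdot\frac{B}{2} - n^{-\omega(1)},
\end{align}
valid for $s \in [0,\gamma]$ corresponding to per-round test-and-pass probabilities in $[0,1]$.

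Next, I would apply \cor{eat} directly. Conditioned on the accepting event $\Omega$, $W \geq 0.99 t$, and a Hoeffding bound on $T_1,\ldots,T_m \sim \Bernoulli(\gamma)$ (using $\gamma m = \Omega(\log n)$) gives $t \geq (1-o(1))\gamma m$ except on an event of probability $o(1)$ that can be folded into the $\epsilon$-smoothing. Hence the observed per-round score $\bar s = W/m$ satisfies $\bar s \geq 0.99\gamma - o(\gamma)$, and plugging in yields $f(\bar s) \geq (0.99 - 0.01/\delta)\beta n - o(n)$ when $b = 1+\delta$ and $\beta = B/(2n)$. Multiplying by $m$ produces the leading term $n \cdot (0.99 - 0.01/\delta)\beta m$, and the slack $\sqrt m\, V\,\sqrt{\log(2/(p^2\epsilon^2))}$ from \cor{eat} contributes $n \cdot O(\sqrt m)\sqrt{\log(2/(p^2\epsilon^2))}$ after absorbing $\poly(n,k,1/\gamma)$ factors into the big-$O$.

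For the ``furthermore'' clause, the honest device sampling each $z_{ij} \sim p_{C_i}$ independently passes LXEB on at least a $0.99$ fraction of the $\Omega(\log n)$ distinct challenge circuits with probability $1-o(1)$ by \lem{lxeb-product}, provided $k = \Omega(n^2)$. A further Hoeffding bound over the independent choice of test rounds then shows $W \geq 0.99t$ with probability $1-o(1)$. The same device's per-round LXEB score is concentrated at $2N/(N+1) = 2 - o(1)$, so---interpreting the stated threshold ``$\LXEB_{2,k}$'' as $\LXEB_{b,k}$ for $b$ arbitrarily close to $2$---such a device exists and passes the verification.

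The main obstacle I anticipate is the delicate parameter balancing inside \cor{eat}: the spot-checking modification inflates $\|\nabla f\|_\infty$ by a factor of $1/\gamma = \Omega(m/\log n)$, and the per-round output dimension $d_Z = 2^{nk}$ contributes $\log d_Z = nk$ to the constant $V = O(\log(2d_Z+1) + \|\nabla f\|_\infty)$. Making the second-order correction subleading in $m$ therefore requires choosing $m$ sufficiently large relative to $n$ and $k$, and carefully packaging the small-probability concentration failures of $t$ into the $\epsilon$-smoothing. The rest of the argument---the single-round reduction to LLHA and the EAT application itself---follows the template already established in Sections~\ref{sec:llqsv-single} and~\ref{sec:eat}.
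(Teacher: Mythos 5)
Your overall template---feeding the single-round bound of \thm{llqsv-single} into an affine min-tradeoff function and invoking \cor{eat}, with the honest-device clause handled via \lem{lxeb-product}---is the same as the paper's. The gap is in how you instantiate the spot-checking. You rescale the min-tradeoff function to $f'(s)=f(s/\gamma)$, so $\|\nabla f'\|_\infty=\frac{b}{(b-1)\gamma}\cdot\frac{B}{2}$, and you take $d_Z=2^{nk}$, making the constant in \cor{eat} of order $V=\Theta\left(nk+\frac{B}{\delta\gamma}\right)$. But the protocol in \fig{llqsv} fixes $\gamma=O((\log n)/m)$, so $1/\gamma=\Omega(m/\log n)$ and the second-order term $\sqrt{m}\,V$ scales like $Bm^{3/2}/(\delta\log n)$, which is not $n\cdot O(\sqrt m)$ and in fact dominates the main term $\Theta(Bm)$; the resulting bound is vacuous for large $m$. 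Your claim that the extra factors can be ``absorbed into the big-$O$'' is not valid, because the hidden constant in $O(\sqrt m)$ must be independent of $m$, and your proposed remedy---taking $m$ larger relative to $n$ and $k$---goes the wrong way, since $1/\gamma$ grows linearly in $m$.

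The paper avoids this by applying \cor{eat} with the min-tradeoff function exactly as in \eq{f} (so $\|\nabla f\|_\infty=\frac{b}{b-1}\cdot\frac{B}{2}=O(n/\delta)$) and with $d_Z=N$, using the conditioning on $\Omega$ to set the score $q\geq 0.99$; then $V\leq (n+1)+(1+1/\delta)B/2$ is independent of $m$, and the correction is genuinely $n\cdot O(\sqrt m)\sqrt{\log(2/(p^2\epsilon^2))}$. (Whether the paper's own handling of the $\gamma$-fraction testing is fully careful can be debated, but that is the route that yields the stated inequality.) To repair your version you would either need an EAT variant in which the penalty for infrequent testing enters only through the behavior on test rounds rather than through $\|\nabla f\|_\infty/\gamma$, or you should keep $f$ unrescaled as the paper does; as written, your parameterization does not deliver the claimed bound.
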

\begin{proof}
  Let $f$ be as defined in \eq{f}.
  To apply \cor{eat}, we set $d_Z=N$ and $\|\nabla f\|_\infty = \frac{b}{b-1}\frac{B}{2}$, and therefore $V\leq (n+1)+\frac{b}{b-1}\frac{B}{2}$. 
  For $b=1+\delta$ and $q\geq 0.99$,
  \begin{align}
      \frac{bq}{b-1} \geq \frac{(1+\delta)0.99-1}{\delta} \geq 0.99-0.01/\delta.
  \end{align}

  This implies a lower bound
  \begin{align}\nonumber
      &(0.99-0.01/\delta) B m/2 - \sqrt{m} \left((n+1)+(1/\delta+1)B/2\right)\sqrt{\frac{2}{p^2\epsilon^2}} \\\nonumber
      &\qquad= n\left((0.99-0.01/\delta)\beta m - \sqrt{m}\cdot (1+1/n+(1/\delta+1)\beta)\sqrt{\frac{2}{p^2\epsilon^2}}\right) \\
      &\qquad\geq
      n\left((0.99-0.01/\delta)\beta m - O(\sqrt{m})\cdot \sqrt{\frac{2}{p^2\epsilon^2}}\right),
  \end{align}
  where $\beta=\frac{B}{2n}$.
\end{proof}

\subsection{LLHA Relative to a Random Oracle}\label{sec:llqsv-ro}

Now, we turn our attention to justifying \asm{llha} relative to a random oracle.
Given access to a random function $f:\bit^n\to\{+1,-1\}$, there is a simple algorithm which samples from the Fourier spectrum:
\begin{align}\label{eq:fourier}
  \ket{0^n}
  \xmapsto{H^{\otimes n}} \sum_{x\in\bit^n} \ket{x}
  \xmapsto{O_f} \sum_{x\in\bit^n} f(x)\ket{x}
  \xmapsto{H^{\otimes n}} \sum_{y\in\bit^n} \hat f(x)\ket{x},
\end{align}
where $\hat f(x):=\frac{1}{N}\sum_{y\in\bit^n}(-1)^{x\cdot y}f(y)$ is the Fourier coefficient of $f$. \
Given oracle access to $M$ random Boolean functions and the samples, we defined a black-box version of the LLQSV problem. \
For concreteness, the algorithm is given access to a unitary $\O:\ket{i,x}\mapsto f_i(x)\ket{i,x}$, where $i$ ranges from 1 to $M$ and $f_1,\ldots,f_M$ are random functions sampled from the uniform distribution $\F_n$ over $n$-bit Boolean functions. \
The black-box version of $\LLQSV$ is formally defined as follows. \
\begin{problem}[{Black-box $\LLQSV$}]
  For $M=2^{O(n)}$, given access to strings $s_1,\ldots,s_M\in\bit^n$ and to functions $f_1,\ldots,f_M\sim\F_n$ through the unitary $\O:\ket{i,x}\mapsto f_i(x)\ket{i,x}$, determine whether, in the yes case, $s_i\sim |\hat f_i|^2$, or, in the no case, each $s_i$ is sampled from the uniform distribution (hence independent of $f_i$). %
\end{problem}

The proof is organized in the following steps. \
We prove $\LLQSV$ is not in $\QIP[2]$, the class of problems which admits a two-message quantum interactive proof system, relative to a random oracle. \
The proof is closely related to the black-box LLQSV lower bound for $\BQP$ by Bassirian, Bouland, Fefferman, Gunn, and Tal \cite{BBFGT21}, which also uses a hybrid argument \cite{BBBV97}, but here we strengthen the hardness to $\QIP[2]$. Thus the problem is not in $\QCAM$. \

However, the hybrid argument does not immediately lead to hardness for $\QCAM/\qpoly$. \
By Aaronson and Drucker's exchange theorem, $\QCAM/\qpoly\subseteq\QMA/\cpoly$, and thus it suffices to show hardness for the latter.
We then give a query lower bound using the polynomial method, and appeal to the strong direct product theorem by Sherstov \cite{She11}. 

\subsubsection{Two-message quantum interactive proofs}\label{sec:llqsv-qip2}

To show LLHA holds relative to $\O$, we first provide intuition.
If the algorithm is not given access to $\O$, then for both cases, $s_1,\ldots,s_M$ are uniform.
Thus, alternatively, we can view $\O$ as a distribution which may or may not depend on uniform $s_1,\ldots,s_M$.
This fact is stated in the following lemma.
\begin{lemma}
  Let $\F_n$ be the uniform distribution over $n$-bit Boolean functions $\bit^n\to\{+1,-1\}$ and $\U_n$ be the uniform distribution over $n$-bit strings.
  The following two sampling processes are equivalent, i.e. their output distributions are identical:
  \begin{itemize}
  \item Sample $f\sim\F_n$ and then $s\sim |\hat f|^2$. Output $(f,s)$.
  \item Sample $s\sim\U_n$ and then $f\sim\G_{n,s}$, where $\G_{n,s}(f)=\F_n(f)\cdot|\hat f(s)|^2 \cdot N$. Output $(f,s)$.
  \end{itemize}
\end{lemma}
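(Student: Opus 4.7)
The plan is to verify the claim by a direct Bayes-rule computation on the joint distribution over $(f,s)$, together with two small sanity checks that both sampling processes are well-defined probability distributions.

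First I would write down the joint probability assigned by each process. Process~1 assigns
\[
\Pr\nolimits_1[(f,s)] = \F_n(f)\cdot |\hat f(s)|^2,
\]
while Process~2 assigns
\[
\Pr\nolimits_2[(f,s)] = \U_n(s)\cdot \G_{n,s}(f) = \frac{1}{N}\cdot \F_n(f)\cdot |\hat f(s)|^2 \cdot N = \F_n(f)\cdot |\hat f(s)|^2,
\]
which are manifestly equal. So the core of the lemma is this one-line identity; the substance of the proof lies in checking that the conditional distributions used on each side are actually normalized.

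Second, I would verify the two normalization facts. For Process~1 to be well-defined we need $\sum_{s\in\bit^n}|\hat f(s)|^2 = 1$ for every $f:\bit^n\to\{-1,+1\}$; this is Parseval's identity, using $f(x)^2 = 1$ pointwise. For Process~2 to be well-defined we need $\sum_f \G_{n,s}(f) = 1$ for every $s$, equivalently $\Exp_{f\sim\F_n}[|\hat f(s)|^2] = 1/N$. Writing $\hat f(s) = \frac{1}{N}\sum_y (-1)^{s\cdot y} f(y)$ with $f(y)$ i.i.d.\ uniform in $\{-1,+1\}$, the cross-terms vanish in expectation and the diagonal terms each contribute $1/N^2$, yielding exactly $1/N$.

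The argument is a routine exercise in Bayes' rule, so I do not anticipate any real obstacle; the only thing one must be careful about is checking that $\G_{n,s}$ is indeed a probability distribution on functions, which is where the factor of $N$ in its definition is precisely chosen to make the normalization work out. Once those two normalizations are in hand, the equality of marginals on $f$ (and on $s$) and of the joint distribution follows immediately from the display above.
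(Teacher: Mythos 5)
Your proposal is correct and follows essentially the same route as the paper: both amount to factoring the joint distribution $\F_n(f)\cdot|\hat f(s)|^2$ via Bayes' rule, with the key computation $\Exp_{f\sim\F_n}[|\hat f(s)|^2]=1/N$ (which the paper uses as the marginal of $s$ and you use as the normalization of $\G_{n,s}$). Your extra Parseval check that $\sum_s|\hat f(s)|^2=1$ is a harmless addition left implicit in the paper.
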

\begin{proof}
  Let the probability density of the first distribution be $\mu(f,s)=\F_n(f)\cdot |\hat f(s)|^2$.
  The marginal 
  \begin{align}\nonumber
    \mu(s) 
    &= \sum_f \F_n(f)\cdot |\hat f(s)|^2 \\\nonumber
    &= \Exp_f[|\hat f(s)|^2] \\
    &= \frac{1}{N}.
  \end{align}
  It remains to calculate $\mu(f|s)$:
  \begin{align}\nonumber
    \mu(f|s) 
    &= \frac{\mu(f,s)}{\mu(s)} \\ 
    &= \F_n(f) \cdot |\hat f(s)|^2 \cdot N.
  \end{align}
\end{proof}

We then define the following promise problem, called Boolean Function Bias Detection ($\BFBD$). %
\begin{definition}[{Boolean Function Bias Detection ($\BFBD$)}]\label{dfn:llqsv-bfbd}
The Boolean Function Bias Detection problem is to distinguish between the follow cases:
  \begin{itemize}
  \item \textbf{Yes-case}: For each $i\in[M]$, sample $f\sim\G_{n}$, where $\G_{n}$ is the probability distribution of density 
      $\G_{n}(f) := N(1-2\Delta(f))^2\F_n(f)$,
    where $\Delta(f)$ is the fraction of elements that evaluates to $-1$, i.e., $\Delta(f):=\frac{1}{N}|\{x:f(x)=-1\}|$.
  \item \textbf{No-case}: For each $i\in[M]$, sample the distribution $f\sim\F_n$.
  \end{itemize}
\end{definition}
For function $f$, we will also call $\Delta(f)$ the distance of $f$. \
We now prove that $\BFBD$ reduces to $\LLQSV$, and thus it suffices to establish a lower bound for $\BFBD$. \
\begin{lemma}\label{lem:llqsv-reduction}
  $\BFBD$ reduces to $\LLQSV$.
\end{lemma}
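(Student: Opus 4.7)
The plan is to give a time- and query-efficient reduction: given oracle access to $f_1,\ldots,f_M$ in a $\BFBD$ instance, the reduction samples uniform independent $r_1,\ldots,r_M\in\bit^n$, sets $s_i:=r_i$, and forwards to the $\LLQSV$ solver the pairs $(g_i,s_i)$ where $g_i(x):=f_i(x)(-1)^{r_i\cdot x}$. A query to $g_i$ is implemented by one query to $f_i$ followed by a diagonal phase gate, so the transformation is oracle- and time-efficient. The intuition is that the character-shift $f\mapsto f\cdot\chi_{r}$, with $\chi_r(x):=(-1)^{r\cdot x}$, converts an ensemble biased at the distinguished direction $0^n$ (where $\G_n$ puts its weight) into an ensemble biased at a uniformly random direction $r$, which is exactly the gap between $\BFBD$ and $\LLQSV$.

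The correctness is a short Fourier calculation using two identities: $1-2\Delta(f)=\hat f(0^n)$, which rewrites the $\BFBD$ yes-density as $\G_n(f)=N|\hat f(0^n)|^2\F_n(f)$; and $\widehat{f\chi_r}(t)=\hat f(t\oplus r)$, together with the invariance $\F_n(f\chi_r)=\F_n(f)$ (since multiplication by $\chi_r$ is a bijection on Boolean functions). Combining these gives $\G_n(h\chi_s)=N|\hat h(s)|^2\F_n(h)$. In the $\BFBD$ no-case ($f_i\sim\F_n$), invariance makes $g_i\sim\F_n$ independent of $r_i$, so $(g_i,r_i)\sim\F_n\times\U_n$, matching the $\LLQSV$ no-distribution. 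In the $\BFBD$ yes-case ($f_i\sim\G_n$),
\begin{align*}
  \Pr[(g_i,r_i)=(h,s)] \;=\; \tfrac{1}{N}\,\G_n(h\chi_s) \;=\; |\hat h(s)|^2\,\F_n(h),
\end{align*}
which by the sampling-process equivalence established in the preceding lemma is exactly the $\LLQSV$ yes-distribution for a single pair. Independence across $i\in[M]$ promotes this to a full distributional equality on $M$-tuples, so any solver for $\LLQSV$ decides $\BFBD$ with the same advantage on the constructed instance.

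There is no substantive obstacle — the argument is a direct change of variables. The only points to double-check are that the oracle transformation is genuinely query-preserving (it is, since the added phase commutes with the standard-basis queries used to define $\O$), and that the twist $\chi_{r_i}$ is the correct uniformizer that lines up the special direction $0^n$ with the uniform $r_i$ demanded by the $\LLQSV$ input format.
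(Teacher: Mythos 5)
Your reduction is exactly the paper's: twist each $f_i$ by $\chi_{r_i}$ for a fresh uniform $r_i$, output the pairs $(f_i\chi_{r_i},r_i)$, and verify via $1-2\Delta(g\chi_s)=\hat g(s)$ (equivalently your shift identity) that the no-case gives $\F_n\times\U_n$ and the yes-case gives joint density $\F_n(h)\lvert\hat h(s)\rvert^2$, matching the sampling-equivalence lemma. The argument and the query-efficient oracle simulation coincide with the paper's proof, so there is nothing to add.
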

\begin{proof}
  The reduction works as follows: \
  Sample $s_1,\ldots,s_M$ uniformly and simulate the oracle $\Q:\ket{i,x}\mapsto (-1)^{s_i\cdot x}\ket{i,x}$. \
  Run the protocol for $\LLQSV$ using the oracle $\Q\O$, and effectively the protocol is given oracle access to the function $g_i=\chi_{s_i}\cdot f$, where $\chi_{s_i}(x):=(-1)^{s_i\cdot x}$ for each $i\in[M]$. \
  To see why the reduction works, it suffices to show that each oracle distributed identically as the associated case for $\LLQSV$. \
  In the no case, the distribution of $\chi_s\cdot f$ for uniform $s$ and $f$ is distritubed uniformly and independently of $s$. \
  In the yes case, let $g=\chi_s\cdot f$: the density $\mu$ of $g$ can be calculated as follows: \
  \begin{align}\nonumber
    \mu(g|s) 
    &= N(1-2\Delta(g\cdot\chi_s))^2\cdot \F_n(g\cdot\chi_s) \\
    &= N(1-2\Delta(g\cdot\chi_s))^2\cdot \F_n(g).
  \end{align}
  The second equality holds since $\F_n$ is uniform. \
  It remains to show that $\hat g(s)=1-2\Delta(g\cdot \chi_s)$:
  \begin{align}\nonumber
    \hat g(s) 
    &= \frac{1}{N} \sum_{x} \chi_s(x) f(x) \\\nonumber
    &= \frac{1}{N} \sum_{x:\chi_s(x)f(x)=1} \chi_s(x) f(x) - \frac{1}{N} \sum_{x:\chi_s(x)f(x)=-1} \chi_s(x) f(x) \\
    &= 1-2\Delta(g\cdot\chi_s).
  \end{align}
\end{proof}

Though our purpose is to show $\LLQSV$, or equivalently $\BFBD$, is not in $\QCAM$, here we prove a stronger result: $\BFBD$ is not in $\QIP[2]$. \
First, we observe that each distribution can be sampled by choosing the distance $\Delta$ first, and then sampling a function $f$ according to the uniform distribution among all functions of distance $\Delta$. \
More formally, $\BFBD$ is equivalent to the following problem:
\begin{itemize}
\item \textbf{No-case}: For each $i\in[M]$, sample $D_i\sim\Binomial(N,1/2)$, the binomial distribution with $N$ trials and bias $1/2$, and set $\Delta_i=D_i/N$, i.e., the density is
  \begin{align}
    p_0(\Delta_i) = \binom{N}{N\Delta_i} 2^{-N}.
  \end{align}
  Then sample $f_i\sim \H_{\Delta_i}$, the uniform distribution over all functions $g$ of distance $\Delta(g)=\Delta_i$.
\item \textbf{Yes-case}: For each $i\in[M]$, sample $\Delta_i$ according to the distribution $p_{1}$ of density
  \begin{align}
    p_1(\Delta_i) := %
    p_0(\Delta_i)
    \cdot N(1-2\Delta_i)^2.
  \end{align}
  Then sample $f_i\sim\H_{\Delta_i}$.
\end{itemize}

Then we show that for each distribution, $\Delta$ is concentrated around $1/2$, this allows us to only consider the event that $\Delta$ is sufficiently close to $1/2$.
\begin{lemma}\label{lem:llqsv-concentration}
  For $b\in\bit$ and $\alpha>0$,
  \begin{align}
    \Pr_{\Delta_1,\ldots,\Delta_M\sim p_b}\left[\exists i, |\Delta_i-1/2|\geq \frac{\alpha}{\sqrt{N}}\right] \leq 2MNe^{-2\alpha^2}.
  \end{align}
\end{lemma}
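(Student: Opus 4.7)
The plan is to prove the bound separately for each distribution $p_0$ and $p_1$, and then apply a union bound over the $M$ samples. The main tools are Hoeffding's inequality applied to the binomial distribution, plus a pointwise comparison of the two densities.

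First, I would handle the case $b=0$. Under $p_0$, by construction $\Delta \cdot N$ has the $\mathrm{Binomial}(N,1/2)$ distribution, so $\Delta$ is the empirical mean of $N$ i.i.d. $\{0,1\}$-valued fair coin flips. Hoeffding's inequality immediately gives
\begin{align}
\Pr_{\Delta\sim p_0}\!\left[|\Delta-1/2|\geq \alpha/\sqrt{N}\right] \leq 2e^{-2\alpha^2}.
\end{align}

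Next I would handle the case $b=1$ by a density comparison. Since $(1-2\Delta)^2 \leq 1$ for every $\Delta\in[0,1]$, the densities satisfy the pointwise bound $p_1(\Delta) = N(1-2\Delta)^2\, p_0(\Delta) \leq N\, p_0(\Delta)$. Therefore, for any event $E$,
\begin{align}
\Pr_{\Delta\sim p_1}[E] \leq N\cdot \Pr_{\Delta\sim p_0}[E].
\end{align}
Applying this to $E = \{|\Delta-1/2|\geq \alpha/\sqrt{N}\}$ and combining with the $p_0$ bound from the previous step yields $2Ne^{-2\alpha^2}$.

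Finally, a union bound over $i=1,\ldots,M$ multiplies the single-index bound by $M$, producing the claimed $2MNe^{-2\alpha^2}$ in both cases (the $b=0$ case is slack by a factor of $N$, which is harmless for the intended application). There is no real obstacle here; the only mildly nontrivial point is recognizing that the extra factor of $N$ in the statement exactly accommodates the reweighting by $N(1-2\Delta)^2$ when passing from $p_0$ to $p_1$, so no sharper tail analysis is needed.
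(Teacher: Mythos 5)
Your proposal is correct and matches the paper's argument: Hoeffding's inequality for the $p_0$ tail, the pointwise density bound $p_1(\Delta)=N(1-2\Delta)^2p_0(\Delta)\leq N\,p_0(\Delta)$ to transfer the tail bound to $p_1$, and a union bound over the $M$ indices. The paper proves it in exactly this way (and, as you note, the $b=0$ case is slack by the factor $N$).
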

\begin{proof}
  In the no case, since each element of $f$ is a fair coin, by Hoeffding's inequality and union bound,
  \begin{align}
    \Pr_{\Delta_1,\ldots,\Delta_M\sim p_0}\left[\exists i,|\Delta_i-1/2| \geq \frac{\alpha}{\sqrt{N}}\right] \leq 2Me^{-2\alpha^2}.
  \end{align}
  In the yes case,
  \begin{align}\nonumber
    \Pr_{\Delta_1,\ldots,\Delta_M\sim p_1}\left[ \exists i, |\Delta_i-1/2| \geq \frac{\alpha}{\sqrt{N}} \right]
    &\leq M 
    \Pr_{\Delta\sim p_1}\left[ |\Delta-1/2| \geq \frac{\alpha}{\sqrt{N}} \right] \\
    &\leq 2MNe^{-2\alpha^2}.
  \end{align}
  The second inequality holds since for $p_1(\Delta)\leq N\cdot p_0(\Delta)$ and by Hoeffding's inequality. \
\end{proof}

For each distribution, consider the event $\Omega$ that $|\Delta-1/2|\leq t$, where $t=((\ln MN^2)/N)^{1/2}=O((n/N)^{1/2})$. \
By \lem{llqsv-concentration}, conditioned on $\Omega$, the gap decreases by at most $O(1/N)$. \
Then for $(\Delta_1,\ldots,\Delta_M)\in\Omega$, we are dealing with the function distributed from $H_{\Delta_i}$ for each $i$. \
Since $\H_{\Delta}$ can be obtained by flipping $|\Delta-1/2|N$ bits of $f\sim\H_{1/2}$, we can use BBBV to give a query lower bound. \

Before we give the proof, let us characterize the behavior of a two-message quantum interactive proof system for distinguishing two distributions $\D_1$ and $\D_0$. \
In a two-message protocol, the verifier first makes $T_1$ queries to $f$ either sampled from $\D_1$ or $\D_0$ and computes a bipartite quantum state $\sigma_{AB}$. \
The verifier sends the first system $A$ to the (unbounded) prover, which performs an arbitrary quantum process $\P\in\CPTP(A,A')$ and returns a second quantum message $\tau_{A'}$. \
Let the resulting state be $\tau_{A'B}:=\P\otimes\Id_B(\sigma_{AB})$.
  Then the verifier continues making $T_2$ queries on input $\tau_{A'B}$ and outputs a decision bit. \
  A $T$-query protocol is defined to be one making $T=T_1+T_2$ queries. \
The protocol is said to solve the problem if the verifier outputs $b$ given $\D_b$ with probability at least $2/3$. \

Our goal is to determine $b$, given access to $M$ samples $\Delta_1,\ldots,\Delta_M\sim p_b$, encoded as the oracle $\O$. \
To show that no protocol can solve the problem, we start with any verifier which outputs 1 with probability at least $2/3$ for an yes instance. \ 
We prove that when given a no instance, there is a prover who convinces the verifier with probability at least $2/3-\eta$ for negligible $\eta$. \
More concretely, we make random modifications from an yes instance to a no instance, in the following steps:
\begin{enumerate}
\item Given $f_1,\ldots,f_M$, for $i\in[M]$, compute $\Delta_i=\Delta(f_i)$ and $S_i=\{x: f_i(x)=-1\}$.
  Also let $\bar S_i=\bit^n\backslash S_i$.
\item Sample $\Delta_1',\ldots,\Delta_M'\sim p_0$. 
  For $i\in[M]$,
    if $\Delta_i'\geq\Delta_i$, choose a random subset $R$ of $S_i$ such that $|R|=\Delta_i'-\Delta_i$. Set $g_i(x)=f_i(x)\cdot (-1)^{\Id_{R}(x)}$, where $\Id_R$ is the indicator function for set $R\subseteq\bit^n$; otherwise, 
    choose a random subset $R$ of $\bar S_i$ such that $|R|=\Delta_i-\Delta_i'$.
    Set $g_i(x)=f_i(x)\cdot (-1)^{\Id_{R}(x)}$.
  \item Output $g_1,\ldots,g_M$.
\end{enumerate}
We then prove that the modification yields a no instance.
\begin{lemma}\label{lem:llqsv-modification}
  The following two processes are equivalent, i.e., their output distributions are identical:
  \begin{enumerate}
  \item Output $g_1,\ldots,g_M\sim\F_n$.
  \item Sample $f_1,\ldots,f_M\sim\G_n$ and perform the modification in the above. Output $g_1,\ldots,g_M$.
  \end{enumerate}
\end{lemma}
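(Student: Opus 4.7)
The plan is to reduce to the single-function case by independence across $i \in [M]$, and then show that the $g$ produced by process 2 from a single $f \sim \G_n$ is uniformly distributed over Boolean functions $\bit^n \to \{\pm 1\}$. I would decompose the distribution of $g$ under process 2 according to $\Delta(g)$. By construction the modification forces $\Delta(g) = \Delta'$, and $\Delta'$ is drawn from $p_0$ independently of $f$, so the marginal of $\Delta(g)$ already matches that under process 1 where $g \sim \F_n$. It remains to check that, conditional on $\Delta(g) = \Delta'$, the function $g$ is uniform on $\H_{\Delta'}$.

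The heart of the argument is a counting identity. Fix a pair $(\Delta, \Delta')$ and suppose without loss of generality $\Delta' \geq \Delta$. When $f$ is uniform on $\H_\Delta$, a given $g \in \H_{\Delta'}$ is reachable only from those $f$ with $S(f) \subseteq S(g)$, of which there are $\binom{N\Delta'}{N\Delta}$, and each such $f$ produces this specific $g$ with probability $1/\binom{N(1-\Delta)}{N(\Delta'-\Delta)}$ (interpreting the set-size entries with the convention that they are integer counts, i.e.\ scaled by $N$). Dividing by $|\H_\Delta| = \binom{N}{N\Delta}$ and simplifying, the three binomial coefficients collapse to $\Pr[g \mid \Delta, \Delta'] = 1/\binom{N}{N\Delta'} = 1/|\H_{\Delta'}|$, independent of $\Delta$. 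The opposite case $\Delta' < \Delta$ is handled identically with the roles of $S(f)$ and $\bar S(f)$ exchanged.

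Because this conditional distribution does not depend on $\Delta$, averaging over $\Delta \sim p_1$ preserves uniformity of $g$ on $\H_{\Delta'}$. Combining this with the marginal $p_0(\Delta') = \binom{N}{N\Delta'}\,2^{-N}$ yields $\Pr[g] = p_0(\Delta(g))/|\H_{\Delta(g)}| = 2^{-N}$, matching $\F_n$ exactly. The main obstacle is purely the bookkeeping of the binomial identity and keeping both directions of the modification straight; there is no conceptual subtlety, and the key structural reason is that the modification step ``forgets'' the biased law on $\Delta(f)$ by resampling the distance independently from $p_0$.
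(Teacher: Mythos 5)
Your proposal is correct and takes essentially the same route as the paper: condition on the distances $(\Delta,\Delta')$, show that the modified function is uniform on $\H_{\Delta'}$, and then use $p_0(\Delta')=\binom{N}{N\Delta'}2^{-N}$ to recover $\F_n$ after averaging over $\Delta'\sim p_0$. The only differences are presentational: you verify the conditional uniformity by an explicit binomial-coefficient count (which checks out, since $\binom{N}{N\Delta}\binom{N(1-\Delta)}{N(\Delta'-\Delta)}=\binom{N}{N\Delta'}\binom{N\Delta'}{N\Delta}$) where the paper simply appeals to the symmetry of adding/removing random elements to a uniform random subset, and you flip points of $\bar S_i$ when $\Delta_i'\ge\Delta_i$, which is the intended operation and silently corrects a typo in the paper's description of the modification.
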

\begin{proof}
  It suffices to show that for every $\Delta_1,\ldots,\Delta_M$, performing the modification on the product distribution $\H_{\Delta_1}\times\ldots\times\H_{\Delta_M}$ yields the distribution $\F_n$ since $\G_n$ is a probabilistic mixture of $\H_{\Delta_1}\times\ldots\times\H_{\Delta_M}$ for $\Delta_1,\ldots,\Delta_M\sim p_0$.
  For each $i\in[M]$, recall that $f_i\sim\H_{\Delta_i}$ is a random subset of size $\Delta_i N$. \
  If $\Delta_i'\geq\Delta_i$, removing $\Delta_i'-\Delta_i$ elements from $S_i$ yields a random subset of size $\Delta_i'$.
  The other case follows similarly.
\end{proof}

We provide intuition on why there exists a cheating prover for every $o((N/n)^{1/4})$-query $\QIP[2]$ protocol. \
By \lem{llqsv-concentration}, with probability $O(1/N)$ over choices of functions sampled from $\G_n$, $|\Delta_i-1/2|\leq O((n/N)^{1/2})$ for every $i\in[M]$. \
Again by \lem{llqsv-concentration}, with probability $O(1/N^{1/2})$, sampling $\Delta_1',\ldots,\Delta_M'\sim p_0$, $|\Delta_i'-1/2|\leq O((n/N)^{1/2})$. \
This implies that randomly flipping $\sum_{i=1}^M |\Delta_i'-\Delta_i|N=O(\sqrt{nN}M)$ elements from a set of at least $NM(1/2-o(1))$ elements yields a no instance.
By BBBV's hybrid argument \cite{BBBV97}, solving $\BFBD$ must make $O((N/n)^{1/4})$ queries.

\begin{theorem}\label{thm:llqsv-bfbd-qip2}
  Every $\QIP[2]$ protocol solving $\BFBD$ must make $\Omega((N/n)^{1/4})$ queries.
\end{theorem}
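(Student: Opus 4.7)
The plan is to derive the bound by contradiction: starting from a $T$-query $\QIP[2]$ verifier $V$ with completeness $\geq 2/3$ on yes instances, I will construct a cheating prover $P_{\text{cheat}}$ for no instances whose acceptance probability differs from the yes-case acceptance by only $O(T(n/N)^{1/4}) + 2^{-\Omega(n)}$, forcing $T = \Omega((N/n)^{1/4})$ to preserve the $1/3$ soundness gap.

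First, I couple the yes and no distributions via the bit-flipping process already described (\lem{llqsv-modification}): draw $f_1,\ldots,f_M \sim \G_n$, independently draw $\Delta_i' \sim p_0$, and produce $g_i = f_i\cdot(-1)^{\Id_{R_i}}$ for a uniformly random $R_i \subseteq \bit^n$ realizing the $|\Delta_i - \Delta_i'|N$ flips; the marginal of $(g_i)$ is exactly $\F_n^M$. Two applications of \lem{llqsv-concentration} with $\alpha = \Theta(\sqrt{n})$ let me condition on the good event $\Omega$ that $|\Delta_i - 1/2|, |\Delta_i'-1/2| \leq O(\sqrt{n/N})$ for every $i$, which holds with probability $1 - 2^{-\Omega(n)}$ since $M = 2^{O(n)}$. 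On $\Omega$ the total number of flipped coordinates is $k := \sum_i |R_i| = O(M\sqrt{nN})$ inside a total oracle domain of size $N' := MN$, so the flipped fraction is $O(\sqrt{n/N})$.

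Next, I define $P_{\text{cheat}}$ on a no instance $g$: being unbounded and with oracle access to $g$, it internally samples the flip sets $\{R_i\}$ from the conditional coupling distribution, forms the imagined yes instance $\hat f_i := g_i\cdot(-1)^{\Id_{R_i}}$, and executes the honest yes-case prover $P_{\text{yes}}$'s strategy with every prover-side oracle query redirected to $\hat f$. Under this coupling, the joint verifier-prover execution on the yes instance and on the cheating no instance is identical step-by-step \emph{except} that the verifier's $T$ oracle queries go to $f$ in the former and to $g$ in the latter; the prover's internal queries all land on $\hat f = f$ in both cases, and all non-query operations of $V$ and $P$ are oracle-independent unitaries interleaved with a fixed message-passing channel. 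Consequently, the combined system is equivalent to a single $T$-verifier-query quantum procedure run against the pair of oracles $(f, g)$.

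Finally, I apply the BBBV hybrid theorem \cite{BBBV97} to this $T$-query procedure. Since $g$ differs from $f$ on a uniformly random size-$k$ subset $S$ of $[N']$, the expected total query mass placed on $S$ across the $T$ query steps is at most $Tk/N'$, giving an expected trace distance between the final verifier states of $O(T\sqrt{k/N'}) = O(T(n/N)^{1/4})$. Adding the $2^{-\Omega(n)}$ probability that $\Omega$ fails, the total distinguishing advantage between $(V, P_{\text{yes}})$ on the yes instance and $(V, P_{\text{cheat}})$ on the no instance is $O(T(n/N)^{1/4}) + o(1)$, so $T \geq \Omega((N/n)^{1/4})$ is forced by the $1/3$ gap. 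The main obstacle is the third step: justifying that the cheating prover genuinely reduces the two executions to a single $T$-verifier-query algorithm differing only in the verifier's oracle, which requires care in handling prover-side oracle queries, the coupling of internal randomness, and the fact that $\hat f$ can be computed on the fly from $g$ and $R$ without disturbing the message register seen by $V$.
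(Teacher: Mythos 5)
Your proposal is correct and follows essentially the same route as the paper's proof: the same coupling via \lem{llqsv-modification}, the same concentration step from \lem{llqsv-concentration}, a cheating prover that runs the honest yes-case strategy on the coupled instance, and a BBBV hybrid bound of $O(T(n/N)^{1/4})$ on the verifier's queries. The one step you flag as delicate---reducing the two executions to a single procedure differing only in the verifier's oracle---is exactly what the paper handles by a triangle inequality splitting the two verifier stages around the prover's channel, using that the states fed into each stage ($\proj{0}$ and the post-prover state $\rho_F$) are independent of the random modification.
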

\begin{proof}
  Let $\V$ be any verifier which outputs 1 for an yes instance with proabability $p\geq 2/3$. \
  Starting from an yes instance $f_1,\ldots,f_M\sim\G_n$, by \lem{llqsv-modification}, we perform the random modification to yield a function $g_1,\ldots,g_M$. \
  Let $\Delta_i=\Delta(f_i)$ and $\Delta_i'=\Delta(g_i)$. \
  By \lem{llqsv-concentration}, with probability $O(1/N)$, $|\Delta_i-1/2|$ and $|\Delta_i'-1/2|$ are bounded by $O((n/N)^{1/2})$. \
  Conditioned on this event happening, $|\Delta_i-\Delta_i'|\leq O((n/N)^{1/2})$. \

  For every functions $F=(f_1,\ldots,f_M)$, let the prover be a quantum channel $\P_{F}\in\CPTP(A,A')$. \
  Given access to $\O=\O(F)$ which encodes $F$, let $\V_{\O}^{(1)}$ and $\V_{\O}^{(2)}$ be the unitary channels performed by the verifier in the first step and the second step of the protocol respectively. \
  For $c\in\{1,2\}$, we can without loss of generality write 
  \begin{align}\label{eq:llqsv-verifier}
    V^{(c)}_{\O} = U_{T_c}^{(c)}\O\ldots U_1^{(c)}\O U_{0}^{{(c)}}, 
  \end{align}
  and $\V_\O(\rho)=V_\O\rho V_\O^\dag$. \
  The cheating prover proves to the verifier that it is given oracle access to $f_1,\ldots,f_M\sim\G_n$, but the verifier is given access to $\O(G)$ encoding $G=(g_1,\ldots,g_M)$ obtained from the above modification. \
  The overall quantum channel $\T_{F}$ is defined $\V_{\O(G)}^{(2)}\circ \P_{F}\circ\V_{\O(G)}^{(1)}$ under a random modification. \
  Without loss of generality, by introducing a purifying system $E$ inaccessible to the verifier, we can assume that $\P_{F}\in\CPTP(AE,A'E')$ is a unitary channel for some system $E'$ also inaccessible to the verifier.

  Our goal is to show $\V_{\O(F)}^{(2)}\circ\P_F\circ\V_{\O(F)}^{(1)}$ is close to $\T_F$. \
  By triangle inequality,
  \begin{align}\nonumber
    \eta &:= \Big\|\Exp_G\V_{\O(G)}^{(2)}\circ\P_F\circ\V_{\O(G)}^{(1)}(\proj{0}) - \V_{\O(F)}^{(2)}\circ\P_F\circ\V_{\O(F)}^{(1)}(\proj{0})\Big\|_{\tr} \\\nonumber
    &\leq 
    \Exp_G\Big\|\V_{\O(G)}^{(2)}\circ\P_F\circ\V_{\O(G)}^{(1)}(\proj{0}) - \V_{\O(G)}^{(2)}\circ\P_F\circ\V_{\O(F)}^{(1)}(\proj{0})\Big\|_{\tr} \\\nonumber
    &\qquad+ \Exp_G\Big\|\V_{\O(G)}^{(2)}\circ\P_F\circ\V_{\O(F)}^{(1)}(\proj{0}) - \V_{\O(F)}^{(2)}\circ\P_F\circ\V_{\O(F)}^{(1)}(\proj{0})\Big\|_{\tr} \\
    &\leq 
    \Exp_G\Big\|\V_{\O(G)}^{(1)}(\proj{0}) - \V_{\O(F)}^{(1)}(\proj{0})\Big\|_{\tr} + \Exp_G\Big\|\V_{\O(G)}^{(2)}(\rho_F) - \V_{\O(F)}^{(2)}(\rho_F)\Big\|_{\tr},
  \end{align}
  where $\rho_F:=\P_F\circ\V_{\O(F)}^{(1)}(\proj{0})$ is a quantum state independent of the random modification. \
  Thus it suffices to give an upper bound on the second term, and an upper bound for the first follows using a similar argument since the zero state is also a state independent of the random modification. \
  
  Let $\ket{\psi_F}$ be a purification of any quantum state $\rho_F$ independent of the random modification. \ 
  For $c\in\{1,2\}$, consider the sequence of states:
  \begin{align}\nonumber
    \ket{\phi_{T_c}^G} &= U_{T_2}^{(c)} \O(F) U_{T_2-1}^{(c)} \O(F)\ldots U_1^{(c)} \O(F) U_0^{(c)}\ket{\psi_F}, \\\nonumber
    \ket{\phi_{T_c-1}^G} &= U_{T_2}^{(c)} \O(G) U_{T_2-1}^{(c)} \O(F)\ldots \O(F) U_1^{(c)}\O(F) U_0^{(c)}\ket{\psi_F}, \\\nonumber
    \vdots \\\nonumber
    \ket{\phi_{1}^G} &= U_{T_2}^{(c)} \O(G) U_{T_2-1}^{(c)} \O(G)\ldots \O(G) U_1^{(c)} \O(F) U_0^{(c)}\ket{\psi_F}, \\
    \ket{\phi_{0}^G} &= U_{T_2}^{(c)} \O(G) U_{T_2-1}^{(c)} \O(G)\ldots \O(G) U_1^{(c)} \O(G) U_0^{(c)}\ket{\psi_F}.
  \end{align}
  The Euclidean distance between the first and the last hybrids can be bounded by bounding the distance between two adjacent hybrids:
  \begin{align}\label{eq:llqsv-bbbv-euclidean}
    \|\ket{\phi_{T_1}^{G}}-\ket{\phi_0^G}\|
    \leq 2\sum_{i=0}^{T-1} \|\Id_W\ket{\psi_t}\|,
  \end{align}
  where $\ket{\psi_t}:=U_t\O(F)U_{t-1}\ldots U_1\O(F)U_0\ket{\psi_F}$ is the intermediate state after $t$ queries and the set $W:=\{(i,x):g_i(x)\neq f_i(x)\}$. \
  Note that $\ket{\psi^t}$ is independent of $W$.
  Recall that we fix $F$ and produce $G$ by flipping random elements, by \eq{llqsv-bbbv-euclidean}, taking the expectation over $W$ yields
  \begin{align}\nonumber
    \Exp_W\|\V^{(2)}_{\O(G)}(\rho_F)-\V^{(1)}_{\O(F)}(\rho_F)\|_{\tr} 
    &\leq 2\Exp_W\|\ket{\phi_{T_1}^G}-\ket{\phi_0^G}\| \\
    &\leq 4T_1\|Q\|_{\op}^{1/2},
  \end{align}
  where $Q:=\Exp_W\Id_W$.
  Since the modification picks a random subset of size $|W|=\sum_{i=1}^M|\Delta_i-\Delta_i'|N=O(M(nN)^{1/2})$ on a set of size at least $M(N/2-O((nN)^{1/2}))$, $\|Q\|\leq O((n/N)^{1/2})$.
  Thus $\eta\leq O(T(n/N)^{1/4})$. \
  Thus the verifier accepts the no case with probability at least $p-\eta-O(1/N)>1/3$.
\end{proof}

The statement in \thm{llqsv-bfbd-qip2} can be further strengthened to $\QIP=\QIP[3]$ using a similar hybrid argument, but since we will not need this result, we do not pursue this further. \
By the equivalence of $\BFBD$ and $\LLQSV$ shown in \lem{llqsv-reduction} and \thm{llqsv-bfbd-qip2}, we immediately have the following corollary.
\begin{corollary}\label{cor:llqsv-qma}
  There exists a constant $c>0$ such that $\LLQSV\notin\QIP[2](c(N/n)^{1/4})$ relative to a random oracle.
\end{corollary}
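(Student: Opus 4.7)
The plan is to derive the corollary as a direct consequence of the $\BFBD$ lower bound in \thm{llqsv-bfbd-qip2} together with the query-preserving reduction from $\BFBD$ to $\LLQSV$ established in \lem{llqsv-reduction}. In other words, I would simply contrapose: any suspiciously efficient $\QIP[2]$ protocol for $\LLQSV$ would transport through the reduction to give an equally efficient $\QIP[2]$ protocol for $\BFBD$, which \thm{llqsv-bfbd-qip2} forbids.

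First, I would suppose toward contradiction that there is a $\QIP[2]$ protocol $\Pi$ for $\LLQSV$ making $T = o((N/n)^{1/4})$ queries to its oracle. Given a $\BFBD$ instance $f_1,\ldots,f_M$ sampled from either $\G_n$ (yes-case) or $\F_n$ (no-case) and accessed through $\O : \ket{i,x}\mapsto f_i(x)\ket{i,x}$, the $\BFBD$ verifier locally samples uniform strings $s_1,\ldots,s_M\in\bit^n$ and simulates each oracle call of $\Pi$ to $\Q\O:\ket{i,x}\mapsto (-1)^{s_i\cdot x}f_i(x)\ket{i,x}$ by making one query to $\O$ and then applying the phase oracle $\Q$, which requires no queries since $s_1,\ldots,s_M$ are stored classically. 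The resulting protocol is still a two-message interactive proof with a quantum verifier, and it makes exactly $T$ oracle queries.

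Next, I would appeal to \lem{llqsv-reduction} for correctness: conditioned on the sampled $s_1,\ldots,s_M$, the tuples $(s_i, \chi_{s_i}\cdot f_i)_{i\in[M]}$ are distributed exactly as a valid $\LLQSV$ yes-instance when $f_i\sim\G_n$ and as a valid $\LLQSV$ no-instance when $f_i\sim\F_n$. Hence the simulated run of $\Pi$ inherits the completeness/soundness constants of $\Pi$ (at least $2/3$ versus at most $1/3$) and solves $\BFBD$ with the same gap, giving a $T$-query $\QIP[2]$ protocol for $\BFBD$.

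Finally, \thm{llqsv-bfbd-qip2} forces $T=\Omega((N/n)^{1/4})$, contradicting the assumption and yielding the desired constant $c>0$ with $\LLQSV\notin\QIP[2](c(N/n)^{1/4})$ relative to a random oracle. I don't expect any real obstacle here; the only thing worth double-checking is that the reduction is genuinely query-preserving and preserves the number of messages, which it is because $\Q$ is an oracle-free unitary applied coherently on top of each call to $\O$, and the verifier's classical sampling of $s_1,\ldots,s_M$ can be absorbed into its private workspace before the first message is sent.
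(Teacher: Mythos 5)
Your proposal is correct and takes essentially the same route as the paper: the paper derives the corollary immediately by combining the reduction from $\BFBD$ to $\LLQSV$ in \lem{llqsv-reduction} (sampling the $s_i$'s locally and simulating $\Q\O$ with one query to $\O$ per call) with the $\Omega((N/n)^{1/4})$ query lower bound for $\QIP[2]$ protocols solving $\BFBD$ from \thm{llqsv-bfbd-qip2}, exactly as you do via contraposition.
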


\subsubsection{Strengthen the hardness}

The final step is to ``lift'' from a lower bound against a uniform complexity class like $\mathsf{QIP}[2]$ or its subclass $\mathsf{QMA}$, to a lower bound against the nonuniform complexity class $\mathsf{QCAM/qpoly}$ (or its generalization to use more time and more advice). \ 

As a first observation, $\mathsf{QCAM} = \mathsf{BP\cdot QCMA}$, and the $\mathsf{BP\cdot}$ operator can be simulated by polynomial-size classical advice using Adleman's trick. \ Thus $\mathsf{QCAM/poly} = \mathsf{QCMA/poly}$ and $\mathsf{QCAM/qpoly} = \mathsf{QCMA/qpoly}$. \ As a second observation, Aaronson and Drucker \cite{AD10} proved that $\mathsf{QCMA/qpoly} \subseteq \mathsf{QMA/poly}$. \ Furthermore, all of these results relativize.

We apply the strong direct product theorem (SDPT) by Sherstov \cite{She11} to lift the hardness to $\QMA/\cpoly$. \ 
Recall that Sherstov showed that the query complexity obeys SDPT when the query lower bound of a single instance is proved using the \emph{polynomial method} \cite{BBC+01}. \ 

In the first step, we give an oracle separation $\coNP\not\subset\QMA/\cpoly$, and the proof for $\LLQSV\notin\QMA/\cpoly$ will basically follow the same ideas with minor modifications.

\paragraph{Warmup: $\coNP\not\subset\QMA/\cpoly$.}

To give an oracle separation, the oracle $F$ encodes $N=2^n$ instances of the problem. \ 
The oracle $F$ contains $N$ sections, and each section $F_x\in\bit^N$ is indexed by an $n$-bit string $x$. \
Let an yes instance be the all-one function, i.e., $f_0(x)=1$ for every $x\in\bit^n$. \ 
The no instance is obtained by flipping $\epsilon N$ points for $\epsilon\geq\eta$ and $\eta=100/N$. \  
Each section $F_x$ is either the yes case or the no case, with probability 1/2. \ %
The task is on input $x$, determine in which case $F_x$ is. \ %
The problem is clearly in $\coNP$, and therefore the proof of the following theorem is devoted to showing that the problem is not in $\QMA/\cpoly$. \

\begin{theorem}\label{thm:llqsv-conp-qmapoly}
  There exists an oracle relative to which $\coNP\not\subset\QMA/\cpoly$. \ 
\end{theorem}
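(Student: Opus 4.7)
My plan is to argue by contradiction, reducing to Sherstov's strong direct product theorem applied to the single-section promise-OR lower bound.

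First, I would observe that the problem ``is $F_x=1^N$?'' sits in $\coNP^F$, since the complementary $\NP^F$ witness is any $y$ with $F_x(y)=0$. Suppose for contradiction that $V$ is a $\QMA^F/\cpoly$ verifier deciding this problem, using $a=\poly(n)$ classical advice bits and $q=\poly(n)$ queries, and---after standard amplification---accepting each yes-input with probability $\geq 1-2^{-4n}$ and each no-input with probability $\leq 2^{-4n}$. I would then combine all $N=2^n$ sections into a single ``parallel'' quantum query algorithm $A$ that tries to output the correct length-$N$ answer string: $A$ guesses the length-$a$ advice uniformly at random (paying a $2^{-a}$ success factor), takes as input the tensor product $\bigotimes_{x} \psi_{F,x}$ of $N$ individual witnesses, and runs $V$ on each $x$ in parallel, making $N\cdot q$ queries in total. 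Conditional on the advice guess being correct, a union bound across sections yields joint success $\geq 1-N\cdot 2^{-4n}\geq 1/2$, so $A$'s overall success probability is $\geq 2^{-a-1} = 2^{-\poly(n)}$ on every oracle $F$.

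Next, I would lower-bound the queries required of any such $A$. The single-section problem (distinguishing $1^N$ from strings of Hamming weight $\leq N-100$) has approximate degree $\Omega(\sqrt{N})$ by the standard Nisan--Szegedy / Markov argument (see \lem{markov}) for the promise-OR function. Sherstov's strong direct product theorem (\thm{sherstov}) then implies that solving $N$ such independent instances with worst-case success $\geq 2^{-\Omega(N)}$ requires $\Omega(N^{3/2})$ queries. Since $2^{-\poly(n)} \gg 2^{-\Omega(2^n)}=2^{-\Omega(N)}$, the theorem applies to $A$; but $A$ uses only $N\cdot q = o(N^{3/2})$ queries for $N=2^n$, a contradiction. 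A straightforward union bound over the $2^{\poly(n)}$ possible verifier--advice pairs then extracts a single oracle $F$ for which the separation holds.

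The main obstacle I anticipate is justifying that Sherstov's SDPT---which is stated for ordinary quantum query algorithms---applies to the $\QMA$-style algorithm $A$ whose ``input register'' holds the adversarially-chosen product witness $\bigotimes_{x}\psi_{F,x}$. My planned resolution is to fix, for each oracle $F$, the optimal witness $\Psi_F$ and view the resulting fixed-input procedure as an ordinary quantum query algorithm; SDPT's query lower bound is oblivious to what is written in the non-oracle register, so the bound carries over. An alternative, if that route does not cleanly go through, is to work directly with the polynomial method: for any fixed witness the acceptance probability of $V$ is a polynomial in the bits of $F$ of degree $\leq 2q$, and one can push Sherstov's Markov-type argument through the witness-maximized acceptance operator at the cost of an additional factor polynomial in the witness dimension, which is absorbed into the $2^{-\poly(n)}$ slack between $2^{-a-1}$ and $2^{-\Omega(N)}$.
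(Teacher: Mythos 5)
Your skeleton is the same as the paper's: put ``is $F_x=1^N$?'' in $\coNP^F$, amplify, tensor the verifier over all $N$ sections with a product witness, replace the advice by a uniform guess, and then contradict the combination of the Markov-type degree bound for one section (\lem{markov}) with Sherstov's strong direct product theorem (\thm{sherstov}). The quantitative bookkeeping in your version (approximate degree $\Omega(\sqrt N)$ per section, $\Omega(N^{3/2})$ total queries versus $N\cdot\poly(n)$ queries, success $2^{-\poly(n)}\gg 2^{-\Omega(N)}$) is fine, and since you only need $\poly(n)$ advice your guessing loss is even smaller than the paper's, which tolerates $cN$ advice bits.

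The genuine gap is exactly the obstacle you flag, and neither of your proposed resolutions works. If you ``fix, for each oracle $F$, the optimal witness $\Psi_F$,'' the initial state of your combined algorithm $A$ now depends on $F$; such a procedure is not in the model to which any query lower bound (in particular the SDPT) applies, and indeed it is trivially powerful: a witness of $N$ qubits can simply encode the answer string $b\in\bit^N$, giving success $1$ with zero queries. Query lower bounds are emphatically not ``oblivious to what is written in the non-oracle register'' when that register is allowed to vary with the oracle. The fallback also fails: the witness-maximized acceptance probability is a maximum of degree-$2q$ polynomials, not itself a low-degree polynomial, and the standard way to remove the witness (substitute the maximally mixed state) costs a multiplicative $2^{-(\text{total witness length})}=2^{-N\cdot\poly(n)}$ in success probability--- not a factor ``polynomial in the witness dimension''---which drops you far below the $2^{-\Omega(N)}$ threshold that \thm{sherstov} requires, and per-section amplification to recover the gap blows the query count past $N^{3/2}$. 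The paper closes this hole by exploiting the one-sided structure of the problem: the yes-instance of each section is the single fixed string $f_0=1^N$, so one can fix witness states $\rho_x$ that make the (amplified) verifier accept the yes case on average over the remaining sections of the oracle, while $\QMA$ soundness guarantees rejection of every no-instance \emph{for any} witness, in particular for the same $\rho_x$. Only after hardwiring the oracle-independent state $\bigotimes_x\rho_x$ (and trading the advice for a random guess) does one obtain an ordinary query algorithm computing all $N$ answers with probability $2^{-S-1}$, to which the direct product theorem can legitimately be applied; some such averaging/fixing step is what your write-up is missing.
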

\begin{proof}
  Let $\D_x$ denote equal mixture of the yes case and the no case for each $x\in\bit^n$. \ %
  For $x\in\bit^n$, let $F_{-x}$ denote all the sections but $F_x$ and similarly for $\D_{-x}$. \ 
  Let $f_0=1^N$ denote the yes instance and $\rho_{F,x}$ be the witness states given oracle access to $F$ and input $x$.
For a sufficiently small constant $c$, $\eta=100/N$, $T=o((1/\eta)^{1/2}/n)$ and $S=cN$, our idea is to show that a set of witness states $\{\rho_{x}:x\in\bit^n\}$ that convince any $\QMA(T)/S$ verifier $V$ with probability at least $2/3$ when $F_x=f_0$ can be used to convince the same verifier to accept a no instance for some $x\in\bit^n$ with probability more than $1/3$. \
Thus the $V$ does not solve the problem. \   

Suppose toward contradiction that it is not the case. 
Then $V$ accepts when $F_x=f_0$ with probability at least 2/3, i.e.,
\begin{align}
  \Pr_{F_x=f_0,F_{-x}\sim\D_{-x}}\left[ V^F(w_F,x,\rho_{x})=1\right] \geq 2/3,
\end{align}
where $w_F\in\bit^S$ is a classical advice that depends on $F$.
Furthermore, when the modification is applied to $F_x$, the verifier can detect the change with the aid of the advice. \  
More concretely, for $y\in\bit^n$, let $M_y$ denote a probabilistic algorithm applied $F$ to change $F_y=f_0$ to a no instance, i.e., $M_y(F)_{x}$ is a no case if $x=y$ and $M_y(F)_x=F_x$ otherwise. \ 
By the assumption we make, with the modification, for every $x\in\bit^n$,
\begin{align}
  \Pr_{F_x=f_0,F_{-x}\sim\D_{-x},M_x(F)=G}\left[V^{G}(w_G,x,\rho_{x})=0 \right]\geq 2/3. 
\end{align}
Note that the advice can be sensitive to the change. \ 
Since the witness state does not depend on the modification, we effectively remove the witness out of picture. \ 

Let $\D_1$ denote the distribution of the yes case and $\D_0$ denote the distribution of the no case. \  
Applying a witness-preserving amplification \cite{MW04}, let $V'$ be the $\QMA(c'Tn)/S$ protocol which solves the problem with probability at least $1-1/N^2$ for a sufficient large constant $c'$. \ 
Then we can construct an $\QMA(c'TnN)/S$ verifier $\tilde V$ which solves $N$ problems: The prover sends the witness state $\rho:=\bigotimes_{x\in\bit^n}\rho_{x}$ in one message, and the verifier runs $V'$ on every $x\in\bit^n$, $\rho_{x}$ and the given advice $w_F$, i.e., $\tilde V^F(w_F,\rho):=\bigotimes_{x\in\bit^n}(V')^{F}(w_F,x,\rho_{x})$. \  
By a union bound, with probability at least $1-1/N$, $\tilde V$ outputs the correct $N$-bit string, i.e., 
\begin{align}\label{eq:llqsv-advice}
  \Pr_{b\sim\U_N,F\sim\D_b}\left[\tilde V^{F}(w_{F},\rho)=b\right] \geq 1-1/N,
\end{align}
where $\U_N$ denote the uniform distribution over $N$-bit strings and $\D_b$ denote the distribution $\bigtimes_{x\in\bit^n}\D_{b_x}$. \ 

Now we replace the advice $w_F$ with a random string (independent of $F$), and by \eq{llqsv-advice}, this yields 
\begin{align}
  \Pr_{b\sim\U_n,F\sim\D_b,r\sim\U_S}\left[\tilde V^F(r,\rho)=b\right] \geq \frac{1-1/N}{2^S}\geq \frac{1}{2^{S+1}}. \ 
\end{align}
This effectively yields a $c'TnN$-query algorithm solving $N$ problems with probability at least $2^{-S-1}$. \ 
For each problem, let $p$ denote the real polynomial that approximates the AND function to error $1/3$. \
Thus the constraints we have are (i) $p(0)\geq 2/3$, and (ii) $p(\epsilon)\leq 1/3$ for every $1\geq\epsilon\geq\eta$ and $\epsilon N$ is an integer. \ 
By \lem{markov}, $\deg(p)\geq \Omega(\sqrt{1/\eta})$. \ 
Now by \thm{sherstov}, for a sufficiently small constant $c$ and $S=cN$, solving $N$ independent problems with probability $2^{-S-1}$ requires $2cN\sqrt{1/\eta}$ queries. \ 
Thus it must hold that $c'TnN\geq 2cN\sqrt{1/\eta}$ queries and $T=\Omega(\sqrt{1/\eta}/n)$. \ 
\end{proof}

\paragraph{$\LLQSV\notin\QMA/\cpoly$.}

The proof for $\LLQSV\notin\QMA/\cpoly$ basically follows the idea for $\coNP\not\subset\QMA/\cpoly$, with the following modifications. \ 
First, the distribution $\D_1$ does not describe a fixed function; instead, we replace it with the yes case in $\BFBD$ (\dfn{llqsv-bfbd}). \ 
For this, we basically apply an averaging argument. \
Second, by \lem{llqsv-concentration}, the no case of $\BFBD$ is obtained by modifying the yes case on a random subset of size $O(\sqrt{n/N})$ with probability at least $1-1/N$. \
The $\QMA/\cpoly$ verifier is no longer required to solve the problem for \emph{every} $\epsilon\geq\eta$, which is essential for a lower bound using the polynomial method. \ 
We show that via a reduction, we can get sufficiently many constraints to get a tight lower bound. \ 

Here we describe the problem $\LLQSV$ relative to the following oracle: \ 
The oracle $F$ contains $N$ sections, each indexed by an $n$-bit string $x$. \ 
Each section $F_x$ is sampled uniformly from $n$-bit Boolean functions, and the sample $s_x$ is either sampled from $|\hat F_x|^2$, or a uniform $n$-bit string (independent of $F_x$).  
On input $x$, the protocol is challenged to determine $s_x$ is sampled from $|\hat F_x|^2$ in the yes case or the uniform distribution in the no case. 

\begin{theorem}
  There exists a constant $c>0$ such that $\LLQSV\notin\QMA(c(N/n)^{1/4}/n)/(cN)$ relative to a random oracle. \ 
\end{theorem}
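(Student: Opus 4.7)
The plan is to mirror the structure of \thm{llqsv-conp-qmapoly}, with modifications to handle the distributional nature of $\LLQSV$. First, I set up an oracle $F$ consisting of $N$ sections indexed by $x\in\bit^n$, where each section contains a random Boolean function $F_x$ together with a sample $s_x$ distributed according to either the yes case (with $s_x\sim|\hat F_x|^2$) or the no case (with $s_x$ uniform and independent of $F_x$) of $\LLQSV$, which is equivalent to $\BFBD$ by \lem{llqsv-reduction}. On input $x$, the task is to decide which distribution generated section $x$. Suppose toward contradiction that a $\QMA(T)/S$ verifier $V$ with $T=o((N/n)^{1/4}/n)$ and $S=cN$ solves this problem; let $\rho_x(F)$ denote its witness on input $x$.

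Next, combine the $N$ witnesses into a single amplified protocol. After Marriott--Watrous witness-preserving amplification to error $1/N^2$ (inflating the query count by a factor $O(n)$), the product witness $\rho:=\bigotimes_{x\in\bit^n}\rho_x$ together with the advice $w_F$ let a single $\QMA$ verifier $\tilde V$ output the correct vector of $N$ labels with probability at least $1-1/N$ by a union bound. Replacing $w_F$ with a uniformly random string of length $S$ reduces the success probability by at most a factor $2^{-S}$, yielding a $\QMA$ algorithm using $O(TnN)$ queries that solves $N$ independent $\LLQSV$ instances with probability at least $2^{-S-1}=2^{-\Omega(N)}$. By Sherstov's strong direct product theorem (\thm{sherstov}), this requires $\Omega(Nd)$ queries, where $d$ is the polynomial-method approximate-degree lower bound for a single $\LLQSV$ instance. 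Once we establish $d=\Omega((N/n)^{1/4})$, combining with the $O(TnN)$ query bound forces $T=\Omega((N/n)^{1/4}/n)$, which contradicts the assumption, and the theorem then follows upon absorbing constants into $c$.

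The main obstacle is the single-instance polynomial-method lower bound of $\Omega((N/n)^{1/4})$. Because $\LLQSV$ is a distribution-distinguishing rather than a pointwise Boolean task, the symmetrized acceptance polynomial $p(\Delta)$ in the fraction $\Delta=\Delta(F_x)$ of $-1$'s in $F_x$ satisfies only the averaged inequalities $\Exp_{\Delta\sim p_1}[p(\Delta)]\geq 2/3$ and $\Exp_{\Delta\sim p_0}[p(\Delta)]\leq 1/3$, which do not plug into \lem{markov} directly. The fix alluded to by the authors is to convert these averaged inequalities into pointwise constraints via a reduction: for each pair $(\Delta_0,\Delta_1)$ with $|\Delta_b-1/2|=O(\sqrt{n/N})$, embed the problem of distinguishing Boolean functions of exact distance $\Delta_0$ from those of exact distance $\Delta_1$ into an $\LLQSV$-style oracle by rerandomizing $s$ and modifying $F$ so as to mimic the $\BFBD$ yes/no distributions, using \lem{llqsv-concentration} to absorb the tails into negligible slack. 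This yields pointwise constraints $|p(\Delta_0)-p(\Delta_1)|\geq 1/3$ across a family of points in a window of width $O(\sqrt{n/N})$ about $1/2$, and applying \lem{markov} on this window then gives $\deg(p)=\Omega((N/n)^{1/4})$, as required.
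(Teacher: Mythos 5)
Your high-level skeleton is the same as the paper's: an oracle with $N$ sections indexed by $x\in\bit^n$, the reduction to $\BFBD$, Marriott--Watrous amplification, a tensor-product witness, replacing the classical advice by a uniformly random string to get a $\QMA$ algorithm solving $N$ instances with probability $2^{-\Omega(N)}$, and then Sherstov's strong direct product theorem (\thm{sherstov}), exactly as in the warmup \thm{llqsv-conp-qmapoly}. The gap is in the one place where the real work happens: the single-instance approximate-degree bound $\Omega((N/n)^{1/4})$. The paper gets it by an averaging argument that fixes a single function $F_x$ \emph{together with its accepting witness} and a single modification size $r_x=O(\sqrt{nN})$ (using \lem{llqsv-concentration} and \lem{llqsv-modification}), so completeness holds for that fixed witness and soundness, which holds against every witness, applies to the modified oracle with the same witness; it then uses a stretching-and-padding self-reduction to turn ``detect a random size-$r$ modification inside a domain of size $q=\Omega(N)$'' into ``detect a random modification of every size $1,\ldots,|Q'|$'' with $|Q'|=\Omega(q/r)=\Omega(\sqrt{N/n})$, yielding the pointwise constraints $p(0)\geq 2/3$ and $p(i)\leq 1/3$ for all $i\in\{1,\ldots,|Q'|\}$ needed for \lem{markov}. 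Your substitute for this step does not work as written. First, the reduction you posit--- a case-oblivious rerandomization mapping a function of exact distance $\Delta_1$ to the $\BFBD$ yes-distribution $\G_n$ and simultaneously mapping exact distance $\Delta_0$ to $\F_n$---is unjustified: both targets are heavily overlapping mixtures over the same $O(\sqrt{n/N})$-window of distances, and a single flip-process whose size distribution cannot depend on the unknown case would have to satisfy two exact simulation constraints at once; nothing you say shows such a process exists, and this is precisely the difficulty the paper's fixed-function/fixed-witness modification argument is designed to avoid. Relatedly, you never say with respect to which witness your symmetrized polynomial $p(\Delta)$ is defined; since the accepting witness varies with the function, the completeness constraint does not hold for a fixed witness across all functions of a given distance, which is why the paper only ever varies the modification, not the function.

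Second, even granting pointwise constraints $|p(\Delta_0)-p(\Delta_1)|\geq 1/3$ for points inside the window, ``applying \lem{markov} on this window'' yields nothing: the gain in Markov's inequality is the ratio of the length of the interval on which the polynomial is bounded to the distance over which it must jump, and restricted to a window of width $O(\sqrt{nN})$ (in counts) that ratio is $O(1)$, giving only a constant lower bound on the degree. To recover $\Omega((N/n)^{1/4})$ you must exploit boundedness of $p$ at all integer points of an interval of length $\Omega(N)$ and apply the Markov-type argument there (jump of $1/3$ over distance $O(\sqrt{nN})$ inside an interval of length $\Omega(N)$ gives $\deg^2=\Omega(\sqrt{N/n})$), or, as the paper does, first compress the jump to adjacent integers via the size self-reduction and then apply \lem{markov} on $[0,|Q'|]$. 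As it stands, your proposal is missing both the construction that produces the pointwise constraints and the correct application of the degree bound, so the heart of the theorem is not established.
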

\begin{proof}
  The proof basically follows that of \thm{llqsv-conp-qmapoly}. \
  First by \lem{llqsv-reduction}, let $\D_1$ and $\D_0$ denote the distributions in the yes and the no cases of $\BFBD$ (\dfn{llqsv-bfbd}) respectively. \
  In this picture, each section of the oracle is either sampled from $\D_1$ or from $\D_0$. \ 
  Let $\D$ denote the equal mixture of $\D_1$ and $\D_0$. \ 

  The idea is to show that for $T=o((N/n)^{1/4}/n)$, any set of witness states $\{\rho_{F,x}:F\in\bit^{N\times N}, x\in\bit^n\}$ that convince any $\QMA(T)/S$ verifier $V$ with probability $2/3$ when $F_x\sim\D_1$ can be used to convince the verifier that $F_x\sim\D_0$ for some $x\in\bit^n$ with probability more than $1/3$. \
  Suppose toward contradiction that for every $x$, when $F_x\sim\D_0$, $V$ accepts with probability at most $1/3$.
  Let the modification denote $M_x$ to change any $F_x$ such that $M_x(F_x)$ is distributed according to $\D_0$, and other sections are unchanged. \ 
  By an averaging argument, for every $x\in\bit^n$, there exists a function $F_x$ and a random modification of size $r_x=O(\sqrt{nN})$ such that the verifier detects if the modification has been applied with probability at least $2/3-o(1)$. \ 
  Then by random self-reducibility, there exists a verifier which outputs the correct answer on every modification of size $r_x$ with probability at least $2/3-o(1)$ as well. \ 
  
  By a witness-preserving amplification, there is a $\QMA(O(Tn))/S$ verifier which detects if there is a random modification of size $r_x$ with probability at least $1-1/N^2$ on every $x$. \ 
  The assumption implies that there is a $\QMA(O(TNn))/S$ verifier $\tilde V$ which solves $N$ problems (i.e., $\tilde V$ determines if there is a modification of size $r_x$ for every $x\in\bit^n$) with probability at least $1-O(1/N)$ on the witness state $\rho_F:=\bigotimes_{x\in\bit^n}\rho_{F,x}$ independent of the modifications. \ 
  Now we replace the advice string with a random string, and yield a $O(TNn)$-query $\QMA$ verifier which is given the witness state $\rho_F$ and solves $N$ problems with probability at least $2^{-S-1}$ on the witness $\rho_F$. \ 
  Then we apply \thm{sherstov} to get $\deg(p)=O(Tn)$ for any real polynomial $p$ that approximate the function to compute to error $1/5$ for $S=c N$ for a sufficiently small constant $c$. \ 

  It remains to show that every $T'$-query algorithm detecting modifications on a random subset $R$ of size $r\leq O(\sqrt{nN})$ in a set $Q$ of size $q=\Omega(N)$ implies that a $T'$-query algorithm which detects a random modification $R'\subseteq Q'$ for $|R'|\in\{1,2,\ldots,|Q'|\}$ and $|Q'|=\Omega(q/r)$. \ 
Thus we can use the polynomial method to conclude that every $T'$-query algorithm solving the problem must satisfy $2T'=\deg(p)$, for any polynomial $p$ satisfying the following constraints: (i) $p(0)\geq 2/3$ and (ii) $p(i)\leq 1/3$ for $i\in\{1,2,\ldots,|Q'|\}$. \ 
By \lem{markov}, $\deg(p)=\Omega(\sqrt{|Q'|})=\Omega((N/n)^{1/4})$ and thus $T=\Omega(T'/n)=\Omega((N/n)^{1/4}/n)$. \ 

Without loss of generality, let $Q=[q]$ and $R$ be a random subset of size $r$. \ 
For every function $f_0:[q]\to\bit$, we define the problem $\P_{q,r}$ as follows: \
In the yes case, the algorithm is given access to $f=f_0$, and in the no case, the algorithm is given access to $f(x):=f_0(x)\oplus\Id[x\in R]$ for a random subset $R$ of size $r$. \ 
Thus from the previous paragraph, our goal is to show that an algorithm solving $\P_{q,r}$ implies an algorithm solving $\P_{\eta,\ell}$ for $\eta=\lfloor q/r\rfloor$ and every $\ell\in\{1,2,\ldots,\eta\}$ using the same number of queries. \ 
First we can stretch the parameters $\eta,\ell$ by a factor $\alpha\geq 1$ to yield an instance $g$ in $\P_{\alpha\eta,\alpha\ell}$: \ 
Define $\tilde g:[\alpha\eta]\to\bit$, $\tilde g(x):=f(((x-1)\bmod\eta)+1)$, and the function $g(x):=\tilde g\circ\pi(x)$ for a random permutation $\pi$ on $[\alpha\eta]$. \ 
Second, we can pad a function with $\beta$ elements to yield a reduction from $\P_{\eta,\ell}$ to $\P_{\eta+\beta,\ell}$: \ 
Let $\tilde g:[\eta+\beta]\to\bit$, $\tilde g(x)=f(x)$ for $x\in[q]$ and $\tilde g(x)=0$ for $x>\eta$, and the function $g:=\tilde g\circ\pi$ for a random permutation $\pi$ on $[\eta+\beta]$. \ 
For problem $\P_{\eta,\ell}$, first we apply a stretch to yield $\P_{\eta r/\ell,r}$, followed by a padding with $q-\eta r/\ell$ elements to yield $\P_{q,r}$. \ 
\end{proof}

\alert{
When combining powerful quantum complexity classes with \emph{quantum} advice, one needs to be even more careful. \ For example, Raz \cite{Raz09} showed that $\mathsf{QIP/qpoly}=\mathsf{ALL}$ ($\mathsf{ALL}$ being the class of all languages), and Aaronson \cite{Aar18b} likewise showed that $\mathsf{PDQP/qpoly}=\mathsf{ALL}$, where $\mathsf{PDQP}$ is the generalization of $\BQP$ to allow multiple non-disturbing measurements \cite{ABFL16}. \ 
Our proof based on a strong direct product theorem can be used to further strengthen our result to $\LLQSV\notin\QMA/\qpoly$, thereby giving a new proof for $\QMA/\qpoly\neq\ALL$ relative to a random oracle. \ 
Previously, Aaronson proved that $\QMA/\qpoly\subseteq\PSPACE/\cpoly$ \cite{Aar06}. \

We note that the proof technique does not allow us to prove the problem is not in $\QIP[2]/\qpoly=\ALL$, because there is no generic rewinding procedure to restore the quantum advice for another use. \ 
}

In any case, the above suffices to establish the following.

\begin{theorem}
  Relative to a random oracle, $\LLQSV\notin\QCAM(c^n)/\mathsf{q}(c^n)$, for some constant $1<c<2$.
\end{theorem}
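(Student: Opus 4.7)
The plan is to bootstrap the theorem from the $\QMA/\cpoly$ lower bound in the preceding theorem via two relativizing class inclusions. First, I would observe that $\QCAM = \BP\QCMA$: public randomness is shared before the classical witness. The $\BP$ operator can be absorbed into classical advice using Adleman's trick, because for any fixed input length $n$ only a polynomial number of random strings are needed to drive the error below $2^{-n}$, so one can hard-wire a ``good'' random tape into polynomial-size advice. This yields $\QCAM(T)/\qpoly(S) \subseteq \QCMA(\poly(T))/\qpoly(\poly(S,T))$, and the construction relativizes.

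Second, I would invoke Aaronson and Drucker's exchange theorem \cite{AD10}, which states that $\QCMA/\qpoly \subseteq \QMA/\cpoly$, with time and advice length each blowing up by at most a polynomial factor, and whose proof relativizes. Composing this with the first inclusion, any hypothetical protocol witnessing $\LLQSV \in \QCAM(c^n)/\qpoly(c^n)$ would give $\LLQSV \in \QMA(c^{kn})/\cpoly(c^{kn})$ for some constant $k$ absorbing the polynomial blowups of both inclusions.

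Finally, the preceding theorem asserts $\LLQSV \notin \QMA(c'(N/n)^{1/4}/n)/(c'N)$ relative to a random oracle, where $N = 2^n$. To derive a contradiction I need $c^{kn}$ to lie below both the time bound $\Theta(2^{n/4}/n^{5/4})$ and the advice bound $\Theta(2^n)$. Since the time bound is the binding constraint, choosing any $c$ with $1 < c < 2^{1/(4k)}$ suffices: then $c^{kn} = o(2^{n/4})$ for all sufficiently large $n$, and a fortiori $c^{kn} < c' N$. Because $2^{1/(4k)} > 1$, such a constant $c$ in the range $(1,2)$ exists, completing the contradiction.

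The main obstacle is tracking the exponent $k$ precisely through the Aaronson--Drucker simulation, whose overhead depends on the soundness amplification used for $\QMA$ and on the advice-checking subprotocol; nailing down $k$ is what ultimately determines how close to $2$ one can push $c$. A cleaner alternative, hinted at by the preceding remark about $\LLQSV\notin\QMA/\qpoly$ relative to a random oracle, would be to upgrade the strong direct product argument of the previous theorem so that it tolerates $\qpoly$ advice directly (rather than going through classical advice). In that variant one concludes the theorem via the trivial relativizing inclusion $\QCAM/\qpoly \subseteq \QMA/\qpoly$, obtained by having the $\QMA$ verifier toss the $\AM$-style public coins itself.
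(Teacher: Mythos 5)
Your proposal is correct and takes essentially the same route as the paper: the paper likewise derives the theorem by combining Adleman's trick (so that $\QCAM/\qpoly=\QCMA/\qpoly$, relativizing), the Aaronson--Drucker exchange theorem $\QCMA/\qpoly\subseteq\QMA/\cpoly$, and the preceding random-oracle lower bound $\LLQSV\notin\QMA(c(N/n)^{1/4}/n)/(cN)$ obtained via the strong direct product theorem, with the constant $c$ chosen small enough that the polynomially blown-up time and advice stay below those bounds. Your closing remark about instead upgrading the direct-product argument to handle quantum advice directly also mirrors the paper's own observation that its technique extends to $\QMA/\qpoly$.
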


\section{A Device Performing the Ideal Measurement}\label{sec:ideal}

In this section, we consider a device which does the following: the device may share an arbitrary entanglement $\psi_{DE}$ with Eve before receiving a Haar-random circuit. \
After receiving $C\sim\Haar(N)$, the device $\A$ performs the ideal measurement, i.e., it runs the circuit $C$ on system $D$ of $\rho_{DE}$, and performs a standard basis measurement to output strings. \
On the other hand, the eavesdropper Eve may learn information about the device's output by applying an arbitrary quantum operation on the second system $E$, but it has no direct access to $\A$'s system and output. \
In the following discussion, we will also call such a device $\A$ a semi-honest device. \

\subsection{A single-round analysis}\label{sec:ideal-single}

In general, the first system is a Hilbert space $D\cong\mathbb C^{2^d}$ for $d\geq n$. \
Without loss of generality, we can only consider the case where $D\cong\mathbb C^N$, i.e., $d=n$. \
The reason is that for $d\geq n$, we can decompose $D={D'}\otimes{D''}$ such that $\H_{D'}\cong\mathbb C^N$ is the system the circuit $C$ acts on, and let $E'=D''\otimes E$. \
In this case, it suffices to find a lower bound of $H(D'|E')_\rho$ since
$H(D|E)_\rho\geq H(D'|E')_\rho$ and $H_{\min}(D|E)_\rho\geq H_{\min}(D'|E')_\rho$, where $\rho=\A_D^C\otimes\Id_E(\psi)$ is the classical-quantum state after the ideal measurement is performed when $C$ is chosen by the verifier. \
Thus in the following analysis, we assume $D\cong\mathbb C^N$ and $\A^C$ is the unitary channel of $C$ on $D$. \

\subsubsection{The Holevo Information}
By linearity of quantum operations, without loss of generality, we can consider the special case where the device and the eavesdropper share a pure state in Schmidt decomposition \
\begin{align}
  \ket{\psi} = \sum_{i\in\bit^n} \sqrt{\lambda_i} \ket{\phi_i}_D\ket{\psi_i}_E,
\end{align}
where each $\lambda_i\geq 0$ is real for every $i$ and $\{\ket{\phi_i}\}$ and $\{\ket{\psi_i}\}$ are twe sets of orthonormal vectors. \
We also denote $\psi:=\proj{\psi}$. \
\begin{lemma}\label{lem:holevo-schmidt}
  For any algorithm $\A$, let $\sigma=\A^C(\psi)$ be the classical-quantum state obtained by performing a projective measurement depending on $C$ on the first system $D$. \
  Then the Holevo information of $\sigma$, denoted $\chi(D:E)_\sigma=H(\lambda)$, where $H(\lambda)$ is the Shannon entropy of the distribution $\lambda$. \
\end{lemma}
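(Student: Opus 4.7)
The plan is to exploit the fact that the initial state is pure, so the only entropy on $E$ comes from the entanglement with $D$ quantified by the Schmidt coefficients $\lambda$. Recall that for a classical-quantum state $\sigma_{DE} = \sum_z p(z)\proj{z}_D \otimes \sigma_E^{(z)}$, the Holevo quantity equals
\begin{align}
\chi(D:E)_\sigma = H(E)_\sigma - \sum_z p(z)\, H(E)_{\sigma^{(z)}}.
\end{align}
So I need to compute two things: the entropy of the marginal on $E$, and the average entropy of the post-measurement conditional states on $E$.

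First I would compute the marginal on $E$. Because the algorithm $\A$ acts only on the $D$ system (a unitary depending on $C$ followed by a projective measurement in the computational basis, which together is just a projective measurement on $D$ depending on $C$), the reduced state on $E$ is left unchanged: $\sigma_E = \tr_D(\A^C\otimes\Id_E)(\psi) = \psi_E$. From the Schmidt decomposition $\psi_E = \sum_i \lambda_i \proj{\psi_i}_E$, so $H(E)_\sigma = H(\lambda)$.

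Second I would compute the entropy of the conditional states $\sigma_E^{(z)}$. The pre-measurement state $(I_D \otimes I_E)\ket{\psi}$ (transformed by the device's local unitary on $D$) is still a pure state on $DE$. A fundamental fact about pure bipartite states is that measuring one side in any orthonormal basis collapses the other side to a pure state (each branch of the Schmidt-type expansion in the measurement basis is a product of a basis state on $D$ and a pure state on $E$). Therefore for every measurement outcome $z$ with $p(z)>0$, $\sigma_E^{(z)}$ is pure, hence $H(E)_{\sigma^{(z)}} = 0$. Combining the two computations gives $\chi(D:E)_\sigma = H(\lambda) - 0 = H(\lambda)$, as claimed.

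There is no real obstacle here; the statement is essentially a direct consequence of the purity of $\ket{\psi}$ and the fact that the device acts locally on $D$. The only thing to be careful about is phrasing: one should note that even though $\A$ may apply an arbitrary $C$-dependent unitary before the standard-basis measurement, composing the unitary with the measurement yields a projective measurement on $D$, so the argument applies uniformly. I would present the proof in two short displayed equations, one establishing $\sigma_E = \psi_E$ and one establishing the purity of each $\sigma_E^{(z)}$, then invoke the definition of $\chi$.
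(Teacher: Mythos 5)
Your proof is correct and follows essentially the same route as the paper: both evaluate the two terms of the Holevo quantity, showing that the marginal on $E$ is $\sum_i\lambda_i\proj{\psi_i}$ (entropy $H(\lambda)$) and that each outcome-conditional state on $E$ is pure because the global state is pure and the device's action amounts to a rank-one projective measurement on $D$, so the average conditional entropy vanishes. The paper merely carries this out through an explicit change-of-basis computation (the operators $\bar U$ and $\bar\rho$ in the Schmidt basis), whereas you invoke the no-signaling and purity-of-branches facts abstractly; the substance is identical.
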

\begin{proof}
We consider the unitary $U:=\sum_i \ket{\phi_i}\!\bra{i}$ and let $U_{ij}=\bra{i}U\ket{j}$ be the element of matrix $U$. \
Also let $\bar U = \sum_{ji} U_{ji}\ket{\psi_i}\!\bra{\psi_j}$, $\rho=\sum_i\lambda\proj{i}$ and $\bar\rho=\sum_i\lambda_i\proj{\psi_i}$. \
Since $\{\ket{\psi_i}\}$ is a set of orthonormal vectors, $\bar U$ is also a unitary acts on the space $\Span\{\ket{\psi_i}\}$. \
The entanglement $\ket{\psi}$ can be written in the standard basis: \
\begin{align}\nonumber
  \ket{\psi}
  &= \sum_i \sqrt{\lambda_i} U\ket{i}\ket{\psi_i} \\\nonumber
  &= \sum_{ij} \sqrt{\lambda_i} U_{ji}\ket{j}\ket{\psi_i} \\\nonumber
  &= \sum_{ij} \sqrt{\lambda_i} \ket{j} \proj{\psi_i}\bar U\ket{\psi_j} \\
  &= \sum_j \ket{j}_D\bar\rho^{1/2}\bar U\ket{\psi_j}_E.
\end{align}
We define the subnormalized pure state
\begin{align}
  \sigma_i := \bar\rho^{1/2}\bar U\proj{\psi_i} \bar U^\dag \bar\rho^{1/2},
\end{align}
and compute the Holevo information: 
\begin{align}\label{eq:holevo}\nonumber
  \chi(D:E)_{\A^C(\psi)}
  &= H(E)_{\sum_i\sigma_i} - \sum_i \alpha_i H(E)_{\sigma_i/\alpha_i} \\
  &= H(\lambda).
\end{align}
where $H(\lambda)$ is the Shannon entropy of $\lambda$ and $\alpha_i:=\tr(\sigma_i)$. \
The second equality in \eq{holevo} holds since 
\begin{align}
  \sum_i \sigma_i = \bar\rho,
\end{align}
and $H(E)_{\bar\rho}=H(\lambda)$. \
Moreover, since $\frac{1}{\alpha_i}\sigma_i$ is a normalized pure state, $H(E)_{\sigma_i/\alpha_i}=0$. \
Note that $\sum_i\sigma_i$ is independent of $\bar U$, so the Holevo information does not change if a different measurement is performed on the first system $D$. \
Therefore for every $C$, the Holevo information is $H(\lambda)$. \
\end{proof}

Next, we show that if the device solves $\frac{(2-\epsilon)N}{N+1}$-XHOG, then the Holevo information is at most $\epsilon n+1$. \
This observation can be used to estabilish a lower bound of the conditional von Neumann entropy. \

\subsubsection{High XHOG Score Implies Small Holevo Information}

Previously, we have shown the Holevo information can be explicitly computed if the Schmidt decomposition is known. \
In this section, we establish the connection between the Holevo information and the score of any semi-honest device. \

First, we introduce a useful technical lemma.
\begin{lemma}\label{lem:concentrated-entropy}
  Let $p$ be a distribution $p$ over finite set $\X$ of $N$ elements such that there exists $x\in\X$, $p(x)\geq 1-\epsilon$. \
Then the Shannon entropy $H(p)\leq \epsilon n+1$. %
\end{lemma}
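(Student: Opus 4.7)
The plan is to apply the grouping identity for Shannon entropy, which separates the contribution of the heavy element $x$ from the remaining mass. Set $\epsilon' := 1 - p(x)$, so by hypothesis $\epsilon' \le \epsilon$. Let $q$ denote the conditional distribution on $\X\setminus\{x\}$ given by $q(y) = p(y)/\epsilon'$ (defined arbitrarily if $\epsilon'=0$, in which case the claim is immediate).

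First I would write the grouping decomposition
\begin{align}
H(p) = h(\epsilon') + \epsilon' \cdot H(q),
\end{align}
which is a direct calculation from the definition $H(p) = -\sum_y p(y)\log p(y)$ by isolating the term at $x$ and factoring $\epsilon'$ from the remaining terms. Next, I would bound each piece separately: the binary entropy satisfies $h(\epsilon') \le 1$ unconditionally, and since $q$ is supported on a set of size $N-1$, the uniform bound gives $H(q) \le \log(N-1) \le \log N = n$. Combining these estimates with $\epsilon' \le \epsilon$ yields
\begin{align}
H(p) \le 1 + \epsilon' n \le \epsilon n + 1,
\end{align}
which is the claimed bound.

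There is no real obstacle here; the only thing to be careful about is the edge case $\epsilon'=0$ (where $p$ is a point mass and $H(p)=0 \le \epsilon n + 1$ trivially), and the convention that $0\log 0 = 0$ so that the grouping identity holds without qualification. The proof is otherwise a two-line application of standard properties of Shannon entropy, and the bound $\epsilon n + 1$ is essentially tight (achieved up to constants by concentrating mass $1-\epsilon$ on one point and distributing the rest uniformly over the remaining $N-1$ points).
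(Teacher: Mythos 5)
Your proposal is correct and follows essentially the same route as the paper's proof: the grouping identity $H(p)=h(\epsilon')+\epsilon'H(q)$ with $h(\epsilon')\le 1$ and $H(q)\le\log(N-1)\le n$. The only (harmless) addition is your explicit treatment of the $\epsilon'=0$ edge case, which the paper leaves implicit.
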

\begin{proof}
  Let $p(x)=1-\gamma$ for $\gamma\leq\epsilon$ and $q$ be the conditional distribution on the event that $y\neq x$ i.e., $q(y)=p(x)/\gamma$ for $y\neq x$. \
  By definition, 
  \begin{align}\nonumber
    H(p)
    &=-p(x)\log p(x) - \sum_{y\neq x}p(y)\log p(y) \\\nonumber
    &= -(1-\gamma)\log(1-\gamma) - \gamma\sum_{y\neq x} q(y) (\log q(y)+\log\gamma) \\\nonumber
    &= -(1-\gamma)\log(1-\gamma)-\gamma\log\gamma  - \gamma\sum_{y\neq x}q(y)\log q(y) \\
    &= h(\gamma) + \gamma H(q),
    \end{align}
    where $h$ is the binary entropy function satisfying $h(\gamma)\leq 1$ for $\gamma\in[0,1]$. \
    Since $q$ is a distribution over $N-1$ elements, it Shannon entropy is upper bounded by $\log(N-1)\leq n$. \
\end{proof}

By \lem{concentrated-entropy}, to prove the Holevo information is small, it suffices to show that when the XHOG score is high, the distribution $\lambda$ must be concentrated on a single point.

\begin{lemma}\label{lem:holevo-semi}
  The Holevo information of the classical-quantum state obtained from any semi-honest device solves $\frac{(2-\epsilon)N}{N+1}$-$\XHOG$ is at most $\epsilon n+1$.
\end{lemma}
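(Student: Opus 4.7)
The plan is to reduce the Holevo bound to a statement about the Schmidt spectrum $\lambda$, and then extract concentration of $\lambda$ from a high XHOG score. By \lem{holevo-schmidt} we already know $\chi(D:E) = H(\lambda)$, so the whole task is to show that a device solving $\frac{(2-\epsilon)N}{N+1}$-$\XHOG$ must have $\lambda$ concentrated on a single index with weight $\geq 1 - O(\epsilon)$, after which \lem{concentrated-entropy} immediately gives $H(\lambda) \leq \epsilon n + 1$.

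First I would compute the XHOG score explicitly as a function of $\lambda$. Since the semi-honest device measures $C$ applied to the reduced state $\rho_D = \sum_i \lambda_i \proj{\phi_i}$, its output distribution is $p(z) = \sum_i \lambda_i |\bra{z}C\ket{\phi_i}|^2$, and hence
\[
S \;=\; \Exp_{C\sim\Haar(N)} \Exp_{z \sim p}[p_C(z)] \;=\; \sum_i \lambda_i \, \Exp_C \sum_z |\bra{z}C\ket{\phi_i}|^2 \, |\bra{z}C\ket{0^n}|^2.
\]
By Haar invariance each inner expectation depends only on the overlap $c_i := |\langle \phi_i | 0^n \rangle|^2$. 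A standard 2-design / Weingarten calculation (integrating $(CAC^\dagger) \otimes (CBC^\dagger)$ against $\proj{z}\otimes\proj{z}$ with $A = \proj{\phi_i}$, $B = \proj{0^n}$, and summing over $z$) yields
\[
\Exp_C \sum_z |\bra{z}C\ket{\phi_i}|^2 \, |\bra{z}C\ket{0^n}|^2 \;=\; \frac{1 + c_i}{N+1},
\]
which sanity-checks against \lem{completeness} at $c_i = 1$. Therefore $S = \frac{1 + \sum_i \lambda_i c_i}{N+1}$, so the hypothesis $S \geq \frac{2-\epsilon}{N+1}$ is equivalent to $\sum_i \lambda_i c_i \geq 1 - \epsilon$.

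Next I would exploit the fact that $\{\ket{\phi_i}\}$ is an orthonormal basis of $D$, so (after padding the Schmidt decomposition trivially) $\sum_i c_i = 1$, i.e., $c$ itself is a probability distribution. Since $\sum_i \lambda_i c_i \leq \max_i c_i$, there exists $i^*$ with $c_{i^*} \geq 1 - \epsilon$, and by $\sum_i c_i = 1$ we get $c_i \leq \epsilon$ for every $i \neq i^*$. Splitting the sum at $i^*$ and using $c_{i^*} \leq 1$,
\[
1 - \epsilon \;\leq\; \lambda_{i^*} c_{i^*} + \sum_{i \neq i^*} \lambda_i c_i \;\leq\; \lambda_{i^*} + (1 - \lambda_{i^*})\,\epsilon,
\]
so $\lambda_{i^*} \geq 1 - \epsilon/(1-\epsilon)$. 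Feeding this concentration into \lem{concentrated-entropy} applied to the distribution $\lambda$ gives $H(\lambda) \leq \epsilon n + 1$ (up to absorbing an $O(\epsilon^2 n)$ term), and by \lem{holevo-schmidt} this is exactly $\chi(D:E)$.

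The main technical step is the Haar integral in the second paragraph, but this is routine: it is a two-state generalization of the single-state calculation already carried out in \lem{completeness}, done cleanly via the 2-design identity. The only mildly delicate point is passing from $\lambda_{i^*} \geq 1 - \epsilon/(1-\epsilon)$ to the clean bound $\epsilon n + 1$; this is harmless in the regime $\epsilon = o(1)$ relevant for the rest of the paper and can be handled by a slight tightening of \lem{concentrated-entropy} or by tracking constants.
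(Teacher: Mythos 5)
Your score computation is correct and, in fact, lands exactly where the paper does: by a $2$-design (or Dirichlet) calculation the score is $S=\frac{1+\sum_i\lambda_i c_i}{N+1}$, and your quantity $\sum_i\lambda_i c_i=\sum_i\lambda_i|\langle 0^n|\phi_i\rangle|^2$ is precisely $\bra{0^n}\rho_D\ket{0^n}$, the diagonal entry of the device's reduced state in the computational basis --- this is the paper's $\bar\tau'_{00}$, which it obtains by expanding the score over computational-basis inputs with Dirichlet moments ($S_i=\frac{1+\delta_{i0}}{N+1}$) rather than over the Schmidt basis with a Haar integral. So the hypothesis $S\geq\frac{(2-\epsilon)}{N+1}$ indeed forces $\bra{0^n}\rho_D\ket{0^n}\geq 1-\epsilon$, and by \lem{holevo-schmidt} the task is to bound $H(\lambda)$, the entropy of the spectrum of $\rho_D$.

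The gap is in your final step. Your two-stage argument (first force $\max_i c_i\geq 1-\epsilon$, then transfer the weight to $\lambda$) only yields $\lambda_{i^*}\geq 1-\frac{\epsilon}{1-\epsilon}$, hence $H(\lambda)\leq\frac{\epsilon}{1-\epsilon}n+1$, which is strictly weaker than the claimed $\epsilon n+1$; and your remark that this is harmless "in the regime $\epsilon=o(1)$" misreads how the lemma is used. In \thm{von-neumann} and \cor{ideal-vn} it is invoked with $b=1+\delta$, i.e.\ $\epsilon\approx 1-\delta$ a \emph{constant close to $1$} (e.g.\ $\delta=0.1$ gives $\epsilon=0.9$), where your bound reads $9n+1$ and is vacuous, wiping out the downstream conclusion $H(D|E)\geq(0.99-\epsilon)n$. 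The fix is one line and is essentially what the paper does: since $\sum_i\lambda_i c_i$ is a diagonal entry of $\rho_D$, and the spectrum of a density matrix majorizes its diagonal --- equivalently, dephasing in the computational basis can only increase von Neumann entropy (\lem{vn-data}) --- you get $H(\lambda)=H(\rho_D)\leq H(\mathrm{diag}(\rho_D))\leq\epsilon n+1$ directly from \lem{concentrated-entropy}; or, even more simply, $\max_i\lambda_i\geq\bra{0^n}\rho_D\ket{0^n}\geq 1-\epsilon$, which avoids the $\frac{1}{1-\epsilon}$ loss entirely. With that replacement your Schmidt-basis/2-design route is a perfectly good alternative derivation of the lemma.
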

\begin{proof}
We analyze the XHOG score of a device on input a standard basis vector $\ket{i}$:
\begin{align}
  S_i := \Exp_{C\sim\Haar(N)}\left[\sum_z q_{C,i}(z)p_C(z)\right].
\end{align}
where $q_{C,i}(z)=|\bra{z}C\ket{i}|^2$. \
For $i=0$, $q_{C,0}=p_C$ the score is $\frac{2}{N+1}$. \
For $i\neq 0$, the distribution of $p_C(z)$ and $q_{C,i}(z)$ can be seen as two distinct elements of $P\sim\Dir(1^N)$. \
These facts imply
\begin{align} \nonumber
  S_i 
  &= N\cdot \Exp_{P\sim\Dir(1^N)}[P_0P_i] \\
  &= \frac{1+\delta_{i0}}{N+1}.
\end{align}
By definition, the XHOG score using $\ket{\psi}$ is \
\begin{align}\nonumber
  S &= \Exp_C\left[\sum_z p_C(z)\bra{\psi}(C^\dag \proj{z} C)\otimes\Id\ket{\psi}\right] \\
  &= \sum_i S_i \cdot \bra{\psi_i}\bar U^\dag \bar\rho \bar U\ket{\psi_i}.
\end{align}
Let $\bar\tau:=\bar U^\dag\bar\rho\bar U$ and $\bar\tau'$ be the state obtained by measuring $\bar\tau$ in basis $\{\ket{\psi_i}\}$. \
If $S\geq \frac{2-\epsilon}{N+1}$, \
\begin{align}\label{eq:score-tau}\nonumber
  \frac{2-\epsilon}{N+1}
  &\leq \sum_i S_i\bar\tau_{ii}' \\
  &= \frac{1+\bar\tau'_{00}}{N+1}.
\end{align}
This implies that $\bar\tau_{00}'\geq 1-\epsilon$. \
By \lem{holevo-schmidt}, the Holevo information can be upper bounded: \
\begin{align}
  \chi(D:E)_{\A^C(\rho)} 
  = H(E)_{\bar\tau} \leq H(E)_{\bar\tau'} \leq \epsilon n + 1. 
\end{align}
Since $\bar\tau'$ is obtained by performing a projective measurement $\{\ket{\psi_i}\}$ on $\tau$, by \lem{vn-data}, $H(E)_{\bar\tau}\leq H(E)_{\bar\tau'}$. \
The second inequality holds by \lem{concentrated-entropy}. \
Moreover, the inequality saturates when $\bar U$ is the identity matrix, i.e., $\{\ket{\psi_i}\}$ is the standard basis. \
\end{proof}
We note that the bound in \lem{holevo-semi} is nearly optimal: Consider the case where the entanglement shared by the device and Eve be the quantum state
\begin{align}
  \ket{\psi}_{DE} = (1-\epsilon)^{1/2} \ket{0,0} + \left(\frac{\epsilon}{N-1}\right)^{1/2}\sum_{x\neq 0} \ket{x,x}. \ 
\end{align}
The the Holevo information by \lem{holevo-schmidt} is $\epsilon\log(N-1)+h(\epsilon)$. \

\subsubsection{A Single-Round Analysis}

Now we show small Holevo information implies large conditional von Neumann entropy. \
We apply the idea from \cite{ADFRV18} and \cite{Arn18}. \
Let $D$ and $E$ denote the device's output random variable
and Eve's quantum register, repectively. \
By definition, the Holevo information
\begin{align}
  \chi(D : E)_{\A^C(\rho)}
  &= H(E)_{\A^C(\rho)} - H(E|D)_{\A^C(\rho)},
\end{align}
where $H$ denotes the von Neumann entropy. \
The following theorem shows a linear lower bound of the von Neumann entropy when the device solves $b$-XHOG for $b\approx(2-\epsilon)$. \

\begin{theorem}\label{thm:von-neumann}
  For any device $\A^C$ that given access to $C$ and performs the ideal measurement on the first system $D$ of any bipartite quantum state $\rho_{DE}$ 
and solves $b$-XHOG for $b\geq\frac{(2-\epsilon)N}{N+1}$, %
  \begin{align}
    \Pr_{C\sim\Haar(N)}\left[H(D|E)_{\A^C(\rho)}\geq (0.99-\epsilon)n\right] \geq 1-O\left(\frac{1}{N^{0.02}}\right).
  \end{align}
\end{theorem}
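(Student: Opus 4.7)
The plan is to exploit the identity $H(D|E)_\sigma = H(D)_\sigma - \chi(D:E)_\sigma$, valid for any classical-quantum state $\sigma_{DE}$ obtained by measuring system $D$, and thereby reduce the task to (i) upper bounding the Holevo information $\chi(D:E)_{\A^C(\psi)}$ for every $C$, and (ii) lower bounding $H(D)_{\A^C(\psi)}$ with high probability over $C\sim\Haar(N)$.

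For (i), \lem{holevo-semi} already delivers $\chi(D:E)_{\A^C(\psi)} \leq \epsilon n + 1$ whenever the device solves $b$-$\XHOG$ with $b\geq \frac{(2-\epsilon)N}{N+1}$. A crucial observation is that the proof of that lemma in fact shows $\chi(D:E)_{\A^C(\psi)}=H(\lambda)$, where $\lambda$ is the Schmidt spectrum of the initial entanglement $\ket{\psi}_{DE}$; since $\lambda$ is independent of $C$, the Holevo bound is pointwise in $C$ rather than only in expectation.

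For (ii), write the marginal output distribution as $q_C(z):=\bra{z}C\rho_D C^\dag\ket{z}$ with $\rho_D=\tr_E(\proj{\psi})$, so $H(D)_{\A^C(\psi)}=H(q_C)$. Jensen's inequality gives $H(q_C)\geq H_2(q_C)=-\log\sum_z q_C(z)^2$, so it suffices to control the collision probability. A Haar-moment calculation analogous to the one in \lem{collision-concentration} (using $\Exp[P_iP_j]=(1+\delta_{ij})/(N(N+1))$ for $P\sim\Dir(1^N)$) yields
\begin{align}
\Exp_{C\sim\Haar(N)}\left[\sum_z q_C(z)^2\right]=\frac{1+\tr(\rho_D^2)}{N+1}\leq \frac{2}{N+1}.
\end{align}
Markov's inequality at an appropriate threshold then gives $\sum_z q_C(z)^2\leq O(N^{-0.99})$ except on a set of $C$ of Haar measure $O(1/N^{0.02})$, whence $H(q_C)\geq 0.99n-O(1)$. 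Combining with (i) yields $H(D|E)_{\A^C(\psi)}\geq 0.99n-\epsilon n-O(1)\geq (0.99-\epsilon)n$ for $n$ sufficiently large.

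The main delicate point is calibrating the Markov step so that the entropy threshold $0.99n$ and the failure probability $O(1/N^{0.02})$ emerge simultaneously; the plain second-moment argument produces a one-parameter tradeoff between these two constants. If they come out slightly loose, a natural tightening is either to invoke higher moments $\Exp_C[q_C(z)^k]$ (still closed-form via the Dirichlet/Weingarten formulas) or to appeal to the sharper concentration in \lem{haar-min} after reducing to the dominant Schmidt component $\proj{\phi_0}$, which by the Holevo bound carries weight at least $1-\epsilon$, so that $\rho_D$ is close to a pure state and $q_C$ is close to the Dirichlet-distributed $|\bra{z}C\ket{\phi_0}|^2$.
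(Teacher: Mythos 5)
Your skeleton coincides with the paper's: the identity $H(D|E)_{\A^C(\psi)}=H(D)_{\A^C(\psi)}-\chi(D:E)_{\A^C(\psi)}$ for the measured (cq) state, the pointwise-in-$C$ Holevo bound $\chi\leq\epsilon n+1$ from \lem{holevo-semi} (your "crucial observation" is exactly how the paper uses it, since by \lem{holevo-schmidt} the Holevo quantity equals $H(\lambda)$ and does not depend on $C$), and a lower bound on $H(D)$ via the collision probability of the output distribution. Where you genuinely differ is the concentration step. The paper decomposes $q_C$ as a mixture of the basis-state distributions $p_{CX^i}$, uses concavity of Shannon entropy, and takes a union bound over all $N$ basis states, each controlled by the Chebyshev estimate behind \lem{collision-concentration} at level $O(N^{-1.02})$; the union bound is precisely what produces the exponent $0.02$. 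You instead compute the Haar mean of the mixed-state collision probability directly, $\Exp_C[\sum_z q_C(z)^2]=(1+\tr\rho_D^2)/(N+1)\leq 2/(N+1)$, which is correct and has the virtue of treating the mixed input $\rho_D$ in one stroke, with no mixture decomposition and no union bound. The catch, which you flag yourself, is quantitative: Markov at threshold $N^{-0.99}$ yields failure probability only $O(N^{-0.01})$, so the plain version proves the theorem with $1-O(N^{-0.01})$ rather than the stated $1-O(N^{-0.02})$, and the $-O(1)$ additive loss should be absorbed by setting the threshold at $N^{-0.99}/2$ (so $H_2(q_C)\geq 0.99n+1$ offsets the $+1$ in the Holevo bound), as the paper implicitly does.

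Your first proposed repair does close the gap: a second-moment (variance) bound on $\sum_z q_C(z)^2$ plus Chebyshev gives a failure probability polynomially close to $O(N^{-1})$, comfortably stronger than $O(N^{-0.02})$ and in fact better than the paper's union-bound exponent; the needed fourth Haar moments are a finite Weingarten computation whose rank-one case is already \eq{pc-collision-exp}. Alternatively, Cauchy--Schwarz in the eigenbasis of $\rho_D$ gives $\sum_z q_C(z)^2\leq\max_i\sum_z|\bra{z}C\ket{v_i}|^4$, which reduces to the paper's per-state Chebyshev bound plus a union bound over the $N$ eigenvectors (valid by unitary invariance of the Haar measure), recovering exactly the paper's $0.02$. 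Be cautious with your second fallback: the XHOG condition gives $\bra{0^n}\rho_D\ket{0^n}\geq 1-\epsilon$, hence only trace distance $O(\sqrt{\epsilon})$ to a pure state, and pushing that through entropy continuity costs $O(\sqrt{\epsilon}\,n)$, which does not recover the $(0.99-\epsilon)n$ bound; that route needs a more careful argument than the sketch suggests.
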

\begin{proof}
Let $\A^C$ be any device that performs an ideal measurement to yield $\psi^C_{DE}:=\A^C(\rho)$ and outputs a random variable $D$. \
We observe that
\begin{align}\nonumber
  H(D|E)_{\psi^C}
  &= H(DE)_{\psi^C} - H(E)_{\psi^C} \\\nonumber
  &= H(D)_{\psi^C} + H(E|D)_{\psi^C} - H(E)_{\psi^C} \\\label{eq:34}
  &= H(D)_{\psi^C} - \chi(D: E)_{\psi^C}.
\end{align}
The first term in the third line of \eq{34}, i.e., $H(D)_{\psi^C}$, denotes the von Neumann entropy of the device's output, when Eve's system $E$ is empty. \
To lower bound the first term in \eq{34}, recall that $q_{C,i}(z)=|\bra{z}C\ket{i}|^2=p_{CX^i}(z)$. \
From \eq{collision-concentration} and right translational invariance of the Haar measure, for every $i$, \
\begin{align}\nonumber
    \Pr_{C\sim\Haar(N)}\left[H(p_{CX^i}) \geq 0.99n+1\right]
    &\geq 
    \Pr_C\left[ S_C \leq N^{-0.99}/2\right] \\
    &\geq
    1-O\left(\frac{1}{N^{1.02}}\right).
\end{align}
By the union bound,
\begin{align}
    \Pr_{C\sim\Haar(N)}\left[\min_{i\in\bit^n} H(p_{CX^i})\geq 0.99n+1\right]
    \geq 1-O\left(\frac{1}{N^{0.02}}\right).
\end{align}
Now by the convexity of Shannon entropy and the definition that $q_C(z)=\Exp_{i\sim\bar\tau'}[q_{C,i}(z)]$, 
\begin{align}\nonumber
    \Pr_{C\sim\Haar(N)}\left[
    H(q_C) \geq 0.99n+1
    \right]
    &\geq 
    \Pr_{C\sim\Haar(N)}\left[\Exp_{i\sim\bar\tau'} \left[H(p_{CX^i})\right]\geq 0.99n+1\right] \\\nonumber
    &\geq 
    \Pr_{C\sim\Haar(N)}\left[\min_i H(p_{CX^i})\geq 0.99n+1\right] \\
    &\geq 
    1-O\left(\frac{1}{N^{0.02}}\right).
\end{align}
Thus since $H(D)_{\psi^C}=H(q_C)$,
\begin{align}\label{eq:von-neumann-c}
    \Pr_{C\sim\Haar(N)}\left[ H(D)_{\psi^C} \geq 0.99n+1\right] \geq 1-O\left(\frac{1}{N^{0.02}}\right).
\end{align}
In \lem{holevo-semi}, we have shown that for every $C$, $\chi(D:E)_{\psi^C}\leq \epsilon n+1$. \
Then by \eq{von-neumann-c}, 
\begin{align}
    \Pr_{C\sim\Haar(N)}\left[H(D|E)_{\psi^C}\geq (0.99-\epsilon)n\right] \geq 1-O\left(\frac{1}{N^{0.02}}\right).
\end{align}
\end{proof}

We denote the output state $\psi = \Exp_{C\sim\Haar(N)}\left[ \proj{C} \otimes \psi^C \right]$,
and 
\begin{align}
  H(D|CE)_\psi := \Exp_{C\sim\Haar(N)}\left[ H(D|E)_{\psi^C} \right].
\end{align}
The following corollary concludes our single-round analysis for this ideal setting.
\begin{corollary}\label{cor:ideal-vn}
  Let $\A$ be any device performing an ideal measurement solving $(1+\delta)$-$\XHOG$ and $\psi$ be its output.
  Then $H(D|CE)_{\psi} \geq (\delta-0.01) n -o(1)$.
\end{corollary}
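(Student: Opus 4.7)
The plan is to derive this corollary as a short averaging consequence of Theorem \ref{thm:von-neumann}, which has already done the real work of lower-bounding $H(D|E)_{\psi^C}$ on a typical circuit $C$. First I would reconcile the two parameterizations of the XHOG threshold: solving $1+\delta = (2-\epsilon)N/(N+1)$ for $\epsilon$ gives $\epsilon = (1-\delta) - O(1/N)$, so substituting into the theorem's conclusion yields the pointwise bound
\begin{equation*}
H(D|E)_{\psi^C} \;\geq\; (0.99 - \epsilon)\,n \;=\; (\delta - 0.01)\,n - o(1),
\end{equation*}
valid with probability at least $1 - O(1/N^{0.02})$ over $C \sim \Haar(N)$.

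Next I would pass from this pointwise bound to the quantity in the corollary via the definition stated immediately above it, namely $H(D|CE)_\psi = \Exp_{C \sim \Haar(N)}[H(D|E)_{\psi^C}]$. Let $A \subseteq \mathbb{U}(N)$ denote the set of ``good'' circuits on which the pointwise inequality holds; by the theorem, $\Pr_{C}[C \in A] \geq 1 - O(1/N^{0.02})$. Because $\A$ performs an ideal measurement, the $D$-system of $\psi^C$ is classical for every $C$, so $H(D|E)_{\psi^C} \geq 0$ holds unconditionally---in particular on the exceptional set $A^c$, which therefore contributes no negative mass to the expectation. Splitting the expectation and bounding the good set from below,
\begin{equation*}
H(D|CE)_{\psi} \;\geq\; \Pr_C[A]\cdot\bigl((\delta - 0.01)n - o(1)\bigr) + 0 \;\geq\; (\delta - 0.01)\,n - o(1),
\end{equation*}
since the loss $O(n/N^{0.02})$ absorbed by tightening $\Pr_C[A]$ to $1$ is itself $o(1)$ as $n \to \infty$.

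There is no real obstacle here: the bulk of the argument is already inside Theorem \ref{thm:von-neumann}, which combines the Holevo-information upper bound of Lemma \ref{lem:holevo-semi} (concentration of the Schmidt spectrum when the XHOG score is high) with the typical-$C$ Shannon-entropy lower bound derived from Lemma \ref{lem:collision-concentration} via convexity. The only care needed is tracking the $O(1/N)$ gap between the stated threshold $1+\delta$ and the theorem's $(2-\epsilon)N/(N+1)$, and verifying that the measure of the exceptional circuits shrinks fast enough that the $O(n/N^{0.02})$ deficit in the averaging step is swallowed by the $o(1)$ term. Both are immediate, so the corollary follows with no further input.
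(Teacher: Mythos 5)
Your proposal is correct and matches the paper's own (one-line) proof: the corollary is obtained exactly by combining \thm{von-neumann} with the definition $H(D|CE)_\psi=\Exp_{C\sim\Haar(N)}[H(D|E)_{\psi^C}]$, with the exceptional circuits contributing nonnegatively and the $O(nN^{-0.02})$ loss absorbed into the $o(1)$ term. Your extra care in translating $1+\delta$ into the theorem's $(2-\epsilon)N/(N+1)$ threshold and in noting non-negativity of $H(D|E)_{\psi^C}$ for a classical $D$ just makes explicit what the paper leaves implicit.
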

\begin{proof}
  By definition and \thm{von-neumann}, $H(D|CE)_\psi \geq (\delta-0.01) n - O(N^{-0.02})$.
\end{proof}

\subsection{Entropy Accumulation}\label{sec:ideal-ea}

\begin{figure}
    \hrule\vspace{.5em}
  Input: security parameter $n$, the number of rounds $m$, the score parameter $\delta\in[0,1]$ and the fraction of test rounds $\gamma$.\\

  The protocol:

  \vspace{.5em}
  \begin{enumerate}
    \item The verifier samples $C\sim\Haar(N)$. 
      For $i=1,\ldots,m$, run the following steps:
  \begin{enumerate}
  \item The verifier sends $C$ to the device. (This step may be omitted.)
  \item The device returns a sample $z_i$.
  \end{enumerate} 
  
  \item Let $t=|\{i:T_i=1\}|$ be the number of test rounds. The verifier computes 
    \begin{align}
      s = \frac{1}{t}\sum_{i:T_i=1} p_{C}(z_i).
      \end{align}
      If $s\geq (1+\delta)/N$, then the verifier accepts and outputs $(z_1,\ldots,z_m)$ to the quantum-proof randomness extractor.
  \end{enumerate} 
  \hrule\vspace{1em}
  \caption{The entropy accumulation protocol for a device performing the ideal measurement.
}
  \label{fig:ideal}
  \end{figure}

We present our entropy accumulation protocol for any device which performs the ideal measurement in \fig{ideal}. \
    Note that since the von Neumann entropy lower bound holds for almost every $C$, the verifier may reuse the circuit in every round. \
\begin{theorem}
  Let $\A_1,\ldots,\A_m$ be $n^{O(1)}$-query sequential processes given access to the first system of a bipartite state $\rho_{DE}$ and outputting $z_1,\ldots,z_m$ solving $\LXEB_{1+\delta,m}$ with probability $p$. \
  Then with probability $1-2^{-\Omega(n)}$ over the choices of $C$,
  \begin{align}
    H_{\min}^{\epsilon}(Z^m|E)_{\A_m\circ\ldots\A_1(\rho)|\Omega} \geq n\left( (\delta-0.01) m - c\sqrt{m} -o(1)\right)
  \end{align}
  where $\Omega$ denotes the event that the output $z_1,\ldots,z_m$ solves $\LXEB_{1+\delta,m}$, i.e., \
\begin{align}
  \Omega :=\left\{ (z_1,\ldots,z_m): \frac{1}{m}\sum_{i=1}^m P(z) \geq \frac{1+\delta}{N} \right\}. 
\end{align}
The parameter $c:=4.01(1+\log(\frac{1}{p\epsilon}))^{1/2}$.
\end{theorem}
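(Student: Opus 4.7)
The plan is to apply the entropy accumulation theorem in Corollary~\ref{cor:eat} to the sequence of CPTP maps $\M_i$ induced by the devices $\A_i$, using $G_z := N p_C(z)$ as the diagonal score operator and a min-tradeoff function read off from the single-round bound Corollary~\ref{cor:ideal-vn}. First condition on a ``good'' $C$: by Lemma~\ref{lem:haar-min} and the collision-probability concentration (Lemma~\ref{lem:collision-concentration}), with probability $1 - 2^{-\Omega(n)}$ over $C \sim \Haar(N)$ the distribution $p_C$ satisfies $\max_z p_C(z) \leq O((\log N)/N)$ and $\sum_z p_C(z)^2 = (2 \pm o(1))/(N+1)$, so after truncating $G_z$ at $2$ we may safely take $d_Z = N$ and keep the per-round score in $[0, 2]$ with negligible loss.

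With that in hand, take the affine min-tradeoff function $f(s) := n(s - 1.01) - o(1)$; the single-round analysis of Section~\ref{sec:ideal-single} then guarantees that any state in $\Sigma_i(G, s)$ has $H(Z_i \mid E \tilde E_{i-1}) \geq f(s)$, so $f$ is a legitimate min-tradeoff function in the sense of Definition~\ref{dfn:min-tradeoff}. Since $f$ is affine with $\|\nabla f\|_\infty = n$, Corollary~\ref{cor:eat} applies with $V = 2(\log(2N + 1) + \|\nabla f\|_\infty) = (4 + o(1))n$. On the event $\Omega$ the empirical average score is at least $1 + \delta$, so the EAT output becomes
\[
H_{\min}^\epsilon(Z^m \mid E)_{\sigma \mid \Omega}
\;\geq\; m f(1 + \delta) - \sqrt{m}\, V \sqrt{\log(2/(p^2 \epsilon^2))}
\;=\; n\bigl((\delta - 0.01) m - c \sqrt{m} - o(1)\bigr),
\]
for the constant $c$ stated in the theorem after collecting the $\log(2)$ and $\log(1/(p\epsilon))$ factors.

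The main obstacle will be to justify the single-round min-tradeoff bound in the setting required by EAT, namely uniformly over all input states $\omega \in \mathbb{S}(R_{i-1} E \tilde E_{i-1})$ and for a \emph{fixed} circuit $C$ rather than averaged over Haar-random $C$. Lemma~\ref{lem:holevo-schmidt} tells us that the Holevo quantity $\chi(D:E) = H(\lambda)$ depends only on the Schmidt coefficients of the device's entanglement with Eve, so the Holevo bound of Lemma~\ref{lem:holevo-semi} transfers to the fixed-$C$ setting provided the per-$C$ coefficients $S_{C,i} = \sum_z p_C(z) \cdot |\bra{z} C \ket{i}|^2$ concentrate at $(1 + \delta_{i0})/(N+1)$ for every $i \in \{0,1\}^n$; this follows from the Dirichlet moment computations of Section~\ref{sec:haar-intro} together with a union bound over the $N$ values of $i$, which is absorbed into the ``good $C$'' event. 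The Shannon-entropy half $H(Z_i) \geq 0.99 n$ then follows directly from $\max_z p_C(z) \leq O((\log N)/N)$, completing the round-by-round hypothesis needed by Corollary~\ref{cor:eat}.
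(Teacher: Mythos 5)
Your skeleton is the same as the paper's: the paper also proves this theorem by feeding the single-round bound of \cor{ideal-vn} into \cor{eat} with $G=N\sum_z p_C(z)\proj{z}$, getting $\|\nabla f\|_\infty=n$, $V\le 4.01n$, and the stated bound on $\Omega$. The divergence is the step you yourself identify as the main obstacle---upgrading the single-round entropy bound from a Haar-averaged statement to a min-tradeoff bound valid for a \emph{fixed} typical $C$, uniformly over all input states---and that is exactly where your argument has a genuine gap, on two counts.

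First, the concentration you invoke is not available at the strength you claim. \lem{collision-concentration} (and the analogous Dirichlet second-moment computation for $S_{C,i}$ with $i\neq 0$) is a Chebyshev bound: at the precision you need, namely deviations of order $\epsilon/(N+1)$ for constant $\epsilon$, the per-$i$ failure probability is $O(1/(\epsilon^2 N))$, so a union bound over all $N$ basis states gives a total failure probability of $O(1)$, not $2^{-\Omega(n)}$. (The union bound in \thm{von-neumann} only succeeds because it targets the much cruder event $S_{CX^i}\le N^{-0.99}/2$.) Second, and more fundamentally, even granting simultaneous concentration of all diagonal entries $S_{C,i}$, the fixed-$C$ min-tradeoff property required by \dfn{min-tradeoff} is simply false with $f(s)=n(s-1.01)-o(1)$: for a fixed circuit the score operator $N\sum_z p_C(z)\,C^\dagger\proj{z}C$ has eigenvalues $Np_C(z)$ reaching $\approx 4\ln N$ by \lem{haar-min}, and a state maximally entangled with Eve on the span of $\{C^\dagger\ket{z}: p_C(z)\ge(1+\delta)/N\}$ (a set of size $\Omega(N)$ for typical $C$) attains score at least $1+\delta$ while $H(Z_i|E)=0$, since Eve's purification lets her read off the outcome. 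The Holevo argument of \lem{holevo-semi} excludes such states only because the \emph{Haar-averaged} score operator is $\tfrac{N}{N+1}(\Id+\proj{0^n})$, whose top eigenvector is unique; concentration of the $N$ diagonal entries in the computational basis does not restore that operator-level structure once $C$ is fixed, because the dangerous states are not aligned with that basis. The paper never attempts your fixed-$C$ transfer: it invokes only the $C$-averaged bound \cor{ideal-vn}, implicitly relying on the semi-honest model in which the device's shared state is independent of $C$. To make your route work you would need either to restrict the infimum in \dfn{min-tradeoff} to inputs uncorrelated with $C$ (and justify that restriction for the sequential process), or to prove concentration of the score operator itself, not of its diagonal; relatedly, your truncation of $G_z$ at $2$ is not automatically ``negligible loss,'' since for typical $C$ a constant fraction of honest samples have $Np_C(z)>2$ and the accept event $\Omega$ is defined via the untruncated average.
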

\begin{proof}
We apply the EAT shown in \sec{eat}.
In particular, we choose $G:=N\sum_z \proj{z} P(z)$.
By \cor{ideal-vn}, we choose $f(\delta)=(\delta-0.01) n-o(1)$.
This gives $\|\nabla f\|_\infty=n$.
Then we have $V = 2(\log(2d_Z+1)+\|\nabla f\|_\infty) = 2(\log(2N+1)+n) = 4n+O(1/N)\leq 4.01n$.
Now we have 
\begin{align}
  H_{\min}^\epsilon(Z^m|E)_{\A_m\circ\ldots\circ\A_1(\rho)|\Omega}
  &\geq n \left( (\delta-0.01) m - o\left(1/n\right) -4.01\sqrt{m}\sqrt{\log\frac{2}{\Pr_\sigma[\Omega]^2\epsilon^2}}\right).
\end{align}
\end{proof}

As a side note, the same analysis in \sec{ideal-single} and \sec{ideal-ea} also applies to a Fourier sampling circuit, i.e., $C$ describes the unitary transformations in \eq{fourier} under a random $n$-bit Boolean functions $f$. \ 
In this case, an ideal devices solves $b$-$\XHOG$ for $b\approx 3$ \cite{Kre21}, and any device solving $(1+2\delta)$-$\XHOG$ has a lower bound $\Omega(\delta n)$ on the conditional von Neumann entropy. \
Applying the entropy accumulation theorem, an $m$-round protocol also accumulates $\Omega(\delta m n)$ conditional min-entropy.

The analysis in \sec{ideal} is meant to provide an intuition on why LXEB can be used to generate randomness in a simplified setting. \
In particular, for entropy accumulation, the channels $\A_1,\ldots,\A_m$ are the ideal unitary channel of $C$, so this setting is far from full device-independence. \
In the next section, we provide an analysis for a fully general device where the channels $\A_1,\ldots,\A_m$ are not necessarily ideal. \

\section{A Fully General Device}\label{sec:general}

In this section, we give an analysis for a general device. \
Before we prove our main result, first we provide some intuition. \
Let $\rho_{DE}$ be an entangled state shared between the device and Eve, and $\A^C(\rho_{DE})$ be the quantum algorithm that the device performs given access to the first subsystem $D$ and the circuit $C$. \
Without loss of generality, $\A^C(\rho_{DE})$ outputs a classical-quantum state in the form $\sum_z p(z)\proj{z}_{D'}\otimes \xi_E(z)$ (this also implies that $\A^C$ is a quantum channel of type $D\to D'$ for Hilbert spaces $D$ and $D'$) for normalized quantum states $\xi_E(z)$. \
The classical part, i.e., the random variable $z$ is sent to the verifier, and the quantum information $\xi_E(z)$ can be used as the input to the next round of interaction. \
In an $m$-round protocol, the device and the verifier repeat the message exchanges $m$ times, and the verifier outputs a decision bit indicating accept or abort.
If the verifier accepts, then the output of the device is fed into a quantum-proof randomness extractor. \

We consider any device given oracle access to $C$, and show that if the device passes $\LXEB_{b,k}$, the output must has a min-entropy lower bound. \
First, applying the rotational invariance of a Haar random unitary, we formally define an input model in which the device is given access to: the verifier samples a circuit $C\sim\Haar(N)$ and two unitaries $V,W$. \
The first unitary $V$ is a random phase unitary of the form $\sum_z e^{i\theta_z}\proj{z}$, where each $\theta_z$ is independently sampled from the uniform distribution over $[0,2\pi)$.
The second unitary $W$ is a Haar random unitary on the subspace $\{\ket{x}:x\in\bit^n\backslash\{0^n\}\}$.
The device is then given oracle access to $C'=VCW$, and by definition, it is clear that $p_C=p_{C'}$, where $p_C(z):=|\bra{z}C\ket{0}|^2$ is the density of $p_C$. \
Also, $C$ and $C'$ are identically distributed (according to the Haar measure). \
We will say an efficient device is given oracle access to $C$ if it can apply $C'$ on any subset of its system $D$ of the same size. \

The symmetrization allows us to analyze a von Neumann entropy lower bound. \
In particular, we show that the conditional von Neumann entropy on $C$ is equal to the von Neumann entropy on $p_C$ in this query model. \
Since $C$ is distributed over an infinite set, we define the von Neumann entropy $H(Z|CE)_\psi:= \Exp_C\left[H(Z|E)_{\psi^C}\right]$, where $\psi^C$ is the classical-quantum state output by the device $\A$ when $\A$ is given access to $C$. \
In the following discussion, we use capitalized $P_C$ to emphasize that the distribution is a vector of complex-valued random variables over the probability simplex. \
In particular, for $C\sim\Haar(N)$, $P_C$ is distributed according the the Dirichlet distribution $\Dir(1^N)$ (see \sec{haar-intro} for details). \
\begin{theorem}\label{thm:symmetry}
  For Haar random $C$ and every device $\A$ which on input the first system of a bipartite state $\rho_{DE}$ and outputs a classical-quantum state $\psi$ given oracle access to $C$.
  Then $H(Z|CE)_\psi=H(Z|P_CE)_{\psi}$, where $P_C(z):=|\bra{z}C\ket{0^n}|^2$ is distributed according to the Dirichlet distribution $\Dir(1^N)$.
\end{theorem}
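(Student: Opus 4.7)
The plan is to exploit the symmetrization built into the input model to show that, conditioned on the marginal $P_C$, the device's output $(Z,E)$ is conditionally independent of $C$ itself. Once this is established, the identity $H(Z|CE)_\psi = H(Z|P_C E)_\psi$ follows from an elementary application of the definition of conditional entropy: if the conditional distribution of $(Z,E)$ given $C=c$ depends on $c$ only through $P_c$, then $H(Z|E,C=c) = g(P_c)$ for some function $g$, and averaging gives $\Exp_C[g(P_C)] = \Exp_{P_C}[g(P_C)] = H(Z|P_C E)_\psi$.

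For the key independence claim I would proceed as follows. Let $S_p := \{U \in U(N) : |\bra{z}U\ket{0^n}|^2 = p(z) \text{ for every } z \in \{0,1\}^n\}$. First note that $P_{C'} = P_C$: the diagonal phase unitary $V$ does not change the magnitudes $|\bra{z}C\ket{0^n}|$, and $W$ acts trivially on $\ket{0^n}$. Next I would show that for any fixed $C \in S_p$, the map $(V,W) \mapsto VCW$ with $V$ uniform on the torus of diagonal phase unitaries and $W$ Haar on $U(N-1)$ (on the subspace orthogonal to $\ket{0^n}$) pushes forward to the unique $U(1)^N \times U(N-1)$-invariant probability measure on $S_p$, which in particular does not depend on the representative $C$. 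This requires checking that the group action is transitive on $S_p$, for which a dimension count matches the ambient manifold ($\dim S_p = N^2 - N + 1 = N + (N-1)^2$); transitivity itself can be obtained by fixing the first column of $U \in S_p$ up to phases using $V$ and then realizing the remaining columns (an orthonormal basis of the orthogonal complement to $U\ket{0^n}$) as the image of any other such basis under some $W \in U(N-1)$.

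Having established this, the conditional distribution of $C'$ given $(C, P_C)$ coincides with the conditional distribution of $C'$ given $P_C$ alone — both equal the unique invariant measure on $S_{P_C}$. Since the device's classical-quantum output depends on the verifier-side randomness $C$ only through the oracle it queries, namely $C'$, and since the shared entanglement $\rho_{DE}$ and the device's channels are independent of $C$, the joint distribution of $(Z,E)$ given $C=c$ is determined by the distribution of $C'$ given $c$, hence depends only on $P_c$. This gives the desired conditional independence $(Z,E) \perp C \mid P_C$, and the entropy identity follows.

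The step I expect to be the main obstacle is the rigorous uniqueness-of-invariant-measure argument: verifying that $U(1)^N \times U(N-1)$ acts transitively on $S_p$ and that the pushforward of Haar measure under $(V,W)\mapsto VCW$ is this unique invariant measure (rather than some lower-dimensional submeasure). The transitivity, while intuitively clear from the dimension count, needs a clean constructive argument — most cleanly via a QR-type decomposition that writes any target $U \in S_p$ as $VCW$ explicitly given a reference $C \in S_p$. Once transitivity is in hand, uniqueness is a standard consequence of compact group actions, and the rest of the argument is bookkeeping.
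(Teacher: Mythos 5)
Your proposal is correct and follows essentially the same route as the paper: the paper's proof likewise fixes a representative $C_P\in S_P$, uses the decomposition $C=V'C_PW'$ (your transitivity claim, proved by matching the first column up to phases and absorbing the rest into a unitary fixing $\ket{0^n}$) together with Haar invariance to conclude $\psi^C=\psi^{C_P}$, and then averages over $P_C\sim\Dir(1^N)$ exactly as you describe. The only difference is that the step you flag as the main obstacle---uniqueness of the invariant measure on $S_p$---is unnecessary: once $C=V'C_PW'$ is in hand, translation invariance of the Haar measures on the phase torus and on the unitaries of $\Span\{\ket{x}:x\neq 0^n\}$ already shows that the pushforwards of $(V,W)\mapsto VCW$ and $(V,W)\mapsto VC_PW$ coincide.
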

\begin{proof}
  For every distribution $P$, we let $S_P$ be the set of unitaries such that $P_C=P$.
  For each $P$, let $C_P$ denote a representative in the set $S_P$. 
  Then for $C\in S_P$, there exists a diagonal unitary $V'$ and a unitary $W'$ on the subspace $\mathrm{span}\{\ket{x}:x\neq 0^n\}$ such that $C=V' C_P W'$.
  This implies that one can simulate a query to $C\in S_P$ using one query to $C_P$. 
  More precisely, for $C\in S_P$, we denote $\sigma^P = \psi^C=\psi^{C_P}$.
  This implies that
  $H(Z|E)_{\psi^C} = H(Z|E)_{\sigma^P}$.
  By the definition of conditional von Neumann entropy and the fact that $p_C$ is distributed according the distribution $\Dir(1^N)$,
  $H(Z|CE)_{\psi} = \Exp_{P\sim\Dir(1^N)}\left[ H(Z|E)_{\sigma^p} \right] = H(Z|PE)_\psi$.
\end{proof}

Our analysis proceeds in the following steps. \
First, we show that for every $T$-query $\A$ and $\rho_{DE}$, there is another quantum algorithm $\B$ which on input the first system $D$ of $\rho_{DE}$ and $k$ i.i.d. random variables $z_1,\ldots,z_k$ sampled according to $P_C$ for Haar-random $C$ such that 
\begin{align}
  \left\| \Exp_C\A^C(\rho_{DE}) - \Exp_{C,z_1,\ldots,z_k\sim P_C}\B(\rho_{DE},z_1,\ldots,z_k) \right\|_{\tr} \leq 2^{-\Omega(n)},
\end{align}
for $k=T^2\cdot 2^{-\Omega(n)}$.
Thus without loss of generality, we may consider the behavior of $\B$.

Next, we show that $\B$'s score can be easily calculated using the probability that the output $z\in\{z_1,\ldots,z_k\}$.
More specifically, we show that $\B$'s score is negligibly close to $\frac{1}{N+k}\cdot \left(1 + \Pr[z\in\{z_1,\ldots,z_k\}]\right)$. \
Thus, when $\B$'s score is $\frac{1+\delta}{N+k}$,
potentially the device and the eavesdropper can coordinate in a way such that the conditional von Neumann entropy using $\rho_{DE}$ with probability negligibly close to $\delta$;
otherwise, since the eavesdropper has no information about $z_1,\ldots,z_k$, and each $z_i$ is sampled from $P_C$, the conditional von Neumann entropy can be bounded by the min-entropy of $P_C$. \
Then since with overwhelming probability over $C$, $P_C$'s min-entropy is $n-\log n-O(1)$, $\A$'s output must have high conditional von Neumann entropy when it wins the prototol for a ``typical'' $C$. \

\subsection{Simulation of a Haar Random Unitary Given Sample Access}

In this section, we show that any $T$-query algorithm for $T=2^{O(n)}$ outputs a classical-quantum state whose conditional von Neumann entropy is $\delta n - o(n)$, provided it solves the $(1+\delta)$-XHOG problem. \
We show that one can simulate an oracle-access algorithm using a sample-access one. \
For this, we rely on the ideas from Ambainis, Rosmanis and Unruh \cite{ARU14} and from Kretschmer \cite{Kre21}. \

\subsubsection{From a Circuit Oracle to a State-Preparation Oracle}

For state $\ket{\psi}$, let $C^\psi$ be a random unitary such that $C^\psi\ket{0}=\ket{\psi}$ and $C^\psi$ is Haar random for the subspace orthogonal to $\ket{0}$, and $\O^\psi$ be a reflection about 
$\ket{\psi_\bot} = \frac{\ket{\psi}-\ket{\bot}}{\sqrt{2}}$, i.e., it sends $\ket{\psi}\mapsto\ket{\bot}$ and $\ket{\bot}\mapsto\ket{\psi}$ and acts trivially for states orthogonal to states in $\mathrm{span}\{\ket{\psi},\ket{\bot}\}$. \

The first step is to show one can approximate every $T$-query algorithm given access to $C^\psi$ using an $O(T)$-query algorithm given access to $\O^\psi$ for every $\ket{\psi}$. \
We prove the following theorem, improving the constant factor of \cite[Theorem~19]{Kre21}. \
\begin{theorem}[{cf. \cite[Theorem~19]{Kre21}}]\label{thm:kre21-thm19}
  For every quantum state $\ket{\psi}$, every $T$-query algorithm $\A^{C^\psi}$ can be approximated by a $(2T)$-query quantum algorithm $\B^{\O^\psi}$ such that 
  \begin{align}
    \left\|\B^{\O^\psi}-\Exp_{C^\psi}[\A^{C^\psi}]\right\|_\diamond\leq \frac{4T}{2^{n/2}}.
  \end{align}
\end{theorem}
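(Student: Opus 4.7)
The plan is to explicitly identify the Haar-averaged action of a single $C^\psi$ query, realize it up to diamond error $O(1/\sqrt{N})$ by a two-query $\O^\psi$ gadget, and then telescope across the $T$ queries via a BBBV-style hybrid argument. The first step is the single-query calculation: using right-invariance of the Haar measure under the stabilizer subgroup of $\ket{0}$, a direct Weingarten computation shows that for any density matrix $\rho$,
\begin{align}
\Exp_{C^\psi}\bigl[C^\psi \rho (C^\psi)^\dagger\bigr] \;=\; \langle 0|\rho|0\rangle\,\proj{\psi} \;+\; \frac{1-\langle 0|\rho|0\rangle}{N-1}\,\Pi_{\psi^\perp},
\end{align}
so on average a single query replaces the $\ket{0}$-component of its input by $\ket{\psi}$ and maximally mixes the orthogonal part over $\ket{\psi}^\perp$. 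This channel depends on $\ket{\psi}$ only through the projector $\proj{\psi}$.

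The second step is to design a gadget $\G$ consisting of two queries to $\O^\psi$ together with cheap ancilla operations that implements this channel up to $O(1/\sqrt{N})$ diamond error. The gadget prepares two ancillas: one in $\ket{\bot}$, converted to $\ket{\psi}$ by one call to $\O^\psi$, and another in a chosen superposition that, after a second call to $\O^\psi$ and tracing out a $\ket{\bot}$-flag, yields a state diamond-close to the maximally mixed state on $\ket{\psi}^\perp$. A controlled-SWAP conditioned on the $\ket{0}$-indicator of the input register then either transplants the freshly prepared $\ket{\psi}$ in place of the $\ket{0}$-component or the $\ket{\psi}^\perp$-mixed state in place of the orthogonal component. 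The residual deviation from the Haar-averaged channel is $O(1/\sqrt{N})$ because the overlap of a generic preparable state with the two-dimensional reflection plane $\mathrm{span}\{\ket{\psi},\ket{\bot}\}$ scales as $1/\sqrt{N}$.

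Finally I would glue the per-query approximations together via a hybrid (telescoping) argument in the spirit of \cite{BBBV97}. Define intermediate channels $\Phi_i$ that apply the true Haar-averaged query for rounds $1,\ldots,i$ and the gadget $\G$ for rounds $i+1,\ldots,T$, interleaved with the same unitaries $U_0,\ldots,U_T$ of $\A$. Then $\|\Phi_0-\Phi_T\|_\diamond \le \sum_{i=0}^{T-1}\|\Phi_i-\Phi_{i+1}\|_\diamond$, and each summand is bounded by the single-gadget diamond error, yielding the claimed $4T/2^{n/2}$ with the constant tracked through $2$ $\O^\psi$-queries per round and $2/\sqrt{N}$ per-round error. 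The main obstacle is the single-gadget diamond bound uniformly in the input state: $\O^\psi$ is merely a $2$-dimensional reflection in $\mathrm{span}\{\ket{\psi},\ket{\bot}\}$ whereas $C^\psi$ is a full Haar-random unitary on an $(N-1)$-dimensional subspace, so one must carefully verify that the ancilla-provided superposition simulates the Haar-noise on $\ket{\psi}^\perp$ coherently when chained across $\O^\psi$ queries, rather than only on a single isolated input, without paying extra dimensional factors in the norm estimate.
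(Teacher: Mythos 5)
Your first step (the Weingarten computation of the single-query averaged channel) is correct, but the overall strategy fails at the telescoping step, and the failure is structural rather than technical. The object to be approximated is $\Exp_{C^\psi}[\A^{C^\psi}]$, where all $T$ queries of $\A$ go to the \emph{same} random unitary $C^\psi$. This expectation does not factorize into $T$ independent applications of the single-query averaged channel: the queries are correlated through the shared randomness of $C^\psi$, and your gadget $\G$, which realizes the averaged (measure-and-replace) channel, destroys exactly those correlations. A concrete counterexample: fix $x\neq 0$, prepare $\ket{x}\otimes\ket{x}$, apply the oracle to each register (two queries), and run a swap test. With the true oracle both registers hold the identical pure state $C^\psi\ket{x}$ and the test passes with probability $1$; with two independent uses of the averaged channel each register is (nearly) maximally mixed on $\ket{\psi}^{\perp}$ and the test passes with probability $\tfrac12+O(1/N)$. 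So already for $T=2$ the per-query replacement is off by a constant in diamond norm, and no hybrid argument can recover a $O(T/\sqrt{N})$ bound — the summands in your telescope are not bounded by the single-isolated-query error, because at each hybrid step the "true" query is a use of a unitary correlated with the other rounds, not a fresh channel. The worry you flag at the end of your proposal ("coherently when chained across queries") is precisely this obstruction, and it cannot be fixed within the averaged-channel framework.

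The paper's route avoids this by simulating the \emph{unitary itself}, not its one-query average. Kretschmer's observation is that if one knew any state $\ket{\psi^{\perp}}$ orthogonal to $\ket{\psi}$ (with a preparation unitary $V^{\psi^{\perp}}$ and reflection $\O^{\psi^{\perp}}$), then $\O^{\psi}\O^{\psi^{\perp}}\O^{\psi}V^{\psi^{\perp}}W$, with $W=\proj{0}\oplus C'$ for Haar-random $C'$ on $\ket{0}^{\perp}$, is \emph{exactly} distributed as $C^\psi$; the simulator samples $\varphi$ Haar-randomly in place of the unknown $\ket{\psi^{\perp}}$, fixes this choice (together with $W$) once, and reuses the resulting bona fide unitary at every query, so all cross-query correlations are preserved. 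The only error is the non-orthogonality $\langle\varphi|\psi\rangle$, contributing $2|\langle\varphi|\psi\rangle|$ from each of $V^{\varphi}$ and $\O^{\varphi}$, hence $4\,\Exp|\langle\varphi|\psi\rangle|\le 4/2^{n/2}$ per query, and only then does a BBBV-style hybrid over the $T$ queries give $4T/2^{n/2}$ with $2T$ calls to $\O^{\psi}$. If you want to salvage your write-up, the fix is to abandon the averaged-channel gadget and construct, from $\O^{\psi}$ plus your own sampled randomness, a single random unitary whose distribution is close to that of $C^\psi$ — which is essentially the paper's proof.
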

\begin{proof}
  It suffices to give a simulation of $C^\psi$ using oracle access to $\O^\psi$. \
  The crucial idea from Kretschmer \cite{Kre21} is that if we have access to $\O^{\psi^\bot}$ for any state $\ket{\psi^\bot}$ orthogonal to $\ket{\psi}$, and can prepare $\ket{\psi^\bot}$ with any unitary $V^{\psi^\bot}\ket{0}=\ket{\psi^\bot}$, then
  \begin{align}
    \ket{0^n} 
    &\xmapsto{V^{\psi^\bot}} \ket{\psi^\bot} 
      \xmapsto{\O^{\psi}} \ket{\psi^\bot}
      \xmapsto{\O^{\psi^\bot}}\ket{\bot} 
      \xmapsto{\O^{\psi}} \ket{\psi}.
  \end{align}
  Then we consider another unitary $W=\proj{0}\oplus C'$ where $C'$ is a Haar random unitary on the Hilbert space $\Span\{\ket{x}:x\neq 0\}$.
  The sequence of unitaries 
  \begin{align}
    \ket{0^n} 
    &\xmapsto{W} \ket{0} 
      \xmapsto{V^{\psi^\bot}} \ket{\psi^\bot} 
      \xmapsto{\O^{\psi}} \ket{\psi^\bot}
      \xmapsto{\O^{\psi^\bot}}\ket{\bot} 
      \xmapsto{\O^{\psi}} \ket{\psi}, \\
    \ket{x}
    &\xmapsto{\O^{\psi}\O^{\psi^\bot}\O^{\psi}V^{\psi^\bot}W} \ket{\psi_x}, \qquad x\neq 0,
  \end{align}
  satisfying $\ket{\psi_x}$ is a Haar random unitary over spaces spanned by states orthogonal to $\ket{\psi}$ (and of course $\ket{\bot}$). \
  Setting $C^\psi:=\O^\psi\O^{\psi^\bot}\O^{\psi}V^{\psi^\bot}W$ satisfies all the requirements as desired. \

  However, since we do not know $\ket\psi$, we have no access to $\ket{\psi^\bot}$, $V^{\psi^\bot}$ and $\O^{\psi^\bot}$, thus the above maps may not be implemented (without using a large number of queries to $\O^\psi$). \
  Instead, if we sample a Haar random state $\ket{\varphi}$, $\Pr_{\ket{\varphi}}[|\langle\varphi|\psi\rangle|\geq \gamma]\leq e^{-n\gamma^2}$; thus with high probability they are nearly orthogonal. \
  Given the observation, we can sample a haar random state $\ket{\varphi}$, and use $\ket{\varphi}$ in place of $\ket{\psi^\bot}$ for our simulation of $C^\psi$. \
  Since the state $\ket{\varphi}$ is sampled uniformly instead of a fixed quantum state, 
  we must consider the quantum channel
  \begin{align}\label{eq:89}
    \Phi^\psi(\rho) := \Exp_{\ket{\varphi}}\left[\O^\psi\O^\varphi\O^\psi V^\varphi\rho(\O^\psi\O^\varphi\O^\psi V^\varphi)^\dag\right].
  \end{align}
  and show that $\Phi^\psi$ is close to $\C^\psi(\rho):=C^\psi\rho(C^{\psi})^\dag$ in diamond norm.

  Let $\V^\varphi$ be the unitary channel of $V^\varphi$.
  To show they are close, by triangle inequality, 
  \begin{align}\label{eq:90}
    \|\Phi^\psi-\C^\psi\|_\diamond
    \leq \left\|\Exp_{\ket{\varphi}} \V^\varphi - \V^{\psi^\bot}\right\|_\diamond
    + \left\| \Exp_{\ket{\varphi}}\O^\varphi - \O^{\psi^\bot}\right\|_\diamond.
  \end{align}
  Since all state $\ket{\psi^\bot}$ behaves equally well for the simulation purpose (i.e., every $\ket{\psi^\bot}$ orthogonal to $\ket{\psi}$ can be used simulates $C^\psi$ exactly), in \eq{90} our choice of $\ket{\psi^\bot}$ can be made to actually depend on $\ket{\varphi}$.
  More explicitly, we define $\ket{\psi^\bot}\in\Span\{\ket{\psi},\ket{\varphi}\}$ to be the unique orthogonal state to $\ket{\psi}$ (up to a phase).
  In this case, the diamond norm $\|\V^\varphi-\V^{\psi^\bot}\|_\diamond\leq 2|\langle\psi|\varphi\rangle|$.
  Since $\O^{\varphi}\O^{\psi^\bot}$ is a rotation by angle $2\alpha$ for $\alpha=\arccos(\cos^2(\theta/2))$, the eigenvalues are $e^{i2\alpha},e^{-i2\alpha},1$, and
  \begin{align}\label{eq:91}\nonumber
    \|\O^{\varphi}-\O^{\psi^\bot}\|_\diamond
    &\leq 2\sin \alpha \\\nonumber
    &= 2\sqrt{1-\cos^4(\theta/2)} \\\nonumber
    &= 2\sqrt{1-\left(\frac{1+\sqrt{1-|\epsilon|^2}}{2}\right)^2} \\
    &\leq 2|\epsilon|,
  \end{align}
  where $\epsilon=\langle\varphi|\psi\rangle$. \
  The last equality in \eq{91} holds since $\sqrt{1-x^2}\leq 1$ for $x\in[0,1]$. \
  By \eq{90} and a hybrid argument, \
  \begin{align}\label{eq:92}\nonumber
    \|\Phi^\psi-\C^\psi\|_{\diamond}
    &\leq 4\Exp_{\ket{\varphi}}[|\langle\psi|\varphi\rangle|] \\\nonumber
    &\leq 4\Exp_{\ket{\varphi}}[|\bra\varphi\psi\rangle|^2]^{1/2} \\
    &= \frac{4}{2^{n/2}}.
    \end{align}
  Now let $\A^{C^\psi}$ be a quantum algorithm given access to $C^\psi$ and $\B^{\O^\psi}=\A^{\Phi^\psi}$.
  By triangle inequality and \eq{92}, $\|\A^{C^\psi}-\B^{\O^\psi}\|_\diamond\leq \frac{4T}{2^{n/2}}$. \
  Also from \eq{89}, $\B^{\O^\psi}$ makes $2T$ queries to $\O^\psi$. \
\end{proof}

In the above analysis, note that the definition of $\ket{\bot}$ is not unique: every state $\ket{\bot}$ orthogonal to $\Span\{\ket{z}:z\in\bit^n\}$ can be used. 
In particular, we can replace $\ket{\bot}$ with $e^{i\theta}\ket{\bot}$, and the analysis will still go through.
Let $\O_\theta^\psi$ be a map $e^{i\theta}\ket{\bot}\mapsto\ket{\psi}$, $\ket{\psi}\mapsto e^{i\theta}\ket{\bot}$ and acts as the identity for states orthogonal to $\ket{\psi}$ and $\ket{\bot}$.
The observation leads to the following corollary.
\begin{corollary}\label{cor:AtoB}
  Let $\ket{\bot}$ be a fixed state orthogonal to $\Span\{\ket{z}:z\in\bit^n\}$.
  For $\theta\in[0,2\pi)$, every quantum state $\ket{\psi}$ and every $T$-query algorithm $\A^{C^\psi}$, there exists a $(2T)$-query algorithm $\B$ such that 
  \begin{align}
    \left\|\B^{\O_\theta^\psi}-\Exp_{C^\psi}[\A^{C^\psi}] \right\|_\diamond \leq \frac{4T}{2^{n/2}}.
  \end{align}
\end{corollary}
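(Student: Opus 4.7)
The plan is to observe that this corollary is essentially a reformulation of the preceding theorem under a change of auxiliary state, and to verify that the theorem's proof is invariant under that change.

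First, I would note that the map $\O^\psi$ in \thm{kre21-thm19} is defined in terms of a fixed unit vector $\ket{\bot}$ orthogonal to $\Span\{\ket{z}:z\in\bit^n\}$, and that the argument never uses any property of $\ket{\bot}$ beyond orthogonality. In particular, the Haar-random unitary $C^\psi$ is sampled independently of $\ket{\bot}$ (it is Haar random on the orthogonal complement of $\ket{0}$ inside the computational-basis subspace), so the right-hand side $\Exp_{C^\psi}[\A^{C^\psi}]$ is unchanged if we swap $\ket{\bot}$ for any other unit vector orthogonal to $\Span\{\ket{z}:z\in\bit^n\}$.

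Next, I would define $\ket{\bot}' := e^{i\theta}\ket{\bot}$. Since $\ket{\bot}'$ is again a unit vector orthogonal to $\Span\{\ket{z}:z\in\bit^n\}$, it is a valid choice for the role played by $\ket{\bot}$ in \thm{kre21-thm19}. The reflection in that theorem, instantiated with $\ket{\bot}'$, is the map sending $\ket{\psi}\mapsto\ket{\bot}' = e^{i\theta}\ket{\bot}$ and $e^{i\theta}\ket{\bot}\mapsto\ket{\psi}$ while acting trivially on the orthogonal complement of $\Span\{\ket{\psi},\ket{\bot}\}$; this is exactly the oracle $\O_\theta^\psi$ of the corollary.

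Finally, I would apply \thm{kre21-thm19} verbatim with $\ket{\bot}'$ in place of $\ket{\bot}$: for every $T$-query algorithm $\A^{C^\psi}$ there exists a $(2T)$-query algorithm $\B$ such that
\begin{align}
\left\|\B^{\O_\theta^\psi} - \Exp_{C^\psi}[\A^{C^\psi}]\right\|_\diamond \leq \frac{4T}{2^{n/2}}.
\end{align}
Since the only ``step'' is to recognize that the phase $e^{i\theta}$ can be absorbed into the definition of the auxiliary state, there is no real obstacle; the mild subtlety worth checking is just that the construction of $\B$ in the theorem's proof (which internally samples a Haar-random $\ket{\varphi}$ and prepares it via some $V^\varphi$) does not implicitly depend on $\ket{\bot}$ in a way that breaks under the phase rotation---and indeed it does not, since $V^\varphi$ and the Haar measure on $\ket{\varphi}$ are chosen independently of $\ket{\bot}$.
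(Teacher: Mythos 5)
Your proposal is correct and matches the paper's own argument: the paper derives \cor{AtoB} from \thm{kre21-thm19} by exactly this observation, namely that any state orthogonal to $\Span\{\ket{z}:z\in\bit^n\}$ can play the role of $\ket{\bot}$, so substituting $e^{i\theta}\ket{\bot}$ (with $\B$ allowed to depend on $\theta$, so it can implement its internal reflections with the same phase convention) turns the theorem's oracle into $\O_\theta^\psi$ and yields the same bound. Your check that the internal sampling of $\ket{\varphi}$ and $V^\varphi$ is unaffected is the right sanity check and poses no obstacle.
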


\subsubsection{From Canonical State Preparation Oracles to Resource States}

In this section, we apply the idea from Ambainis, Rosmanis and Unruh \cite{ARU14} to show that given resource states that depend on the Haar random state, one can approximate any algorithm given access to the canonical state preparation oracle. \ 
\begin{theorem}[{\cite[Theorem~3]{ARU14}}, paraphrased]\label{thm:aru14}
  Let $\ket{\psi}$ be a quantum state and $\O^\psi$ be a reflection about $\frac{1}{\sqrt{2}}(\ket{\psi}-\ket{\bot})$. \
  Let $\O$ be an oracle and $\rho$ be a quantum state. \
  Let $\ket{R^\psi}=\ket{\psi^\ell}\otimes\ket{\alpha^\psi_1}\otimes\ldots\otimes\ket{\alpha^\psi_m}$,
  where $\ket{\alpha_i^\psi}=\cos(\frac{i\pi}{2k})\ket{\psi}+\sin(\frac{i\pi}{2m})\ket{\bot}$.
  For every quantum state $\rho$ and algorithm $\B$ that on input $\rho$ and makes $T$-query to $\O^\psi$, there exists an algorithm $\G$ that on input $\rho,\ket{R}$ makes the same number of queries to $\O$ as $\B$ such that 
  \begin{align}\label{eq:85}
    \|\B^{\O^\psi,\O}(\rho)- \G^\O(\rho,\ket{R^\psi}) \|_\tr \leq O\left(T\left(\frac{1}{\sqrt{m}}+\frac{1}{\sqrt{\ell}}\right)\right).
  \end{align}
  Moreover, \eq{85} holds when $\ket{\psi}$, $\O$ and $\rho$ are not independent. \
\end{theorem}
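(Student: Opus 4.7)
My plan is to build a $\psi$-agnostic channel $\tilde{\mathcal O}$ that, together with the resource state $\ket{R^\psi}$, approximately implements one query to $\O^\psi$, and then to apply a hybrid argument to bound the total simulation error after $T$ queries. The construction and analysis split into two orthogonal pieces: (i) an ``angle-ladder'' that performs the swap $\ket\psi \leftrightarrow \ket\bot$ inside the two-dimensional subspace $V:=\mathrm{span}\{\ket\psi,\ket\bot\}$, and (ii) a SWAP-test-like gadget that uses the $\ell$ extra copies of $\ket\psi$ to project (approximately) onto $V$ so that the data components orthogonal to $V$ are (correctly) left untouched by $\O^\psi=I-2\ket{\psi_\bot}\!\bra{\psi_\bot}$.

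For the angle ladder, the key observation is that $\ket{\alpha_i^\psi}$ lies at angle $i\pi/(2m)$ from $\ket\psi$ inside $V$, so consecutive resource states differ by the small angle $\pi/(2m)$. I would define a fixed, $\psi$-independent two-qubit unitary $U$ (essentially a partial SWAP, or equivalently an $e^{-i\theta\,\mathrm{SWAP}}$ with $\theta=O(1/m)$) that, when applied to the data register and a resource register prepared in $\ket{\alpha_i^\psi}$, rotates the in-plane component of the data by $\pi/(2m)$ toward $\ket\bot$ while leaving the resource register almost unchanged. Chaining $m$ such applications with the successive states $\ket{\alpha_1^\psi},\ldots,\ket{\alpha_m^\psi}$ yields the full $\pi/2$ rotation, i.e.\ the swap $\ket\psi\leftrightarrow\ket\bot$, which coincides with $\O^\psi|_V$. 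The per-step deviation from an exact rotation is $O(1/m^2)$ in operator norm, and I will argue that these deviations add coherently to $O(1/\sqrt m)$ in trace distance---analogously to the standard Grover/Trotter analysis, where the squared per-step error summed over $m$ steps gives a variance of $O(1/m)$ and hence a trace distance of $O(1/\sqrt m)$.

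For the projection onto $V$, I would use the $\ell$ copies of $\ket\psi$ to run a coherent ``overlap test'': conditioned on the data register being $\ket\psi$, $\ell$ SWAP tests succeed with certainty, whereas for a component orthogonal to $\ket\psi$ each SWAP test succeeds with probability $1/2$. A careful coherent implementation lets the algorithm condition the angle-ladder rotation on this test, with the finite-$\ell$ statistical error bounded by $O(1/\sqrt\ell)$ by standard concentration. Combining the two gadgets gives a single-query simulator whose diamond distance from $\O^\psi$ is $O(1/\sqrt m + 1/\sqrt\ell)$, while leaving the resource state essentially unaltered and thus reusable across queries. A BBBV-style hybrid argument, identical in spirit to the one I used in the proof of \thm{llqsv-bfbd-qip2}, then converts the per-query bound into the claimed $O(T(1/\sqrt m + 1/\sqrt\ell))$ overall error. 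Because the simulator $\G$ never explicitly uses $\ket\psi$---only the resource registers and the external oracle $\O$---the bound persists when $\psi$, $\O$, and $\rho$ are arbitrarily correlated, which is the ``Moreover'' clause.

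The main obstacle will be making the per-query error genuinely $O(1/\sqrt m)$ rather than $O(1)$. Naively, if the per-step Trotterization error is $O(1/m)$, summing over $m$ steps might give $O(1)$, which is useless; the improvement to $O(1/\sqrt m)$ is the whole point of using a smooth angle ladder and requires showing that the leading-order errors at each step are \emph{orthogonal} to the target state and therefore accumulate in $\ell_2$ rather than $\ell_1$. A subsidiary difficulty is to verify that the resource registers are approximately restored after each simulated query, since otherwise the hybrid argument would bound the error by $T$ times a constant instead of $T$ times $1/\sqrt m$; this will follow from the same partial-SWAP structure of $U$ but needs to be checked carefully.
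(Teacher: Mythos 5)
Your high-level skeleton is the same as the ARU14 argument the paper sketches: build a $\psi$-independent per-query simulator out of the resource state, treat the in-plane action via the angle ladder and the off-plane identity via the $\ell$ copies of $\ket{\psi}$, then run a BBBV-style hybrid over the $T$ queries (which, as you note, also gives the ``Moreover'' clause for free). The gap is in the core single-query gadget. A small-angle partial swap $e^{-i\theta\,\mathrm{SWAP}}$ with $\theta=O(1/m)$ against a \emph{single} copy of $\ket{\alpha_i^\psi}$ does not rotate the data by $\pi/(2m)$ toward $\ket{\bot}$ with per-step deviation $O(1/m^2)$: to first order in $\theta$ its effect on the data is the commutator term, i.e.\ phase evolution $e^{-i\theta\proj{\alpha_i^\psi}}$ (the LMR effect), which does not transport $\ket{\psi}$ toward $\ket{\bot}$ at all; the part that does transport is the $O(\theta)$ amplitude of actually exchanging the registers, and that amplitude entangles the data with the resource whenever the data component ($\ket{\psi}$ or $\ket{\bot}$) is far from $\ket{\alpha_i^\psi}$, which it is for most $i$. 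Moving the state through the full angle $\pi/2$ forces total swap angle $\Omega(1)$, and then the leaked amplitude is $\Omega(1)$ unless some cancellation occurs; the $\ell_2$-accumulation you invoke is exactly the point you concede is unproven, so the claimed $O(1/\sqrt{m})$ per query is not established by your construction. There is also a smaller unaddressed mismatch: even an exact in-plane rotation by $\pi/2$ sends $\ket{\bot}\mapsto -\ket{\psi}$, whereas $\O^\psi$ is a reflection; one must compose with the known, $\psi$-independent reflection about $\ket{\bot}$ (or use a controlled construction) to fix the sign, and your sketch treats rotation and reflection interchangeably.

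The missing idea, which is what ARU14 (and the paper's intuition paragraph) actually uses, is to make the ladder step \emph{collective} rather than per-step: apply one cyclic shift $S$ (equivalently, a chain of \emph{full} swaps) across the data register and all of $\ket{\alpha_1^\psi},\ldots,\ket{\alpha_m^\psi}$. Then no error summation over steps is needed; a single overlap computation shows $S\ket{\bot}\ket{\alpha_1^\psi\cdots\alpha_m^\psi}$ has fidelity $\cos^{m+1}\!\left(\frac{\pi}{2m}\right)\geq 1-O(1/m)$ with $\ket{\psi}\ket{\alpha_1^\psi\cdots\alpha_m^\psi}$, i.e.\ trace distance $O(1/\sqrt{m})$ per use, because the residual ``garbage'' left in the resource registers is collectively almost indistinguishable from the original ladder. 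For the off-plane part, the clean version of your SWAP-test idea is the reflection $\Id-2M$ about the permutation-symmetric subspace of the data together with the $\ell$ copies of $\ket{\psi}$, which approximates the reflection about $\ket{\psi}$ to diamond error $O(1/\sqrt{\ell})$; combining this with a controlled $S$ yields the reflection about $\frac{1}{\sqrt{2}}(\ket{\psi}-\ket{\bot})$, and your hybrid argument then gives the stated $O\left(T\left(m^{-1/2}+\ell^{-1/2}\right)\right)$ bound. Without the collective-shift (or an equivalent telescoping) argument, your route does not yet prove the theorem.
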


For a detailed proof, see \cite{ARU14}. \
Here we provide some intuition. \
Let $S$ be the left cyclic shift operator and $\ket{\tilde R^\psi}=\ket{\alpha_1^\psi}\otimes\ldots\otimes\ket{\alpha_m^\psi}$. \  
One application of $S$ on $\ket{\bot}\ket{\tilde R^\psi}$ yields $S\ket{\bot}\ket{\tilde R^\psi}=\ket{\tilde R^\psi}\ket{\bot}\approx \ket{\psi}\ket{\tilde R^\psi}$ since the fidelity
\begin{align}
  |\bra{\psi,\tilde R^\psi}S\ket{\bot,\tilde R^\psi}|^2 
  = \cos\left(\frac{\pi}{2k}\right)^k \geq \left( 1-\frac{\pi^2}{4k^2}\right)^k \geq 1-\frac{\pi^2}{4k}
\end{align}
implies that the trace distance is $O(k^{-1/2})$. \ 
Furthermore, with an ideal resource state where $k\to\infty$, one can give rise to a reflection about $\frac{1}{\sqrt 2}(\ket{\psi}-\ket{\bot})$ using control-$S$, if the reflection $R$ about $\ket{\psi}$ can be implemented. \
Since $\ket{\psi}$ is unknown, it is not known how to implement $R$ exactly. \
However, with $\ell$ copies of $\ket{\psi}$, we can approximate $R$ to diamond distance $O(\ell^{-1/2})$: \
Let $V$ be the space of $(\ell+1)$-partite states invariant under permutations and $M$ be the projection onto $V$.
For $\ket{\Phi}=\ket{\phi}\ket{T}$ where $\ket{T}=\ket{\psi}^{\otimes \ell}$, if $\ket{\phi}=\ket{\psi}$, $\bra{\Phi}M\ket{\Phi}=1$; for $\ket{\phi}$ orthogonal to $\ket{\psi}$, $\bra{\Phi}M\ket{\Phi}\leq O(1/\ell)$. \
Thus a reflection $\Id-2M$ approximates $R$ to diamond distance $O(1/\sqrt{\ell})$, and after the measurement, the state is disturbed to trace distance $O(\ell^{-1/2})$. \
Finally, since there are $T$ queries, by a hybrid argument, the overall diamond distance is $O(T(m^{-1/2}+\ell^{-1/2}))$. \

For every state $\ket{\psi}$, let the state $\ket{R_\theta^\psi}$ be the same as $\ket{R^\psi}$ except that $\ket{\bot}$ is replaced with $e^{i\theta}\ket{\bot}$, i.e., 
\begin{align}\label{eq:96}
  \ket{R^\psi_\theta} := \ket{\psi^\ell} \otimes \bigotimes_{i=1}^k \left(
  \cos\left(\frac{i\pi}{2m}\right)\ket{\psi} + \sin\left(\frac{i\pi}{2m}\right)e^{i\theta}\ket{\bot}\right)
\end{align}
Since \thm{aru14} holds if every occurrence of $\ket{\bot}$ is replaced by $e^{i\theta}\ket{\bot}$, we have the following corollary:
\begin{corollary}\label{cor:BtoG}
  Let $\ket{\psi}$ be a quantum state and $\O_\theta^\psi$ be a reflection about $\frac{1}{\sqrt{2}}(\ket{\psi}-e^{i\theta}\ket{\bot})$.
  Let $\O$ be an oracle and $\rho$ be a quantum state.
  Let $\ket{R_\theta^\psi}$ be defined as in \eq{96}.
  For $\theta\in[0,2\pi)$, quantum state $\rho$ and every algorithm $\B$ that on input $\rho$ and makes $T$-query to $\O_\theta^\psi$, there exists an algorithm $\G$ that on input $\rho,\ket{R^\psi_\theta}$ makes the same number of queries to $\O$ as $\B$ such that 
  \begin{align}\label{eq:97}
    \|\B^{\O_\theta^\psi,\O}(\rho)-\G^\O(\rho,\ket{R_\theta^\psi})\|_\tr \leq O\left(T\left(\frac{1}{\sqrt{m}}+\frac{1}{\sqrt{\ell}}\right)\right).
  \end{align}
  Moreover, \eq{97} holds when $\ket{\psi}$, $\O$ and $\rho$ are not independent.
\end{corollary}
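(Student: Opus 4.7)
The plan is to derive this corollary directly from Theorem~\ref{thm:aru14} by observing that the ARU construction is agnostic to the specific choice of the auxiliary state orthogonal to $\mathrm{span}\{\ket{z}:z\in\bit^n\}$. Concretely, I would define $\ket{\bot'}:=e^{i\theta}\ket{\bot}$ and observe that $\ket{\bot'}$ is itself a perfectly valid unit vector orthogonal to every computational basis state. Then the reflection $\O_\theta^\psi$ is exactly the reflection about $\frac{1}{\sqrt{2}}(\ket{\psi}-\ket{\bot'})$, and the resource state $\ket{R_\theta^\psi}$ from \eq{96} is exactly $\ket{\psi^\ell}\otimes\bigotimes_{i=1}^m(\cos(i\pi/(2m))\ket{\psi}+\sin(i\pi/(2m))\ket{\bot'})$. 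In other words, the entire setup of Theorem~\ref{thm:aru14} is reproduced verbatim, only with $\ket{\bot}$ replaced by $\ket{\bot'}$ throughout.

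Next I would check that each ingredient of the ARU proof genuinely depends only on the abstract property of $\ket{\bot}$ being a unit vector orthogonal to $\ket{\psi}$ (and to the rest of the workspace), and not on any particular phase. The two places where $\ket{\bot}$ enters the ARU argument are (i) the left cyclic shift $S$ on the ``chain'' $\ket{\bot}\otimes\ket{\tilde R^\psi}$, whose success probability depends only on the real inner products $\cos(i\pi/(2m))$ between consecutive links, and (ii) the projection onto the symmetric subspace $V$ that is used to approximate the reflection about $\ket{\psi}$ from $\ell$ copies of $\ket{\psi}$. Step~(i) is invariant under replacing $\ket{\bot}$ by $\ket{\bot'}$ because the relevant inner product becomes $\cos(i\pi/(2m))\langle\psi|\psi\rangle+\sin(i\pi/(2m))\langle\psi|\bot'\rangle=\cos(i\pi/(2m))$, unchanged. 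Step~(ii) does not involve $\ket{\bot}$ at all, so it carries over trivially. Hence all error estimates in the ARU proof are preserved term-by-term, yielding the same $O(T(m^{-1/2}+\ell^{-1/2}))$ bound.

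For the final clause that \eq{97} holds when $\ket{\psi}$, $\O$, and $\rho$ are not independent, I would simply quote the corresponding statement from Theorem~\ref{thm:aru14}: the ARU argument is a worst-case, state-by-state bound in trace distance, so no independence assumption is ever used. Taking an expectation over any joint distribution of $(\ket{\psi},\O,\rho)$ and the randomness of $\G$ preserves the bound by convexity of the trace norm.

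The main obstacle is essentially bookkeeping rather than anything technical: one must verify that none of the intermediate unitaries in the ARU simulation (the cyclic shift, the reflection about the symmetric subspace, and the ``compare-and-swap'' used to replace $\ket{\bot}$ with $\ket{\psi}$) have been hard-coded in a basis that assumes a particular phase on $\ket{\bot}$. Since all of these operators can be written purely in terms of abstract inner products and permutations of the resource registers, the phase $e^{i\theta}$ passes through unchanged, and the corollary follows.
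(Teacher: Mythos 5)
Your proposal is correct and matches the paper's own justification, which likewise observes that \thm{aru14} goes through verbatim with every occurrence of $\ket{\bot}$ replaced by $e^{i\theta}\ket{\bot}$, since only orthogonality to $\ket{\psi}$ (and the phase-invariant overlaps within the resource chain) is ever used. Your extra verification of the cyclic-shift fidelity and the symmetric-subspace reflection is a slightly more explicit rendering of the same argument, not a different route.
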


Combining \cor{AtoB} and \cor{BtoG}, the following corollary holds.
\begin{corollary}\label{cor:AtoG}
  For quantum state $\ket{\psi}$ %
  and every quantum algorithm $\A$ making $T$ queries to $C^\psi$, there exists a quantum algorithm $\G$ given access to $\ket{R_\theta^\psi}$ for uniform $\theta\in[0,2\pi)$ such that 
  \begin{align}
    \left\|\Exp_{C^\psi}\left[\A^{C^\psi}\right] - \G\left(\Exp_\theta[\proj{R_\theta^\psi}]\right)\right\|_\diamond \leq O(Tk^{-1/2}) + \frac{4T}{N^{1/2}}.
  \end{align}
\end{corollary}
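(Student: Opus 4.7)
The plan is to combine the two preceding corollaries by a chain reduction and then average over $\theta$. First, I would invoke Corollary \ref{cor:AtoB} to replace access to $C^\psi$ by access to the reflection oracle $\O_\theta^\psi$: for each $\theta$, this yields a $(2T)$-query algorithm $\B$ whose output is within diamond distance $4T/N^{1/2}$ of $\Exp_{C^\psi}[\A^{C^\psi}]$. An important observation is that the simulation described in the proof of \thm{kre21-thm19} does not depend on $\theta$---only on the oracle it is handed---so a single $\B$ serves all $\theta \in [0, 2\pi)$ simultaneously.

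Next, I would apply Corollary \ref{cor:BtoG} to $\B$ (taking $\ell = m = k$) to trade its $\O_\theta^\psi$-queries for the resource state $\ket{R_\theta^\psi}$. This produces an algorithm $\G$ satisfying $\|\B^{\O_\theta^\psi}(\rho) - \G(\rho, \ket{R_\theta^\psi})\|_{\tr} \leq O(T/\sqrt{k})$ for every input $\rho$ and every $\theta$. Here too, the construction underlying \thm{aru14} uses $\ket{R_\theta^\psi}$ purely as an input state and never inspects $\theta$, so $\G$ itself is $\theta$-independent. By the triangle inequality, for each fixed $\theta$,
\begin{align}
\left\|\Exp_{C^\psi}[\A^{C^\psi}] - \G(\cdot, \ket{R_\theta^\psi})\right\|_\diamond \leq \frac{4T}{N^{1/2}} + O(T/\sqrt{k}).
\end{align}

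Finally, I would average over $\theta$. Since $\G$ takes its resource state as part of a CPTP input, linearity in that input gives $\Exp_\theta \G(\cdot, \proj{R_\theta^\psi}) = \G(\cdot, \Exp_\theta[\proj{R_\theta^\psi}])$; combining this with convexity of the diamond norm then immediately yields the stated bound. The only mildly subtle point---and the main obstacle worth flagging---is this $\theta$-uniformity: averaging makes sense only because a single $\G$ works for all $\theta$, which in turn relies on the $\theta$-independence of the reductions in \thm{kre21-thm19} and \thm{aru14}. Once this is verified by inspection of those constructions, everything else is a routine concatenation of triangle inequalities.
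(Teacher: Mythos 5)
Your proposal is correct and follows essentially the same route as the paper: apply \cor{AtoB} and then \cor{BtoG} for each fixed $\theta$, chain the bounds by the triangle inequality, and pass the average over $\theta$ inside $\G$ by linearity together with convexity of the diamond norm. The $\theta$-uniformity point you flag is indeed the (implicit) ingredient the paper relies on for the final linearity step, so your extra care there is welcome but does not change the argument.
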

\begin{proof}
  By \cor{AtoB}, for every $\theta\in[0,2\pi)$ and $T$-query algorithm $\A^{C^\psi}$, there exists $\B^{\O_\theta^\psi}$ such that their diamond distance is at most $4TN^{-1/2}$. \
  By \cor{BtoG}, for every $\theta\in[0,2\pi)$ and $T$-query algorithm $\B^{\O^\psi_\theta}$, there exists $\G(\ket{R^\psi_\theta})$ such that their diamond distance is at most $O(Tm^{-1/2})$ (we set $m=\ell=k/2$ to simplify the expression). \
  
  By triangle inequality,
  \begin{align}\nonumber
    \left\|\Exp_{C^\psi}\left[\A^{C^\psi}\right] - \Exp_\theta[\G(\proj{R_\theta^\psi})]\right\|_\diamond
    &\leq 
    \Exp_\theta\left\| 
    \Exp_{C^\psi}\left[\A^{C^\psi}\right] - \B^{\O^\psi_\theta}
    \right\|_\diamond 
    +
    \Exp_\theta\left\| 
    \B^{\O^\psi_\theta} - \G(\proj{R^\theta})
    \right\|_\diamond \\
    &\leq
      O(Tk^{-1/2}) + \frac{4T}{N^{1/2}}.
  \end{align}
  By linearity of quantum operations, $\Exp_\theta\left[\G\left(\proj{R^\psi_\theta}\right)\right] = \G\left(\Exp_\theta\left[\proj{R_\theta^\psi}\right]\right)$.
\end{proof}

\subsubsection{From Resource States to Samples}

Now we go from resource states to samples. 
For random $\ket{\psi}$, we write the state 
\begin{align}
\ket{\psi}=(\sqrt{P_0}e^{i\Theta_0},\ldots, \sqrt{P_{N-1}}e^{i\Theta_{N-1}}), 
\end{align}
where $P=(P_0,\ldots,P_{N-1})\sim\Dir(1^N)$ and each component $\Theta_i$ of $\Theta=(\Theta_0,\ldots,\Theta_{N-1})$ are independent random phases, i.e., each $\Theta_i$ is sampled according to the uniform distribution over $[0,2\pi)$. \
For every distribution $P$, we denote $\ket{P}:= (\sqrt{P_0},\ldots,\sqrt{P_{N-1}})$ and $\ket{\psi^P}=W\ket{P}$ for a random diagonal phase matrix $W=\diag(W_0,\ldots,W_{N-1}):=e^{i\diag(\Theta)}$. \
For every $\ket{\psi}$, let $\ket{R_\theta^{P,W}}$ be the associated resource state. \
In matrix form,
\begin{align}
  \ket{R_\theta^{P,W}} 
  &= \ket{\psi^{P,W}}^{\otimes\ell} \otimes \bigotimes_{j=1}^m \ket{\alpha_{i,\theta}^{P,W}},
\end{align}
where $\ket{\alpha_{i,\theta}^{P,W}}:= \cos(\frac{\pi i}{2k})\ket{\psi^{P,W}}+\sin(\frac{\pi i}{2k})e^{i\theta}\ket{\bot}$.

Let $\tilde W=W+e^{i\theta}\proj{\bot}$. %
The state $\ket{R_\theta^{P,W}}$ can be written as
\begin{align}
  \ket{R^{P,\tilde W}} := \ket{R^{P,W}_\theta}= (\tilde W\ket{P})^{\otimes \ell} \otimes \bigotimes_{j=1}^k (\tilde W\ket{P_i})
\end{align}
We will use the following lemma from \cite{Kre21}.
\begin{lemma}[{\cite[Lemma~15]{Kre21}}, paraphrased]\label{lem:resource-to-sample}
  There is an algorithm which prepares 
  \begin{align}
    \sigma^P := \Exp_{\tilde W}\left[\proj{R^{P,W}}\right]
  \end{align}
  by measuring $(k+\ell)$ copies of $\ket{\psi}$ in the standard basis.
\end{lemma}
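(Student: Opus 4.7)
First I would compute $\sigma^P$ explicitly. Since $\tilde W$ is diagonal with iid uniform phases, the superoperator $\rho\mapsto\Exp_{\tilde W}[\tilde W^{\otimes(k+\ell)}\rho(\tilde W^\dagger)^{\otimes(k+\ell)}]$ acts as the dephasing projection onto the ``type-diagonal'' subspace: a matrix unit $\ket{z}\bra{z'}$ survives iff $z$ and $z'$ share the same multiset of symbols in $\bit^n\cup\{\bot\}$. Applied to $\proj{T^P}$ with $\ket{T^P}:=\ket{P}^{\otimes\ell}\otimes\bigotimes_{j=1}^k\ket{P_j}$, the amplitude of a basis state $\ket{z}$ factorizes position-by-position; crucially, it depends on $z$ only through the non-$\bot$ multiset $M$ of its symbols and the set $B\subseteq[k]$ of resource positions holding $\bot$, taking the form $c(z)=\sqrt{P_M}\cdot f(B)$ with $P_M:=\prod_z P_z^{m_z}$ and $f(B):=\prod_{j\notin B}\cos(\pi j/2k)\prod_{j\in B}\sin(\pi j/2k)$.

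From this factorization I would derive the decomposition $\sigma^P=\sum_{(M,t)}P_M\,\kappa_{M,t}^2\,\proj{\tilde\Phi_{M,t}}$, summed over non-$\bot$ multisets $M$ and valid $\bot$-counts $t$, where $\ket{\tilde\Phi_{M,t}}:=\kappa_{M,t}^{-1}\sum_{B:|B|=t}f(B)\ket{S_{M,B}}$ is a normalized coherent superposition (with $\ket{S_{M,B}}$ the uniform sum over arrangements of $M$'s symbols in the non-$B$ positions), and $\kappa_{M,t}^2:=\tfrac{|M|!}{\prod_z m_z!}\sum_{B:|B|=t}f(B)^2$ is a $P$-independent combinatorial constant. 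The essential point is that both $\ket{\tilde\Phi_{M,t}}$ and $\kappa_{M,t}$ are constructible from $(M,t)$ alone, with no knowledge of $P$.

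The algorithm is then: independently for each $j\in[k]$, flip $b_j\sim\Bernoulli(\sin^2(\pi j/2k))$, giving a random $B$ with $\Pr[B]=f(B)^2$; measure all $k+\ell$ copies of $\ket{\psi}$ in the standard basis and retain any $|M|=\ell+(k-|B|)$ of the outcomes as iid samples from $P$ with multiset $M$; and coherently prepare $\ket{\tilde\Phi_{M,t}}$ from the classical record $(M,t)$ using a standard symmetric-state-preparation routine, writing it into a fresh $(k+\ell)$-register output. A direct calculation shows that the joint probability of realizing any given $(M,t)$ under this procedure is $P_M\,\kappa_{M,t}^2$, matching the target weights in the decomposition.

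The main obstacle is verifying the decomposition in the second step: one must check that cross-terms between different type classes vanish in $\sigma^P$ (which is automatic from the dephasing structure) and that the normalization identity $\sum_{(M,t)}P_M\kappa_{M,t}^2=1$ holds. The latter reduces to the two identities $\sum_{B\subseteq[k]}f(B)^2=\prod_j(\cos^2(\pi j/2k)+\sin^2(\pi j/2k))=1$ and $\sum_{M:|M|=r}\tfrac{r!}{\prod_z m_z!}P_M=(\sum_z P_z)^r=1$ at each fixed non-$\bot$ size $r$. Once this accounting is in place, correctness of the algorithm is immediate, since each classically sampled $(M,t)$ triggers preparation of exactly the state $\ket{\tilde\Phi_{M,t}}$ that appears with weight $P_M\kappa_{M,t}^2$ in $\sigma^P$.
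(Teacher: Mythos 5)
Your proposal is correct: the phase-averaging argument (dephasing onto type classes of the multiset over $\bit^n\cup\{\bot\}$), the block decomposition with weights $P_M\,\kappa_{M,t}^2$, and the sampling algorithm that draws $t$ with probability $\sum_{B:|B|=t}f(B)^2$ and the multiset $M$ from i.i.d.\ standard-basis measurements of $\ket{\psi}$ reproduce exactly the route behind the cited lemma --- note that the paper itself gives no proof here but imports the statement from \cite{Kre21}, and your reconstruction matches that argument. The only point to tighten is that the $\ell+k-t$ retained measurement outcomes must form a subset chosen independently of the outcome values (e.g., a fixed set of registers), so that they are genuinely i.i.d.\ samples from $P$; with that reading, the accounting $\sum_{(M,t)}P_M\kappa_{M,t}^2=1$ and the match of ensemble weights go through as you state.
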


Let $C^P$ be the following random circuit: Let $V\ket{0}=(\sqrt{P_0},\ldots,\sqrt{P_{N-1}})$, $C'$ be a Haar random matrix on $\Span\{\ket{z}:z\in\{1,\ldots,N-1\}\}$ and $C^P=WVC'$ for random diagonal phase matrix $W$.
We then prove the following theorem.
\begin{theorem}\label{thm:query-to-sample}
  For every distribution $P$, let $C^P$ be a random circuit sampled from the above process.
  For every algorithm $\A$ making $T$ queries to $C^P$, there exists a quantum algorithm $\F$ given access to $k$ samples drawn from $P$ such that 
  \begin{align}
    \left\|\bar\A^P-\bar\F^P\right\|_\diamond
    \leq O(Tk^{-1/2}) + \frac{4T}{N^{1/2}},
  \end{align}
  where $\bar\A^P:=\Exp_{C^P}[\A^{C^P}]$ and $\bar\F^P:=\Exp_{z_1,\ldots,z_k\sim P}\F(z_1,\ldots,z_k)$. \
\end{theorem}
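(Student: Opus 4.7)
The plan is to reduce oracle access to $C^P$ down to a measurement-based algorithm in two stages: first replace oracle access by the averaged resource state of Corollary~\ref{cor:AtoG}, and then use Lemma~\ref{lem:resource-to-sample} to prepare that averaged state from standard-basis measurements of copies of $\ket{\psi^{P,W}}$, which are by construction samples from $P$.

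Concretely, I would first decompose the randomness in $C^P = WVC'$ by conditioning on $W$. For each fixed diagonal phase matrix $W$ on $\Span\{\ket{z}:z\in\bit^n\}$, the vector $\ket{\psi^{P,W}} := W\ket{P}$ is deterministic, and the conditional distribution of $C^P$ given $W$ is exactly the distribution of a $C^{\psi^{P,W}}$ (a Haar-random completion fixing the image of $\ket{0}$). Hence Corollary~\ref{cor:AtoG} supplies a single algorithm $\G$ (whose construction does not depend on $\ket{\psi}$ explicitly) satisfying
\begin{align}
\left\|\Exp_{C^P\mid W}[\A^{C^P}] - \G\bigl(\Exp_\theta[\proj{R_\theta^{\psi^{P,W}}}]\bigr)\right\|_\diamond
\leq O(Tk^{-1/2}) + \frac{4T}{N^{1/2}}.
\end{align}
Averaging over $W$ and invoking convexity of the diamond norm preserves this bound, so $\bar\A^P$ lies within the same distance of $\G(\tau^P)$, where $\tau^P := \Exp_{W,\theta}[\proj{R_\theta^{\psi^{P,W}}}]$.

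Next I would identify $\tau^P$ with the state $\sigma^P$ appearing in Lemma~\ref{lem:resource-to-sample}. Setting $\tilde W := W + e^{i\theta}\proj{\bot}$, the pair $(W,\theta)$ parametrises exactly the Haar measure on diagonal phase unitaries on the span of $\{\ket{z}:z\in\bit^n\}\cup\{\ket{\bot}\}$; moreover $\tilde W\ket{P}=\ket{\psi^{P,W}}$ and $\tilde W\ket{\bot}=e^{i\theta}\ket{\bot}$, so a direct check shows $\ket{R_\theta^{P,W}} = \ket{R^{P,\tilde W}}$ and hence $\tau^P = \sigma^P$. Lemma~\ref{lem:resource-to-sample} then tells us that $\sigma^P$ is prepared by a quantum algorithm acting on $k+\ell$ standard-basis measurement outcomes of copies of $\ket{\psi^{P,W}}$; since each such outcome is distributed as $|\bra{z}\psi^{P,W}\rangle|^2 = P_z$, these are simply $k+\ell$ i.i.d.\ samples from $P$.

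Finally I would define $\F(z_1,\ldots,z_k)$ to feed the supplied samples into the preparation procedure of Lemma~\ref{lem:resource-to-sample} (choosing $m=\ell=k/2$ so that $k$ samples suffice), thereby obtaining $\sigma^P$, and then apply $\G$. Then $\bar\F^P = \G(\sigma^P)$, and the stated diamond-distance bound is inherited directly from Corollary~\ref{cor:AtoG}. The only delicate point is the bookkeeping identification $\tau^P = \sigma^P$: I need to verify that the phase randomness implicit in $\ket{\psi^{P,W}}$ and the auxiliary $\theta$-randomness attached to $\ket{\bot}$ combine, with neither duplication nor loss, into the single Haar-random diagonal phase $\tilde W$ used in the definition of $\sigma^P$; once that is in place, everything else is a routine invocation of the two preceding results.
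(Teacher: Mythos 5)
Your proposal is correct and follows essentially the same route as the paper's proof: condition on the phase matrix $W$, apply Corollary~\ref{cor:AtoG} for each fixed $\ket{\psi^{P,W}}$, average using convexity of the diamond norm, identify the averaged resource state with $\sigma^P$ via $\tilde W = W + e^{i\theta}\proj{\bot}$, and set $\F = \G\circ\Phi$ with $\Phi$ the preparation procedure of Lemma~\ref{lem:resource-to-sample}. The ``delicate point'' you flag is handled in the paper exactly as you describe, by the definitions of $\ket{R^{P,\tilde W}}$ and $\sigma^P$ preceding Lemma~\ref{lem:resource-to-sample}.
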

\begin{proof}
  Let $\ket{\psi^P}=\tilde W\ket{P}$ for random phase matrix $\tilde W$ on $\Span\{\ket{0},\ldots,\ket{N-1},\ket{\bot}\}$. \
  The channel
  \begin{align}
    \Exp_{C^P}\left[\A^{C^P}\right] = \Exp_{\tilde W,C^\psi}\left[\A^{C^\psi}\right].
  \end{align}
  Let the quantum process in \lem{resource-to-sample} be $\Phi$, the process
  \begin{align} \nonumber
    \Exp_{\tilde W}\left[\G(\proj{R^{P,\tilde W}})\right]
    &= \G\left(\Exp_{\tilde W}\proj{R^{P,\tilde W}}\right) \\
    &= \G(\sigma^P) = \Exp_{z_1,\ldots,z_k\sim P}\G\circ\Phi(z_1,\ldots,z_k).
  \end{align}
  Now let $\F=\G\circ\Phi$.
  By triangle inequality and \cor{AtoG}, 
  \begin{align}\nonumber
    \left\|\Exp_{C^P}\left[\A^{C^P}\right] - \Exp_{z_1,\ldots,z_m\sim P}\F(z_1,\ldots,z_k)\right\|_\diamond
    &\leq 
    \Exp_{W}
    \left\|\Exp_{C^\psi}\left[\A^{C^\psi}\right] - \G\left(\Exp_\theta\proj{R_\theta^\psi}]\right)\right\|_\diamond \\
    &\leq O(Tk^{-1/2}) + \frac{4T}{N^{1/2}}.
  \end{align}
\end{proof}

\subsection{A Single-Round Analysis}\label{sec:single-general}

In this section, we prove our main result in \sec{general}. 
We show that with probability $1-N^{-\Omega(1)}$ over the choice of $P_C$ for Haar random $C$,
the conditional von Neumann entropy of any $T$-query device's output on Eve's information is at least $\Omega(\delta n)$, provided that $T=2^{O(n)}$ and the device solves $b$-XHOG for $b\approx 1+\delta$.

First, we consider a simplified device $\F$ which is only given sample access to $P_C$ and solves $b$-XHOG. \
We show this game is equivalent to
the following protocol: \ 
The verifier samples a distribution $P\sim\Dir(1^N)$, $z_1,\ldots,z_k\sim P$.
The verifier sends $z_1,\ldots,z_k$ to the device. \
Without loss of generality, we may assume that $P$ is revealed to Eve but not all the samples. \ 
The device is challenged to return a string $z$ and the verifier accepts if $z\in\{z_1,\ldots,z_k\}$. \
Recall that if $C$ is Haar random, $P_C$ is distributed according to $\Dir(1^N)$. \
A detailed description of the protocol is given in \fig{simplified}. \

The simplified protocol is equivalent to solving $\XHOG$ in the following sense: If the device solves $(2-\epsilon')$-$\XHOG$, then the device is accepted in the protocol described in \fig{simplified} with probability at least $1-\epsilon'$ for $\epsilon'=\epsilon+O(k^3/N)$. \
Then, we show that for every device in \fig{simplified} wins the protocol with probability $1-\epsilon'$, the conditional von Neumann entropy is at least $(1-\epsilon')n-o(n)$. \
This implies that the protocol in \fig{simplified} also certifies the conditional von Neumann entropy. \

  \begin{figure}
    \hrule\vspace{.5em}
  Input: security parameter $n$ and number of samples $k$.

  The protocol:
  \begin{enumerate}
  \item Eve and the device $\F$ share an arbitrary entangled state $\rho_{DE}$.
  \item The verifier samples a distribution $P\sim\Dir(1^N)$ (where $N=2^n$) and samples $z_1,\ldots,z_k\sim P$ which is sent to the device $\F$ (but not to Eve). Moreover, $P$ may be revealed to Eve (but not to the device).
  \item The device sends a string $z$.
  \item The verifier accepts if $z\in\{z_1,\ldots,z_k\}$.
  \end{enumerate} \hrule\vspace{1em}
  \caption{A simplified protocol.}
  \label{fig:simplified}
  \end{figure}

By \thm{query-to-sample}, any $T$-query algorithm $\A$ can be well approximated by an algorithm $\F$ that is given only sample access to $P_C$ to diamond distance $\delta=O(Tk^{-1/2})$. \
Furthermore, we will also show that if $\A$ solves $(2-\epsilon)$-XHOG, then $\F$ solves $b'$-XHOG for $b'=2-\epsilon-\delta$. \
Combining these results completes our single-round analysis. \ 

\subsubsection{A Simplified Device}

In this section, we show that for any algorithm $\F$ given $k$ samples drawn from $P$, the only way that $\F$ has high score is to output one of the given samples. \
Thus the proof system in \fig{simplified} is equivalent to solving $\XHOG$. \

First, we prove a technical lemma which will be useful later.

\begin{lemma}\label{lem:freq-dist}
  For $P\sim\Dir(1^N)$ and $z_1,\ldots,z_k\sim P$, let $m=(m_0,\ldots,m_{N-1})$ be the frequency vector with $m_z=|\{i\in[k]:z_i=z\}|$. \
  Then $m$ is distributed according to $\Phi(N,k)$, the uniform distribution over frequency vectors that has $N$ elements summing to $k$. \
\end{lemma}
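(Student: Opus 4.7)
The plan is to compute the marginal distribution of the frequency vector $m$ explicitly using Dirichlet-multinomial conjugacy, which is already noted in the preliminaries, and observe that the result does not depend on $m$. Concretely, conditioned on $P$, the vector $m$ is $\text{Multinomial}(k, P)$-distributed, so the marginal probability of a particular frequency vector $m$ with $\sum_z m_z = k$ and $m_z \geq 0$ is
\begin{align}
\Pr[m] = \binom{k}{m_0,\ldots,m_{N-1}} \Exp_{P \sim \Dir(1^N)}\left[\prod_{z} P_z^{m_z}\right] = \binom{k}{m_0,\ldots,m_{N-1}} \cdot \frac{B(1^N + m)}{B(1^N)},
\end{align}
using the moment formula for the Dirichlet distribution recalled in \sec{haar-intro}.

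The next step is to plug in the explicit values of $B$. Since $B(\alpha) = \frac{\prod_i \Gamma(\alpha_i)}{\Gamma(\sum_i \alpha_i)}$, we have $B(1^N) = \frac{1}{(N-1)!}$ and $B(1^N + m) = \frac{\prod_z m_z!}{(N+k-1)!}$, because $\Gamma(1+m_z) = m_z!$ and $\sum_z (1+m_z) = N + k$. Substituting,
\begin{align}
\Pr[m] = \frac{k!}{\prod_z m_z!} \cdot \frac{\prod_z m_z!}{(N+k-1)!} \cdot (N-1)! = \frac{k!\,(N-1)!}{(N+k-1)!} = \binom{N+k-1}{k}^{-1}.
\end{align}
This value is manifestly independent of $m$, and since $\binom{N+k-1}{k}$ is exactly the number of nonnegative integer solutions to $\sum_z m_z = k$ by a standard stars-and-bars count, we conclude that $m \sim \Phi(N,k)$.

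There is no real obstacle here—the calculation is a textbook application of Dirichlet-multinomial conjugacy—so the writeup can be kept very short. The only minor thing worth flagging is that the proof makes essential use of the Dirichlet parameters being all equal to $1$: for general $\alpha$ one gets the Dirichlet-multinomial (Pólya) distribution, which is not uniform. A sentence commenting on this, or alternatively a remark pointing out that the same fact is the classical ``Bose-Einstein statistics'' / Pólya-urn-with-one-ball-per-color result, would make the lemma's role in the sequel more transparent without lengthening the proof.
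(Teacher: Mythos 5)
Your proof is correct and is essentially the same computation as the paper's: both establish that the marginal probability of any frequency vector $m$ equals $\frac{\Gamma(k+1)\Gamma(N)}{\Gamma(N+k)}=\binom{N+k-1}{k}^{-1}$, independent of $m$, using the Dirichlet--multinomial conjugacy facts from the preliminaries. The only cosmetic difference is that you integrate out $P$ directly via the Dirichlet moment formula, whereas the paper obtains the same constant from the Bayes identity $f(m)=f(m\mid p)f(p)/f(p\mid m)$ with the posterior $\Dir(1^N+m)$; your explicit stars-and-bars remark that there are exactly $\binom{N+k-1}{k}$ frequency vectors is a nice touch that the paper leaves implicit.
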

\begin{proof}
  Recall that the probability density function (pdf) of $\Dir(1^N)$ is
  $f(p) = \Gamma(N)$,
  where $p=(p_0,\ldots,p_{N-1})$ is any element in the probability simplex. \
  Given a probability distribution $p$, the probability density that $k$ samples form a frequency vector $m$ is \
  \begin{align}
    f(m|p) = p_0^{m_0}\ldots p_{N-1}^{m_{N-1}}\frac{\Gamma(k+1)}{\Gamma(m_0+1)\ldots \Gamma(m_{N-1}+1)}.
  \end{align}
  Since the posterior disrbitution is $\Dir(1^N+m)$, the probability density \
  \begin{align}
    f(p| m) = p_0^{m_0}\ldots p_{N-1}^{m_{N-1}} \frac{\Gamma(N+k)}{\Gamma(m_0+1)\ldots\Gamma(m_{N-1}+1)}.
  \end{align}
  Thus the probability density of $m$ is
  \begin{align}
    f(m) = \frac{f(m|p) f(p)}{f(p|m)} = \frac{\Gamma(k+1)\Gamma(N)}{\Gamma(N+k)} = \binom{N+k-1}{k}^{-1}.
  \end{align}
  This means that $m$ is distributed according to the uniform distribution over frequency vectors that has $N$ elements summing to $k$. \
\end{proof}

The following theorem says that any algorithm $\F$ given $k$ independent samples drawn according to $P$ and solves $\frac{(2-\epsilon)N}{N+k}$-XHOG, $\F$ must output a given sample with probability at least $1-\epsilon-O(k^3/N)$. \

\begin{theorem}\label{thm:hog-simple}
  For every quantum channel $\F$, let $\bar\F^P(\rho):=\Exp_{z_1,\ldots,z_k\sim P}[\F(\rho,z_1,\ldots,z_k)]$. \
  If 
  \begin{align}
    \Exp_{P\sim\Dir(1^N)}\Exp_{z\sim\bar\F^P(\rho)}\left[ P_z \right] \geq \frac{2-\epsilon}{N+k},
  \end{align}
  then 
  \begin{align}
    \Pr_{P\sim\Dir(1^N),z_1,\ldots,z_k\sim P}[\F(\rho,z_1,\ldots,z_k)\in\{z_1,\ldots,z_k\}] \geq 1-\epsilon - O\left(\frac{k^3}{N}\right).
  \end{align}
\end{theorem}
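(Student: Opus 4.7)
The plan is to rewrite the score of $\bar\F^P$ using the conjugacy of the Dirichlet prior with the multinomial likelihood, and then exploit the fact that with high probability the $k$ samples contain no collisions, so that ``$z$ equals some $z_i$'' behaves essentially like an indicator.

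First I would note that since $z$ (the output of $\F$) is conditionally independent of $P$ given $(z_1,\ldots,z_k)$, the score decomposes as
\begin{align}
\Exp_{P\sim\Dir(1^N)}\Exp_{z\sim\bar\F^P(\rho)}[P_z]
= \Exp_{z_1,\ldots,z_k,z}\Exp\bigl[P_z \mid z_1,\ldots,z_k,z\bigr].
\end{align}
Since the posterior $P\mid(z_1,\ldots,z_k)$ is $\Dir(1^N+m)$, where $m=(m_0,\ldots,m_{N-1})$ is the frequency vector of the samples, the inner conditional expectation equals $(m_z+1)/(N+k)$. Hence the hypothesis $S\geq (2-\epsilon)/(N+k)$ is equivalent to
\begin{align}
\Exp_{z_1,\ldots,z_k,z}[m_z] \geq 1-\epsilon.
\end{align}

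Next I would peel off the collision events. Let $E$ denote the event that $z_1,\ldots,z_k$ are pairwise distinct. Using $\Exp_{P\sim\Dir(1^N)}\!\sum_z P_z^2 = 2/(N+1)$ together with a union bound over the $\binom{k}{2}$ pairs, one gets $\Pr[\bar E]\leq k(k-1)/(N+1)=O(k^2/N)$. On $E$ we have $m_z\leq 1$, hence $m_z=\Id[z\in\{z_1,\ldots,z_k\}]$, while on $\bar E$ we only have the deterministic bound $m_z\leq k$. Therefore
\begin{align}
\Exp[m_z]
\leq \Pr[z\in\{z_1,\ldots,z_k\}\wedge E] + k\cdot\Pr[\bar E]
\leq \Pr[z\in\{z_1,\ldots,z_k\}] + O(k^3/N),
\end{align}
and combining with the lower bound $\Exp[m_z]\geq 1-\epsilon$ yields the claim.

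The computation is essentially routine once the Bayesian reformulation is in place; the only thing to watch is that the bound $m_z\leq k$ on $\bar E$ incurs a loss of $k\cdot\Pr[\bar E]$, which is where the $O(k^3/N)$ slack comes from. This is also the only obstacle to a tighter bound: replacing the crude $m_z\leq k$ by a sharper tail estimate on $\max_z m_z$ (using \lem{haar-min} to control $\max_z P_z=O(\log N/N)$) could in principle improve the error term, but the stated $O(k^3/N)$ already suffices for the downstream entropy-accumulation analysis, so I would not pursue a refinement.
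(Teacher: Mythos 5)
Your proposal is correct and takes essentially the same route as the paper: both rewrite the score via Dirichlet--multinomial conjugacy as $\Exp[(m_z+1)/(N+k)]$, split on the no-collision event where $m_z$ becomes the indicator $\Id[z\in\{z_1,\ldots,z_k\}]$, and absorb the collision event with the crude bound $m_z\leq k$, giving the $O(k^3/N)$ slack. The only cosmetic difference is that you bound the collision probability by a pairwise union bound using $\Exp_{P\sim\Dir(1^N)}\sum_z P_z^2=\frac{2}{N+1}$, whereas the paper uses the uniformity of the frequency vector over $\Phi(N,k)$ to evaluate $\Pr[\|m\|_\infty=1]=\binom{N}{k}/\binom{N+k-1}{k}$ directly; both yield $O(k^2/N)$.
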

\begin{proof}
  For each tuple of samples $(z_1,\ldots,z_k)$, let $m=(m_0,\ldots,m_{N-1})$ be the frequency vector. \ 
  Since there are $k$ samples, $\|m\|_1=k$. \
  Let $\Phi(N,k)$ be the uniform distribution over possible frequency vectors. \

  Sampling $m\sim\Phi(N,k)$ can be done with the following process: sample $P\sim\Dir(1^N)$ and $z_1,\ldots,z_k\sim P$; output the frequency vector $m$. \
  By \lem{freq-dist}, the frequency vector has infinity norm 1, i.e., $\max_z m_z=1$, with probability
  \begin{align}\nonumber\label{eq:prob-collision}
    \Pr_{m\sim\Phi(N,k)}[\|m\|_\infty=1] 
    &= \frac{\binom{N}{k}}{\binom{N+k-1}{k}} \\\nonumber
    &= \frac{N!(N-1)!}{(N-k)!(N+k-1)!} \\\nonumber
    &= \frac{(N-1)\ldots (N-k+1)}{(N+k-1)\ldots (N+1)} \\\nonumber
    &\geq \left(1-\frac{k}{N}\right)^{k-1} \\
    &= 1-\frac{O(k^2)}{N}.
  \end{align}
  The first equality holds by \lem{freq-dist} and the fact that the number of frequency vector that has norm 1 is $\binom{N}{k}$. \
  The rest follows by direct calculation. \
  
  For every $\F$ that learns $(z_1,\ldots,z_k)$, the posterior distribution $P|m\sim\Dir(1^N+m)$. \
  That is, seeing these samples, the distribution of $P$ to $\F$ is distributed according to $\Dir(1^N+m)$. \
  Thus the expectation 
  \begin{align}\nonumber
    \Exp_{P\sim\Dir(1^N)}[P_z|m] 
    &= \Exp_{P\sim\Dir(m+1^N)}[P_z] \\
    &= \frac{m_z+1}{N+k}.
  \end{align}

  Without loss of generality, the output of any algorithm $\F(z_1,\ldots,z_k,\rho)$ can be described with a distribution $Q(m)$ that only depends on the frequency vector $m$. \
  For each $m$, the score of the algorithm is \
  \begin{align}
    \Exp_{z\sim Q(m)}\Exp_{P\sim\Dir(1^N+m)}[P_z]
    &= \Exp_{z\sim Q(m)} \left[\frac{m_z+1}{N+k}\right].
  \end{align}
  For each $m$ such that $\|m\|_\infty=1$, 
  \begin{align}
    \Exp_{z\sim Q(m)}\Exp_{P\sim\Dir(1^N+m)}[P_z]
    = \frac{1}{N+k} \left(1 + \Pr_{z\sim Q(m)}[m_z>0]\right).
  \end{align}
  Let $\gamma$ be the probability that $\|m\|_\infty>1$ for $m\sim\Phi(N,k)$.
  From \eq{prob-collision}, $\gamma=O(k^2/N)$. \
  The score of $\F$ can be calculated with the expectation
  \begin{align}\nonumber
    &\Exp_{m\sim\Phi(N,k)}\Exp_{z\sim Q(m)}\Exp_{P\sim\Dir(1^N+m)}[P_z] \\\nonumber
    &\qquad\leq (1-\gamma) \frac{1}{N+k}\left(1+\Pr_{m\sim\Phi(N,k),z\sim Q(m)}[m_z>0]\right) + \gamma \frac{k+1}{N+k} \\
    &\qquad\leq \frac{1}{N+k} + \frac{1}{N+k}\Pr_{m\sim\Phi(N,k),z\sim Q(m)}[m_z>0] + \frac{O(k^3)}{N(N+k)}.
  \end{align}
  Thus if the score is at least $\frac{2-\epsilon}{N+k}$,
  \begin{align}
    \Pr_{m\sim\Phi(N,k), z\sim Q(m)}[m_z>0] \geq 1-\epsilon-\frac{O(k^3)}{N}.
  \end{align}
\end{proof}
For every $P$, if $\F$ outputs one of the given samples in the protocol in \fig{simplified}, then it is not difficult to see the von Neumann entropy of $\F$'s output conditioned on Eve's side information is at least $H_{\min}(P)$ up to an $O(\log k)$ additive loss. \
We show that the entropy lower bound scales linearly in the probability that $\F$ outputs one of the given samples. \

\begin{theorem}\label{thm:vn-bound}
  For distribution $P$ over $\bit^n$, let $\F$ be any algorithm given access to $z_1,\ldots,z_k\sim P$ and the first subsystem of any bipartite state $\rho_{DE}$ and $\bar\F(\rho)=\Exp_{z_1,\ldots,z_k\sim P}[\F(\rho,z_1,\ldots,z_k)]$.
  If 
  \begin{align}
    \Pr_{z_1,\ldots,z_k\sim P}\left[\F(\rho,z_1,\ldots,z_k) \in \{z_1,\ldots,z_k\} \right] = 1-\delta,
  \end{align}
  then 
  \begin{align}
    H(Z|E)_{\bar\F(\rho)} \geq (1-\delta) (H_{\min}(P) - 2\log k) - 2.
  \end{align}
\end{theorem}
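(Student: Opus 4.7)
The plan is to decompose the output state into a component where the device's answer lies in the sample set $S=(z_1,\dots,z_k)$ and one where it does not, use concavity of the conditional von Neumann entropy to pass to the former, and then reduce bounding $H(Z|E)_{\sigma_1}$ to a guessing game that exploits the fact that Eve's register $E$ is marginally independent of the freshly drawn $S$.

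I would first write $\bar\F(\rho)=(1-\delta)\sigma_1+\delta\sigma_0$, where $\sigma_b$ is the normalized classical-quantum state conditioned on the event $W=b$ that the output $Z$ is (resp.\ is not) contained in $S$. By concavity of $H(Z|E)$ in the state (a consequence of strong subadditivity), together with the fact that $H(Z|E)_{\sigma_0}\geq 0$ for CQ states with classical $Z$,
\begin{align*}
H(Z|E)_{\bar\F(\rho)} \;\geq\; (1-\delta)\,H(Z|E)_{\sigma_1}.
\end{align*}

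Next I would pass to conditional min-entropy via the standard inequality $H(Z|E)\geq H_{\min}(Z|E)$, and bound $H_{\min}(Z|E)_{\sigma_1}$ using Eve's optimal guessing strategy. The critical structural observation is that $S$ is drawn from $P^k$ independently of $\rho_{DE}$, and the device's algorithm is a CPTP map acting only on system $D$, so Eve's marginal on $E$ stays unchanged and, in particular, remains independent of $S$. Consequently any POVM on $E$ produces a classical guess $G$ that is independent of $S$, so by the union bound
\begin{align*}
\Pr[G\in S] \;=\; \Exp_G\!\left[1-(1-P(G))^k\right] \;\leq\; k\cdot \Exp_G[P(G)] \;\leq\; k\cdot 2^{-H_{\min}(P)}.
\end{align*}
Correct guessing under $\sigma_1$ requires $G=Z$ with $Z\in S$, which in particular implies $G\in S$; dividing by $\Pr[W=1]=1-\delta$ gives
\begin{align*}
p_{\mathrm{guess}}(Z|E)_{\sigma_1} \;\leq\; \frac{k\cdot 2^{-H_{\min}(P)}}{1-\delta}, \qquad H_{\min}(Z|E)_{\sigma_1} \;\geq\; H_{\min}(P)-\log k+\log(1-\delta).
\end{align*}

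Combining these pieces yields $H(Z|E)_{\bar\F(\rho)}\geq (1-\delta)\bigl(H_{\min}(P)-\log k+\log(1-\delta)\bigr)$, from which the stated bound $(1-\delta)(H_{\min}(P)-2\log k)-2$ follows by absorbing $(1-\delta)\log(1-\delta)\geq -1/(e\ln 2)>-1$ and the surplus $(1-\delta)\log k\geq 0$ term into the constants. The main subtlety, and the step I expect to require the most care, is the conditioning on $W=1$: this event couples $S$, $Z$ and (through the entanglement) $E$, so after conditioning $E$ and $S$ need not remain independent. The fix is to bound the \emph{unconditional} guessing probability $\Pr[G=Z,W=1]\leq\Pr[G\in S]$ using the marginal independence of $E$ and $S$, and only afterwards renormalize by $1-\delta$. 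Making this clean in the quantum setting requires working directly with the operator-level definition $H_{\min}(Z|E)=-\log p_{\mathrm{guess}}(Z|E)$ and taking the supremum over all POVMs on $E$.
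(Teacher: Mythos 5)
Your proposal is correct, but it takes a genuinely different route from the paper. The paper never passes through conditional min-entropy or guessing probabilities: it adjoins the sample registers $Z_1\ldots Z_k$ to the output state, uses the entropic triangle inequality $H(Z|E)\geq H(Z_1\ldots Z_k|E)-H(Z_1\ldots Z_k|Z)$ together with the fact that $E$ is in product with the freshly drawn samples to reduce the problem to lower-bounding the mutual information $I(Z:Z_1\ldots Z_k)$, then applies its mixture lemma for mutual information (\lem{mutual}) to restrict to the success branch, and finally bounds $H(Z)_{\psi_1}\geq H_{\min}(P)-\log k+\log(1-\delta)$ and $H(Z|Z_1\ldots Z_k)_{\psi_1}\leq\log k$, which is where the $2\log k$ loss comes from. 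You instead split the output state into success/failure branches directly, use concavity of $H(Z|E)$ plus non-negativity on the failure branch, and then bound the success branch via $H\geq H_{\min}$ and the operational identity $H_{\min}(Z|E)=-\log p_{\mathrm{guess}}(Z|E)$ for CQ states, exploiting the same structural fact the paper uses (the device touches only $D$, so Eve's marginal, and hence any guess extracted from it, is independent of the samples) together with a union bound $\Pr[G\in S]\leq k\,2^{-H_{\min}(P)}$. Your handling of the one real subtlety—that conditioning on the success event couples $E$ with $S$—is sound: bounding the unconditional probability $\Pr[G=Z\wedge W=1]\leq\Pr[G\in S]$ and only then renormalizing by $1-\delta$ is exactly what makes the argument go through. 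What each approach buys: the paper's stays entirely within von Neumann-entropy identities and the lemmas it has already set up for the entropy accumulation machinery, while yours is more operational and slightly sharper, losing only $\log k+|\log(1-\delta)|$ rather than $2\log k+2$, so it implies the stated bound with room to spare.
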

\begin{proof}
  We give a lower bound on the von Neumann entropy $H(Z|E)_{\bar\F^P(\rho)}$ for random variable $Z\in\{Z_1,\ldots,Z_k\}$ and $Z_1,\ldots,Z_k\sim P$. \
  Here we use uppercase $Z_1,\ldots,Z_k$ to denote the registers or the random variables describing these samples, and lowercase $z_1,\ldots,z_k$ to denote a particular event. \
  We define the classical-quantum state 
  \begin{align}
    \psi_{Z_1\ldots Z_k ZE} = \sum_{z_1,\ldots z_k} P(z_1)\ldots P(z_k) \proj{z_1,\ldots,z_k}_{Z_1\ldots Z_k} \otimes \F(\rho,z_1,\ldots,z_k)_{ZE}.
  \end{align}
  By definition, $\bar\F^P(\rho)=\tr_{Z_1\ldots Z_k}(\psi)$, and thus $H(Z|E)_{\bar\F^P(\rho)}=H(Z|E)_{\psi}$. \

  Applying \lem{triangle} with $A=Z_1\ldots Z_k$, $B=Z$ and $C=E$, 
  \begin{align}
    H(Z|E)_{\psi} \geq H(Z_1\ldots Z_k|E)_{\psi} - H(Z_1\ldots Z_k|Z)_{\psi}.
  \end{align}
  Since the register $Z$ is disjoint from $Z_1\ldots Z_kE$, 
  \begin{align}
    \tr_{Z}(\psi) = \sum_{z_1,\ldots,z_k} P(z_1)\ldots P(z_k)\proj{z_1,\ldots,z_k}_{Z_1\ldots Z_k} \otimes\rho_E,
  \end{align}
  which is product state. \
  Thus we have $H(Z_1\ldots Z_k|E)_{\psi}=H(Z_1\ldots Z_k)_{\psi}$, and therefore
  \begin{align}\label{eq:vn-ZE}\nonumber
    H(Z|E)_{\psi} 
    &\geq H(Z_1\ldots Z_k)_\psi - H(Z_1\ldots Z_k|Z)_\psi \\
    &= I(Z:Z_1,\ldots,Z_k)_\psi.
  \end{align}
  This means that it suffices to bound the mutual information between $Z$ and $Z_1,\ldots,Z_k$. \
  Now we give a lower bound on $H(Z)_{\psi}$: \
  Recall that $\psi$ is a quantum state of the form %
  \begin{align}
    \psi = \sum_{z_1,\ldots,z_k} P(z_1)\ldots P(z_k) \proj{z_1,\ldots,z_k} \otimes \sum_z Q(z|z_1,\ldots,z_k) \proj{z},
  \end{align}
  for some conditional distribution $Q(\cdot | z_1,\ldots,z_k)$ which may depend on $P$. \
  We decompose each distribution
  \begin{align}
    Q(z|z_1,\ldots,z_k) = Q_1(z | z_1,\ldots,z_k) + Q_0(z | z_1,\ldots,z_k)
  \end{align}
  into two subnormalized distributions $Q_1(\cdot |z_1,\ldots,z_k)$ and $Q_0(|z_1,\ldots,z_k)$ such that the support $\mathrm{supp}(Q_1(\cdot |z_1,\ldots,z_k)\subseteq\{z_1,\ldots,z_k\}$ and 
  $\mathrm{supp}(Q_0(\cdot | z_1,\ldots,z_k))\cap\{z_1,\ldots,z_k\}=\emptyset$.
  Let $\delta_{z_1,\ldots,z_k}=\sum_z Q_0(z|z_1,\ldots,z_k)$.
  By definition, $\Exp_{z_1,\ldots,z_k\sim P}[\delta_{z_1,\ldots,z_k}]=\delta$. \
  Furthermore, for $b\in\bit$, let the normalized distribution \
  \begin{align}
    \bar Q_b(z|z_1,\ldots,z_k) = \frac{1}{p_{b,z_1,\ldots,z_k}}Q_b(z|z_1,\ldots,z_k),
  \end{align}
  where $p_{1,z_1,\ldots,z_k}=1-\delta_{z_1,\ldots,z_k}$ and $p_{0,z_1,\ldots,z_k}=\delta_{z_1,\ldots,z_k}$. \
  Now let $(p_0,p_1)=(\delta,1-\delta)$. \
  For $b\in\bit$, define
  \begin{align}
    \psi_b := \frac{1}{p_b}\sum_{z_1,\ldots,z_k} p_{b,z_1,\ldots,z_k} P(z_1)\ldots P(z_k)\proj{z_1,\ldots,z_k} \otimes \sum_z \bar Q_b(z|z_1,\ldots,z_k)\proj{z}.
  \end{align}
  We have $\psi=p_0\psi_0+p_1\psi_1$. \
  By \lem{mutual} and non-negativity of quantum mutual information,
  \begin{align}\label{eq:mutual}\nonumber
    I(Z:Z_1\ldots Z_k)_\psi 
    &\geq p_0 I(Z:Z_1\ldots Z_k)_{\psi_0} + p_1 I(Z:Z_1\ldots Z_k)_{\psi_1} - h(\delta) \\ 
    &\geq (1-\delta) I(Z:Z_1\ldots Z_k)_{\psi_1} -1.
  \end{align}
  It suffices to give a lower bound on $I(Z:Z_1\ldots Z_k)_{\psi_1}$.

  Let $\sigma_1:=\tr_{Z_1\ldots Z_k}(\psi_1)$. \
  By definition,
  \begin{align}\nonumber
    \sigma_1 
    &=
    \frac{1}{1-\delta} 
    \sum_{z_1,\ldots,z_k} (1-\delta_{z_1,\ldots,z_k}) P(z_1)\ldots P(z_k)
    \sum_z \bar Q_1(z|z_1,\ldots,z_k)\proj{z} \\\nonumber
    &\leq
    \frac{1}{1-\delta} 
    \sum_{z_1,\ldots,z_k} (1-\delta_{z_1,\ldots,z_k}) P(z_1)\ldots P(z_k)
    \sum_z \Id[z\in\{z_1,\ldots,z_k\}]\proj{z} \\
    &\leq
    \frac{1}{1-\delta} 
    \sum_{z_1,\ldots,z_k} (1-\delta_{z_1,\ldots,z_k}) P(z_1)\ldots P(z_k)
    \sum_z \sum_{i=1}^k \Id[z=z_i]\proj{z}.
  \end{align}
  Thus for every $z\in\bit^n$,
  \begin{align}
    \bra{z}\sigma_1\ket{z}
    &\leq \sum_i \frac{1-\delta^{(i)}_z}{1-\delta} P(z),
  \end{align}
  where $\delta_z^{(i)}:= \Exp_{z_1,\ldots,z_{i-1},z_{i+1}\ldots,z_k\sim P}[\delta_{z_1,\ldots,z_k}]$. \
  By definition, $\delta_z^{(i)}\in[0,1]$, and thus $\bra{z}\sigma_1\ket{z}\leq \frac{k}{1-\delta}\max_z P(z)$. \
  This implies that 
  \begin{align}\label{eq:vn-Z}
    H(Z)_{\psi_1} \geq H_{\min}(Z)_{\sigma_1} \geq H_{\min}(P) - \log k + \log(1-\delta).
  \end{align}
  Next we consider the quantity $H(Z|Z_1\ldots Z_k)_{\psi_1}$: \ 
  By definition,
  \begin{align}
    H(Z|Z_1\ldots Z_k)_{\psi_1} = \frac{1}{1-\delta}\sum_{z_1,\ldots,z_k} (1-\delta_{1,z_1,\ldots,z_k})P(z_1)\ldots P(z_k) H(\bar Q_1(\cdot | z_1,\ldots,z_k))
  \end{align}
  Since $\bar Q_1(\cdot|z_1,\ldots,z_k)$ has support in $\{z_1,\ldots,z_k\}$, for every $z_1,\ldots,z_k$,
  \begin{align}
    S(\bar Q_1(\cdot|z_1,\ldots,z_k)) \leq \log k.
  \end{align}
  This gives 
  \begin{align}\label{eq:mutual-2}
    H(Z|Z_1\ldots Z_k)_{\psi_1}\leq \log k.
    \end{align}
  Combining \eq{vn-ZE}, \eq{mutual}, \eq{vn-Z} and \eq{mutual-2}, 
  \begin{align}\nonumber
    H(Z|E)_\psi 
    &\geq (1-\delta)I(Z:Z_1\ldots Z_k)_{\psi_1} - 1 \\\nonumber
    &\geq (1-\delta) \left(H_{\min}(P) - 2\log k\right) + (1-\delta)\log(1-\delta) - 1 \\
    &\geq (1-\delta) \left(H_{\min}(P) - 2\log k\right) -2.
  \end{align}
  The last inequality holds since $-(1-\delta)\log(1-\delta)\leq h(\delta)\leq 1$.
\end{proof}
\thm{vn-bound} says that the von Neumann entropy depends on the probability that the device's output $z\in\{z_1,\ldots,z_k\}$. \
For $P\sim\Dir(1^N)$, by \lem{haar-min}, with overwhelming probability, $S_{\min}(P)\geq n-\log n-O(1)$. \
We now use the notation
\begin{align}
  H(Z|PE)_\psi := \Exp_{P\sim\Dir(1^N)}\left[ H(Z|E)_{\bar\F^P(\rho)}\right],
\end{align}
where $\psi$ is the joint output classical-quantum state, defined by
\begin{align}\label{eq:psi}
  \psi = \Exp_{P\sim\Dir(1^N)}\left[\proj{P} \otimes \bar\F^P(\rho)\right],
\end{align}
even though $P$ is not sampled from a finite set. \ 
We then show the following theorem. \
\begin{corollary}\label{cor:simple-vn}  For $P\sim\Dir(1^N)$, let $\F$ be any algorithm given access to $z_1,\ldots,z_k\sim P$ and the first subsystem of any bipartite state $\rho_{DE}$, and $\bar\F^P(\rho)=\Exp_{z_1,\ldots,z_k\sim P}[\F(\rho,z_1,\ldots,z_k)]$.
  If 
  \begin{align}
    \Pr_{P\sim\Dir(1^N), z_1,\ldots,z_k\sim P}[\F(\rho,z_1,\ldots,z_k)\in\{z_1,\ldots,z_k\}] \geq 1-\epsilon,
  \end{align}
  then
  \begin{align}
    H(Z|PE)_\psi
    \geq (1-\epsilon)n-\log n-2\log k - O(1),
  \end{align}
  where $\psi$ is the state defined in \eq{psi}.
\end{corollary}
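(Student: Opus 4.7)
The plan is to combine the per-$P$ entropy bound in \thm{vn-bound} with the concentration of $H_{\min}(P)$ under $P\sim\Dir(1^N)$ from \lem{haar-min}. First I would fix $P$ and let $\delta_P:=\Pr_{z_1,\ldots,z_k\sim P}[\F(\rho,z_1,\ldots,z_k)\notin\{z_1,\ldots,z_k\}]$. Applying \thm{vn-bound} pointwise (with the distribution $P$ in place of the generic $P$ there), I get
\[
H(Z|E)_{\bar\F^P(\rho)} \;\geq\; (1-\delta_P)(H_{\min}(P)-2\log k) - 2.
\]
The hypothesis translates to $\Exp_{P\sim\Dir(1^N)}[\delta_P]=\epsilon$, by definition of the averaged acceptance probability and Fubini.

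Next I would take expectations over $P$. Since $H(Z|PE)_\psi=\Exp_P[H(Z|E)_{\bar\F^P(\rho)}]$, this gives
\[
H(Z|PE)_\psi \;\geq\; \Exp_P\left[(1-\delta_P)H_{\min}(P)\right] \;-\; 2(\log k)\,\Exp_P[1-\delta_P] \;-\; 2.
\]
The second and third terms are already clean since $\Exp_P[1-\delta_P]=1-\epsilon\le 1$, so they contribute $-2\log k - 2$. The main work is bounding $\Exp_P[(1-\delta_P)H_{\min}(P)]$ from below without losing too much to correlations between $\delta_P$ and $H_{\min}(P)$.

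To handle this term, I would invoke \lem{haar-min}: with probability at least $1-6/N$ over $P\sim\Dir(1^N)$, $H_{\min}(P)\geq n-\log n - O(1)$; call this high-probability event $G$. Since $(1-\delta_P)H_{\min}(P)\geq 0$ always,
\[
\Exp_P\!\left[(1-\delta_P)H_{\min}(P)\right] \;\geq\; (n-\log n-O(1))\cdot\Exp_P[(1-\delta_P)\mathbf 1_G],
\]
and
\[
\Exp_P[(1-\delta_P)\mathbf 1_G] \;\geq\; \Exp_P[1-\delta_P] - \Pr[\neg G] \;\geq\; 1-\epsilon - 6/N.
\]
Combining and absorbing the $O(n/N)$ slack into the $O(1)$ term (valid since $n\leq N$), the first expectation is at least $(1-\epsilon)n-\log n - O(1)$.

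Putting everything together yields $H(Z|PE)_\psi \geq (1-\epsilon)n-\log n-2\log k-O(1)$, as required. The only mild obstacle is the avoidance of a Jensen-style loss in the product $(1-\delta_P)H_{\min}(P)$, but this is handled cleanly by restricting to the good event $G$ from \lem{haar-min}, since on $G$ the factor $H_{\min}(P)$ is essentially a constant lower bound, decoupling it from $\delta_P$; no further assumption about how the device allocates its failure probability across different $P$'s is needed.
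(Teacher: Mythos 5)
Your proof is correct. It follows the same overall structure as the paper's proof---apply \thm{vn-bound} pointwise in $P$ and then average over $P\sim\Dir(1^N)$---and diverges only in how the cross term $\Exp_P[(1-\delta_P)H_{\min}(P)]$ is decoupled: the paper linearizes via $(1-\delta_P)H_{\min}(P)\geq H_{\min}(P)-\delta_P\,n$ (using $H_{\min}(P)\leq n$) and then invokes the expectation bound of \lem{haar-min-avg}, $\Exp_P[H_{\min}(P)]\geq n-\log n-O(1)$, whereas you truncate to the concentration event of \lem{haar-min} and use non-negativity of the integrand. Both routes yield the same $(1-\epsilon)n-\log n-2\log k-O(1)$ bound; the paper's linearization is slightly more economical (no event-restriction bookkeeping), while your truncation makes explicit that no assumption is needed about how $\delta_P$ correlates with $H_{\min}(P)$ across different $P$. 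One cosmetic point: the hypothesis gives $\Exp_P[\delta_P]\leq\epsilon$ rather than equality, but your inequalities only use that direction, so nothing breaks.
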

\begin{proof}
  The corollary is a direct consequence of \thm{vn-bound}.
  Let the error probability for $P$ be $\epsilon_P$.
  By \thm{vn-bound}, 
  \begin{align}\nonumber
    H(Z|E)_{\bar\F^P(\rho)} 
    &\geq (1-\epsilon_P) H_{\min}(P) - 2\log k \\
    &\geq H_{\min}(P) - \epsilon_P n -2\log k.
  \end{align}
  By \lem{haar-min-avg}, 
  \begin{align}\nonumber
    H(Z|PE)_\psi 
    &\geq  \Exp_{P\sim\Dir(1^N)}[H_{\min}(P)] - \Exp_P[\epsilon_P] n -2\log k \\
    &\geq (1-\epsilon)n-\log n-2\log k -O(1).
  \end{align}
\end{proof}
Combining \thm{hog-simple} and \cor{simple-vn}, we conclude that any $\bar\F$ solving $(1+\delta)$-XHOG has conditional von Neumann entropy $n-o(n)$.
\begin{corollary}\label{cor:vn-bound-F}
  Let $\F$ be any device given access to $k$ samples from $P=P_C$ for $C\sim\Haar(N)$. \
  If $\F$ solves $(1+\delta)$-$\XHOG$,
  then 
  \begin{align}
      H(Z|PE)_\psi 
      \geq \left(\delta-O(k^3/N)\right) n -\log n -2\log k -O(1).
  \end{align}
\end{corollary}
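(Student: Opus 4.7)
The proof will be essentially a mechanical chaining of Theorem~\ref{thm:hog-simple} (\textsf{thm:hog-simple}) and Corollary~\ref{cor:simple-vn} (\textsf{cor:simple-vn}), so the plan is just to make the two match up correctly at the boundary.

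The first step is to translate the $(1{+}\delta)$-XHOG hypothesis, namely $\Exp_{P\sim\Dir(1^N)}\Exp_{z\sim\bar\F^P(\rho)}[P_z]\geq(1+\delta)/N$, into the form $(2-\epsilon)/(N+k)$ used in Theorem~\ref{thm:hog-simple}. A direct computation shows that $(1+\delta)/N\geq(2-\epsilon)/(N+k)$ is equivalent to $\epsilon\geq 1-\delta-(1+\delta)k/N$, so one may invoke Theorem~\ref{thm:hog-simple} with $\epsilon:=1-\delta-(1+\delta)k/N$. This yields
\[
\Pr_{P\sim\Dir(1^N),\,z_1,\ldots,z_k\sim P}\bigl[\F(\rho,z_1,\ldots,z_k)\in\{z_1,\ldots,z_k\}\bigr] \;\geq\; 1-\epsilon-O(k^3/N) \;=\; \delta+(1+\delta)\tfrac{k}{N}-O(k^3/N).
\]
In particular this probability is at least $\delta-O(k^3/N)$.

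The second step is to plug this lower bound on the hitting probability into Corollary~\ref{cor:simple-vn}, which gives $H(Z|PE)_\psi\geq p\cdot n-\log n-2\log k-O(1)$ whenever the hitting probability is at least $p$. Taking $p=\delta-O(k^3/N)$ produces
\[
H(Z|PE)_\psi \;\geq\; \bigl(\delta-O(k^3/N)\bigr)n-\log n-2\log k-O(1),
\]
which is exactly the claimed bound.

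There is no real obstacle: the corollary is a packaging result. The only place to be careful is the translation between the XHOG normalization $(1{+}\delta)/N$ and the ``sample-game'' normalization $(2-\epsilon)/(N+k)$, and the fact that Theorem~\ref{thm:hog-simple}'s error term $O(k^3/N)$ must dominate the $(1+\delta)k/N$ slack picked up in that translation; since $\delta\leq 1$ and $k^3/N\geq k/N$ (for any $k\geq 1$), this domination is automatic, so the error absorbs cleanly into the $O(k^3/N)n$ term in the final bound.
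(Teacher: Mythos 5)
Your proposal is correct and is exactly the argument the paper intends: the corollary is stated as a direct combination of Theorem~\ref{thm:hog-simple} and Corollary~\ref{cor:simple-vn}, and your translation of the $(1+\delta)/N$ XHOG threshold into the $(2-\epsilon)/(N+k)$ normalization (with $\epsilon = 1-\delta-(1+\delta)k/N$, whose positive slack can simply be dropped) is the right bookkeeping step.
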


\subsubsection{The Analysis of a General Device}

Given the analysis of the protocol in \fig{simplified} and the closeness of a general device and the simplified device, we show that any device must exhibit a von Neumann entropy lower bound if it solves $\XHOG$. \

\begin{theorem}\label{thm:close-score}
For every $T$-query device $\A$ that solves $b$-$\XHOG$, there exists a device $\F$ given access to $k$ samples such that $\F$ solves $b'$-XHOG for $b'\geq b-O(nTk^{-1/2})$.
\end{theorem}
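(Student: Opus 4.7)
The plan is to invoke the simulation theorem (Theorem \ref{thm:query-to-sample}) and then bound the resulting change in XHOG score by a Hölder-type argument together with the concentration of the maximum of the Dirichlet distribution (Lemma \ref{lem:haar-min-avg}). Note first that by the symmetrization of Theorem \ref{thm:symmetry}, the XHOG score of any device $\A$ on input $\rho$ can be rewritten as
\begin{align}
s_\A \;=\; \Exp_{C\sim\Haar(N)} \Exp_{z\sim\bar\A^C(\rho)}\bigl[p_C(z)\bigr]
     \;=\; \Exp_{P\sim\Dir(1^N)} \tr\!\bigl(G_P \,\bar\A^P(\rho)\bigr),
\end{align}
where $G_P := \sum_z P_z\proj{z}$ is a diagonal positive operator with $\|G_P\|_{\op}=\max_z P_z$. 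The same formula applies to any candidate $\F$.

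First, I would apply Theorem \ref{thm:query-to-sample} to obtain a single sample-access simulator $\F$ (depending only on $\A$, not on $P$) satisfying
\begin{align}
\bigl\|\bar\A^P - \bar\F^P\bigr\|_\diamond \;\leq\; \epsilon := O\!\bigl(T k^{-1/2}\bigr) + 4T N^{-1/2}
\end{align}
for every distribution $P$ on the probability simplex. Crucially, the construction $\F=\G\circ\Phi$ in the proof of that theorem does not depend on $P$, so the same $\F$ works uniformly. Since the diamond norm upper-bounds the trace distance of the outputs (even after tracing out a purifying side register $E$), we obtain $\|\bar\A^P(\rho) - \bar\F^P(\rho)\|_1 \leq \epsilon$ for every $P$ and every input $\rho_{DE}$.

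Next I would bound the difference in XHOG scores. By Hölder's inequality,
\begin{align}
|s_\A - s_\F|
\;\leq\; \Exp_{P\sim\Dir(1^N)}\Bigl|\tr\!\bigl(G_P(\bar\A^P(\rho)-\bar\F^P(\rho))\bigr)\Bigr|
\;\leq\; \epsilon \cdot \Exp_{P\sim\Dir(1^N)}\|G_P\|_{\op}.
\end{align}
Here Lemma \ref{lem:haar-min-avg} supplies the key bound $\Exp_P \|G_P\|_{\op} = \Exp_P \max_z P_z \leq (2\ln N + 7)/N = O(n/N)$. Rescaling by $N$ to convert scores into the $b$-XHOG parameter (i.e.\ $b = N s$) yields
\begin{align}
|b_\A - b_\F| \;=\; N\,|s_\A - s_\F| \;\leq\; O(n)\cdot \epsilon \;=\; O\!\bigl(nT k^{-1/2}\bigr) + O\!\bigl(nT N^{-1/2}\bigr).
\end{align}
For the parameter regime of interest ($k \leq N$, which holds whenever the simulation theorem is meaningful), the first term dominates, giving $b' \geq b - O(nTk^{-1/2})$ as claimed.

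The proof is essentially a bookkeeping exercise once the right inequalities are lined up, so there is no single hard step; the only subtlety I would flag is making sure that the diamond-norm closeness survives the weighting by $G_P$, which is handled by the observation that $G_P$ is diagonal with typical operator norm $O(n/N)$ rather than $O(1)$. This gain by the factor $n/N$ is exactly what cancels against the $N$ in the $b$-XHOG normalization, producing the clean $O(nTk^{-1/2})$ bound in the statement.
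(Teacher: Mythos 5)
Your proposal is correct and follows essentially the same route as the paper's own proof: apply Theorem~\ref{thm:query-to-sample} to pass to the sample-access simulator, bound the score difference by $\max_z P_z$ times the trace distance (the same H\"older-type step), and then average over $P\sim\Dir(1^N)$ using Lemma~\ref{lem:haar-min-avg} before rescaling by $N$. Your explicit remark that $\F$ is constructed independently of $P$ is a nice clarification but not a departure from the paper's argument.
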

\begin{proof}
For every $\A^P$, by \thm{query-to-sample}, there exists $\F^P$ such that 
\begin{align}\nonumber
  \left|\Exp_{z\sim\bar\A^P(\rho)}[P_{z}] - \Exp_{z\sim\bar\F^P(\rho)}[P_{z}]\right|
  &= \left|\sum_{z} P_z (p_\A(z) - p_\F(z))\right| \\\nonumber
  &\leq \max_z P_z \sum_z |p_A(z)-p_F(z)| \\\nonumber
  &\leq \max_z P_z \cdot \|\bar\A^P(\rho)-\bar\F^P(\rho)\|_{\tr} \\
  &\leq \max_z P_z \cdot (O(Tk^{-1/2}) + O(TN^{-1/2})).
\end{align}
Taking the expectation over $P\sim\Dir(1^N)$, the upper bound is at most $O(nT(k^{-1/2}+N^{-1/2})/N)$ by \lem{haar-min-avg}. \ 
This implies that if $\A$ solves $b$-XHOG, then $\F$ solves $(b-O(nTk^{-1/2}+nTN^{-1/2}))$-XHOG. \
\end{proof}

\thm{close-score} shows that the score of any device $\A$ and its simplified version $\F$ must be close if $\F$ is given sufficiently many samples.
Given this fact, we prove the following theorem.
\begin{theorem}\label{thm:single}
  Let $\A$ be any algorithm making $T$ queries to a Haar random unitary $C$. %
  If $\A$ solves $(1+\delta)$-XHOG, then there exists a state $\psi$ which is $O(Tk^{-1/2})$-close to $\A$'s output classical-quantum state such that
  \begin{align}
    H(Z|CE)_\psi \geq \left(\delta-O(k^3/N)-O(nTk^{-1/2}+nTN^{-1/2})\right) n - \log n - 2\log k - O(1).
  \end{align}
 \end{theorem}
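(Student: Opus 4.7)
The plan is to assemble the three preceding results into a single chain: the query-to-sample reduction (\thm{query-to-sample}), the score-preservation bound (\thm{close-score}), and the conditional von Neumann entropy bound for sample-access devices (\cor{vn-bound-F}), then invoke the rotation-invariance identity (\thm{symmetry}) to pass from $P_C$ back to $C$.

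First, starting from a $T$-query device $\A$ that solves $(1+\delta)$-$\XHOG$, I would invoke \thm{query-to-sample} to produce a sample-access device $\F$ such that, for every $P$,
\begin{align}
\|\bar\A^P - \bar\F^P\|_\diamond \leq O(Tk^{-1/2}) + 4T/N^{1/2}.
\end{align}
Define $\psi$ to be the joint classical-quantum state produced by $\F$, averaged over $P \sim \Dir(1^N)$ and including Eve's register $E$. By the diamond-norm bound and the triangle inequality, $\psi$ is $(O(Tk^{-1/2}) + O(TN^{-1/2}))$-close to the corresponding state produced by $\A$, so the first conclusion of the theorem is satisfied. (Here I absorb the $TN^{-1/2}$ error into the stated $O(Tk^{-1/2})$ by choosing $k \leq N$, which will be consistent with the regime of interest.)

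Next, I would apply \thm{close-score} to conclude that $\F$ itself solves $(1+\delta')$-$\XHOG$ for
\begin{align}
\delta' \geq \delta - O(nT k^{-1/2} + nT N^{-1/2}).
\end{align}
Feeding $\F$ into \cor{vn-bound-F} then yields
\begin{align}
H(Z \mid P E)_\psi \geq \bigl(\delta' - O(k^3/N)\bigr) n - \log n - 2 \log k - O(1).
\end{align}
Finally, since for Haar-random $C$ the distribution $P_C$ has density $\Dir(1^N)$ and the simplified device depends on $C$ only through $P_C$, \thm{symmetry} gives $H(Z \mid C E)_\psi = H(Z \mid P_C E)_\psi$. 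Substituting the bound on $\delta'$ into the entropy inequality produces the claimed lower bound.

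The main technical obstacle is just accounting bookkeeping: verifying that the diamond-norm closeness between $\A$ and $\F$ propagates correctly through both the XHOG score comparison (already done in \thm{close-score}) and the choice of ``output state'' $\psi$ in the statement, and confirming that the parameter regime $k \ll N^{1/3}$ (so that $k^3/N = o(1)$) is compatible with the requirement $T k^{-1/2} = o(1)$ needed for the closeness bound to be nontrivial. For $T \leq 2^{n/7}$ as in the informal \thm{main-informal}, one can choose $k$ on the order of $2^{n/3}$ or so to make both error terms $2^{-\Omega(n)}$. No additional ideas are required — the theorem is essentially the composition of the three lemmas above, packaged to display the dependence on $T$, $k$, and $n$ explicitly.
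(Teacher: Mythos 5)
Your proposal is correct and follows essentially the same route as the paper: compose \thm{query-to-sample}, \thm{close-score}, and \cor{vn-bound-F}, take $\psi$ to be the simplified device's output state (close to $\A$'s output in trace/diamond distance), and finish with \thm{symmetry} to convert $H(Z|P_CE)$ into $H(Z|CE)$. The parameter bookkeeping you flag (absorbing the $TN^{-1/2}$ term and choosing $k$ so both error terms are small) matches how the paper handles it, so no further ideas are needed.
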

 \begin{proof}
   By \thm{query-to-sample}, for every $T$-query $\A$, there exists $\F$ given access to $k$ samples (drawn according to $P\sim\Dir(1^N)$) such that 
   $\|\bar\A^P-\bar\F^P\|_\diamond \leq O\left(Tk^{-1/2}\right)$.
   Also by \thm{close-score}, $\F$ solves $b$-XHOG for $b=1+\delta-O(nTk^{-1/2})$. \
   By \cor{vn-bound-F}, 
   \begin{align}
     H(Z|PE)_\psi \geq \left(\delta - O(k^3/N) - O(nTk^{-1/2}+nTN^{-1/2}) \right) n - \log n - 2 \log k - O(1),
   \end{align}
   where $\psi_{ZPE}$ is $\F$'s output state defined in \eq{psi} and is $O(Tk^{-1/2})$-close to $\A$'s output. \ 
   Finally, by \thm{symmetry}, we conclude the proof. \ 
 \end{proof}

 By \thm{single}, for constant $\delta$, every algorithm solving $(1+\delta)$-XHOG outputs a sample of conditional von Neumann entropy $o(n)$ must make $N^{\Omega(\delta)}$ queries. \ 
 For devices making $\poly(n)$ queries, \thm{single} implies the following corollary.
 \begin{corollary}\label{cor:single}
   For $T=\poly(n)$, $\delta=\Omega(1)$, and $\eta\in (0,1]$, 
   every algorithm $T$-query algorithm which solves $(1+\delta)$-$\XHOG$ must output a sample $Z$ satisfying 
   \begin{align}
     H(Z|CE)_\psi \geq (1-\eta)\delta n - O(\log n),
   \end{align}
   where $\psi$ is a quantum state $N^{-\Omega(\delta\eta)}$-close to $\A$'s output state.
 \end{corollary}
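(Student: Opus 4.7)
The plan is to deduce this corollary directly from \thm{single} by a careful choice of the parameter $k$, the number of samples used by the simplified device produced in \thm{query-to-sample}. Recall that \thm{single} gives us, for any choice of $k$, a state $\psi$ at diamond distance $O(T/\sqrt{k})$ from $\A$'s output with
\begin{align*}
H(Z|CE)_\psi \geq \bigl(\delta - O(k^3/N) - O(nT/\sqrt{k}) - O(nT/\sqrt{N})\bigr) n - \log n - 2\log k - O(1).
\end{align*}
So I need to choose $k$ to control three kinds of losses simultaneously: the diamond distance (which must be $N^{-\Omega(\delta\eta)}$), the multiplicative entropy losses inside the parenthesis (which must sum to at most $\eta\delta$), and the additive loss $2\log k$ (which must be $O(\log n) + \eta\delta n$ after absorption). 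The tension is that shrinking the diamond distance pushes $k$ up, while keeping $2\log k$ small pushes $k$ down.

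First I would set $k := T^2 \cdot N^{c\delta\eta}$ for a small absolute constant $c$ (say $c = 1/10$) to be fixed later. Under this choice, and using $T = \poly(n)$, I would check the four error terms in order. The diamond distance becomes $O(T/\sqrt{k}) = O(N^{-c\delta\eta/2}) = N^{-\Omega(\delta\eta)}$, matching the claim. The term $O(k^3/N) n = O(\poly(n) \cdot N^{3c\delta\eta - 1}) \cdot n$ is exponentially small in $n$ as long as $3c\delta\eta < 1$ (which holds since $\delta, \eta \leq 1$ and $c \leq 1/3$), so it is negligible compared to $\eta\delta n$. Similarly $O(nT/\sqrt{k}) n = O(n^2 \cdot N^{-c\delta\eta/2})$ and $O(nT/\sqrt{N}) n = O(\poly(n)/\sqrt{N}) \cdot n$ are both $o(1)$.

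Next I would handle the additive loss $2\log k = 2\log(T^2) + 2 c \delta\eta n = O(\log n) + 2c\delta\eta n$. The $O(\log n)$ part is absorbed into the $-O(\log n)$ in the target statement. The second part, $2c\delta\eta n$, is absorbed into the slack $\eta\delta n$: choosing $c$ small enough (say $c = 1/4$) makes $2c\delta\eta n \leq \eta\delta n/2$, and the remaining $\Omega(\eta\delta n)$ comfortably absorbs the $o(1)$ contributions computed above. Collecting everything yields
\begin{align*}
H(Z|CE)_\psi \geq \delta n - \eta\delta n - O(\log n) = (1-\eta)\delta n - O(\log n),
\end{align*}
as required.

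The only subtlety — hardly an obstacle — is the choice of $c$: it must be small enough that $2c\delta\eta n$ fits inside $\eta\delta n$ and that $3c\delta\eta < 1$ so the $k^3/N$ term stays negligible. Any universal constant $c \in (0, 1/4]$ works, and this simultaneously determines the implicit constant in the $N^{-\Omega(\delta\eta)}$ closeness. Note that the bound is only meaningful when $\eta\delta n = \omega(1)$, otherwise the target $(1-\eta)\delta n - O(\log n)$ differs from $\delta n - O(\log n)$ by only a constant and the stronger statement from \thm{single} (with $k = \poly(n)$) already suffices. Thus no new technical ingredients beyond \thm{single} are needed; the corollary is a quantitative repackaging.
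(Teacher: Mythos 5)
Your proposal is correct and follows essentially the same route as the paper: the paper also proves the corollary by instantiating \thm{single} with $k = T^2 N^{\Theta(\delta\eta)}$ and checking that the diamond distance, the multiplicative losses, and the additive $2\log k$ term all fit within the stated bounds. Your only deviation is bookkeeping — introducing the small constant $c$ so that $2c\delta\eta n$ and $3c\delta\eta<1$ are explicitly controlled — which is, if anything, slightly more careful than the paper's literal choice $k=T^2N^{\delta\eta}$.
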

 \begin{proof}
   By \thm{single}, choosing $k=T^2N^{\delta\eta}$, $Tk^{-1/2}=N^{-\delta\eta/2}$ and $k^3/N=n^{O(1)}\cdot N^{-1+3\delta\eta}$. \
   This implies the von Neumann entropy has lower bound $(1-\eta)\delta n-O(\log n)$.
 \end{proof}

\subsection{Entropy Accumulation}\label{sec:general-ea}

  \begin{figure}
    \hrule\vspace{.5em}
  Input: security parameter $n$, the number of rounds $m$, the score parameter $\delta\in[0,1]$, the fraction of circuit updates $\gamma=O((\log n)/m)$, and the fraction of test rounds $\eta=O((n^2\log n)/m)$.\\

  The protocol:

  \vspace{.5em}
  \begin{enumerate}
    \item For $i=1,\ldots,m$, run the following steps:
  \begin{enumerate}
  \item The verifier samples $T_i\sim\Bernoulli(\gamma)$,
  and $F_i\sim\Bernoulli(\eta)$.
    If $T_{i-1}=1$ or $i=1$, the device chooses a fresh circuit $C_i\sim\Haar(N)$.
    Otherwise, if $T_{i-1}=0$ and $i>1$, then the device sets $C_i=C_{i-1}$ to be the circuit used in the previous round.
    The verifier sends $C_i$ to the device.

  \item The prover returns a sample $z_i$.
  \end{enumerate} 
  
  \item 
  Let the number of epoches, i.e., the set of consecutive rounds $i$ such that the same circuit $C_i=C$ is used, be $t$.
  For each epoch $E_j$, let $t_j=|\{i\in E_j:F_i=1\}|$ denote the number of test rounds in this epoch. 
  The verifier computes 
    \begin{align}
      s_j = \frac{1}{t_j}\sum_{i\in E_j:T_i=1} p_{C_i}(z_i).
      \end{align}
      If $\frac{1}{t}\sum_{j=1}^t\delta[s_j\geq (1+\delta)/N]\geq 0.99$, then the verifier accepts and outputs $(z_1,\ldots,z_m)$ to the quantum-proof randomness extractor.
  \end{enumerate} 
  \hrule\vspace{1em}
  \caption{The entropy accumulation protocol.}
  \label{fig:full}
  \end{figure}

We present our entropy accumulation protocol in \fig{full}. \
By the entropy accumulation theorem shown in \sec{eat}, we prove the following lower bound on the conditional min-entropy. \ 
\begin{theorem}
  Let $\A_1,\ldots,\A_m$ be $n^{O(1)}$-query sequential processes given access to the first system of a bipartite state $\rho_{DE}$ and outputting $z_1,\ldots,z_m$ solving $\LXEB_{1+\delta,k}$ with probability $p$.
  Then
  \begin{align}
    H_{\min}^{\epsilon+m\epsilon'}(Z^m|C^mT^mE)_{\A_m\circ\ldots\A_1(\rho)|\Omega} \geq n\left( 0.99\delta m - c\sqrt{m} -o(1)\right)
  \end{align}
  where $\epsilon'=2^{-\Omega(n)}$ and $\Omega$ denotes the event of non-aborting.
The parameter $c:=3.99(1+\log(\frac{1}{p\epsilon}))^{1/2}$.
\end{theorem}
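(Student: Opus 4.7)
The plan is to combine the single-round von Neumann entropy bound of \cor{single} with the spot-checked Entropy Accumulation Theorem of \sec{eat} (\cor{eat}). The overall structure will mirror the proof in \sec{ideal-ea}, with the single-round bound for a fully general device replacing the ideal-measurement bound.

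As a preliminary reduction, I would replace each round's device $\A_i^{C_i}$ by a simplified device $\F_i$ receiving $k$ samples drawn from $P_{C_i}$, invoking \thm{query-to-sample}. Each such substitution is $O(n^{O(1)} k^{-1/2})$-close in diamond norm; choosing $k = 2^{\Omega(n)}$ yields a per-round error of $\epsilon' = 2^{-\Omega(n)}$, and summing over rounds produces the $m\epsilon'$ term in the smoothing. After this reduction, the sequential process becomes a chain of CPTP maps $\M_i$ whose single-round behaviour is governed by \cor{single}.

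Next I would construct the min-tradeoff function. Set the diagonal score operator $G := N \sum_z p_{C_i}(z)\proj{z}$, so that $\tr(G \rho_{A_i})$ equals $N$ times the expected XHOG score for round $i$. Then \cor{single}, applied with $\eta = 0.01$, yields the affine lower bound
\begin{align}
f(s) := 0.99\,(s-1)\,n - O(\log n), \qquad s\in[0,2],
\end{align}
which is a valid min-tradeoff function per \dfn{min-tradeoff}. Its gradient satisfies $\|\nabla f\|_\infty \leq 0.99 n$, and with $d_Z = N$ the EAT constant $V = 2(\log(2d_Z+1) + \|\nabla f\|_\infty) \leq 3.99 n + O(1)$, matching the $3.99$ coefficient of $c$ in the statement. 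Spot-checking is then handled exactly as in the note after \cor{eat}: I would rescale $G$ by $\gamma$ and $f$ by $1/\gamma$, which preserves the per-round guarantee while accounting for the fact that only a $\gamma$-fraction of rounds is actually scored. By \lem{lxeb-product}, the parameter choices $\gamma = O((\log n)/m)$ and $\eta = O((n^2\log n)/m)$ suffice for a perfect device to pass the $0.99$-of-epochs acceptance check with probability $1-O(1/n)$. Conditioning on non-aborting, the averaged score concentrates above $(1+\delta)/N$, so plugging $s = 1+\delta$ into $m f(s)$ and invoking \cor{eat} gives the bound $n(0.99\delta m - c\sqrt m - o(1))$.

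The hard part will be reconciling the epoch structure with the EAT framework, which expects a sequence of CPTP maps applied round-by-round with no intra-round memory assumed. Within an epoch the device may correlate its responses across rounds that reuse the same $C$, so each $\M_i$ must carry all intra-epoch memory in its internal register. Two technical points need care: first, that the Markov-chain condition of \thm{chain-3} holds, which is immediate in our setting because the $B_i$ systems are empty; and second, that the single-round entropy guarantee of \cor{single} is uniform over admissible input states $\omega_{i-1}$ on the purifying system, not merely for a fixed input. The latter follows from tracing through the proof of \cor{single} and observing that its bound depends only on the marginal of $\omega_{i-1}$ on the device register $D$, never on the purification. Once these points are verified, the EAT plugs in directly and delivers the announced conditional min-entropy bound.
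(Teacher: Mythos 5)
Your proposal follows essentially the same route as the paper's proof: apply the modified EAT of \sec{eat} with score operator $G=N\sum_z P(z)\proj{z}$ and the min-tradeoff function obtained from \cor{single} with $\eta=0.01$, giving $\|\nabla f\|_\infty=0.99n$ and $V\leq 3.99n$, then conclude via \cor{eat}. The extra details you supply (the simplified-device substitution as the source of the $m\epsilon'$ smoothing, the $\gamma$-rescaling for spot checking, and the empty-$B_i$ Markov condition) are consistent with the paper's framework, which states the same steps more tersely.
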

\begin{proof}
We apply the entropy accumulation theorem shown in \sec{eat}. \
In particular, we choose $G:=N\sum_z \proj{z} P(z)$. \
By \cor{single}, we choose $f(\delta)=(1-\eta)\delta n -O(\log n)$ for $\eta=0.01$ for concreteness. \
This gives $\|\nabla f\|_\infty=0.99n$. \
Then we have 
\begin{align}\nonumber
V &= 2(\log(2d_Z+1)+\|\nabla f\|_\infty) \\\nonumber
&= 2(\log(2N+1)+0.99n) \\\nonumber
&= 3.98n+O(1) \\
&\leq 3.99n.
\end{align}
Now we have 
\begin{align}
  H_{\min}^{\epsilon+m\epsilon'}(Z^m|C^mT^m E)_{\A_m\circ\ldots\circ\A_1(\rho)|\Omega}
  &\geq n \left( 0.99\delta m - O\left(\frac{\log n}{n}\right) -3.99\sqrt{m}\sqrt{\log\frac{2}{\Pr_\sigma[\Omega]^2\epsilon^2}}\right).
\end{align}
\end{proof}

\section{Pseudorandomness and Statistical Zero-Knowledge}\label{sec:szk}

To produce a net gain in randomness, we must derandomize the challenge circuits in our protocols \fig{llqsv} and \fig{full}. \
However, existing quantum-secure pseudorandom function was not known to be secure for certified randomness since the security only holds against quantum polynomial-time adversaries. \
In \sec{qszk-prf}, we define a stronger security notion for pseudorandom functions sufficient for certified randomness, and show there exists a construction relative to a random oracle. \
Then in \sec{expansion}, with the pseudorandom function, polynomial expansion can be achieved. \

In this section, we show that a pseudorandom function against quantum statistical zero-knowledge protocols is sufficient for certified randomness. \
The same argument also applies to pseudorandom unitaries. \

By \thm{mw18} with an extension to oracle distributions, two oracle distributions $\D_0,\D_1$ are said to be $\QSZK$-distinguishable if there exists a pair of algorithms $\A,\B$ such that if $\Exp_{F\sim \D_0}\|\A^F-\B^F\|_{\tr}$ and $\Exp_{F\sim\D_1}\|\A^F-\B^F\|_{\tr}$ are far apart.
\begin{definition}[$\QSZK$-distinguishability]\label{dfn:qszk-distinguishable}
  Two oracle distributions are said to be $\QSZK$-distinguishable if there exist a pair of algorithms $\A,\B$ and a polynomial $p$ such that 
  \begin{align}\label{eq:qszk-distinguishable}
    \left| \Exp_{F\sim\D_0}\|\A^\F-\B^F\|_{\tr} - \Exp_{F\sim\D_0}\|\A^\F-\B^F\|_{\tr}\right| > \frac{1}{p(n)}.
  \end{align}
  If no such algorithms exist, then the distributions are said to be $\mathrm{QSZK}$-indistinguishable.
\end{definition}

\dfn{qszk-distinguishable} strengthens the definition against $\BQP$ adversaries by Zhandry \cite{Zha12}: \
Two oracle distributions $\D_0,\D_1$ are said to be distinguishable if there exists an algorithm $\A$ which outputs one with probability far apart, i.e.,
\begin{align}\label{eq:standard-distinguishable}
  \left|\Pr_{F\sim\D_0}[\A^F=1] - \Pr_{F\sim\D_1}[\A^F=1]\right| > \frac{1}{\poly(n)}.
\end{align}
For clarity, in this paper, we will say $\D_0,\D_1$ are standard-distinguishable if there exists $\A$ such that \eq{standard-distinguishable} holds. \
To see why \dfn{qszk-distinguishable} is a stronger definition, we prove the following theorem.
\begin{theorem}\label{thm:qszk-standard}
  If two oracle distributions $\D_0,\D_1$ are standard-distinguishable, then they are $\QSZK$-distinguishable.
\end{theorem}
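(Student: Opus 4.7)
The plan is to reduce standard-distinguishability to $\QSZK$-distinguishability by a direct construction. Given an algorithm $\A$ witnessing \eq{standard-distinguishable}, I will build a pair $(\A',\B')$ for \eq{qszk-distinguishable} in which $\B'$ is a trivial constant algorithm, so that the trace distance $\|\A'^F-\B'^F\|_{\tr}$ collapses to the acceptance probability of $\A^F$. The hypothesis then becomes exactly \eq{qszk-distinguishable}, with the same polynomial $1/p(n)$ gap.

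Concretely, define $\A'^F$ to be the algorithm that runs $\A^F$ and writes its single-bit decision into an output qubit, so that its output state is the classical state
\begin{align}
\rho^F := (1-q^F)\proj{0} + q^F\proj{1}, \qquad q^F := \Pr[\A^F=1].
\end{align}
Define $\B'^F$ to be the algorithm that ignores its oracle and outputs $\proj{0}$. Both algorithms run in quantum polynomial time whenever $\A$ does. A one-line computation in the computational basis gives
\begin{align}
\|\rho^F - \proj{0}\|_{\tr} = \tfrac{1}{2}\bigl(\,|{-q^F}| + |q^F|\,\bigr) = q^F,
\end{align}
and therefore $\Exp_{F\sim\D_b}\|\A'^F-\B'^F\|_{\tr} = \Pr_{F\sim\D_b}[\A^F=1]$ for each $b\in\{0,1\}$. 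Subtracting the two cases and invoking the standard-distinguishability hypothesis yields
\begin{align}
\Bigl|\Exp_{F\sim\D_0}\|\A'^F-\B'^F\|_{\tr} - \Exp_{F\sim\D_1}\|\A'^F-\B'^F\|_{\tr}\Bigr| > \frac{1}{p(n)},
\end{align}
which is precisely \eq{qszk-distinguishable}.

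There is no real obstacle: the argument is a routine trace-distance identity, and the only mild point to confirm is that $(\A',\B')$ sit inside the intended complexity class (polynomial-time quantum algorithms), which they do by construction. I will also note the one-sidedness of the converse: $\QSZK$-distinguishability is strictly stronger in general, since it permits $\A'$ and $\B'$ to exploit coherent output states whose trace distance cannot be captured by any single-bit acceptance probability.
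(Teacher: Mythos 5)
Your proposal is correct and matches the paper's own proof: both take the standard distinguisher $\A$ as is, pair it with a trivial algorithm that always outputs $\proj{0}$, and observe that the trace distance then equals the acceptance probability $\Pr_{F\sim\D_b}[\A^F=1]$, so the $1/p(n)$ gap transfers directly. Your explicit computation of $\|\rho^F-\proj{0}\|_{\tr}=q^F$ just spells out the step the paper states in one line.
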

\begin{proof}
  Since $\D_0,\D_1$ are standard-distinguishable, there exists $\A$ which outputs a single bit such that \eq{standard-distinguishable} holds. \ 
  Then consider a trivial algorithm $\B$ which outputs $0$ with probability 1. \ 
  This implies that 
  \begin{align}
    \Exp_{F\sim\D_b}\|\A^F-\B^F\|_{\tr} = \Pr_{F\sim\D_b}[\A^F=1],
  \end{align}
  and \eq{qszk-distinguishable} holds.
\end{proof}

\subsection{Pseudorandom Functions}\label{sec:qszk-prf}

It is known that there is a construction of a pseudorandom function standard-indistinguishable from a random function. \
As we do not know if the converse of \thm{qszk-standard} holds, we propose the following assumption. \
\begin{assumption}[Pseudorandom function assumption]\label{asm:prfa}
  Let $\kappa\in\mathbb N$ be the security parameter and $\ell,m$ be polynomially bounded functions. \
  There exists a keyed function $F:\bit^{\ell(\kappa)}\times\bit^{m(\kappa)}\to\bit$ such that the following conditions hold. \
  \begin{itemize}
      \item $F(k,x)$ can be computed in polynomial time for every $k\in\bit^{\ell(\kappa)}$ and $x\in\bit^{m(\kappa)}$. \
      
      \item For every pair of quantum algorithms $\A,\B$, it holds that \
      \begin{align}
      \left|
      \Exp_{k\sim\K_{\ell(\kappa)}}\|\A^{F_k}-\B^{F_k}\|_{\tr}
      - \Exp_{F\sim\F_{m(\kappa)}} \|\A^{F}-\B^{F}\|_{\tr}
      \right| \leq \negl(\kappa)
    \end{align}
    for sufficiently large $\kappa$, where $\F_{m}$ is the uniform distribution over $m$-bit Boolean functions and $\K_\ell$ is the uniform distribution over $\bit^{\ell}$. \
  \end{itemize}
\end{assumption}

In this section, we prove that \asm{prfa} holds relative to a random oracle: \ 
For $\kappa=n$, $m(n)=n$, let $\O:\bit^{\ell+n}\to\bit$ be a random function and $\O_k(x):=\O(k,x)$ for $k\in\bit^\ell$ and $x\in\bit^n$. \
The key length $\ell$ is to be determined later. \ 
We show that for every pair of $\poly(n)$-query algorithms $\A$ and $\B$, $\O,\O_k$ for uniform $k\in\bit^\ell$ is indistinguishable from $\O,H$ for random $H:\bit^n\to\bit$. \ 

Recall that for a random function $H:\bit^n\to\bit$ and $x\in\bit^n$, $H(x)$ is a fair coin independent from $H(x')$ for every $x'\neq x$. \
We will also say a distribution $\D$ over $n$-bit Boolean functions is a $\epsilon$-biased random function if for $F\sim\D$, the random variables in $\{F(x):x\in\bit^n\}$ are independent, and for each $x\in\bit^n$, the marginal distribution of $F$ on $x$ satisfies $|\Pr_{F\sim\D}[F(x)=1]-1/2|\leq\epsilon$. \  
First we prove the following lemma which states that for the uniform distribution $\U_\ell$ over $\bit^\ell$ and every $x\in\bit^n$, $\Exp_{k\sim\U_\ell}\O_k(x)$ is weakly biased under uniform $k$. \ 
\begin{lemma}\label{lem:qszk-prf-1}
  For integer $m$, let $\F_m$ be the uniform distribution over functions $\bit^m\to\bit$. \ 
  Then 
  \begin{align}
    \Pr_{\O\sim\F_{\ell+n}}\left[\forall x\in\bit^n,\left|\Exp_{k\sim\U_\ell}\O_k(x) - 1/2\right|\leq \epsilon\right] \geq 1-2\cdot 2^Ne^{-L\epsilon^2}. \ 
  \end{align}
\end{lemma}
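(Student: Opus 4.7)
The proof plan is a direct application of Hoeffding's inequality plus a union bound, exploiting the full independence of the random oracle's outputs.

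First, I would fix an arbitrary input $x\in\bit^n$ and consider the family of random variables $\{\O(k,x): k\in\bit^{\ell}\}$. Since $\O\sim\F_{\ell+n}$ is the uniform distribution over Boolean functions on $\ell+n$ bits, these $L:=2^{\ell}$ random variables are mutually independent, each being an unbiased $\{0,1\}$-valued coin. Consequently,
\[
\Exp_{k\sim\U_\ell}\O_k(x) \;=\; \frac{1}{L}\sum_{k\in\bit^\ell}\O(k,x)
\]
is the empirical mean of $L$ i.i.d.\ Bernoulli$(1/2)$ variables, whose true mean is $1/2$.

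Next, I would apply Hoeffding's inequality to obtain, for each fixed $x$,
\[
\Pr_{\O\sim\F_{\ell+n}}\!\left[\left|\Exp_{k\sim\U_\ell}\O_k(x)-\tfrac{1}{2}\right|>\epsilon\right] \;\leq\; 2e^{-2L\epsilon^2}.
\]
Then, a union bound over the $N=2^n$ possible inputs $x\in\bit^n$ yields
\[
\Pr_{\O\sim\F_{\ell+n}}\!\left[\exists x\in\bit^n,\ \left|\Exp_{k\sim\U_\ell}\O_k(x)-\tfrac{1}{2}\right|>\epsilon\right] \;\leq\; 2N e^{-2L\epsilon^2},
\]
which matches (up to constants in the exponent) the stated bound. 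Taking complements gives the claim.

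There is no real obstacle here; the only subtlety is just noticing that treating $\O$ as an $(\ell+n)$-bit random function is equivalent to drawing, independently for each key $k$, a uniformly random function $\O_k:\bit^n\to\bit$, which in turn makes the collection $\{\O(k,x)\}_{k,x}$ a family of $2^{\ell+n}$ mutually independent fair coins. Once this observation is made, the lemma reduces to a textbook concentration-plus-union-bound argument.
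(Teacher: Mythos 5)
Your proof is correct and follows essentially the same route as the paper: fix $x$, note the $L$ values $\{\O(k,x)\}_k$ are independent fair coins, apply Hoeffding, then union bound over the $2^n$ inputs. Your constants ($2Ne^{-2L\epsilon^2}$) are in fact slightly sharper than the stated bound, so the lemma follows immediately.
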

\begin{proof}
  By Hoeffding's inequality, for every $x$, 
  \begin{align}
    \Pr_{\O\sim\F_{\ell+n}}\left[\left|\Exp_{k\sim\U_\ell}\O_k(x) - 1/2\right|\leq \epsilon\right] \geq 1-2e^{-L\epsilon^2}. \ 
  \end{align}
  By a union bound, we conclude the proof. \ 
\end{proof}
Fix a function $\O$ that satisfies this likely event. \ 
Since for each $k\in\bit^\ell$, $\{\O_k(x):x\in\bit^n\}$ are independent, $\O_k$ under uniform $k$ is an $\epsilon$-biased random function. \
The following lemma shows that no pair of $T$-query algorithms can distinguish a slightly biased random function from a random function with advantage $O(T\sqrt\epsilon)$ for the difference in biases $\epsilon$. \  
\begin{lemma}\label{lem:qszk-prf-2}
  Let $\A$ and $\B$ be two $T$-query algorithms given access to an oralce sampled from either $\D_0$ or $\D_1$ such that for every $x\in\bit^\ell$, $|\Pr_{F\sim\D_0}[F(x)=1]-\Pr_{F\sim\D_1}[F(x)=1]|\leq \epsilon$. \ 
  Then
  \begin{align}
    \left|\Exp_{F\sim\D_0}\|\A^F-B^F\|_\tr - \Exp_{F\sim\D_1}\|\A^F-\B^F\|_{\tr}\right| \leq 16T\sqrt\epsilon. \ 
  \end{align}
\end{lemma}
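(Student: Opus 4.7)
The plan is to prove the bound by coupling the two oracle distributions so their disagreement pattern is sparse in expectation, and then applying a BBBV-style hybrid argument to both $\A$ and $\B$ in parallel.

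First I would construct a coupling $(F,G)$ of $(\D_0,\D_1)$ using the pointwise optimal coupling on each input: for each $x\in\bit^n$ independently, sample $(F(x),G(x))$ so that the marginals match those of $\D_0$ and $\D_1$ and $\Pr[F(x)\neq G(x)] \leq \epsilon$ (this is the context the previous lemma puts us in, where both distributions are products across inputs). Let $W := \{x : F(x)\neq G(x)\}$ be the random disagreement set. By construction the operator $Q := \Exp_{(F,G)}[\Id_W]$ is diagonal with entries $\Pr[x\in W] \leq \epsilon$, so $\|Q\|_{\op}\leq\epsilon$.

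Next, by the triangle inequality for trace distance applied inside the expectation,
\begin{align}
\Big|\Exp_{F\sim\D_0}\|\A^F-\B^F\|_{\tr}-\Exp_{F\sim\D_1}\|\A^F-\B^F\|_{\tr}\Big|
\leq \Exp_{(F,G)}\|\A^F-\A^G\|_{\tr}+\Exp_{(F,G)}\|\B^F-\B^G\|_{\tr}.
\end{align}
Thus it suffices to bound $\Exp_{(F,G)}\|\A^F-\A^G\|_{\tr}$ (and symmetrically for $\B$) by $O(T\sqrt{\epsilon})$.

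For this I would apply exactly the hybrid argument used inside the proof of \thm{llqsv-bfbd-qip2}: write $\A^F$ as an alternating sequence $U_T\O^F U_{T-1}\cdots\O^F U_0$, and form hybrids by swapping $\O^F$ for $\O^G$ one query at a time. Each swap changes the output state by at most $2\|\Id_W\ket{\psi_t}\|$, where $\ket{\psi_t}$ is the state after $t$ queries to $\O^F$ and is therefore \emph{independent of $W$}. Summing over the $T$ hybrids, applying Cauchy--Schwarz, and taking expectation over the coupling yields
\begin{align}
\Exp_{(F,G)}\big\|\A^F(\rho)-\A^G(\rho)\big\|_{\tr}\leq 4T\sqrt{\|Q\|_{\op}}\leq 4T\sqrt{\epsilon},
\end{align}
and analogously for $\B$. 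Adding the two contributions gives the claimed $O(T\sqrt{\epsilon})$ bound (with a constant $\leq 16$ after accounting for the pure-state-to-trace-distance conversion).

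The only genuine subtlety, rather than an obstacle, is the first step: we need a coordinatewise independent coupling so that $\|Q\|_{\op}$ collapses to the \emph{maximum} pointwise disagreement probability rather than something global like total variation distance on all of $\bit^n$. This is precisely why the previous lemma was stated in terms of $\epsilon$-biased random functions with independent coordinates: under such an assumption the optimal pointwise coupling is a valid joint coupling of $\D_0$ and $\D_1$. Everything else is a direct invocation of the same BBBV hybrid machinery already developed in \sec{llqsv-qip2}.
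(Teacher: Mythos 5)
Your triangle-inequality reduction to $\Exp_{(F,G)}\|\A^F-\A^G\|_{\tr}$ is fine, but the BBBV step applied to a \emph{direct} coupling of $\D_0$ and $\D_1$ has a genuine gap: the claim that the hybrid states $\ket{\psi_t}$ are ``independent of $W$'' is false. The states $\ket{\psi_t}$ are generated by querying $F$, and under any coupling the disagreement set $W=\{x:F(x)\neq G(x)\}$ is correlated with $F$; what your Cauchy--Schwarz step actually needs is that the diagonal operator $Q_F:=\Exp_{G|F}[\Id_W]$ has operator norm at most $\epsilon$, and this conditional quantity is not controlled by the marginal gap. Concretely, take $\Pr_{\D_0}[F(x)=1]=\epsilon$ and $\Pr_{\D_1}[F(x)=1]=0$ for all $x$: the maximal coupling sets $G\equiv 0$ and $W=F^{-1}(1)$, so conditioned on $F(x)=1$ the point $x$ lies in $W$ with probability $1$. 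A Grover-type $\A$ querying $F$ concentrates its amplitude on $W$, so $\|\Id_W\ket{\psi_t}\|\approx \min(1,2t\sqrt{\epsilon})$ and the hybrid sum $\sum_{t<T}\Exp\|\Id_W\ket{\psi_t}\|$ is $\Theta(T^2\sqrt{\epsilon})$, not $O(T\sqrt{\epsilon})$. So the argument as written does not deliver the stated bound (the final inequality may still be true in such examples, but not by this route).

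The paper's proof avoids exactly this issue by not hybridizing $F$ against $G$ directly. It decomposes each $\D_b$ pointwise as ``common part plus rare resampling'': a base function $F_0\sim\R$ (with $\R(x)=\min\{p_0(x),p_1(x)\}/(1-|\epsilon(x)|)$) together with an independent modification applied at each $x$ with probability $|\epsilon(x)|\leq\epsilon$ whose fresh value is drawn from $\T_b$ \emph{independently of $F_0$}. The BBBV hybrids are then run between $F_0$ and $F_0\oplus\Delta$, with the reference states generated by querying $F_0$ only; since, conditioned on $F_0$, each point is modified with probability at most $\epsilon$, one genuinely has $\Exp_{\Delta|F_0}[P_\Delta]\leq\epsilon\,\Id$, giving $8T\sqrt{\epsilon}$ per side and $16T\sqrt{\epsilon}$ after passing through $\R$ by the triangle inequality. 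Note that your maximal pointwise coupling already contains this structure (with probability $1-|\epsilon(x)|$ both functions take the common value, with probability $|\epsilon(x)|$ they split), so the repair is to insert the common draw $F_0$ as an intermediate oracle and do \emph{two} hybrid arguments anchored at $F_0$, rather than one hybrid from $F$ to $G$ anchored at $F$. (Both your argument and the paper's also quietly use that the distributions are product across inputs, which is the setting of the surrounding application.)
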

\begin{proof}
  For $b\in\bit$ and $x\in\bit^n$, let $p_b(x):=\Pr_{F\sim\D_b}[F(x)=1]$. \
  By definition, $\epsilon(x):=p_0(x)-p_1(x)$ satisfies $|\epsilon(x)|\leq\epsilon$. \ 
  Then we consider the following hybrid distribution $\R$ defined as follows: \ 
  For each $x$, let
  \begin{align}
    \Pr_{F\sim\R}[F(x)=1]
    = \frac{\min\{p_0(x),p_1(x)\}}{1-|\epsilon(x)|}. \ 
  \end{align}
  Also for $b\in\bit$, define the following difference distributions $\T_b$
  \begin{align}
    \Pr_{F\sim\T_b}[F(x)=1] = \left\{
    \begin{array}{ll}
      1 & \text{if } p_b(x) > p_{1-b}(x) \\ 
      0 & \text{if } p_b(x) < p_{1-b}(x) \\
      1/2 & \text{otherwise.}
    \end{array}
    \right.
  \end{align}
  For every $x\in\bit^n$, 
  \begin{align}
    \Pr_{F\sim\D_b}[F(x)=1] = (1-|\epsilon(x)|) \Pr_{F\sim\R}[F(x)=1] + |\epsilon(x)|\Pr_{F\sim\T_b}[F(x)=1]. 
  \end{align}
  That is, sampling $F\sim\D_b$ can be done using the following process: \ 
  For each $x\in\bit^n$, first sample $X\in\bit$ according to the distribution $\Pr[X=1]=\Pr_{F\sim\R}[F(x)=1]$. \ 
  Then with probability $1-|\epsilon(x)|$, outputs $X$; with probability $|\epsilon(x)|$, output $1$ with probability $\Pr_{F\sim\T_b}[F(x)=1]$. \ %
  Thus $\D_b$ can be viewed as the distribution $\R$ with modification on each $x$ with probability at most $|\epsilon(x)|$. \ 

  Thus the rest of the proof is devoted to showing that for $b\in\bit$, 
  \begin{align}
    \left|\Exp_{F\sim\D_b}\|\A^F-B^F\|_\tr - \Exp_{F\sim\R}\|\A^F-\B^F\|_{\tr}\right| \leq 8T\sqrt\epsilon. \ 
  \end{align} 
  For every $T$-query algorithm $\A$ given oracle access to $F\sim\D_b$ described by unitaries $U_0,\ldots,U_T$, we use the above sampling process to get $F=F_0\oplus\Delta$ where $F_0\sim\R$ and $\Delta$ is a biased random function where for each $x\in\bit^n$, $\Delta(x)=1$ with probability at most $|\epsilon(x)|$. \ 
  For $b\in\bit$, let the distribution of $\Delta$ be $\P_b$. \ 
  For every $F_0,\Delta:\bit^n\to\bit$, consider the following hybrids:
  \begin{align}\nonumber
    \ket{\phi_T^\Delta} &= U_T \O(F_0) U_{T-1} \O(F_0)\ldots \O(F_0) U_1 \O(F_0) U_0\ket{0}, \\\nonumber
    \ket{\phi_{T-1}^\Delta} &= U_T \O(F) U_{T-1} \O(F_0)\ldots \O(F_0) U_1 \O(F_0) U_0\ket{0}, \\\nonumber
    \vdots \\\nonumber
    \ket{\phi_{0}^\Delta} &= U_T \O(F) U_{T-1} \O(F)\ldots \O(F) U_1 \O(F_0) U_0\ket{0}, \\
    \ket{\phi_{0}^\Delta} &= U_T \O(F) U_{T-1} \O(F)\ldots \O(F) U_1 \O(F) U_0\ket{0}.
  \end{align}
  That is, the state $\ket{\phi_t^{x}}$ is obtained by making $t$ queries to $F_0$ followed by $T-t$ queries to $F$.
  Also let $\ket{\psi_t^0}=U_t \O(F_0)\ldots U_1 \O(F_0) U_0\ket{0}$, which is independent of $\Delta$.

  For every function $\Delta:\bit^n\to\bit$, let $P_\Delta$ denote the projector onto the subspace $\{x:\Delta(x)=1\}$. \ 
  By triangle inequality, 
  \begin{align}\nonumber
    \|\ket{\phi_0^\Delta} - \ket{\phi_T^\Delta}\| 
    &\leq \sum_{t=0}^{T-1} \|\O(F)\ket{\psi_{t}^0}-\O(F_0)\ket{\psi_t^0}\| \\
    &= 2\sum_{t=0}^{T-1} \|P_\Delta\ket{\psi_t^0}\|. 
  \end{align}
  Since $\Delta(x)=1$ with probability at most $\epsilon$, 
  $\Exp_\Delta P_\Delta\leq \epsilon\Id$, and 
  \begin{align}\label{eq:qszk-prf-2}\nonumber
    \Exp_{\Delta\sim\P_b,F_0\sim\R}\|\A^{F_0\oplus\Delta}-\A^{F_0}\|_{\tr}
    &\leq 2\Exp_{\Delta\sim\P_b,F_0\sim\R}\|\ket{\phi_0^\Delta}-\ket{\phi_T^\Delta}\| \\\nonumber
    &\leq 4T\Exp_{F_0\sim\R}\|Q\|_{\op}^{1/2} \\
    &\leq 4T\sqrt{\epsilon}. \ 
  \end{align}
  where $Q=\Exp_{\Delta} P_\Delta$. 
  Since the same bound holds for $\B$, 
  \begin{align}\nonumber
    &\left|\Exp_{F\sim\D_b}\|\A^F-\B^F\|_\tr - \Exp_{F\sim\R}\|\A^F-\B^F\|_{\tr}\right| \\ \nonumber
    &\qquad=\left|\Exp_{F\sim\R}\left(\Exp_{\Delta\sim\P_b}\|\A^{F\oplus\Delta}-\B^{F\oplus\Delta}\|_\tr - \|\A^F-\B^F\|_{\tr}\right)\right| \\ \nonumber
    &\qquad\leq\Exp_{F\sim\R}\left(\Exp_{\Delta\sim\P_b}\|\A^{F\oplus\Delta}-\A^{F}\|_\tr + \Exp_{\Delta\sim\P_b}\|\B^{F\oplus\Delta}-\B^F\|_{\tr}\right)\\
    &\qquad \leq 8T\sqrt{\epsilon}. \ 
  \end{align}
  The first inequality holds by triangle inequality, and the second holds from \eq{qszk-prf-2}.
\end{proof}

Combining \lem{qszk-prf-1} and \lem{qszk-prf-2}, we prove the following theorem. \ 
\begin{theorem}
  For $\ell\geq 2n$, $L=2^\ell$, and every pair of $T$-query algorithms $\A,\B$, 
  \begin{align}
    \Pr_{\O\sim\F_{\ell+n}}\left[\left|\Exp_{k\sim\U_\ell}\|\A^{\O,\O_k}-\B^{\O,\O_k}\|_\tr - \Exp_{H\sim\F_n}\|\A^{\O,H}-\B^{\O,H}\|_{\tr}\right| > 16TN^{-1/4}\right] \leq 2^{-\Omega(N)}, \ 
  \end{align}
  where $\U_\ell$ is the uniform distribution over $\bit^\ell$ and $\F_m$ is the uniform distribution over $\bit^m\to\bit$. \ 
\end{theorem}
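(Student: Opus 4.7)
The plan is to chain the two preceding lemmas together with the choice $\epsilon = N^{-1/2}$. The hypothesis $\ell \geq 2n$ gives $L = 2^\ell \geq N^2$, so $L\epsilon^2 \geq N$, and this is precisely what is needed to drive the failure probability in \lem{qszk-prf-1} down to $2^{-\Omega(N)}$: concretely, $2 \cdot 2^N e^{-L\epsilon^2} \leq 2 \cdot (2/e)^N$, which is $2^{-\Omega(N)}$ since $2/e < 1$. With the same $\epsilon$, the advantage bound in \lem{qszk-prf-2} reads $16T\sqrt{\epsilon} = 16TN^{-1/4}$, matching the theorem statement exactly.

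First I would invoke \lem{qszk-prf-1} to obtain the ``good event'' on $\O$: with probability at least $1 - 2^{-\Omega(N)}$ over $\O \sim \F_{\ell+n}$, every $x \in \bit^n$ satisfies $|\Exp_{k \sim \U_\ell}\O_k(x) - 1/2| \leq N^{-1/2}$. Conditioning on this event, I fix $\O$ and view the pair $(\A^{\O,\cdot}, \B^{\O,\cdot})$ as a pair of $T$-query algorithms acting on a single (test) oracle, since queries to the fixed function $\O$ are free. Let $\D_0$ denote the distribution of $\O_k$ under $k \sim \U_\ell$, and $\D_1 = \F_n$. Then for every $x$,
\begin{align}
\left|\Pr_{F \sim \D_0}[F(x) = 1] - \Pr_{F \sim \D_1}[F(x) = 1]\right| = \left|\Exp_{k \sim \U_\ell}\O_k(x) - 1/2\right| \leq N^{-1/2},
\end{align}
so the marginal-closeness hypothesis of \lem{qszk-prf-2} holds with $\epsilon = N^{-1/2}$. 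Applying that lemma gives
\begin{align}
\left|\Exp_{k \sim \U_\ell}\|\A^{\O,\O_k} - \B^{\O,\O_k}\|_\tr - \Exp_{H \sim \F_n}\|\A^{\O,H} - \B^{\O,H}\|_\tr\right| \leq 16TN^{-1/4},
\end{align}
and the theorem follows.

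The main subtlety I would want to double-check is whether \lem{qszk-prf-2} really applies when the test oracle $\O_k$ has correlated values across different inputs (the values $\O_k(x_1)$ and $\O_k(x_2)$ are two functions of the same random key $k$, not independent bits as in a uniform random function). The BBBV-style hybrid at the core of the lemma's proof only needs a coupling $F = F_0 \oplus \Delta$ for which $\Exp_\Delta P_\Delta \preceq \epsilon \, \Id$, i.e., the \emph{marginal} probability $\Pr[\Delta(x)=1]$ is at most $\epsilon$ for each $x$; no joint independence across $x$ is actually used in bounding the operator norm of $\Exp_\Delta P_\Delta$. So the lemma should go through as a black box, and we obtain the bound with probability $1 - 2^{-\Omega(N)}$ over $\O$ as claimed.
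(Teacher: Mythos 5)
Your proposal follows the paper's proof of this theorem essentially verbatim: set $\epsilon=N^{-1/2}$, use $\ell\geq 2n$ so that $L\epsilon^2=L/N\geq N$ and the failure probability in \lem{qszk-prf-1} becomes $2\cdot 2^N e^{-N}=2^{-\Omega(N)}$, then on the good event treat $\O$ as hardwired and apply \lem{qszk-prf-2} to the pair $(\{\O_k\}_{k\sim\U_\ell},\F_n)$ to get the advantage bound $16T\sqrt{\epsilon}=16TN^{-1/4}$.

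However, the subtlety you flagged in your last paragraph is genuine, and your resolution of it is too quick. What independence buys in the proof of \lem{qszk-prf-2} is not the operator-norm bound $\Exp_\Delta P_\Delta\preceq\epsilon\,\Id$ (you are right that that step only needs marginals of $\Delta$); it is the \emph{existence} of the coupling $F=F_0\oplus\Delta$ with $F_0$ drawn from a fixed reference distribution $\R$ and $\Pr[\Delta(x)=1]\leq|\epsilon(x)|$ for every $x$. The paper constructs that coupling coordinatewise, which is only valid when the two oracle distributions are product across inputs. Closeness of marginals alone does not suffice: take $\D_0$ to be the uniform mixture of the all-zero and all-one functions and $\D_1=\F_n$; every marginal agrees exactly ($\epsilon=0$), yet two queries distinguish them with constant advantage, so the lemma as literally stated fails for correlated ensembles. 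The ensemble you actually need it for, $\{\O_k\}_{k\sim\U_\ell}$ with $\O$ fixed, is exactly such a correlated, small-support distribution (support size $2^\ell\lll 2^N$), and no coupling of it to a product reference distribution can have expected Hamming weight $O(\epsilon N)=O(\sqrt N)$ — a typical uniform function is at Hamming distance $\Omega(N)$ from every one of the $2^\ell$ functions in the support. To be fair, the paper's own text makes the same leap (it asserts that for fixed $\O$ the values $\O_k(x)$ are independent across $x$ under random $k$, which is false), so relative to the paper you have reproduced the intended argument and, to your credit, located its weakest point; but the black-box appeal to \lem{qszk-prf-2} does not close it. A rigorous route would instead bound the advantage by the probability that the algorithms query $\O$ on the hidden key prefix $k$, via a BBBV/hidden-key hybrid over $k$ in the spirit of \lem{qszk-search} in the pseudorandom-unitary section, which yields an even stronger $O(T L^{-1/2})$-type bound.
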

\begin{proof}
  For $\ell\geq 2n$, $L=2^\ell$, and $\epsilon=N^{-1/2}$, $L\geq N^2$ and $2\cdot 2^Ne^{-L\epsilon^2}=2\cdot 2^Ne^{-L/N}\leq 2^{-\Omega(N)}$. \ 
  By \lem{qszk-prf-2}, 
  \begin{align}\nonumber
    &\Pr_{\O\sim\F_{\ell+n}}\left[\left|\Exp_{k\sim\U_\ell}\|\A^{\O,\O_k}-\B^{\O,\O_k}\|_\tr - \Exp_{H\sim\F_n}\|\A^{\O,H}-\B^{\O,H}\|_{\tr}\right| \leq 16TN^{-1/4}\right] \\\nonumber
    &\qquad\geq
    \Pr_{\O\sim\F_{\ell+n}}\left[\forall x\in\bit^n,~ \left|\Exp_k\O_k(x)-1/2\right| \leq N^{-1/2}\right] \\\nonumber
    &\qquad\geq 1-2\cdot 2^Ne^{-L/N} \\
    &\qquad\geq 1 - 2^{-\Omega(N)}.
  \end{align}
\end{proof}
By the Borel-Cantelli Lemma, every pair of $\poly(n)$-query algorithms breaks the conditions in \asm{prfa} infintely often with probability 0. \  
\begin{corollary}
  \asm{prfa} holds relative to a random oracle with probability 1. \ 
\end{corollary}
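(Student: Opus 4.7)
The plan is to promote the per-$n$ probabilistic bound in the preceding theorem into an almost-sure statement by combining the Borel--Cantelli lemma with a countable union bound over adversaries. Fix any pair of algorithms $(\A,\B)$ running in time $T(n) \le n^c$ for some constant $c$, and let $B_n(\A,\B)$ denote the event
\begin{align*}
B_n(\A,\B) := \left\{\O:~\left|\Exp_{k\sim\U_\ell}\|\A^{\O,\O_k}-\B^{\O,\O_k}\|_\tr - \Exp_{H\sim\F_n}\|\A^{\O,H}-\B^{\O,H}\|_{\tr}\right| > 16 T(n) N^{-1/4}\right\}
\end{align*}
at security parameter $n$. The theorem just proved gives $\Pr_{\O}[B_n(\A,\B)] \le 2^{-\Omega(N)} = 2^{-\Omega(2^n)}$, which is summable in $n$. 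Hence by Borel--Cantelli, $\Pr_\O\!\left[B_n(\A,\B) \text{ occurs i.o.}\right] = 0$; equivalently, with probability one over $\O$ there is an $n_0(\A,\B,\O)$ such that for all $n \ge n_0$, the distinguishing advantage of $(\A,\B)$ against $(\O_k,H)$ is at most $16 T(n) N^{-1/4} = 16 n^c 2^{-n/4}$, which is negligible in $n$.

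To finish, I would note that the class of (description, advice) pairs for uniform polynomial-time quantum algorithms is countable, so the collection of pairs $(\A,\B)$ to quantify over is countable. Taking a countable union of measure-zero events,
\begin{align*}
\Pr_\O\!\left[\exists (\A,\B):~B_n(\A,\B) \text{ occurs i.o.}\right] \;\le\; \sum_{(\A,\B)} \Pr_\O\!\left[B_n(\A,\B) \text{ i.o.}\right] \;=\; 0,
\end{align*}
so with probability one, \emph{every} pair of polynomial-time algorithms has negligible distinguishing advantage for all sufficiently large $n$. Setting $\kappa = n$, $m(\kappa)=n$ and $\ell(\kappa)=2n$, this is precisely the statement of \asm{prfa} for the construction $F_k(x) = \O(k,x)$, which is polynomial-time computable given oracle access to $\O$.

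There is no serious obstacle here; the only subtlety is the standard care needed to ensure that the quantifier over adversaries really can be moved outside the probability. This is handled by the observation that quantum polynomial-time algorithms have countably many descriptions (each is a uniform family of circuits specified by a finite Turing machine together with a polynomial time bound), so the countable union bound applies. The negligibility of the resulting advantage then follows because $T(n)$ is polynomially bounded while the threshold $16 T(n) 2^{-n/4}$ shrinks exponentially.
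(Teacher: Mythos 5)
Your proposal is correct and follows essentially the same route as the paper: the paper likewise applies the Borel--Cantelli lemma to the exponentially small per-$n$ failure probability from the preceding theorem, with the countable union over algorithm descriptions handled implicitly. Your explicit treatment of the quantifier exchange over countably many (uniform) adversaries is a fine elaboration of the same argument.
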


\subsection{Pseudorandom Unitaries}

Similarly, we assume that there exists a pseudorandom unitary and give a construction relative to an oracle. \ 
The construction is the same as the one given by Kretschmer \cite{Kre21a}, and we strengthen the hardness to $\QSZK$. \ 
\begin{assumption}[Pseudorandom unitary assumption]\label{asm:qszk-prua}
  Let $\kappa\in\mathbb N$ be the security parameter, $\ell,m$ be polynomially bounded functions. \ 
  There exists a family of keyed unitaries $\{U_k\in\mathbb U(2^{m(\kappa)}):k\in\bit^{\ell(\kappa)}\}$ such that the following conditions hold.
  \begin{itemize}
      \item There exists a polynomial-time quantum algorithm $G$ that implements $U_k$ on input $k\in\bit^{\ell(\kappa)}$, i.e., on input $k$ and quantum state $\ket{\psi}\in\mathbb S(2^{m(\kappa)})$, $G(k,\ket{\psi})=U_k\ket{\psi}$.
      
      \item For every pair of quantum algorithms $\A,\B$ that makes $\poly(\kappa)$ queries, it holds that 
      \begin{align}
      \left|
      \Exp_{k\sim\K_{\ell(\kappa)}}\|\A^{U_k}-\B^{U_k}\|_{\tr}
      - \Exp_{U\sim\Haar(2^{m(\kappa)})} \|\A^{U}-\B^{U}\|_{\tr}
      \right| \leq \negl(\kappa).
    \end{align}
  \end{itemize}
\end{assumption}

We give a construction relative to the following oracle: Let $\kappa=n$, $N=2^n$, $L=2^\ell$, and $\O=\sum_{k\in\bit^\ell}\proj{k}\otimes \O_k$ where $\O_k\in\mathbb U(N)$ is a Haar random unitary for every $k\in\bit^\ell$. \  
Also let $\mu=\Haar(N)^{L}$ denote the measure of $\O$. \ 
We show that every pair of algorithms $\A,\B$ distinguishes $\O,\O_k$ under uniform $k$ from $\O,U$ for Haar random $U$ must make exponentially many queries in $\ell$. \
This implies that for $\ell=\Omega(n)$, the construction satisfies the conditions in \asm{qszk-prua}. 

Let $(P_1=P,P_0=\Id-P)$ be any binary measurement and $f(U):=\tr(\A^UP)$ denote the probability that measuring $\A^U$'s output state yields an outcome 1. \ 
We apply the following lemma by Kretschmer \cite{Kre21a}. \ 
\begin{lemma}[{\cite[Lemma~18]{Kre21a}, paraphrased}]\label{lem:qszk-lipschitz}
  Let $\A^U$ be any $T$-query algorithm and $f(U)=\tr(\A^UP)$ for $0\leq P\leq\Id$. \
  Then $f(U)$ is a $2T$-Lipschitz function in the Frobenius norm. \ 
\end{lemma}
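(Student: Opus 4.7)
The plan is to prove this by a standard BBBV-style hybrid argument \cite{BBBV97}, combined with the elementary inequality $\|A\|_{\op} \leq \|A\|_F$.

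First, I would fix the form of the algorithm. Without loss of generality, $\A^U$ is a sequence of fixed unitaries $V_0,V_1,\ldots,V_T$ (acting on the work register plus an ancilla register) interleaved with $T$ query operations, each of which applies $U$ (or controlled-$U$) to a designated query subregister. Starting from a fixed state $\ket{0}$, the final pure state is $\ket{\psi^U}=V_T(U\otimes\Id)V_{T-1}\cdots V_1(U\otimes\Id)V_0\ket{0}$, and $f(U)=\bra{\psi^U}P\ket{\psi^U}$. It is enough to bound $|f(U)-f(U')|$ in terms of $\|U-U'\|_F$ for any two unitaries $U,U'$.

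Next, I would execute the hybrid argument. Define the hybrid state $\ket{\psi_t}$ obtained by running the algorithm where the first $t$ queries use $U$ and the remaining $T-t$ queries use $U'$. Since all $V_i$ are unitary and hence isometric in Euclidean norm, the telescoping bound
\begin{align}
\bigl\|\,\ket{\psi^U}-\ket{\psi^{U'}}\bigr\| \;\leq\; \sum_{t=1}^{T}\bigl\|\ket{\psi_t}-\ket{\psi_{t-1}}\bigr\| \;\leq\; T\,\bigl\|(U-U')\otimes\Id\bigr\|_{\op} \;=\; T\,\|U-U'\|_{\op}
\end{align}
holds, using $\|(A\otimes\Id)\ket{\phi}\|\leq\|A\|_{\op}$ for any unit vector $\ket{\phi}$.

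Then I would convert this Euclidean bound into a bound on $f$. For unit vectors $\ket{\psi},\ket{\phi}$ and any $P$ with $0\leq P\leq\Id$ (hence $\|P\|_{\op}\leq 1$), one has
\begin{align}
\bigl|\bra{\psi}P\ket{\psi}-\bra{\phi}P\ket{\phi}\bigr| \;\leq\; \bigl|\bra{\psi}P(\ket{\psi}-\ket{\phi})\bigr| + \bigl|(\bra{\psi}-\bra{\phi})P\ket{\phi}\bigr| \;\leq\; 2\,\bigl\|\ket{\psi}-\ket{\phi}\bigr\|.
\end{align}
Combining this with the hybrid bound yields $|f(U)-f(U')|\leq 2T\,\|U-U'\|_{\op}$. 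Finally, the standard matrix inequality $\|A\|_{\op}\leq\|A\|_F$ (the operator norm is the largest singular value, while the Frobenius norm is the $\ell_2$ norm of all singular values) gives $|f(U)-f(U')|\leq 2T\,\|U-U'\|_F$, which is the claim.

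No step here is genuinely hard; the only subtlety is the factor of $2$ in step three, and making sure the query model (with $U$ acting as $U\otimes\Id$ on the whole algorithm register) is handled so that the per-query Euclidean error is exactly $\|U-U'\|_{\op}$ rather than something larger. Controlled-$U$ queries can be treated identically since $\|\mathrm{c}\text{-}(U-U')\|_{\op}=\|U-U'\|_{\op}$.
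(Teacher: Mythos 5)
Your proof is correct and takes essentially the same route as the paper's (sketched) argument, which defers to Kretschmer's Lemma~18: a per-query hybrid/triangle-inequality bound showing the output differs by at most $T\|U-U'\|_{\op}$ per run, followed by the conversion $|\bra{\psi}P\ket{\psi}-\bra{\phi}P\ket{\phi}|\leq 2\|\ket{\psi}-\ket{\phi}\|$ and $\|U-U'\|_{\op}\leq\|U-U'\|_F$. Working directly with pure states and the operator norm instead of channels and the diamond norm is just a more explicit rendering of the same hybrid argument (and even gives the slightly stronger operator-norm Lipschitz bound).
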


We skip the proof of \lem{qszk-lipschitz} and only sketch the ideas. \
For a detailed proof, see \cite[Lemma~18]{Kre21a}. \ 
For unitaries $U,V$ satisfying $\|U-V\|_F=d$, their unitary channels have diamond distance at most $2d$. \ 
Thus by triangle inequality, $\|\A^U-\A^V\|_\diamond\leq 2Td = 2T \|U-V\|_F$ and thus $f(U)$ is $2T$-Lipschitz. \ 

For our purpose, we show that that for every $k$ and $U$ and pairs of $T$-query algorithms $\A$ and $\B$, the function
\begin{align}\label{eq:qszk-target}
  f(\O,k,U) := \|\A^{\O,\O_k}-\B^{\O,\O_k}\|_{\tr} - \|\A^{\O,U}-\B^{\O,U}\|_{\tr}
\end{align}
is a $2T$-Lipschitz function of $\O$. \ 
\begin{lemma}\label{lem:qszk-lipschitz-2}
  For every pair of $T$-query algorithms $\A,\B$, $k\in[L]$, and $U\in\mathbb U(N)$, $f(\O,k,U)$ as defined in \eq{qszk-target} is $8T$-Lipschitz. \ 
\end{lemma}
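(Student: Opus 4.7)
The plan is to bound $|f(\O,k,U) - f(\O',k,U)|$ by splitting the expression into the two trace-distance terms, then splitting each trace distance into an $\A$-contribution and a $\B$-contribution via the triangle inequality, and finally applying \lem{qszk-lipschitz} (or rather the diamond-norm statement that underlies it: a $T$-query algorithm is $2T$-Lipschitz in the Frobenius norm of its oracle) to each contribution separately.

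The key geometric observation is that $\O_k$ sits as a diagonal block inside $\O = \sum_j \proj{j}\otimes \O_j$, so for any $\O,\O'$ of the prescribed form,
\begin{align}
\|\O_k - \O_k'\|_F^2 \;\leq\; \sum_{j} \|\O_j - \O_j'\|_F^2 \;=\; \|\O - \O'\|_F^2 ,
\end{align}
i.e., changing $\O$ can only change the $k$-th block by at most the same Frobenius amount. Now think of $\A$ (and $\B$) in the first term as a $T$-query algorithm with access to the combined oracle $\O$ together with the second slot $\O_k$: if $\A$ uses $T_1$ of its queries on $\O$ and $T_2 = T-T_1$ on the second slot, then by a standard hybrid argument (the same one used in the sketch of \lem{qszk-lipschitz})
\begin{align}
\|\A^{\O,\O_k} - \A^{\O',\O_k'}\|_\diamond \;\leq\; 2T_1\|\O-\O'\|_F + 2T_2\|\O_k - \O_k'\|_F \;\leq\; 2T\|\O-\O'\|_F,
\end{align}
and similarly for $\B$. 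For the second term, the second slot is the fixed unitary $U$, so only the $T_1$ queries to $\O$ contribute and we get $\|\A^{\O,U}-\A^{\O',U}\|_\diamond \leq 2T_1 \|\O-\O'\|_F \leq 2T \|\O-\O'\|_F$, and again the same for $\B$.

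Combining these via the reverse triangle inequality for trace distance, each of the two terms in $f$ varies by at most $4T\|\O-\O'\|_F$, giving
\begin{align}
|f(\O,k,U) - f(\O',k,U)| \;\leq\; 8T\,\|\O-\O'\|_F .
\end{align}

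There is no real obstacle here; the only thing to be careful about is the bookkeeping that $\O_k$ is \emph{inside} $\O$ so that a single perturbation $\O \mapsto \O'$ affects both the first slot and (through its diagonal block) the second slot, and that this double counting is exactly absorbed by the split $T_1 + T_2 = T$ rather than by a factor of $\sqrt 2$.
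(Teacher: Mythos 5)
Your proposal is correct and takes essentially the same route as the paper: both decompose $f$ into its two trace-distance terms, charge $2T$-Lipschitzness to each of $\A$ and $\B$ (with the $k$-th block of $\O$ absorbed into the same Frobenius perturbation), and add up to get $4T$ per term and $8T$ overall. The only cosmetic difference is that the paper passes through the variational characterization $\|\rho-\sigma\|_{\tr}=\max_{0\leq P\leq\Id}\left(\tr(P\rho)-\tr(P\sigma)\right)$ and invokes \lem{qszk-lipschitz}, whereas you apply the reverse triangle inequality together with the underlying hybrid/diamond-norm bound directly.
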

\begin{proof}
  It suffices to show that for each $k$, $\|\A^{\O,\O_k}-\B^{\O,\O_k}\|_{\tr}$ is a Lipschitz function of $\O$, and a similar argument applies to the second term in \eq{qszk-target}. \ 
  To see why, we observe that 
  \begin{align}\label{eq:qszk-lipschitz-3}
    \|\A^{\O,\O_k}-\B^{\O,\O_k}\|_{\tr} = \max_{0\leq P\leq\Id} \left(\tr(\A^{\O,\O_k}P) - \tr(\B^{\O,\O_k}P) \right).
  \end{align}
  By \lem{qszk-lipschitz}, for every $P$, $\tr(\A^{\O,\O_k}P) - \tr(\B^{\O,\O_k}P)$ is $4T$-Lipschitz, and so is the left side of \eq{qszk-lipschitz-3}. \ 
  Since for every $k\in\bit^\ell$ and $U\in\mathbb U(N)$, $f(\O,k,U)$ is a summation of two $4T$-Lipschitz functions, it is $8T$-Lipschitz. \ 
\end{proof}
\lem{qszk-lipschitz-2} implies that the advantage we aim to upper bound, i.e., 
\begin{align}\nonumber
  \adv(\O) 
  :=& \Exp_{k\sim\U_\ell,U\sim\Haar(N)}f(\O,k,U) \\
  =& \Exp_{k\sim\U_\ell}\|\A^{\O,\O_k}-\B^{\O,\O_k}\|_{\tr} - \Exp_{U\sim\Haar(N)} \|\A^{\O,U}-\B^{\O,U}\|_{\tr}
\end{align}
is $8T$-Lipschitz. \ 

Next, we prove a search lower bound, formally stated in \lem{qszk-search}, for showing that the average $\Exp_\O\adv(\O)$ is bounded with overwhelming probability over choices of $\O$. \ 
We denote $F_0$ the all-zero $\ell$-bit Boolean function and $F_x$ denote the function whose only 1-preimage is $x$, i.e., $F_x(y)=\Id[x=y]$. \
The following lemma will be useful later. \ 
\begin{lemma}\label{lem:bbbv}
  Let $\A$ be a $T$-query algorithm, $\D$ be the uniform distribution over $\{F_x:x\in\bit^\ell\}$, and $F_0$ be the zero function. \
  Then 
  \begin{align}
    \Exp_{F\sim\D}\|\A^F-\A^{F_0}\|_{\tr} \leq 4TL^{-1/2}.
  \end{align}
\end{lemma}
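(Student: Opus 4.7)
The plan is to adapt the standard BBBV hybrid argument \cite{BBBV97} (already invoked in the proof of \thm{llqsv-bfbd-qip2} in this paper) to the present setting, where the ``marked'' oracle $F_x$ differs from $F_0$ only on the single input $x$. The advantage we get is that averaging over a uniformly random $x \in \bit^\ell$ gives the decisive $L^{-1/2}$ factor via Cauchy--Schwarz.

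First I would write $\A$ in canonical form: without loss of generality $\A^F = U_T O_F U_{T-1} O_F \cdots U_1 O_F U_0 \ket{0}$ for some fixed unitaries $U_0,\ldots,U_T$ that do not depend on the oracle, where $O_F:\ket{y}\ket{b}\mapsto\ket{y}\ket{b\oplus F(y)}$. Define the hybrid state $\ket{\psi_t} := U_t O_{F_0} U_{t-1} O_{F_0} \cdots U_1 O_{F_0} U_0 \ket{0}$, which is independent of $x$ because only $F_0$ is queried. Let $P_x := \proj{x}$ be the projector onto the single input where $F_x$ differs from $F_0$; a one-step comparison gives $\|O_{F_x}\ket{\psi_t} - O_{F_0}\ket{\psi_t}\| \leq 2\|P_x \ket{\psi_t}\|$. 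Chaining these via a telescoping hybrid over the $T$ queries and using that trace distance between pure states is bounded by twice the Euclidean distance yields
\begin{align}
\|\A^{F_x} - \A^{F_0}\|_{\tr} \;\leq\; 2\sum_{t=0}^{T-1} \bigl\| 2\, P_x \ket{\psi_t} \bigr\| \;=\; 4\sum_{t=0}^{T-1} \|P_x \ket{\psi_t}\|.
\end{align}

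Now I average over $x \sim \U_\ell$ (which is exactly $F \sim \D$). By Cauchy--Schwarz and the fact that $\Exp_{x \sim \U_\ell} P_x = L^{-1}\,\Id$ on the $\ell$-qubit query register,
\begin{align}
\Exp_x \|P_x \ket{\psi_t}\| \;\leq\; \bigl(\Exp_x \|P_x \ket{\psi_t}\|^2\bigr)^{1/2} \;=\; \bigl(\bra{\psi_t} \Exp_x P_x \ket{\psi_t}\bigr)^{1/2} \;\leq\; L^{-1/2}.
\end{align}
Summing over the $T$ hybrid steps and combining with the bound above gives $\Exp_{F\sim\D}\|\A^F - \A^{F_0}\|_{\tr} \leq 4T L^{-1/2}$, as required.

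No step here is genuinely hard; the only subtlety is to make sure the bound is linear in $T$ (not $T^2$), which requires moving the expectation over $x$ inside the sum over hybrid steps before applying Cauchy--Schwarz, and not the other way around. If one squared first and then applied the hybrid inequality one would pick up an extra factor of $T$ from the Cauchy--Schwarz on the sum. The order of operations — telescoping first, then average-then-Cauchy--Schwarz per step — is what delivers the tight $L^{-1/2}$ scaling that will feed into the subsequent pseudorandom unitary argument.
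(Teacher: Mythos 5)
Your proposal is correct and follows essentially the same route as the paper's proof: the same BBBV telescoping hybrid with states $\ket{\psi_t}$ queried only on $F_0$, the per-query bound $2\|P_x\ket{\psi_t}\|$, and then averaging over $x$ inside the sum and applying Cauchy--Schwarz with $\Exp_x P_x = \Id/L$ to get the $L^{-1/2}$ factor. The only cosmetic difference is your use of a bit-flip oracle where the paper uses a phase oracle, which changes nothing in the bound.
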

\begin{proof}
  Let the query algorithm be described with $T+1$ unitaries $U_{T},\ldots,U_0$, i.e., \ 
  \begin{align}
    \A^F\ket{0} = U_T \O(F)\ldots U_1 \O(F) U_0\ket{0}.
  \end{align}
  where $\O(F):=\sum_z (-1)^{F(z)}\proj{z}$ is the oracle. \ 
  Let $F_x$ denote the function whose only one-preimage is $x$. \ 
  
  Consider the sequence of states
  \begin{align}\nonumber
    \ket{\phi_T^x} &= U_T \O(F_0) U_{T-1} \O(F_0)\ldots \O(F_0) U_1 \O(F_0) U_0\ket{0}, \\\nonumber
    \ket{\phi_{T-1}^x} &= U_T \O(F_x) U_{T-1} \O(F_0)\ldots \O(F_0) U_1 \O(F_0) U_0\ket{0}, \\\nonumber
    \vdots \\\nonumber
    \ket{\phi_{0}^x} &= U_T \O(F_x) U_{T-1} \O(F_x)\ldots \O(F_x) U_1 \O(F_0) U_0\ket{0}, \\
    \ket{\phi_{0}^x} &= U_T \O(F_x) U_{T-1} \O(F_x)\ldots \O(F_x) U_1 \O(F_0) U_0\ket{0}.
  \end{align}
  That is, the state $\ket{\phi_t^{x}}$ is obtained by making $t$ queries to $F_0$ followed by $T-t$ queries to $F_x$.
  Also let $\ket{\psi_t^0}=U_t \O(F_0)\ldots U_1 \O(F_0) U_0\ket{0}$, which is independent of $x$.

  Let $P_x=\proj{x}$ denote the projection onto $\ket{x}$.
  By triangle inequality,
  \begin{align}\nonumber
    \|\ket{\phi_T^x}-\ket{\phi_0^x}\| 
    &\leq \sum_{t=0}^{T-1} \|\O(F_x)\ket{\psi_{t}^0}-\O(F_0)\ket{\psi_t^0}\| \\
    &\leq 2\sum_{t=0}^{T-1}\|P_x\ket{\psi_t^0}\|.
  \end{align}
  Note that 
  \begin{align}\nonumber
    \Exp_{F\sim\D}\|\A^F - \A^{F_0}\|_{\tr}
    &\leq 4\sum_{t=0}^{T-1} \Exp_{x\sim\U_n}\|P_x\ket{\psi_t^0}\| \\\nonumber
    &\leq 4\sum_{t=0}^{T-1} \left(\Exp_{x}\bra{\psi_t^0}P_x\ket{\psi_t^0}\right)^{1/2} \\
    &\leq 4T\cdot L^{-1/2}.
  \end{align}
  The first inequality holds by the fact that $\|\proj{\psi}-\proj{\phi}\|_{\tr}\leq 2\|\ket{\psi}-\ket{\phi}\|$ for two normalized states $\ket{\psi}$ and $\ket{\phi}$.
  The second inequality holds by Cauchy-Schwarz inequality, and the third holds since $\Exp_x P_x = \Id/L$.
\end{proof}

Next, we use \lem{bbbv} to prove the following search lower bound. \  %
\begin{lemma}\label{lem:qszk-search}
  Let $\A,\B$ be two $T$-query algorithms, $\D$ be the uniform distribution over $\{F_x:x\in\bit^\ell\}$, and $F_0$ be the zero function. \
  Then
  \begin{align}
    \left|\Exp_{F\sim\D}\|\A^F-\B^F\|_{\tr} - \|\A^{F_0}-\B^{F_0}\|_{\tr}\right| \leq 8TL^{-1/2}.
  \end{align}
\end{lemma}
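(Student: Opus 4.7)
The plan is to reduce this statement to \lem{bbbv} by a direct triangle-inequality argument on the trace norm. For every fixed $F$, the trace norm satisfies
\begin{align}
\|\A^F-\B^F\|_{\tr} \le \|\A^F-\A^{F_0}\|_{\tr} + \|\A^{F_0}-\B^{F_0}\|_{\tr} + \|\B^{F_0}-\B^F\|_{\tr},
\end{align}
and the mirror inequality with $F$ and $F_0$ swapped in the outer terms holds as well. Integrating these pointwise inequalities against the distribution $\D$ keeps the constant middle term $\|\A^{F_0}-\B^{F_0}\|_{\tr}$ intact while producing two expectations of the form $\Exp_{F\sim\D}\|\A^F-\A^{F_0}\|_{\tr}$ and $\Exp_{F\sim\D}\|\B^F-\B^{F_0}\|_{\tr}$.

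Each of those two expectations is then bounded by $4TL^{-1/2}$ via \lem{bbbv}, applied once to $\A$ and once to $\B$; the lemma is stated for a single $T$-query algorithm but its hypothesis on $\B$ is identical, so the same bound transfers without change. Summing the two contributions and using both directions of the triangle inequality yields the two-sided estimate
\begin{align}
\left|\Exp_{F\sim\D}\|\A^F-\B^F\|_{\tr} - \|\A^{F_0}-\B^{F_0}\|_{\tr}\right| \le 8TL^{-1/2},
\end{align}
as claimed. There is no real obstacle: the argument is a purely formal wrapper around the hybrid bound already worked out in \lem{bbbv}, with the only care being to apply the triangle inequality inside the expectation and to use the bound symmetrically for $\A$ and $\B$.
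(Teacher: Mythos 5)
Your proposal is correct and follows essentially the same route as the paper's proof: a pointwise (reverse) triangle inequality splitting the difference into $\|\A^F-\A^{F_0}\|_{\tr}+\|\B^F-\B^{F_0}\|_{\tr}$, applied in both directions, then averaging over $\D$ and invoking \lem{bbbv} once for $\A$ and once for $\B$ to get the $8TL^{-1/2}$ bound. No gaps.
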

\begin{proof}
  By \lem{bbbv},  
  \begin{align}
    \Exp_{F\sim\D}\|\A^F-\A^{F_0}\|_{\tr} \leq 4TL^{-1/2}, 
  \end{align}
  and a similar statement holds for $\B$.
  Then for each $F=F_x$ for $x\in\bit^\ell$,
  \begin{align}\nonumber
    \|\A^F-\B^F\|_{\tr} - \|\A^{F_0}-\B^{F_0}\|_{\tr}
    &\leq \|\A^F-\B^F-\A^{F_0}+\B^{F_0}\|_{\tr} \\
    &\leq \|\A^F-\A^{F_0}\|_{\tr} + \|\B^F-\B^{F_0}\|_{\tr}.
  \end{align}
  Since the above inequality is symmetric with repect to the exchange of $F$ and $F_0$, we also have 
  \begin{align}
    \|\A^{F_0}-\B^{F_0}\|_{\tr} - \|\A^{F}-\B^{F}\|_{\tr}
    &\leq \|\A^F-\A^{F_0}\|_{\tr} + \|\B^F-\B^{F_0}\|_{\tr}.
  \end{align}
  Thus we have 
  \begin{align}
    \Exp_{F\sim\D} \left[\left|\|\A^F-\B^F\|_{\tr} - \|\A^{F_0}-\B^{F_0}\|_{\tr}\right|\right] \leq 8TL^{-1/2}.
  \end{align}
  The rest of the proof follows from the triangle inequality. \ 
\end{proof}

With \lem{qszk-search}, we are ready to prove that the average of $\adv(\O)$ is exponentially small in $\ell$. \ 
Then we apply a concentration inequality to show $|\adv(\O)|$ is sharply concentrated around $|\Exp_{\O\sim\mu}\adv(\O)|$. \ 
\begin{lemma}
  For every pair of $T$-query algorithms $\A,\B$, $|\Exp_{\O\sim\mu}\adv(\O)|\leq 8TL^{-1/2}$.
\end{lemma}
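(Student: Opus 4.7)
The plan is to reduce the bound on $|\Exp_{\O \sim \mu}\adv(\O)|$ to a BBBV-style hybrid argument in the spirit of \lem{bbbv} and \lem{qszk-search}, using a coupling that realizes on the same probability space the two expectations appearing in $\adv$. I would sample independently $\O \sim \mu$, $V \sim \Haar(N)$, and $k \sim \U_\ell$, and for each $k$ write $\O^{(k\leftarrow V)}$ for the tuple obtained from $\O$ by replacing its $k$-th block with $V$. Since $V$ is Haar random and independent of the other blocks of $\O$, the tuple $\O^{(k\leftarrow V)}$ still has distribution $\mu$; moreover, under this joint sampling the pair $(\O^{(k\leftarrow V)}, V)$ has the same distribution as $(\tilde\O, \tilde\O_k)$ for $\tilde\O \sim \mu$ and $k \sim \U_\ell$. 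Thus
\begin{align}
    \Exp_{\O}\Exp_{k}\|\A^{\O, \O_k} - \B^{\O, \O_k}\|_\tr &= \Exp_{\O, V, k}\|\A^{\O^{(k\leftarrow V)}, V} - \B^{\O^{(k\leftarrow V)}, V}\|_\tr, \\
    \Exp_{\O}\Exp_{U}\|\A^{\O, U} - \B^{\O, U}\|_\tr &= \Exp_{\O, V}\|\A^{\O, V} - \B^{\O, V}\|_\tr,
\end{align}
so $\Exp_\O\adv(\O)$ equals the expected difference of these two terms.

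Next I would control this difference by a hybrid argument applied to the pair $(\O, V)$ versus $(\O^{(k\leftarrow V)}, V)$, which differ only on the $k$-th block of the first oracle. Let $P_k := \proj{k}\otimes\Id$. For any intermediate query state $\ket{\phi}$, the two oracles act identically outside the subspace projected by $P_k$, and their images on $\ket{\phi}$ therefore differ by at most $2\|P_k\ket{\phi}\|$ in Euclidean norm per query to the first oracle; queries to $V$ contribute nothing, since $V$ is shared between the two experiments. Summing over the $T$ queries of $\A$ in the standard way and using Cauchy--Schwarz with the identity $\Exp_k \|P_k\ket{\phi}\|^2 = 1/L$ gives, for every fixed $\O$ and $V$,
\begin{align}
    \Exp_{k}\|\A^{\O^{(k\leftarrow V)}, V} - \A^{\O, V}\|_\tr \leq 4T L^{-1/2},
\end{align}
together with the analogous bound for $\B$. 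Applying the triangle inequality inside the expectation and integrating over $\O$ and $V$ then yields $|\Exp_\O \adv(\O)| \leq 8T L^{-1/2}$.

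The main obstacle I anticipate is setting up the coupling cleanly, namely verifying that under the joint sampling of $(\O, V, k)$ the pair $(\O^{(k\leftarrow V)}, V)$ indeed realizes the first term in $\adv$ while $(\O, V)$ realizes the second, so that only queries to the first oracle drive the hybrid. Once that identification is in place, the hybrid portion is routine and essentially reproduces \lem{bbbv}, with the $\ell$-bit index $k$ playing the role of the hidden marked element and the block projector $P_k$ replacing the single-string projector $\proj{x}$.
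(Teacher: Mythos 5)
Your proposal is correct and gives the same bound with the same constants, but it is packaged differently from the paper. The paper proves the lemma by contraposition: it constructs simulator algorithms $\tilde\A,\tilde\B$ that, given query access to a Boolean function $F$ that is either identically zero or a point function $F_{k^*}$, sample $U_0,\ldots,U_L\sim\Haar(N)$ and plant $V=U_0$ into the block marked by $F$; this realizes exactly your coupling (independent $(\O,V)$ in the zero case, $(\tilde\O,\tilde\O_{k^*})$ in the marked case), and the bound then follows by invoking the already-proven search lemma for distinguishing $F_0$ from a random $F_x$ (the analogue of \lem{qszk-search}), whose proof is itself the BBBV hybrid of \lem{bbbv}. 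You instead inline that hybrid: you set up the one-block-replacement coupling $(\O,V,k)\mapsto\O^{(k\leftarrow V)}$ directly, verify the distributional identities, and run the hybrid argument with the block projector $P_k=\proj{k}\otimes\Id$ in place of the point projector $\proj{x}$, using $\Exp_k\|P_k\ket{\phi}\|^2=1/L$ and the triangle inequality for the difference of trace distances, exactly as in the paper's two auxiliary lemmas, to get $4TL^{-1/2}$ per algorithm and $8TL^{-1/2}$ total. What the paper's route buys is reuse of the function-search lemmas (and a cleaner statement of where the hardness lives); what your direct route buys is self-containedness and the avoidance of the simulation step in which each query to the planted oracle is implemented via a (coherently controlled) query to $F$, a detail the paper passes over quickly. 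Both arguments are purely information-theoretic in the query model, so the inefficiency of sampling $L+1$ Haar unitaries is harmless in either version.
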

\begin{proof}
  We prove the lemma by contrapositive.
  Assume that there exist two algorithms $\A,\B$
  \begin{align}
    \Exp_{\O\sim\mu}\adv(\O)
    = \Exp_{\O\sim\mu,k\sim\U_\ell}\|\A^{\O,\O_k}-\B^{\O,\O_k}\|_{\tr} - \Exp_{\O\sim\mu,U\sim\Haar(N)} \|\A^{\O,U}-\B^{\O,U}\|_{\tr} > 8TL^{-1/2}. \ 
  \end{align}
  Then we construct another pair of algorithms $\tilde\A,\tilde\B$ which break the bound in \lem{qszk-search}. \ 
  Given access to $F$ which is either the zero function or $F_x$ for uniform $x\in\bit^{\ell}$, $\tilde\A$ samples $U_0,\ldots,U_L\sim\Haar(N)$, and sets $V=U_0$ and $\O=\sum_{k\in[L]}\proj{k}\otimes V_k$, where $V_k=U_k$ if $F(k)=0$ and $V_k=U_0$ if $F(k)=1$. \ 
  The algorithm $\tilde\A$ runs $\A^{\O,V}$, and each query to $\O$ takes one query to $F$. \
  If $F=0$, then $\tilde\A^F$ outputs $\Exp_{\O\sim\mu,U\sim\Haar(N)}\proj{\O,U}\otimes\A^{\O,U}$ for independent $\O,U$;
  if $F=F_k$, then $\tilde\A^F$ outputs $\Exp_{\O\sim\mu,k\sim\U_\ell}\proj{\O,k}\otimes\A^{\O,\O_k}$. \ 
  The algorithm $\tilde\B$ runs exactly the same algorithm except that $\A$ is replaced with $\B$. \ 
  With $\tilde\A,\tilde\B$, we get a bound that violates \lem{qszk-search}. \

  Now we have shown that $\Exp_{\O\sim\mu}\adv(\O)\leq 8TL^{-1/2}$. \ 
  The same argument can be used to prove $-\Exp_{\O\sim\mu}\adv(\O)\leq 8TL^{-1/2}$, and thus we conclude the proof. \ 
\end{proof}
We have shown that $\adv(\O)$ is a $8T$-Lipschitz function and the absolute value of the average $|\Exp_{\O\sim\mu}\adv(\O)|$ is $O(TL^{-1/2})$. \ 
Now we apply the following concentration inequality to show that with overwhelming probability over choices of $\O$, $|\adv(\O)|\leq O(TL^{-1/2})$. \ 
\begin{theorem}[{\cite[Theorem~11]{Kre21a} and \cite[Theorem~5.17]{Mec19}}]\label{thm:qszk-concentration}
  For $N_1,\ldots,N_k\in\mathbb N$, let $X=\mathbb U(N_1)\oplus\ldots\oplus\mathbb U(N_k)$. \ 
  Let $\mu=\Haar(N_1)\times\ldots\Haar(N_k)$ be the product of the Haar measure on $X$. \ 
  Suppose that $f:X\to\mathbb R$ is $K$-Lipschitz in the Frobenius norm. \ 
  Then for every $t>0$, 
  \begin{align}
    \Pr_{\O\sim\mu}\left[f(\O) \geq \Exp_{\P\sim\mu}f(\P) + t \right] \leq \exp\left(-\frac{(N-2)t^2}{24K^2}\right),
  \end{align}
  where $N=\min\{N_1,\ldots,N_k\}$.
\end{theorem}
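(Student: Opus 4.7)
The plan is to reduce the claim to the single-factor Meckes concentration inequality for $\mathbb U(N)$, which is already stated in the cited reference, and then to tensorize it over the $k$ factors. The key point to get the correct constant (depending on $N=\min_i N_i$ rather than on the sum $\sum_i 1/(N_i-2)$ or on $k$ as a naive martingale bound would yield) is that the appropriate notion to tensorize is \emph{not} the concentration inequality itself, but the associated logarithmic Sobolev inequality (LSI) that underlies it.

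First, I would invoke the single-factor result: on $\mathbb U(N)$ equipped with the Haar measure and the Frobenius (Hilbert--Schmidt) distance, there is a log-Sobolev inequality with constant proportional to $N-2$, which by the Herbst argument yields the one-factor concentration bound
\begin{align*}
  \Pr_{U\sim\Haar(N)}\!\left[g(U)\geq \Exp g+t\right] \leq \exp\!\left(-\frac{(N-2)t^2}{24L^2}\right)
\end{align*}
for every $L$-Lipschitz $g$. This is precisely Meckes's Theorem~5.17, and is the only non-elementary input to the proof.

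Next, I would tensorize using the standard sub-additivity of the Dirichlet form / tensorization of LSI: if each factor measure $\mu_i$ on $\mathbb U(N_i)$ satisfies LSI with constant $\rho_i\propto N_i-2$, then the product measure $\mu=\mu_1\otimes\cdots\otimes\mu_k$ satisfies LSI on $X$ (with the product Frobenius metric) with constant $\rho=\min_i\rho_i\propto N-2$, \emph{independent of the number of factors $k$}. Applying the Herbst argument one more time to the product measure, for any $K$-Lipschitz $f:X\to\mathbb R$ in product Frobenius norm one gets the desired Gaussian-type tail bound with exponent $-(N-2)t^2/(24K^2)$. The constant $24$ is inherited from the single-factor normalisation, and is not optimised here.

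The main obstacle, and the reason a naive approach fails, is precisely this tensorization step. A direct martingale decomposition $f-\Exp f=\sum_{i=1}^k(M_i-M_{i-1})$ with $M_i=\Exp[f\mid \O_1,\ldots,\O_i]$, combined with the single-factor MGF bound applied to each coordinate (noting that the conditional restrictions remain $K$-Lipschitz by Jensen), would give an MGF bound of the form $\exp\bigl(C\lambda^2K^2\sum_i 1/(N_i-2)\bigr)\le\exp\bigl(Ck\lambda^2K^2/(N-2)\bigr)$, weakening the exponent by a factor of $k$. Avoiding this loss is exactly why one must pass through LSI (or, equivalently, through a transportation--entropy inequality of Talagrand type), where entropy is sub-additive while the Dirichlet form is additive, so the product inherits the LSI constant of the \emph{worst} factor rather than paying $k$ times. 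Once the LSI-tensorization viewpoint is in place, the rest of the proof is routine.
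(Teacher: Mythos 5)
Your proposal is correct and is essentially the proof given in the cited source: the paper itself imports this statement as a black box from Kretschmer and Meckes, and Meckes's argument for the product version is exactly your route---a log-Sobolev inequality for each factor, tensorization of the LSI (so the product inherits the worst single-factor constant, with no dependence on $k$), and the Herbst argument. The only cosmetic points are that the single-factor input is Meckes's Theorem~5.16 (Theorem~5.17 is already the product statement you are proving), and that the LSI for $\mathbb U(N)$ itself is not a direct curvature bound (the Ricci curvature degenerates along the central $\mathrm U(1)$ direction) but is deduced from the $\mathrm{SU}(N)$ case by a coset/coupling argument---both of which are covered by the citation you invoke.
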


Since $\O=\sum_{k\in\bit^\ell}\proj{k}\otimes\O_k$ where $\O_k\sim\Haar(N)$ for each $k\in\bit^\ell$, $\adv(\O)$ is a function of $\{\O_k\}_{k\in\bit^\ell}$, and thus the concentration inequality can be applied. \ 
\begin{theorem}
  For $\ell=\lceil n-\log n\rceil$ and $L=2^\ell$, $\Pr_{\O\sim\mu}[|\adv(\O)| \geq 16TL^{-1/2}]\leq 2^{-\Omega(n)}$. \ 
\end{theorem}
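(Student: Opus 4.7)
The plan is to combine the Lipschitz bound on $\adv(\O)$ from \lem{qszk-lipschitz-2} with the bound on the expectation $|\Exp_{\O\sim\mu}\adv(\O)|\leq 8TL^{-1/2}$ from the preceding lemma, and then invoke the Haar concentration inequality \thm{qszk-concentration}. Since $\O$ lives in the product space $\mathbb{U}(N)^{\oplus L}$ equipped with the product Haar measure $\mu$, and $\adv(\O)$ is an $8T$-Lipschitz function of $\O$ in the Frobenius norm, \thm{qszk-concentration} applies directly with $K=8T$ and with $N$ being the minimum dimension of the factors, which is just $2^n$.

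Concretely, I would set the deviation parameter $t:=8TL^{-1/2}$ and apply \thm{qszk-concentration} to both $\adv(\O)$ and $-\adv(\O)$. Combined with $|\Exp_{\O\sim\mu}\adv(\O)|\leq 8TL^{-1/2}$ and the triangle inequality, this yields
\begin{align}
\Pr_{\O\sim\mu}\left[|\adv(\O)|\geq 16TL^{-1/2}\right]
\leq 2\exp\left(-\frac{(N-2)\,(8TL^{-1/2})^2}{24\,(8T)^2}\right)
= 2\exp\left(-\frac{N-2}{24L}\right).
\end{align}
The $T$-dependence cancels cleanly in the exponent, which is the point of matching the tail parameter $t$ to the Lipschitz-times-radius scale.

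Finally, I would plug in $\ell=\lceil n-\log n\rceil$, so that $L=2^\ell\leq 2\cdot 2^n/n=2N/n$, hence $(N-2)/(24L)\geq (n-o(1))/48=\Omega(n)$. This yields the claimed bound $2^{-\Omega(n)}$.

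I do not anticipate any serious obstacle: the work has already been done in the two preceding lemmas (the Lipschitz property and the mean bound) and in \thm{qszk-concentration}. The only mild subtlety is ensuring that the parameter $t$ is chosen to make the $T$'s cancel while still producing a bound of the desired form $16TL^{-1/2}$, and ensuring that the choice $\ell=\lceil n-\log n\rceil$ makes $N/L$ grow linearly in $n$, which is exactly what is needed for the $\Omega(n)$ exponent.
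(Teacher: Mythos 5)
Your proposal is correct and follows essentially the same route as the paper: apply \thm{qszk-concentration} with Lipschitz constant $8T$ (from \lem{qszk-lipschitz-2}) and the mean bound $|\Exp_{\O\sim\mu}\adv(\O)|\leq 8TL^{-1/2}$, take a union bound over the two tails, and plug in $\ell=\lceil n-\log n\rceil$ so that $(N-2)/L=\Omega(n)$. Your exponent constant ($24L$) is in fact the natural one obtained by matching $t=8TL^{-1/2}$; the paper states a weaker constant ($1536L$), but this difference is immaterial to the $2^{-\Omega(n)}$ conclusion.
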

\begin{proof}
  By \thm{qszk-concentration}, 
  \begin{align}
    \Pr_{\O\sim\mu}[\adv(\O) \geq 16TL^{-1/2}] \leq \exp\left(-\frac{(N-2)}{1536L}\right).
  \end{align}
  For $\ell=\lceil n-\log n\rceil$, the upper bound is $2^{-\Omega(n)}$. \ 
  The same argument can be used to show the event $-\adv(\O)\geq 16TL^{-1/2}$ occurs with probability $2^{-\Omega(n)}$. \ 
  Applying a union bound, we conclude the proof. \ 
\end{proof}

Applying the Borel-Cantelli Lemma, for $\ell=\lceil n-\log n\rceil$, the event that every pair of $\poly(n)$-query algorithms breaks the conditions in \asm{qszk-prua} occurs infinitely often with probability 0. \ 
\begin{corollary}
  There exists an oracle relative to which \asm{qszk-prua} holds with probability 1. \ 
\end{corollary}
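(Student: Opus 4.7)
The plan is to derive the corollary directly from the previous theorem via the Borel--Cantelli Lemma, following the same template used for \asm{prfa}. First I would fix an enumeration $\{(\A_i, \B_i)\}_{i \in \mathbb{N}}$ of all pairs of uniform quantum algorithms together with polynomial query bounds $T_i(n)$; since Turing machine descriptions are countable this is well-defined. For each pair $(\A_i, \B_i)$ and each security parameter $n$, the previous theorem (with $\ell = \lceil n - \log n \rceil$, so $L = 2^\ell = \Theta(N/n)$) gives
\begin{align}
  \Pr_{\O_n \sim \mu_n}\left[|\adv_i(\O_n)| \geq 16 T_i(n) L^{-1/2}\right] \leq 2^{-\Omega(n)},
\end{align}
where the implicit constant may depend on $T_i$ but not on $n$, and $16 T_i(n) L^{-1/2} = \poly(n) \cdot 2^{-n/2} \cdot n^{1/2}$ is negligible in $n$.

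Next I would sum over $n$: since $\sum_{n \geq 1} 2^{-\Omega(n)} < \infty$, the Borel--Cantelli Lemma implies that for each fixed pair $(\A_i, \B_i)$,
\begin{align}
  \Pr_{\{\O_n\}}\!\left[|\adv_i(\O_n)| \geq 16 T_i(n) L^{-1/2} \text{ for infinitely many } n\right] = 0.
\end{align}
Equivalently, with probability $1$ over the sequence of oracles $\{\O_n\}_{n \in \mathbb{N}}$, there exists $n_0 = n_0(i)$ such that for all $n \geq n_0$ the pair $(\A_i, \B_i)$ has advantage at most $16 T_i(n) L^{-1/2}$, which is negligible in $n$. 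Taking a countable intersection over $i \in \mathbb{N}$ (a probability-$1$ event intersected with countably many probability-$1$ events remains probability $1$), I obtain that with probability $1$ over the random oracle, every pair of polynomial-query quantum algorithms satisfies the indistinguishability condition of \asm{qszk-prua} for all sufficiently large $\kappa$.

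The final step is to package the random oracle as a single object: I view the oracle as the infinite direct sum $\O = \bigoplus_n \O^{(n)}$ where $\O^{(n)} = \sum_{k \in \{0,1\}^{\ell(n)}} \proj{k} \otimes \O^{(n)}_k$ with each $\O^{(n)}_k$ drawn from $\Haar(2^n)$, and one observes that a query at security parameter $n$ only accesses $\O^{(n)}$, so the per-$n$ analyses compose correctly. Verifying the efficient-implementation condition of \asm{qszk-prua} is immediate, since the oracle itself implements $U_k$ given $k$. There is no substantive obstacle here: the whole argument is a countable Borel--Cantelli bookkeeping on top of the theorem proved immediately above, exactly parallel to the pseudorandom function case; the only mild care required is to make sure the implicit constant hidden in $2^{-\Omega(n)}$ may depend on the algorithm pair (which is fine, as Borel--Cantelli is applied separately for each pair before taking the countable intersection).
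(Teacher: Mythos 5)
Your proposal is correct and follows essentially the same route as the paper: the paper's own proof is a one-line application of the Borel--Cantelli Lemma to the concentration bound of the preceding theorem, exactly as you do, with the countable enumeration of algorithm pairs and the countable intersection of probability-$1$ events left implicit. Your write-up simply makes that bookkeeping explicit.
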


\subsection{Randomness Expansion}\label{sec:expansion}

Under the assumptions introduced in \sec{qszk-prf}, we can reduce the input entropy to the device for any certified randomness protocol, and meanwhile, the output remains close to uniform. \ 
To see why, we recall the definition of quantum-proof randomness extractor~\cite{KR11,DPVR12}. \ 
\begin{definition}[{Quantum-proof strong extractor~\cite[Definition~3.2]{DPVR12}}]\label{dfn:quantum-extractor}
  A function $\Ext:\bit^n\times\bit^d\to\bit^m$ is a quantum-proof $(k,\epsilon)$-stronger extractor with uniform seed if for all state $\rho_{XE}$ classical on $X$ with $H_{\min}(X|E)\geq k$, and for uniform $Y$, 
  \begin{align}
    \|\rho_{\Ext(X,Y)YE} - \rho_{U_m}\otimes\rho_Y\otimes\rho_E\|_{\tr}\leq \epsilon,
  \end{align}
  where $\rho_{U_m}$ is the uniform distribution over $\bit^m$.
\end{definition}

If alternatively we only have the guarantee that $H_{\min}^{\delta}(X|E)\geq k$, for sufficiently small but nonzero $\delta$, the extractor still outputs a distribution close enough to the uniform distribution conditioned on the side information $E$ by triangle inequality. \ 
\begin{lemma}[{\cite[Lemma~3.5]{DPVR12}}]
  If $\Ext:\bit^n\times\bit^d\to\bit^m$ is a quantum-proof $(k,\epsilon)$-strong extractor, then for any $\rho_{XE}$ and $\delta>0$ with $H_{\min}^\delta(X|E)_\rho\geq k$,
  \begin{align}
    \|\rho_{\Ext(X,Y)YE} - \rho_{U_m}\otimes\rho_Y\otimes\rho_E\|_{\tr}\leq \epsilon + 2\delta.
  \end{align}
\end{lemma}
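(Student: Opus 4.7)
The plan is to unfold the definition of smooth min-entropy and then reduce to the non-smooth case via the triangle inequality on trace distance. By definition, $H_{\min}^\delta(X|E)_\rho \geq k$ means there exists a state $\tilde\rho_{XE}$ (classical on $X$, since the smoothing can be taken within the class of classical-quantum states) with $\|\tilde\rho_{XE}-\rho_{XE}\|_{\tr}\leq\delta$ and $H_{\min}(X|E)_{\tilde\rho}\geq k$. So I would first fix such a smoothing $\tilde\rho$ and plan to apply the extractor hypothesis to it.

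Next, I would apply the extractor to both $\rho$ and $\tilde\rho$. Since the map $(X,E)\mapsto(\Ext(X,Y),Y,E)$ (with $Y$ independent and uniform) is a CPTP map on the joint system, trace distance is non-increasing, so
\begin{align}
\|\rho_{\Ext(X,Y)YE} - \tilde\rho_{\Ext(X,Y)YE}\|_{\tr} \leq \delta.
\end{align}
By \dfn{quantum-extractor} applied to $\tilde\rho$,
\begin{align}
\|\tilde\rho_{\Ext(X,Y)YE} - \rho_{U_m}\otimes\rho_Y\otimes\tilde\rho_E\|_{\tr} \leq \epsilon.
\end{align}

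Finally, to replace $\tilde\rho_E$ by $\rho_E$ on the right-hand side, I would use that the partial trace is contractive, so $\|\tilde\rho_E-\rho_E\|_{\tr}\leq\delta$, which gives
\begin{align}
\|\rho_{U_m}\otimes\rho_Y\otimes\tilde\rho_E - \rho_{U_m}\otimes\rho_Y\otimes\rho_E\|_{\tr} \leq \delta.
\end{align}
Three applications of the triangle inequality then yield the claimed bound $\epsilon+2\delta$. None of the steps presents a real obstacle; the only mild care is to ensure that the smoothed state can be taken classical-quantum on $X$ so that the extractor is well-defined on it, and to remember the second $\delta$ term coming from replacing $\tilde\rho_E$ with $\rho_E$ on the ideal side (this is the factor of $2$ in the conclusion, as opposed to a naive factor of $1$).
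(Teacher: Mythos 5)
Your argument is correct and is exactly the standard proof of this fact: the paper imports the lemma from \cite[Lemma~3.5]{DPVR12} without proof, and the cited source proves it by the same three-step triangle inequality (pass to a $\delta$-close state of min-entropy $k$, use contractivity of the CPTP extraction map, and pay a second $\delta$ for swapping $\tilde\rho_E$ back to $\rho_E$ on the ideal side). Your side remark about taking the smoothing state classical on $X$ is also handled correctly, since pinching $X$ in the computational basis neither increases the trace distance to $\rho$ nor decreases the conditional min-entropy.
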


If $E$ is restricted to a classical state, then we say an extractor that satisfies \dfn{quantum-extractor} a classical-proof $(k,\epsilon)$-strong extractor. \ 
Furthermore, if \dfn{quantum-extractor} holds in the special case that $E$ is empty, we call such an extractior a $(k,\epsilon)$-stronger extractor. \ 

In the same paper, De, Portmann, Vidick, and Renner \cite{DPVR12} showed that Trevisan's extractor \cite{Tre01} is a quantum-proof strong extractor. \ 
We take relevant definitions from \cite{DPVR12}. \
\begin{definition}[{Weak design~\cite[Definition~4.1]{DPVR12}}]\label{dfn:weak-design}
  For integer $d$, the family of sets $S_1,\ldots,S_m\subset[d]$ is a weak $(t,r)$-design if 
  \begin{enumerate}
  \item For every $i\in[m]$, $|S_i|=t$.
  \item For every $i\in[m]$, $\sum_{j=1}^{i-1}2^{|S_i\cap S_j|}\leq rm$.
  \end{enumerate}
\end{definition}
\begin{definition}[{Trevisan's extractor \cite[Definition~4.2]{DPVR12}\cite{Tre01}}]\label{dfn:trevisan}
  For a one-bit extractor $C:\bit^n\times\bit^t\to\bit$, which uses a (not necessarily uniform) seed of length $t$, and for a weak $(t,r)$-design, $S_1,\ldots,S_m\subset[d]$, an $m$-bit extractor $\Ext_C:\bit^n\times\bit^d\to\bit^m$ is defined as
  \begin{align}
    \Ext_C(x,y) := C(x,y_{S_1})\ldots C(x,y_{S_m}).
  \end{align}
\end{definition}

The integer $d$ in \dfn{weak-design} is the length of the seed of the extractor $\Ext_C$ and depends on $t$, the size of the seed of the 1-bit extractor $C$. \ 
The size of the seed will always be $d=\poly(\log n)$ if the error $\epsilon=\poly(1/n)$ \cite{DPVR12}. \ 
In the same paper, it is also proved that that a single-bit strong extractor is also a quantum-proof strong extractor. \ 
\begin{theorem}[{\cite[Theorem~4.7]{DPVR12}}]
  Let $C:\bit^n\to\bit^t\to\bit$ be a $(k,\epsilon)$-strong extractor with an $s$-bit seed---i.e., the seed needs at least $s$ bits of min-entropy---and $S_1,\ldots,S_m\subset[d]$ a weak $(t,r)$-design.
  Then the extractor defined in \dfn{trevisan} is a quantum proof $(k+rm+\log(1/\epsilon),6m\sqrt{\epsilon})$-strong extractor for any seed with min-entropy $d-(t-s-\log\frac{1}{3\sqrt\epsilon})$.
\end{theorem}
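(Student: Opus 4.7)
The plan is to adapt Trevisan's classical reconstruction paradigm to the quantum setting, following the approach of \cite{DPVR12}. Arguing contrapositively, I will suppose a quantum POVM $\D$ on the seed register $Y$ and the side-information register $E$ distinguishes $\rho_{\Ext_C(X,Y)YE}$ from $\rho_{U_m}\otimes\rho_Y\otimes\rho_E$ with trace distance exceeding $6m\sqrt{\epsilon}$, and derive a contradiction with $H_{\min}(X|E)\geq k+rm+\log(1/\epsilon)$.

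First I would apply a standard hybrid argument over the $m$ output bits of $\Ext_C$. Defining the $i$-th hybrid by replacing the first $i$ bits with uniform randomness, the telescoping bound on trace distances produces an index $i\in[m]$ for which $\D$, together with the bits output for indices $j>i$, yields a predictor for $C(X,Y_{S_i})$ whose bias over random guessing is at least $3\sqrt{\epsilon}$, conditioned on $E$, $Y$, and the already-revealed bits $B_{<i}:=C(X,Y_{S_1})\cdots C(X,Y_{S_{i-1}})$.

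Second, I would use the weak $(t,r)$-design property to argue that the predictor can be simulated while leaking only $rm$ classical bits about $X$. Since each prior bit $B_j$ with $j<i$ depends on $X$ only through $Y_{S_j}$, and the overlap with $S_i$ has size $|S_i\cap S_j|$, fixing $Y_{[d]\setminus S_i}$ and providing the values $\{B_j\}_{j<i}$ can be done by supplying a classical advice string of length $\sum_{j<i} 2^{|S_i\cap S_j|}\leq rm$ by \dfn{weak-design}. Averaging over the non-$S_i$ portion of the seed, there is a fixing under which a predictor for the single-bit extractor $C$ persists with bias at least $3\sqrt{\epsilon}-o(1)$, conditioned on extended side information $E'$ whose classical component leaks at most $rm+\log(1/\epsilon)$ bits beyond $E$. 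By the chain rule for smooth min-entropy, $H_{\min}(X|E')\geq k$, and by the seed-efficiency accounting in the theorem statement, the restriction $Y_{S_i}$ inherits at least $s$ bits of min-entropy for a $1-\epsilon$ fraction of seed fixings. The quantum-proofness of $C$ then forces $C(X,Y_{S_i})$ to be $\epsilon$-close to uniform given $(E',Y_{S_i})$, which is incompatible with a predictor of bias $3\sqrt{\epsilon}$ (the optimal distinguishing advantage equals half the trace distance), yielding the desired contradiction.

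The main obstacle is the second step: in the classical proof the advice string can simply be handed to the predictor, but in the quantum setting one must verify that appending a classical advice register to the quantum side information does not destroy the quantum-proof property of $C$. This is handled by observing that quantum-proofness is closed under classical conditioning---concretely, one may treat the advice as part of the classical register in the definition of $H_{\min}(X|E')$, so that the extractor's guarantee applies at parameter $k$ after the min-entropy drop induced by the leakage. A secondary subtlety is that the hybrid argument requires combining a quantum distinguisher with classical bits $B_{<i}$ sampled along the way; this is reconciled by viewing the hybrid distributions as cq-states and using monotonicity of trace distance under the CPTP map that produces those bits from $X$.
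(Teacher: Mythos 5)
This statement is not proved in the paper at all: it is imported verbatim from De, Portmann, Vidick, and Renner \cite{DPVR12} (their Theorem~4.7), so there is no in-paper argument to compare yours against; the relevant comparison is with the proof in \cite{DPVR12}. Your outline does capture the right skeleton of that proof---the quantum reconstruction paradigm: a hybrid argument turning a distinguisher for the $m$-bit output into a predictor for a single bit $C(X,Y_{S_i})$, followed by the weak-design accounting that the previously output bits can be supplied as at most $\sum_{j<i}2^{|S_i\cap S_j|}\leq rm$ bits of classical advice once the seed outside $S_i$ is fixed.

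The genuine gap is in your final step, where you invoke ``the quantum-proofness of $C$.'' The hypothesis of the theorem only gives that $C$ is a $(k,\epsilon)$-strong extractor in the classical (in fact side-information-free) sense; nothing you have argued makes it quantum-proof, and the advice-conditioning observation you offer in your ``main obstacle'' paragraph does not supply this. The missing ingredient is precisely the K\"onig--Terhal-type result (Theorem~4.5 in \cite{DPVR12}, alluded to in the paper's sentence ``a single-bit strong extractor is also a quantum-proof strong extractor''): any classical one-bit $(k,\epsilon)$-strong extractor is automatically a quantum-proof $(k+\log(1/\epsilon),\,3\sqrt{\epsilon})$-strong extractor. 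That upgrade is exactly where the additive $\log(1/\epsilon)$ in the min-entropy requirement and the $\sqrt{\epsilon}$ degradation (hence $6m\sqrt{\epsilon}$ overall) come from. As written, your accounting is internally inconsistent: you treat $C$ as quantum-proof with error $\epsilon$ yet attribute the $\log(1/\epsilon)$ term to classical leakage and still claim the $6m\sqrt{\epsilon}$ final error; if $C$ really were quantum-proof at error $\epsilon$ you would land on the parameters of \cite[Theorem~4.6]{DPVR12} (entropy $k+rm$, error $O(m\epsilon)$), not those of Theorem~4.7. A complete proof must first apply the one-bit classical-to-quantum-proof conversion, and only then run the reconstruction argument against the quantum side information augmented by the $rm$ advice bits; the non-uniform-seed condition $d-(t-s-\log\frac{1}{3\sqrt{\epsilon}})$ likewise needs a careful seed-entropy bookkeeping for the restriction $Y_{S_i}$ rather than the informal ``$1-\epsilon$ fraction of fixings'' claim you give.
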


Furthermore, there exists a $(k,\epsilon)$-strong extractor by Raz, Reingold and Vadhan \cite{RRV99}, and the input seed has length $\poly\log(n)$.
\begin{theorem}[{\cite[Proposition~5.2]{DPVR12}}]
  For any $\epsilon>0$ and $n\in\mathbb N$, there exists a $(k,\epsilon)$-strong extractor with uniform seed $\Ext_{n,\epsilon}:\bit^n\times\bit^t\to\bit$ with $t=O(\log(n/\epsilon))$ and $k=3\log(1/\epsilon)$. 
\end{theorem}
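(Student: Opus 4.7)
The plan is to establish this classical one-bit strong extractor via the probabilistic method, showing that a uniformly random function $C:\bit^n\times\bit^t\to\bit$ is a $(k,\epsilon)$-strong extractor with positive probability for the claimed parameters. First I would reduce to flat sources: any distribution $X$ on $\bit^n$ with $H_{\min}(X)\geq k$ is a convex combination of flat distributions on sets of size $2^k$, and because total variation distance is convex, it suffices to verify the strong-extractor condition for every flat source $X=U_S$ with $|S|=2^k$. Unpacking the definition for $m=1$, the condition becomes
\begin{align}
\Exp_{y\sim\bit^t}\left| \Pr_{x\in S}[C(x,y)=0]-\tfrac{1}{2}\right| \leq \epsilon.
\end{align}

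Next, I would dualize the $\ell_1$ norm using sign vectors. For every flat source $S$ and every function $W:\bit^t\to\{-1,+1\}$, consider
\begin{align}
Z_{S,W} \;:=\; \frac{1}{2^t}\sum_{y\in\bit^t} W(y)\cdot \frac{1}{2^k}\sum_{x\in S}\bigl(1-2C(x,y)\bigr),
\end{align}
so that the quantity to bound equals $\sup_W Z_{S,W}$. For random $C$, the random variable $Z_{S,W}$ is a centered sum of $2^{t+k}$ independent variables each bounded in $[-1/2^{t+k},\,1/2^{t+k}]$, so Hoeffding's inequality gives
\begin{align}
\Pr_C\left[\,|Z_{S,W}|\geq 2\epsilon\,\right] \;\leq\; 2\exp\!\bigl(-\Omega(\epsilon^2\cdot 2^{t+k})\bigr).
\end{align}

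Then I would take a union bound over the $\binom{2^n}{2^k}\leq 2^{n\cdot 2^k}$ flat sources and the $2^{2^t}$ sign patterns $W$, obtaining total failure probability at most $2\cdot 2^{\,n\cdot 2^k+2^t}\exp(-\Omega(\epsilon^2\cdot 2^{t+k}))$. Substituting $k=3\log(1/\epsilon)$ gives $2^k=\epsilon^{-3}$, so the Hoeffding exponent becomes $\Omega(\epsilon^{-1}\cdot 2^t)$ while the union-bound cost is $n\epsilon^{-3}+2^t$. Choosing $t=C\log(n/\epsilon)$ with $C$ a sufficiently large absolute constant makes the exponent strictly dominate the union-bound cost, so a random $C$ succeeds with positive probability, proving existence.

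The step requiring the most bookkeeping will be balancing the constants in the final union bound to verify that $k=3\log(1/\epsilon)$, rather than something larger, actually suffices: the factor $\epsilon^2$ in the Hoeffding exponent costs roughly $2\log(1/\epsilon)$ bits of entropy, and the combinatorial cost of enumerating flat sources adds essentially one more $\log(1/\epsilon)$ overhead, giving the advertised constant $3$. If an explicit (rather than probabilistic) construction is desired, one can instead invoke the Raz--Reingold--Vadhan one-bit extractor built from pairwise-independent hash functions composed with a good binary error-correcting code, which achieves the same parameter regime and which the paper's subsequent Trevisan-based theorem then lifts to a quantum-proof $m$-bit extractor with polylogarithmic seed length.
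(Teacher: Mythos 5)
Your argument is correct for the statement as literally written, but it takes a genuinely different route from the paper: the paper does not prove this result at all, it simply cites it (Proposition~5.2 of De--Portmann--Vidick--Renner, which in turn packages the explicit Raz--Reingold--Vadhan one-bit extractor), whereas you give a probabilistic-existence proof. Your chain of steps is sound: the reduction to flat sources by convexity, the identification of the strong-extractor condition with $\Exp_{y}\left|\Pr_{x\in S}[C(x,y)=0]-\tfrac{1}{2}\right|\leq\epsilon$, the sign-vector dualization (note a harmless factor-of-$2$ slip: with your normalization $\sup_W Z_{S,W}$ is \emph{twice} the quantity to bound, which your $2\epsilon$ Hoeffding threshold silently absorbs), and the union bound over $2^{n2^k}$ flat sources and $2^{2^t}$ sign patterns all go through, and the parameters close: the binding constraint that fixes $k$ is actually the $2^{2^t}$ sign patterns, which force $\epsilon^2 2^{k}\gtrsim 1$, i.e.\ $k\geq 2\log(1/\epsilon)+O(1)$, while the flat-source enumeration cost $n2^k$ is paid for by $t=O(\log(n/\epsilon))$; so $k=3\log(1/\epsilon)$ is comfortable, though your bookkeeping attributes the third $\log(1/\epsilon)$ to the wrong term. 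What your approach does \emph{not} buy is explicitness: the extractor produced is only shown to exist, while in this paper $\Ext$ must be computed by the verifier inside the randomness-expansion protocol, and the subsequent Trevisan-based lifting to a quantum-proof $m$-bit extractor presupposes an efficiently computable one-bit extractor. So your closing remark---fall back on the explicit RRV99 construction---is not an optional refinement but the step that actually matches what the cited proposition provides and what the application requires; the probabilistic argument is best viewed as a self-contained sanity check that the parameter regime $t=O(\log(n/\epsilon))$, $k=3\log(1/\epsilon)$ is achievable.
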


Combining the fact that there exists a quantum-proof strong extractor and \asm{prfa}, we prove the following theorem.
\begin{theorem}\label{thm:qszk-main}
  Assume that \asm{llha} and \asm{prfa} holds. \ 
  There exists a randomness expansion protocol for a device which solve $\LXEB_{b,k}$ for $99\%$ of the given circuits and achieves polynomial expansion for $b=1.02$ and $k=O(n^2)$, where $n$ is the number of qubits each circuit acts on. \
\end{theorem}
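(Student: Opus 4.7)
The plan is to combine the $\LLHA$-based entropy accumulation protocol of \fig{llqsv} with pseudorandom challenge circuits under $\asm{prfa}$ and a quantum-proof seeded extractor. Fix parameters $b = 1.02$ (so $\delta = 0.02$), $k = O(n^2)$, number of rounds $m = n^c$ for a constant $c \geq 2$, and circuit-update probability $\gamma = \Theta((\log n)/m)$, so that with overwhelming probability only $t = O(\log n)$ distinct challenge circuits are used. In the idealized ``truly random'' world, \thm{llqsv-eat-informal} under $\LLHA_{\beta n}$ already certifies, conditioned on non-aborting, that the output has smooth conditional min-entropy $H_{\min}^\epsilon(Z \mid C) \geq \Omega(nm)$.

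Next I would derandomize the challenge circuits using $\asm{prfa}$: for each of the $t$ epochs, sample an $O(n)$-bit key $\kappa_j$ and generate the epoch's circuit from $F_{\kappa_j}$ (or, equivalently, a $\QSZK$-secure pseudorandom unitary under $\asm{qszk-prua}$). Then I would apply the quantum-proof Trevisan extractor $\Ext$ of De--Portmann--Vidick--Renner~\cite{DPVR12}, with a fresh uniform seed $Y$ of length $\poly\log(nm) = \poly\log n$, to the raw output $Z$. In the truly-random-circuit world, $\Ext(Z,Y)$ has trace distance $2^{-\Omega(n)}$ from the uniform distribution on $\Omega(nm)$ bits tensored with the joint side information $(C, Y, E)$.

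The core step is to transfer this guarantee to the pseudorandom-circuit world via $\asm{prfa}$. Suppose for contradiction that the extractor output, when the challenges are pseudorandom, has non-negligible trace distance from the corresponding uniform-times-side-information state. Define two algorithms $(\A, \B)$, each given oracle access to either the PRF oracle $F_\kappa$ or a truly random function: $\A^F$ internally runs the full protocol (device interaction, spot-checking, conditioning on acceptance) using $F$ to generate the challenges, applies $\Ext$ with a fresh seed, and outputs the resulting classical-quantum state together with Eve's register; $\B^F$ behaves identically up to the extractor step but then discards $\Ext(Z,Y)$ and substitutes a freshly sampled uniform string. By the random-circuit analysis, $\Exp_{F \sim \F_m}\|\A^F - \B^F\|_{\tr} \leq 2^{-\Omega(n)}$, while by the contradiction hypothesis $\Exp_{\kappa}\|\A^{F_\kappa} - \B^{F_\kappa}\|_{\tr}$ is non-negligible. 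This gap is exactly a $\QSZK$-distinguisher as in \dfn{qszk-distinguishable}, contradicting $\asm{prfa}$. Tallying the randomness: the verifier consumes $t \cdot O(n) + \poly\log n = O(n \log n)$ uniform bits for PRF keys and the extractor seed, and produces $\Omega(nm) = \Omega(n^{c+1})$ bits within $2^{-\Omega(n)}$ of uniform conditioned on Eve's side information, yielding polynomial net expansion.

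The main obstacle is the hybrid reduction in the derandomization step: one must phrase ``quality of the extractor output'' as a single trace-distance gap realizable by a pair $(\A,\B)$ of the form required by \dfn{qszk-distinguishable}, rather than as the success probability of a single decision algorithm (for which a weaker $\BQP$-style PRF would suffice). The subtlety is that post-selection on the verifier's acceptance event and averaging over the extractor seed must both be absorbed into $\A$ and $\B$ as efficient classical post-processings, and that the adversary's quantum side register $E$, which may depend on the PRF keys only through the transcript, is carried through faithfully on both sides so that the $\QSZK$-distinguishing advantage is precisely the statistical deviation from ideal randomness. Once this packaging is done correctly, the conclusion follows by contraposition against $\asm{prfa}$.
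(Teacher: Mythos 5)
Your proposal follows essentially the same route as the paper: run the LLHA-based accumulation protocol of \fig{llqsv} with $O(\log n)$ epochs, replace the truly random challenge circuits by outputs of the $\QSZK$-secure PRF from \asm{prfa}, feed the accepted transcript into the quantum-proof Trevisan extractor of De--Portmann--Vidick--Renner, and transfer the uniformity guarantee from the random-function world to the PRF world by packaging any non-negligible deviation as a pair $(\A,\B)$ realizing a trace-distance gap, contradicting \dfn{qszk-distinguishable}-security --- which is exactly the argument the paper gives (tersely in the proof, in detail in its overview), with the same randomness accounting ($\poly(n)\cdot O(\log n)$ seed bits in, $\Omega(nm)$ bits out, $m$ a sufficiently large polynomial). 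The only caveat is that your invocation of an entangled side register $E$ goes slightly beyond what the LLHA-based analysis (which conditions only on $C$) supports, but this is not needed for the theorem as stated and does not affect the argument.
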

\begin{proof}
In our $m$-round randomness expansion protocol in \fig{llqsv} using a random circuit, we change the circuits at least $\gamma m=O(\log n)$ times, and each can be replaced with pseudorandom circuits that can be computed using $\ell=\poly(n)$ bits as input (used for generating the key of the pseudorandom function). \ 
The protocol generates a distribution $(\epsilon+2\epsilon'+\negl(n))$-close to an $\Omega(m)$-bit uniform distribution for devices solving $\LXEB_{1+\delta,k}$ for constant $\delta$ and $k=O(n^2)$, using an input of $O(\ell\log n)$ uniformly random bits. \ 
Setting $m=\ell^c$ for $c\geq 2$, the protocol accumulates net entropy.
\end{proof}
Applying the same reasoning as in the proof of \thm{qszk-main}, we conclude that \asm{qszk-prua} implies that there exists a randomness expansion protocol. \ 
\begin{theorem}
  Assume that \asm{qszk-prua} holds. \ 
  There exists a randomness expansion protocol for a device that is given oracle access to a pseudorandom unitary and solves $\LXEB_{b,k}$ for $99\%$ of the given circuits and achieves polynomial expansion for $b=1.02$ and $k=O(n^2)$, where $n$ is the number of qubits each circuit acts on. \ 
\end{theorem}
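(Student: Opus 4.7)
The plan is to mirror the proof of \thm{qszk-main}, but with \asm{qszk-prua} playing the role of \asm{prfa}: we will take the entropy accumulation protocol of \fig{full}, which was analyzed for a truly Haar-random circuit, and replace each fresh Haar-random unitary by a pseudorandom unitary $U_k$ generated from an $O(n)$-bit key $k$. The goal is to argue that, conditioned on Eve's side information, the distribution over the extracted bits changes by at most a negligible amount in trace distance, and then to verify that the seed budget is strictly smaller than the output length.

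First I would set up the ``ideal'' experiment as follows. Run \fig{full} with Haar-random $C_1,\ldots,C_t$ (where $t=\gamma m=O(\log n)$ is the number of epochs). By \thm{eat-general-informal} (i.e.\ the analysis in \sec{general-ea}), conditioned on the non-abort event $\Omega$, the raw transcript $Z=Z_1\ldots Z_m$ satisfies $H_{\min}^\epsilon(Z\mid C^m T^m E)_{\rho\mid\Omega}\geq n(0.99\delta m-O(\sqrt m))$ for any device solving $\LXEB_{1+\delta,k}$ with $k=O(n^2)$; taking $\delta=0.02$ and $m=\poly(n)$ this is $\Omega(mn)$. Feeding $Z$ together with a fresh $\poly\log(n)$-bit uniform seed into a Trevisan-style quantum-proof strong extractor (as in \sec{expansion}) then yields an $\Omega(mn)$-bit string whose joint state with $C^mT^mE$ is $(\epsilon+2\epsilon'+\negl(n))$-close to $U_{\Omega(mn)}\otimes\rho_{C^mT^mE}$.

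Next I would define the ``real'' experiment, which is identical except that each Haar-random $C_j$ is replaced by $U_{k_j}$ for an independent uniform key $k_j\in\{0,1\}^{\ell(n)}$ with $\ell(n)=O(n)$. The key step is to argue, via a hybrid over the $t$ epochs, that the joint state of (extractor output, seed, $E$) in the real experiment is $\negl(n)$-close in trace distance to that in the ideal experiment. For each hybrid I would invoke \asm{qszk-prua}: if some pair of adjacent hybrids differed by more than $1/\poly(n)$, then by purifying Eve and the device into $\A$ and $\B$ of \dfn{qszk-distinguishable}, one obtains a pair of efficient algorithms that $\QSZK$-distinguishes $U_k$ from a Haar-random unitary, contradicting PRUA. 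This is precisely the reason we strengthened the standard $\BQP$-indistinguishability to $\QSZK$-indistinguishability: the distinguishing ``advantage'' here is the change in trace distance of a final quantum state, not merely the change in a single bit's acceptance probability.

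Finally, I would check the accounting. The total number of uniformly random bits fed into the protocol is $t\cdot\ell(n)+\poly\log(n)=O(n\log n)+\poly\log(n)$, while the extracted output is $\Omega(mn)$ nearly-uniform bits; for any $m=\poly(n)$ with sufficiently large exponent this gives polynomial expansion, and the soundness of the extraction is preserved because the extractor is quantum-proof and the switch from Haar to pseudorandom unitaries costs only $\negl(n)$ in trace distance conditioned on $E$. I expect the main obstacle to be the hybrid step: one has to be careful that the distinguisher derived from a pair of adjacent hybrids is actually of the form required by \dfn{qszk-distinguishable}---in particular, that it only queries a \emph{single} unitary oracle (the one being swapped) while simulating the rest of the protocol internally using its own randomness and internal quantum registers, so that the contradiction with PRUA is valid. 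Once this reduction is carried out one epoch at a time, a union bound over the $t=O(\log n)$ epochs absorbs the losses into $\negl(n)$.
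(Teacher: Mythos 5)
Your proposal takes essentially the same route as the paper: its proof of this theorem is a one-line appeal to the reasoning of \thm{qszk-main} — run the accumulation protocol with the Haar-random circuits replaced by pseudorandom ones, argue that if the joint state of the extracted bits and Eve's register moved non-negligibly away from uniform$\,\otimes\,\rho_E$ then the protocol itself would yield a pair of algorithms whose output trace distance $\QSZK$-distinguishes the pseudorandom unitary from Haar (contradicting \asm{qszk-prua}), and then check that the $O(n\log n)$ key/seed bits are dwarfed by the $\Omega(mn)$ extracted bits. Your instantiation with \fig{full} and the general-device entropy accumulation bound, the epoch-by-epoch hybrid, and the caveat you flag about casting the distinguisher in the two-algorithm trace-distance form of \dfn{qszk-distinguishable} are exactly the fleshed-out version of that argument (the paper leaves the hybrid and the reduction's simulation details implicit at the same level you do).
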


\section*{Acknowledgements}

We thank Fernando Brand{\~a}o, Alex Halderman, William Kretschmer, John Martinis, Carl Miller, Ron Peled, Ren\'{e} Peralta, Or Sattath, Thomas Vidick, and David Zuckerman for helpful discussions.

\bibliographystyle{plain}
\bibliography{refs}

\begin{thebibliography}{10}

\bibitem{Aar06}
Scott Aaronson.
\newblock {$\text{QMA}/\text{qpoly}\subseteq\text{PSPACE}/\text{poly}$}:
  de-merlinizing quantum protocols.
\newblock In {\em 21st Annual IEEE Conference on Computational Complexity
  (CCC'06)}, pages 13--pp. IEEE, 2006.

\bibitem{Aar18b}
Scott Aaronson.
\newblock Certified randomness from quantum supremacy.
\newblock {\em Talk at Google Quantum Symposium, Venice Beach, CA, May 16,
  2018}, 2018.
\newblock https://www.scottaaronson.com/talks/certrand-huji.ppt.

\bibitem{AA14}
Scott Aaronson and Andris Ambainis.
\newblock The need for structure in quantum speedups.
\newblock {\em Theory of Computing}, 10(1):133--166, 2014.

\bibitem{AA11}
Scott Aaronson and Alex Arkhipov.
\newblock The computational complexity of linear optics.
\newblock In {\em Proceedings of the forty-third annual ACM symposium on Theory
  of computing}, pages 333--342, 2011.

\bibitem{AA13}
Scott Aaronson and Alex Arkhipov.
\newblock Bosonsampling is far from uniform.
\newblock {\em Quantum Information \& Computation}, 14(15-16):1383--1423, 2014.

\bibitem{ABFL16}
Scott Aaronson, Adam Bouland, Joseph Fitzsimons, and Mitchell Lee.
\newblock The space ``just above'' {BQP}.
\newblock In {\em Proceedings of the 2016 ACM Conference on Innovations in
  Theoretical Computer Science}, pages 271--280, 2016.

\bibitem{AC16}
Scott Aaronson and Lijie Chen.
\newblock Complexity-theoretic foundations of quantum supremacy experiments.
\newblock {\em arXiv preprint arXiv:1612.05903}, 2016.

\bibitem{AD10}
Scott Aaronson and Andrew Drucker.
\newblock A full characterization of quantum advice.
\newblock In {\em Proceedings of the forty-second ACM symposium on Theory of
  computing}, pages 131--140, 2010.

\bibitem{ARU14}
Andris Ambainis, Ansis Rosmanis, and Dominique Unruh.
\newblock Quantum attacks on classical proof systems: The hardness of quantum
  rewinding.
\newblock In {\em 2014 IEEE 55th Annual Symposium on Foundations of Computer
  Science}, pages 474--483. IEEE, 2014.

\bibitem{Arn18}
Rotem Arnon-Friedman.
\newblock {\em Reductions to IID in Device-independent Quantum Information
  Processing}.
\newblock PhD thesis, ETH Zurich, 2018.

\bibitem{ADFRV18}
Rotem Arnon-Friedman, Fr{\'e}d{\'e}ric Dupuis, Omar Fawzi, Renato Renner, and
  Thomas Vidick.
\newblock Practical device-independent quantum cryptography via entropy
  accumulation.
\newblock {\em Nature communications}, 9(1):1--11, 2018.

\bibitem{google19}
Frank Arute, Kunal Arya, Ryan Babbush, Dave Bacon, Joseph~C Bardin, Rami
  Barends, Rupak Biswas, Sergio Boixo, Fernando~GSL Brand{\~a}o, David~A Buell,
  et~al.
\newblock Quantum supremacy using a programmable superconducting processor.
\newblock {\em Nature}, 574(7779):505--510, 2019.

\bibitem{BCG20}
Boaz Barak, Chi-Ning Chou, and Xun Gao.
\newblock Spoofing linear cross-entropy benchmarking in shallow quantum
  circuits.
\newblock In {\em 12th Innovations in Theoretical Computer Science Conference
  (ITCS 2021)}. Schloss Dagstuhl-Leibniz-Zentrum f{\"u}r Informatik, 2021.

\bibitem{BBFGT21}
Roozbeh Bassirian, Adam Bouland, Bill Fefferman, Sam Gunn, and Avishay Tal.
\newblock On certified randomness from quantum advantage experiments.
\newblock {\em arXiv preprint arXiv:2111.14846}, 2021.

\bibitem{BBC+01}
Robert Beals, Harry Buhrman, Richard Cleve, Michele Mosca, and Ronald
  {de}~Wolf.
\newblock Quantum lower bounds by polynomials.
\newblock {\em Journal of the ACM (JACM)}, 48(4):778--797, 2001.

\bibitem{BK19}
Shalev Ben-David and Robin Kothari.
\newblock Quantum distinguishing complexity, zero-error algorithms, and
  statistical zero knowledge.
\newblock In {\em 14th Conference on the Theory of Quantum Computation,
  Communication and Cryptography}, volume~21, page~24, 2019.

\bibitem{BBBV97}
Charles~H Bennett, Ethan Bernstein, Gilles Brassard, and Umesh Vazirani.
\newblock Strengths and weaknesses of quantum computing.
\newblock {\em SIAM journal on Computing}, 26(5):1510--1523, 1997.

\bibitem{BEG+18}
Peter Bierhorst, Emanuel Knill, Scott Glancy, Yanbao Zhang, Alan Mink, Stephen
  Jordan, Andrea Rommal, Yi-Kai Liu, Bradley Christensen, Sae~Woo Nam, et~al.
\newblock Experimentally generated randomness certified by the impossibility of
  superluminal signals.
\newblock {\em Nature}, 556(7700):223--226, 2018.

\bibitem{BCG15}
Joseph Bonneau, Jeremy Clark, and Steven Goldfeder.
\newblock On bitcoin as a public randomness source.
\newblock Cryptology ePrint Archive, Paper 2015/1015, 2015.
\newblock \url{https://eprint.iacr.org/2015/1015}.

\bibitem{BCMVV18}
Zvika Brakerski, Paul Christiano, Urmila Mahadev, Umesh Vazirani, and Thomas
  Vidick.
\newblock A cryptographic test of quantumness and certifiable randomness from a
  single quantum device.
\newblock In {\em 2018 IEEE 59th Annual Symposium on Foundations of Computer
  Science (FOCS)}, pages 320--331. IEEE, 2018.

\bibitem{BHH16}
Fernando~GSL Brand{\~a}o, Aram~W. Harrow, and Micha{\l} Horodecki.
\newblock Local random quantum circuits are approximate polynomial-designs.
\newblock {\em Communications in Mathematical Physics}, 346(2):397--434, 2016.

\bibitem{BP20}
Lu{\'\i}s~TAN Brand{\~a}o and Ren{\'e} Peralta.
\newblock Notes on interrogating random quantum circuits.
\newblock NIST White Paper, 2020.

\bibitem{BJS11}
Michael~J Bremner, Richard Jozsa, and Dan~J Shepherd.
\newblock Classical simulation of commuting quantum computations implies
  collapse of the polynomial hierarchy.
\newblock {\em Proceedings of the Royal Society A: Mathematical, Physical and
  Engineering Sciences}, 467(2126):459--472, 2011.

\bibitem{Col09}
Roger Colbeck.
\newblock Quantum and relativistic protocols for secure multi-party
  computation.
\newblock {\em Ph. D. Thesis}, 2009.

\bibitem{CY14}
Matthew Coudron and Henry Yuen.
\newblock Infinite randomness expansion with a constant number of devices.
\newblock In {\em Proceedings of the forty-sixth annual ACM symposium on Theory
  of computing}, pages 427--436, 2014.

\bibitem{DPVR12}
Anindya De, Christopher Portmann, Thomas Vidick, and Renato Renner.
\newblock Trevisan's extractor in the presence of quantum side information.
\newblock {\em SIAM Journal on Computing}, 41(4):915--940, 2012.

\bibitem{DMR13}
Catherine Donati-Martin and Alain Rouault.
\newblock Random truncations of haar distributed matrices and bridges.
\newblock {\em arXiv preprint arXiv:1302.6539}, 2013.

\bibitem{DFR20}
Frederic Dupuis, Omar Fawzi, and Renato Renner.
\newblock Entropy accumulation.
\newblock {\em Communications in Mathematical Physics}, 379(3):867--913, 2020.

\bibitem{FGG14}
Edward Farhi, Jeffrey Goldstone, and Sam Gutmann.
\newblock A quantum approximate optimization algorithm.
\newblock {\em arXiv preprint arXiv:1411.4028}, 2014.

\bibitem{FKG10}
Bela~A. Frigyik, Amol Kapila, and Maya~R. Gupta.
\newblock Introduction to the dirichlet distribution and related processes.
\newblock {\em Department of Electrical Engineering, University of Washignton,
  UWEETR-2010-0006}, 6:1--27, 2010.

\bibitem{GSV99}
Oded Goldreich, Amit Sahai, and Salil Vadhan.
\newblock Can statistical zero knowledge be made non-interactive? or on the
  relationship of szk and niszk.
\newblock In {\em Annual International Cryptology Conference}, pages 467--484.
  Springer, 1999.

\bibitem{GS86}
Shafi Goldwasser and Michael Sipser.
\newblock Private coins versus public coins in interactive proof systems.
\newblock In {\em Proceedings of the eighteenth annual ACM symposium on Theory
  of computing}, pages 59--68, 1986.

\bibitem{Haf22}
Jonas Haferkamp.
\newblock Random quantum circuits are approximate unitary $t$-designs in depth
  ${O}\left(nt^{5+o(1)}\right)$.
\newblock {\em arXiv preprint arXiv:2203.16571}, 2022.

\bibitem{HGP+15}
Joonsuk Huh, Gian~Giacomo Guerreschi, Borja Peropadre, Jarrod~R McClean, and
  Al{\'a}n Aspuru-Guzik.
\newblock Boson sampling for molecular vibronic spectra.
\newblock {\em Nature Photonics}, 9(9):615--620, 2015.

\bibitem{KCVY22}
Gregory~D Kahanamoku-Meyer, Soonwon Choi, Umesh~V Vazirani, and Norman~Y Yao.
\newblock Classically verifiable quantum advantage from a computational bell
  test.
\newblock {\em Nature Physics}, 18(8):918--924, 2022.

\bibitem{nist19}
John Kelsey, Lu{\'\i}s~TAN Brand{\~a}o, Rene Peralta, and Harold Booth.
\newblock A reference for randomness beacons: Format and protocol version 2.
\newblock Technical report, National Institute of Standards and Technology,
  2019.

\bibitem{KR11}
Robert Konig and Renato Renner.
\newblock Sampling of min-entropy relative to quantum knowledge.
\newblock {\em IEEE Transactions on Information Theory}, 57(7):4760--4787,
  2011.

\bibitem{Kre21a}
William Kretschmer.
\newblock Quantum pseudorandomness and classical complexity.
\newblock In {\em 16th Conference on the Theory of Quantum Computation,
  Communication and Cryptography (TQC 2021)}. Schloss Dagstuhl-Leibniz-Zentrum
  f{\"u}r Informatik, 2021.

\bibitem{Kre21}
William Kretschmer.
\newblock The quantum supremacy tsirelson inequality.
\newblock {\em Quantum}, 5:560, 2021.

\bibitem{xanadu22}
Lars~S Madsen, Fabian Laudenbach, Mohsen~Falamarzi Askarani, Fabien Rortais,
  Trevor Vincent, Jacob~FF Bulmer, Filippo~M Miatto, Leonhard Neuhaus, Lukas~G
  Helt, Matthew~J Collins, et~al.
\newblock Quantum computational advantage with a programmable photonic
  processor.
\newblock {\em Nature}, 606(7912):75--81, 2022.

\bibitem{MVV22}
Urmila Mahadev, Umesh Vazirani, and Thomas Vidick.
\newblock Efficient certifiable randomness from a single quantum device.
\newblock {\em arXiv preprint arXiv:2204.11353}, 2022.

\bibitem{MW04}
Chris Marriott and John Watrous.
\newblock Quantum arthur-merlin games.
\newblock In {\em Proceedings. 19th IEEE Annual Conference on Computational
  Complexity, 2004.}, pages 275--285. IEEE, 2004.

\bibitem{Mec19}
Elizabeth~S Meckes.
\newblock {\em The random matrix theory of the classical compact groups},
  volume 218.
\newblock Cambridge University Press, 2019.

\bibitem{MW18}
Sanketh Menda and John Watrous.
\newblock Oracle separations for quantum statistical zero-knowledge.
\newblock {\em arXiv preprint arXiv:1801.08967}, 2018.

\bibitem{MFSR22}
Tony Metger, Omar Fawzi, David Sutter, and Renato Renner.
\newblock Generalised entropy accumulation.
\newblock {\em ArXiv}, abs/2203.04989, 2022.

\bibitem{MS16}
Carl~A Miller and Yaoyun Shi.
\newblock Robust protocols for securely expanding randomness and distributing
  keys using untrusted quantum devices.
\newblock {\em Journal of the ACM (JACM)}, 63(4):1--63, 2016.

\bibitem{MS17}
Carl~A Miller and Yaoyun Shi.
\newblock Universal security for randomness expansion from the spot-checking
  protocol.
\newblock {\em SIAM Journal on Computing}, 46(4):1304--1335, 2017.

\bibitem{NC02}
Michael~A Nielsen and Isaac Chuang.
\newblock Quantum computation and quantum information, 2002.

\bibitem{PCZ21}
Feng Pan, Keyang Chen, and Pan Zhang.
\newblock Solving the sampling problem of the sycamore quantum supremacy
  circuits.
\newblock {\em arXiv preprint arXiv:2111.03011}, 2021.

\bibitem{PAM+10}
Stefano Pironio, Antonio Ac{\'\i}n, Serge Massar, A~Boyer de~La~Giroday,
  Dzmitry~N Matsukevich, Peter Maunz, Steven Olmschenk, David Hayes, Le~Luo,
  T~Andrew Manning, et~al.
\newblock Random numbers certified by bell’s theorem.
\newblock {\em Nature}, 464(7291):1021--1024, 2010.

\bibitem{Pre12}
John Preskill.
\newblock Quantum computing and the entanglement frontier.
\newblock {\em arXiv preprint arXiv:1203.5813}, 2012.

\bibitem{Pre18}
John Preskill.
\newblock Quantum computing in the nisq era and beyond.
\newblock {\em Quantum}, 2:79, 2018.

\bibitem{Raz09}
Ran Raz.
\newblock Quantum information and the pcp theorem.
\newblock {\em Algorithmica}, 55(3):462--489, 2009.

\bibitem{RRV99}
Ran Raz, Omer Reingold, and Salil Vadhan.
\newblock Extracting all the randomness and reducing the error in trevisan's
  extractors.
\newblock In {\em Proceedings of the thirty-first annual ACM symposium on
  Theory of computing}, pages 149--158, 1999.

\bibitem{RT22}
Ran Raz and Avishay Tal.
\newblock Oracle separation of {BQP} and {PH}.
\newblock {\em ACM Journal of the ACM (JACM)}, 69(4):1--21, 2022.

\bibitem{RWSWT17}
Markus Reiher, Nathan Wiebe, Krysta~M Svore, Dave Wecker, and Matthias Troyer.
\newblock Elucidating reaction mechanisms on quantum computers.
\newblock {\em Proceedings of the national academy of sciences},
  114(29):7555--7560, 2017.

\bibitem{SBI+20}
Maria Schuld, Kamil Br{\'a}dler, Robert Israel, Daiqin Su, and Brajesh Gupt.
\newblock Measuring the similarity of graphs with a gaussian boson sampler.
\newblock {\em Physical Review A}, 101(3):032314, 2020.

\bibitem{She11}
Alexander~A Sherstov.
\newblock Strong direct product theorems for quantum communication and query
  complexity.
\newblock In {\em Proceedings of the forty-third annual ACM symposium on Theory
  of computing}, pages 41--50, 2011.

\bibitem{Tre01}
Luca Trevisan.
\newblock Extractors and pseudorandom generators.
\newblock {\em Journal of the ACM}, 48(4):860--879, 2001.

\bibitem{VV12}
Umesh Vazirani and Thomas Vidick.
\newblock Certifiable quantum dice: or, true random number generation secure
  against quantum adversaries.
\newblock In {\em Proceedings of the forty-fourth annual ACM symposium on
  Theory of computing}, pages 61--76, 2012.

\bibitem{Wat02}
John Watrous.
\newblock Limits on the power of quantum statistical zero-knowledge.
\newblock In {\em The 43rd Annual IEEE Symposium on Foundations of Computer
  Science, 2002. Proceedings.}, pages 459--468. IEEE, 2002.

\bibitem{ustc21}
Yulin Wu, Wan-Su Bao, Sirui Cao, Fusheng Chen, Ming-Cheng Chen, Xiawei Chen,
  Tung-Hsun Chung, Hui Deng, Yajie Du, Daojin Fan, et~al.
\newblock Strong quantum computational advantage using a superconducting
  quantum processor.
\newblock {\em Physical review letters}, 127(18):180501, 2021.

\bibitem{YZ22}
Takashi Yamakawa and Mark Zhandry.
\newblock Verifiable quantum advantage without structure.
\newblock {\em arXiv preprint arXiv:2204.02063}, 2022.

\bibitem{Zha12}
Mark Zhandry.
\newblock How to construct quantum random functions.
\newblock In {\em 2012 IEEE 53rd Annual Symposium on Foundations of Computer
  Science}, pages 679--687. IEEE, 2012.

\bibitem{ustc20}
Han-Sen Zhong, Hui Wang, Yu-Hao Deng, Ming-Cheng Chen, Li-Chao Peng, Yi-Han
  Luo, Jian Qin, Dian Wu, Xing Ding, Yi~Hu, et~al.
\newblock Quantum computational advantage using photons.
\newblock {\em Science}, 370(6523):1460--1463, 2020.

\end{thebibliography}

\end{document}